\newif\iftechrep
\DeclareFontFamily{U} {MnSymbolA}{}
\DeclareFontShape{U}{MnSymbolA}{m}{n}{
  <-6> MnSymbolA5
  <6-7> MnSymbolA6
  <7-8> MnSymbolA7
  <8-9> MnSymbolA8
  <9-10> MnSymbolA9
  <10-12> MnSymbolA10
  <12-> MnSymbolA12}{}
\DeclareFontShape{U}{MnSymbolA}{b}{n}{
  <-6> MnSymbolA-Bold5
  <6-7> MnSymbolA-Bold6
  <7-8> MnSymbolA-Bold7
  <8-9> MnSymbolA-Bold8
  <9-10> MnSymbolA-Bold9
  <10-12> MnSymbolA-Bold10
  <12-> MnSymbolA-Bold12}{}
\DeclareSymbolFont{MnSyA} {U} {MnSymbolA}{m}{n}
\DeclareMathSymbol{\upmodels}{\mathrel}{MnSyA}{225}
\newcommand{\setcom}[2]{\set{#1\left\vert\vphantom{#1}\,#2\right.}}
\newcommand{\set}[1]{\left\{#1\right\}}
\newcommand\card[1]{\left|#1\right|}
\newcommand\fcolon{\colon}
\newcommand\funto{\rightarrow}
\newcommand\pfunto{\rightharpoonup}
\newcommand\limto{\rightarrow}
\newcommand\prefix[2]{#1|_{#2}}
\newcommand\segm[3]{#1|_{[#2,#3)}}
\newcommand\desc[2]{#1/\!/#2}
\newcommand\proj[2]{#1/#2}
\def\nothing{}
\newcommand\botbb{\upmodels}
\newcommand\botred[1][\nothing]{\botbb\ifthenelse{\equal{#1}{\nothing}}{}{\!\left({#1}\right)}}
\newcommand\bohmred[1][\nothing]{\mathbb{B}\ifthenelse{\equal{#1}{\nothing}}{}{\left({#1}\right)}}
\newcommand\betared{\bbbeta } \newcommand\etared{\bbeta }
\newcommand\strictred{\mathbb{S}}
\newcommand\betasred{\betared\strictred}
\newcommand\betaSred{\betared(\strictred)}
\newcommand\vartree[1]{\mathsf{#1}}
\newcommand\abstree[1]{\vartree{\lambda #1}.}
\newcommand\aact[1][\ola]{\calA^{#1}}
\newcommand\afrag[1][\ola]{\calF^{#1}}
\newcommand\lamtree[1]{\left\llbracket#1\right\rrbracket}
\newcommand\pos[1]{\calP(#1)}
\newcommand\dom[1]{\calD(#1)}
\newcommand\domBot[1]{\calD_\bot(#1)}
\newcommand\range[1]{\mathsf{Range}(#1)}
\newcommand\allpos{\calP}
\newcommand\pathpos[1]{\mathsf{pos}(#1)}
\newcommand\pathlab[1]{\mathsf{lab}(#1)}
\newcommand\prepaths[2]{\calP(#1,#2)}
\newcommand\paths[2]{\calP_\calT(#1,#2)}
\newcommand\dpaths[2]{\calP_\bot(#1,#2)}
\newcommand\pathLabs[2]{\calP\calL(#1,#2)}
\newcommand\strictnf[1]{#1{\downarrow_\strictred}}
\newcommand\doms[1]{\calD_\strictred(#1)}
\newcommand\posminus[2]{#1\setminus #2}
\newcommand\acut[2][\ola]{#2{\downarrow^{\!#1}}}
\newcommand\concat{\cdot}
\newcommand{\Concat}{\prod}
\newcommand\seq[1]{\langle #1 \rangle}
\newcommand\emptyseq{\seq{}}
\newcommand\len[1]{\left| #1 \right|}
\newcommand\oh{\widehat}
\newcommand\ol{\overline}
\newcommand\ola{\ol a}
\newcommand\iptree{\calT^\infty_\bot}
\newcommand\tree{\calT}
\newcommand\ptree{\calT_\bot}
\newcommand\itree{\calT^\infty}
\newcommand\aptree[1][\ola]{\calT^{#1}_\bot}
\newcommand\atree[1][\ola]{\calT^{#1}}
\newcommand\lam{\Lambda}
\newcommand\plam{\Lambda_\bot}
\newcommand\iplam[1][\ola]{\Lambda^{I,#1}_\bot}
\newcommand\mplam[1][\ola]{\Lambda^{M,#1}_\bot}
\newcommand\subtree[2]{#1|_{#2}}
\newcommand\subst[3]{#1\left[#2/#3\right]}
\newcommand\rename[3]{#1[#2\to#3]}
\newcommand\atrunc[3][\ola]{#2|^{#1}_{#3}}%
\newcommand\adepth[2][\ola]{\left| #2 \right|^{#1}}
\newcommand\lebot{\le_\bot}
\newcommand\alebot[1][\ola]{\le_\bot^{#1}}
\newcommand\nalebot[1][\ola]{\not\le_\bot^{#1}}
\newcommand\ahgt[2][\ola]{\mathsf{hgt}^{#1}\left(#2\right)}
\newcommand\hgt[1]{\ahgt[]{#1}}
\newcommand\andown[3][\ola]{{\downarrow}^{#2}_{#1}\left(#3\right)}
\newcommand\anidown[3][\ola]{{\Downarrow}^{#2}_{#1}\left(#3\right)}
\newcommand\talebot[1][\ola]{\trianglelefteq_\bot^{#1}}
\newcommand\talbot[1][\ola]{\vartriangleleft_\bot^{#1}}
\newcommand\dd{\mathbf{d}}
\newcommand\dda[1][\ola]{\dd^{#1}}
\newcommand\ddi[1][\ola]{\dd^{#1}_I}
\newcommand\ddac[1][\ola]{\dd^{#1}}
\newcommand\ddta[1][\ola]{\dd^{#1}_\tree}
\newcommand\findown[1]{#1{\downarrow_{fin}}}
\newcommand\Lub{\bigsqcup}
\newcommand\Glb{\bigsqcap}
\newcommand\calA{\mathcal{A}}
\newcommand\calD{\mathcal{D}}
\newcommand\calF{\mathcal{F}}
\newcommand\calI{\mathcal{I}}
\newcommand\calL{\mathcal{L}}
\newcommand\calP{\mathcal{P}}
\newcommand\calT{\mathcal{T}}
\newcommand\calU{\mathcal{U}}
\newcommand\calV{\mathcal{V}}
\newcommand\lamsig{\calL}
\newcommand\prs{\mathsf{p}}
\newcommand\mrs{\mathsf{m}}
\let\oldTo\to
\newcommand\preright{\mapsto}
\newcommand\preleft{\mapsfrom}
\newcommand\finleftright{\leftrightarrow}
\newcommand\finright{\oldTo}
\newcommand\finleft{\leftarrow}
\newcommand\weakright{\hookrightarrow}
\newcommand\weakleft{\hookleftarrow}
\newcommand\strongright{\twoheadrightarrow}
\newcommand\strongleft{\twoheadleftarrow}
\newcommand\mrsright{\mathrel{\twoheadrightarrow^{\hspace{-10pt}\mrs}\hspace{3pt}}}
\newcommand\mrsleft{\mathrel{\twoheadleftarrow^{\hspace{-7pt}\mrs}\hspace{0pt}}}
\newcommand\prsright{\mathrel{\twoheadrightarrow^{\hspace{-9pt}\prs}\hspace{4pt}}}
\newcommand\prsleft{\mathrel{\twoheadleftarrow^{\hspace{-5pt}prs}\hspace{0pt}}}
\newcommand\mrswright{\mathrel{\hookrightarrow^{\hspace{-9pt}\mrs}\hspace{1pt}}}
\newcommand\mrswleft{\mathrel{\hookleftarrow^{\hspace{-7pt}\mrs}\hspace{0pt}}}
\newcommand\prswright{\mathrel{\hookrightarrow^{\hspace{-7pt}\prs}\hspace{2pt}}}
\newcommand\prswleft{\mathrel{\hookleftarrow^{\hspace{-5pt}\prs}\hspace{0pt}}}
\newcommand{\RewArr}[2] {
  \RewStmt{#1}{\nothing}{#2}
}
\newcommand{\RewStmt}[3] {
  \def\RewArrArr{#1}
  \def\RewArrRhs{#2}
  \def\RewArrIter{#3}
  \RewArrI
}
\makeatletter \newcommand{\RewArrI}[1][\nothing] { \def\RewArrCxt{#1}
  \ifthenelse{\equal{\RewArrArr}{\oldTo} \OR
    \equal{\RewArrArr}{\preright} \OR
    \equal{\RewArrArr}{\finright} \OR
    \equal{\RewArrArr}{\finleftright} \OR
    \equal{\RewArrArr}{\prsright} \OR
    \equal{\RewArrArr}{\mrsright} \OR
    \equal{\RewArrArr}{\prswright} \OR
    \equal{\RewArrArr}{\mrswright} \OR
    \equal{\RewArrArr}{\strongright} \OR
    \equal{\RewArrArr}{\weakright}} {
    \RewArrArr\ifthenelse{\equal{\RewArrIter}{\nothing}}{}{^{\RewArrIter}}\ifthenelse{\equal{\RewArrCxt}{\nothing}}{}{_{\RewArrCxt}}
  } { \ifthenelse{\equal{\RewArrArr}{\finleft} \OR
      \equal{\RewArrArr}{\preleft} \OR
      \equal{\RewArrArr}{\prsleft} \OR
      \equal{\RewArrArr}{\mrsleft} \OR
      \equal{\RewArrArr}{\prswleft} \OR
      \equal{\RewArrArr}{\mrswleft} \OR
      \equal{\RewArrArr}{\strongleft} \OR
      \equal{\RewArrArr}{\weakleft}}{
      \RewArrArr\ifthenelse{\equal{\RewArrIter}{\nothing}}{}{^{\RewArrIter}}\ifthenelse{\equal{\RewArrCxt}{\nothing}}{}{_{\RewArrCxt}}
    } { \latex@error{Rewrite arrow not defined}\@ehc } } \RewArrRhs }
\newcommand{\twoheadleftrightarrow}{\twoheadleftarrow\hspace{-7pt}\twoheadrightarrow}
\renewcommand{\to}{\RewArr{\finright}{\nothing}}
\newcommand{\fto}{\RewArr{\finright}}
\newcommand{\pato}{\RewArr{\prsright}{\nothing}}
\newcommand{\mato}{\RewArr{\mrsright}{\nothing}}
\newcommand{\mwato}{\RewArr{\mrswright}{\nothing}}
\newcommand{\pwato}{\RewArr{\prswright}{\nothing}}
\newcommand{\join}[1][\nothing]
{
  \ifthenelse{\equal{#1}{\nothing}}{\downarrow}{\downarrow_{#1}}
}
\newcommand{\wconv}[1][\nothing]
{
  \ifthenelse{\equal{#1}{\nothing}}{\leftrightarrow^{w}}{\leftrightarrow^{w}_{#1}}
}
\newcommand{\sconv}[1][\nothing]
{
  \ifthenelse{\equal{#1}{\nothing}}{\twoheadleftrightarrow}{\twoheadleftrightarrow_{#1}}
}
\theoremstyle{theorem}\newtheorem{theorem}{Theorem}[section]
\theoremstyle{definition}\newtheorem{definition}[theorem]{Definition}
\theoremstyle{theorem}\newtheorem{proposition}[theorem]{Proposition}
\theoremstyle{theorem}\newtheorem{corollary}[theorem]{Corollary}
\theoremstyle{theorem}\newtheorem{lemma}[theorem]{Lemma}
\theoremstyle{theorem}\newtheorem{example}[theorem]{Example}
\newenvironment{thmcpy}[1]{\noindent\textbf{#1}.\itshape}{}
\numberwithin{theorem}{section}
\newcommand\nin{\not\in}
\newcommand\continue[1][]{\todo{{\bf continue here}}}%
\tikzset{%
  shorten/.style={shorten >=#1, shorten <=#1},%
  level distance=1cm,%
  edge from parent/.style={->,draw},%
  force edge/.style={edge from parent/.style={->,draw}},%
  no edge/.style={edge from parent/.style={}},%
  >=latex',%
  label distance=-3pt,%
  site/.style={%
    shape=circle,%
    minimum size=4mm%
  },%
  red site/.style={%
    site,%
    draw=redsite%
  },%
  term back/.style={%
    draw=termfringe,%
    fill=termback%
  },%
    term const/.style={%
    draw=termconstfringe,%
    fill=termconst%
  },%
  term stable/.style={%
    draw=termstablefringe,%
    fill=termstable%
  },%
  term fringe/.style={%
    draw=termfringe,%
  },%
  fun/.style={>=to,->},%
  single step/.style={>=to,->},%
  conv/.style={>=to,<->},%
  strongly/.style={>=to,->>},
  weakly/.style={>=to,right hook->},%
  weakly left/.style={>=to,left hook->}%
}%
\title{Strict Ideal Completions of the Lambda Calculus%
  \iftechrep %
  \newline{\Large(Technical Report)}
  \fi%
}
\date{May 2018}
\author{Patrick Bahr\\IT University of Copenhagen, Denmark\\\texttt{paba@itu.dk}}
\begin{document}

\maketitle

\begin{abstract}
  The infinitary lambda calculi pioneered by Kennaway et al.\ extend
  the basic lambda calculus by metric completion to infinite terms and
  reductions. Depending on the chosen metric, the resulting infinitary
  calculi exhibit different notions of \emph{strictness}. To obtain
  infinitary normalisation and infinitary confluence properties for
  these calculi, Kennaway et al.\ extend $\beta$-reduction with
  infinitely many `$\bot$-rules', which contract \emph{meaningless
    terms} directly to $\bot$. Three of the resulting \emph{B\"ohm
    reduction} calculi have unique infinitary normal forms
  corresponding to B\"ohm-like trees.
  
  In this paper we develop a corresponding theory of infinitary lambda
  calculi based on ideal completion instead of metric completion. We
  show that each of our calculi conservatively extends the
  corresponding metric-based calculus. Three of our calculi are
  infinitarily normalising and confluent; their unique infinitary normal
  forms are exactly the B\"ohm-like trees of the corresponding
  metric-based calculi. Our calculi dispense with the infinitely many
  $\bot$-rules of the metric-based calculi. The fully non-strict
  calculus (called $111$) consists of only $\beta$-reduction, while
  the other two calculi (called $001$ and $101$) require two
  additional rules that precisely state their strictness properties:
  $\lambda x.\bot \to \bot$ (for $001$) and $\bot\,M \to \bot$ (for
  $001$ and $101$).
\end{abstract}

\section{Introduction}
\label{sec:introduction}

In their seminal work on infinitary lambda calculus, Kennaway et
al.~\cite{kennaway97tcs} study different infinitary variants of the
lambda calculus, which are obtained by extending the ordinary lambda
calculus by means of metric completion. Different variants of the
calculus are obtained by choosing a different metric. The `standard'
metric on terms measures the distance between two terms depending on
how deep one has to go into the term structure to distinguish two
terms.  For example the term $x\,y$ is closer to the term $x\,z$ than
to the term $x$, because in the former case both terms are
applications whereas in the latter case one term is an application and
the other is a variable.

The different metric spaces arise by changing the way in which we
measure depth. Kennaway et al.~\cite{kennaway97tcs} indicate this
using a binary triple $abc$ with $a,b,c \in \set{0,1}$, where $a = 0$
indicates that we do not count lambda abstractions when calculating the
depth, and $b = 0$ or $c = 0$ indicates that we do not count the left
or the right side of applications, respectively. More intuitively
these three parameters can be interpreted as indicating
\emph{strictness}. For example, $a = 0$ indicates that lambda
abstraction is strict, i.e.\ if $M$ diverges, then so does
$\lambda x. M$.

Since the set of infinite terms is constructed from the set of finite
terms by means of metric completion, each calculus has a different
universe of terms, as well as a different mode of convergence, which
is based on the topology induced by the metric. For instance, from the
lambda term $N = (\lambda x.x\,x\,y) (\lambda x.x\,x\,y)$, we can
derive the infinite reduction $N \to N\,y \to N \, y\, y\to \dots$. In
the fully non-strict calculus, where $abc = 111$, this reduction
converges to the infinite term $M = \dots y\, y\, y$ (i.e.\ $M$
satisfies $M = M\,y$). By contrast, in the calculus $101$, which is
strict on the left-hand side of every application, this reduction does
not converge. In fact, $M$ is not even a valid term in the $101$
calculus.

In order to deal with divergence as exemplified for the $101$ calculus
above, Kennaway et al.~\cite{kennaway97tcs} extend standard
$\beta$-reduction to \emph{B\"ohm reduction} by adding rules of the
form $M \to \bot$, for each term $M$ that causes divergence
such as the term $N$ in the $101$ calculus. The resulting $001$,
$101$, and $111$ calculi based on B\"ohm reduction have unique normal
forms, which correspond to the well-known \emph{B\"ohm
  Trees}~\cite{levy78phd,barendregt84book}, \emph{Levy-Longo
  Trees}~\cite{levy76tcs,longo83apal} and \emph{Berarducci
  Trees}~\cite{berarducci96laa}, respectively.

In this paper, we introduce infinitary lambda calculi that are based on ideal
completion instead of metric completion with the goal of directly
dealing with diverging terms without the need for additional reduction
rules that contract diverging terms immediately to $\bot$. To this
end, we devise for each metric of the calculi of Kennaway et
al.~\cite{kennaway97tcs} a corresponding partial order with the
following property: Ideal completion of the set of finite lambda terms
yields the same set of infinite lambda terms as the corresponding
metric completion (Section~\ref{sec:ideal-completion}). We also find a
strong correspondence between the modes of convergence induced by
these structures: Each ideal completion yields a complete semilattice
structure, which means that the \emph{limit inferior} is always
defined. We show that this limit inferior is a conservative extension
of the limit in the corresponding metric completion in the sense that
both modes of convergence coincide on total lambda terms, i.e.\ terms
without $\bot$ (Section~\ref{sec:ideal-completion}).

Based on these partial order structures we define infinitary lambda
calculi by a straightforward instantiation of transfinite abstract
reduction systems~\cite{bahr10rta}. We find that the ideal completion
calculi form a conservative extension of the metric completion calculi
of Kennaway et al.~\cite{kennaway97tcs}
(Section~\ref{sec:transf-reduct}). Moreover, in analogy to
Blom~\cite{blom04rta} and Bahr~\cite{bahr10rta2}, we find that the
differences between the ideal completion approach and the metric
completion approach are compensated for by adding $\bot$-rules to the
metric calculi in the style of Kennaway et al.~\cite{kennaway99jflp}
(Section~\ref{sec:bohm-like-tree}). Finally, we also show infinitary
normalisation for our ideal completion calculi and infinitary
confluence for the $001$, $101$, and $111$ calculi
(Section~\ref{sec:bohm-like-tree}). However, in order to obtain
infinitary confluence for $001$ and $101$, we need to extend
$\beta$-reduction with two additional rules that precisely capture the
strictness properties of these calculi: $\lambda x.\bot \to \bot$ (for
$001$) and $\bot\,M \to \bot$ (for $001$ and $101$). In
Section~\ref{sec:related-work}, we give a brief overview of related
work.


\iftechrep %
We have abridged and in some cases omitted proofs in the main body of
the paper. The corresponding full proofs are collected in
Appendix~\ref{sec:full-proofs}.
\fi %

\section{The Metric Completion}
\label{sec:metric-completion}

In this section, we introduce infinite lambda terms as the result of
metric completion of the set of finite lambda terms. Before we get
started, we introduce some basic notions about transfinite sequences
and lambda terms. We presume basic familiarity with metric spaces and
ordinal numbers.

A \emph{sequence} over a set $A$ of length $\alpha$ is a mapping from
an ordinal $\alpha$ into $A$ and is written as
$(a_\iota)_{\iota<\alpha}$, which indicates the mapping $\iota \mapsto
a_\iota$; the notation $\len{(a_\iota)_{\iota<\alpha}}$ denotes the
length $\alpha$ of $(a_\iota)_{\iota<\alpha}$. If $\alpha$ is a limit
ordinal, then $(a_\iota)_{\iota<\alpha}$ is called \emph{open};
otherwise it is called \emph{closed}. If $(a_\iota)_{\iota<\alpha}$ is
finite, it is also written as $\seq{a_0,\dots,a_{\alpha-1}}$; in
particular, $\emptyseq$ denotes the empty sequence. We write $S\concat
T$ for the \emph{concatenation} of two sequences $S$ and $T$; $S$ is
called a (\emph{proper}) \emph{prefix} of $T$, denoted $S \le T$
(resp.\ $S < T$) if there is a (non-empty) sequence $S'$ such that
$S\concat S' = T$. The unique prefix of a sequence $S$ of length
$\beta \le \len S$ is denoted by $\prefix S \beta$.

We consider lambda terms with an additional symbol $\bot$; the
resulting set of \emph{lambda terms} $\plam$ is inductively defined by
the following grammar:%
\[
M,N ::= \bot \; |\; x \; | \; \lambda x. M \; | \; M N
\]
where $x$ is drawn from a countably infinite set $\calV$ of variable
symbols. The set of \emph{total lambda terms} $\lam$ is the subset of
lambda terms in $\plam$ that do not contain $\bot$.  Occurrences of a
variable $x$ in a subterm $\lambda x. M$ are called \emph{bound};
other occurrences are called \emph{free}. We use the notation $\rename
M x y$ to replace all free occurrences of the variable $x$ in $M$ with
the variable $y$. We use finite sequences over $\set{0,1,2}$, called
\emph{positions}, to point to subterms of a lambda term; we write
$\allpos$ for the set of all positions. For each $M \in \plam$,
$\pos{M}$ denotes the set of positions of $M$ (excluding `$\bot$'s)
recursively defined as follows: $\pos{\bot} = \emptyset$, $\pos{x} =
\set{\emptyseq}$, $\pos{M_1\,M_2} = \set{\emptyseq} \cup \setcom{\seq
  i \concat p}{i\in\set{1,2},p\in\pos{M_i}}$, and $\pos{\lambda x . M}
= \set{\emptyseq} \cup \setcom{\seq 0 \concat p}{p\in\pos{M}}$.

A \emph{conflict}~\cite{kennaway97tcs} between two lambda terms $M, N$
is a position $p \in \pos{M} \cup \pos{N}$ such that:
\begin{enumerate*}[(a)]
\item if $p = \emptyseq$, then $M$ and $N$ are not identical
  variables, not both $\bot$, not both applications, and not both
  abstractions;
\item if $p = \seq i \concat q$ and $i\in\set{1,2}$, then $M =
  M_1M_2$, $N = N_1 N_2$, and $q$ is a conflict of $M_i$ and $N_i$;
\item if $p = \seq 0 \concat q$, then $M= \lambda x.M'$, $N = \lambda
  y.N'$, and $q$ is a conflict of $\rename {M'} x z$ and $\rename {N'}
  y z$, where $z$ is a fresh variable occurring neither in $M$ nor
  $N$.
\end{enumerate*}
The terms $M$ and $N$ are said to be \emph{$\alpha$-equivalent} if
they have no conflicts. By convention we identify $\alpha$-equivalent
terms (i.e.\ $\plam$ and $\lam$ are assumed to be quotients by
$\alpha$-equivalence).

\begin{definition}
  Given a triple $\ol a = a_0a_1a_2 \in \set{0,1}^3$, called
  \emph{strictness signature}, a position is called
  \emph{$\ola$-strict} if it is of the form $q \concat\seq i$ with
  $a_i =0$; otherwise it is called \emph{$\ola$-non-strict}. If $\ola$
  is clear from the context, we only say \emph{strict} resp.\
  \emph{non-strict}.
\end{definition}
That is, a strictness signature indicates strictness by $0$ and
non-strictness by $1$. For example, if $\ola = 011$, lambda
abstraction is strict, and application is non-strict both from the
left and the right. We shall see what this means shortly: Following
Kennaway et al.~\cite{kennaway97tcs}, we derive, from a strictness
signature $\ola$, a depth measure $\adepth{\cdot}$, which counts the
number of non-strict, non-empty prefixes of a position. From this
depth measure we then derive a corresponding metric $\dda$ on lambda
terms.
\begin{definition}
  Given a strictness signature $\ola$, the \emph{$\ol a$-depth} of a
  position $p$, denoted $\adepth p$, is recursively defined as
  $\adepth\emptyseq = 0$ and
  $\adepth{q\concat\seq{i}} = \adepth q + a_i$. The $\ola$-distance
  $\dda(M,N)$ between two terms $M,N \in \plam$ is $0$ if $M$ and $N$
  are $\alpha$-equivalent and otherwise $2^{-d}$, where $d$ is the
  least number satisfying $d = \adepth p$ for some conflict $p$ of $M$
  and $N$.
\end{definition}
Kennaway et al.~\cite{kennaway97tcs} showed that the pair
$(\plam,\dda)$ forms an ultrametric space for any $\ola$. Intuitively,
the consequence of the definition of these metric spaces is that
sequences of terms, such as the sequence
$N , N\,y , N\, y\, y , \dots$, only converge if conflicts between
consecutive terms are guarded by an increasing number of non-strict
positions. In the example, conflicts between consecutive terms are
guarded by an increasing stack of applications to $y$. If $a_1 = 1$,
these applications correspond to non-strict positions, and thus the
sequence converges. However, if $a_1 = 0$, the sequence does not
converge.


We turn now to the metric completion. To facilitate later definitions
and to illustrate the resulting structures, we use a partial function
representation in the form of lambda trees taken from
Blom~\cite{blom04rta}, which will serve as mediator between metric
completion and ideal completion.\footnote{In %
  \iftechrep%
  Appendix~\ref{sec:direct-approach} %
  \else%
  the companion report~\cite{techrep} %
  \fi%
  we give a direct proof of the correspondence between metric and
  ideal completion based on the meta theory of Majster-Cederbaum and
  Baier~\cite{majster-cederbaum96tcs}.} A lambda tree is a (possibly
infinite) labelled tree where a label $\lambda$ indicates abstraction
and $@$ indicates application; labels in $\calV$ indicate free
variables and a label $p \in \allpos$ indicates a variable that is
bound by an abstraction at position $p$. There is no label
corresponding to $\bot$, which instead is represented as a `hole' in
the tree. We write $\dom{f}$ to denote the domain of a partial
function $f$, and $f(p) \simeq g(q)$ to indicate that the partial
functions $f$ and $g$ are either both undefined or have the same value
at $p$ and $q$, respectively.
\begin{definition}
  \label{def:lambda}
  A \emph{lambda tree} is a partial function $t\colon \allpos \pfunto
  \lamsig$ with $\lamsig =
  \set{\lambda,@}\uplus\allpos\uplus\calV$ so that\\[5pt]
  \begin{enumerate*}[(a),mode=unboxed]
    \begin{tabular}{rl@{\quad}l@{\quad}l}
    \item\label{item:lambda1} &$p \concat \seq 0 \in \dom{t}$ & $\implies$ & $t(p) = \lambda$,
      \\
    \item\label{item:lambda2} &$p \concat \seq 1 \in \dom{t}$ or $p \concat \seq 2 \in
      \dom{t}$ &$\implies$& $t(p) = @$, and\\
    \item\label{item:lambda3} &$t(p) = q$, where $q\in \allpos$& $\implies$& $q \le p$ and $t(q) =
      \lambda$.
    \end{tabular}
  \end{enumerate*}\\[7pt]
  As one would expect, the domain $\dom t$ of a lambda tree $t$ is
  prefix closed.
  
  The set of all lambda trees is denoted $\iptree$. The set of
  \emph{$\bot$-positions} in $t$, denoted $\domBot{t}$, is the
  smallest set that satisfies the following:
  \begin{enumerate*}[(a)]
  \item $\emptyseq \nin \dom t$ implies $\emptyseq \in \domBot t$;
  \item $t(p) = \lambda$, $p\concat \seq 0 \nin \dom t$ implies $p \concat
    \seq 0 \in \domBot t$; and
  \item $t(p) = @$, $p\concat\seq i \nin \dom t$, $i \in \set{1,2}$
    implies $p\concat\seq i \in \domBot t$.
  \end{enumerate*}
  A lambda tree $t$ is called \emph{total} if $\domBot t$ is empty.
  The set of all total lambda trees is denoted $\itree$.  A lambda
  tree $t$ is called \emph{finite} if $\dom{t}$ is a finite set. The
  set of all finite (total) lambda trees is denoted $\ptree$
  (respectively $\tree$). A \emph{renaming} of a lambda tree $t$ is a
  lambda tree $s$ such that there is a bijection
  $f \fcolon \calV \funto \calV$ with the following properties:
  $s(p) = t(p)$ if $t(p) \in \lamsig\setminus\calV$, $s(p) = f(t(p))$
  if $t(p) \in \calV$, and otherwise $s(p)$ is undefined.
\end{definition}
In order to avoid confusion, we use upper case letters $M, N$ for
lambda terms and lower case letters $s,t,u$ for lambda trees. Below,
we give a bijection from lambda terms to finite lambda trees that
should help illustrate the idea behind lambda trees. At the heart of
this bijection are the following constructions based on
Blom~\cite{blom04rta}:
\begin{definition}
  \label{def:treeConstr}
  Given lambda trees $t, t_1, t_2 \in \iptree$ and a variable
  $x\in\calV$, let $\bot$, $\vartree x$, $\abstree x t$ and $t_1 \,
  t_2$ be partial functions of type $\allpos \pfunto \lamsig$ defined
  by their graph as follows:\vspace{-5pt}
  \begin{align*}
    \bot &= \emptyset \qquad \vartree x = \set{(\emptyseq,x)}\\
    \abstree x t &= \set{(\emptyseq,\lambda)}
    \begin{aligned}[t]
      &\cup \setcom{(\seq 0 \concat p,l)}{l\in
        \set{\lambda,@}\uplus\calV\setminus\set{x}, (p,l)\in t}\\
      &\cup \setcom{(\seq 0 \concat p,\seq 0\concat q)}{q\in \allpos, (p,q)\in t}
      \cup \setcom{(\seq 0 \concat p,\emptyseq)}{(p,x)\in t}
    \end{aligned}\\
    t_1\, t_2 &= \set{(\emptyseq,@)}
    \begin{aligned}[t]
      &\cup \setcom{(\seq i \concat p,l)}{i\in\set{1,2}, l\in
        \set{\lambda,@}\uplus\calV, (p,l)\in t_i}\\
      &\cup \setcom{(\seq i \concat p,\seq i\concat q)}{i\in\set{1,2},
        q\in \allpos, (p,q)\in t_i}\
    \end{aligned}
  \end{align*}
\end{definition}
One can easily check that each of the above four constructions yields
a lambda tree, where $\bot$ is the empty lambda tree, $\vartree x$ the
lambda tree consisting of a single free variable $x$, $\abstree x t$
is a lambda abstraction over $x$ with body $t$, and $t_1\, t_2$ is an
application of $t_1$ to $t_2$.  The following translation of lambda
terms to finite lambda trees illustrates the use of these
constructions:
\begin{definition}
  \label{def:lamtree}
  Let $\lamtree\cdot \colon \plam \to \ptree$ be defined recursively
  as follows:\vspace{-5pt}
  \begin{gather*}
    \lamtree{\bot} = \bot \hspace{1.1cm}%
    \lamtree{\lambda x.M} = \abstree x \lamtree M\hspace{1.1cm}%
    \lamtree{x}    = \vartree x \hspace{1.1cm}%
    \lamtree{M\, N} = \lamtree M\, \lamtree N 
  \end{gather*}
\end{definition}


One can easily check that $\lamtree\cdot\colon \plam\to\ptree$ is
indeed a bijection, which, if restricted to $\lam$, is a bijection
from $\lam$ to $\tree$. Moreover, one can show that each
$t \in \iptree$ with some $\seq i\concat p \in \dom t$ is equal to
$\abstree x t'$ if $i =0$ and to $t_1\, t_2$ if $i \in \set{1,2}$, for
some $t', t_1, t_2 \in \iptree$. Following this observation, we
define, for each $t \in \iptree$ and $p \in \dom t$, the
\emph{subtree} of $t$ at $p$, denoted $\subtree t p$, by induction on
$p$ as follows: $\subtree t \emptyseq = t$,
$\subtree {\abstree x t} {\seq 0 \concat p} = \subtree t p$, and
$\subtree {t_1\, t_2}{\seq i \concat p} = \subtree {t_i} p$ for
$i \in \set{1,2}$. One can easily check that $\subtree t p$ is
uniquely defined modulo renaming of free variables.

\begin{definition}
  An \emph{infinite branch} in a lambda tree $t \in \iptree$ is an
  infinite sequence $S$ such that each proper prefix of $S$ is in
  $\dom{t}$. We call a proper prefix of $S$ a \emph{position along
    $S$}.
\end{definition}

Note that by instantiating K\"{o}nig's Lemma to lambda trees, we know
that a lambda tree is infinite iff it has an infinite branch.

The idea of the metric $\dda$ on lambda terms is to disallow (in the
ensuing metric completion) infinite branches that have only finitely
many non-strict positions along them. The following definition makes
this restriction explicit on lambda trees:
\begin{definition}
  \label{def:afinite}
  An infinite branch $S$ of a lambda tree $t$ is called
  \emph{$\ola$-bounded} if the $\ola$-depth of all positions along $S$
  is bounded by some $n <\omega$, i.e.\ $\adepth{p} < n$ for all
  $p < S$. The lambda tree $t$ is called \emph{$\ola$-unguarded} if it
  has an $\ola$-bounded infinite branch $S$. Otherwise, $t$ is called
  \emph{$\ola$-guarded}. The set of all $\ola$-guarded (total) lambda
  trees is denoted $\aptree$ (respectively $\atree$). In particular,
  $\aptree[000] = \ptree$ and $\aptree[111] = \iptree$.
\end{definition}
For example, the lambda tree $s$ with $s = s\, \vartree y$ is
$101$-unguarded while $t$ with $t = \abstree y t\, \vartree y$ is
$101$-guarded as each application is guarded by an abstraction (which
is non-strict).


For each strictness signature $\ola$, we give a metric $\ddta$ on
lambda trees that corresponds to the metric $\dda$ on lambda terms.
\begin{definition}
  For each two lambda trees $s,t \in \iptree$, define $\ddta(s,t) = 0$
  if $s = t$ and otherwise $\ddta(s,t) = 2^{-d}$, where $d$ is the
  least $\adepth{p}$ with $s(p) \not\simeq t(p)$.
\end{definition}

From the characterisation of the metric completion of $(\plam,\dda)$
from Kennaway et al.~\cite[Lemma~7]{kennaway97tcs} we know that the
metric space of $\ola$-guarded lambda trees $(\aptree,\ddta)$ is
indeed the metric completion of $(\plam,\dda)$ with the isometric
embedding $\lamtree\cdot\colon\plam\to\ptree$ \iftechrep%
(cf.\ Appendix~\ref{sec:direct-approach} for a more formal treatment).
\else%
(cf.\ the companion report~\cite{techrep}).
\fi%
Analogously, $(\atree,\ddta)$ is the metric completion of
$(\lam,\dda)$.

\section{The Ideal Completion}
\label{sec:ideal-completion}

In this section, we present an alternative to the metric completion
from Section~\ref{sec:metric-completion} that is based on a family of
partial orders on lambda terms indexed by strictness signatures. In
the following we assume basic familiarity with order theory.

\begin{definition}
  \label{def:alebot}
  Given a strictness signature $\ola$, the partial order $\alebot$ is
  the least transitive, reflexive order on $\plam$ satisfying the
  following for all
  $M,M',N,N' \in \plam$ and $x \in \calV$:\\[5pt]
  \begin{enumerate*}[(a),mode=unboxed]
    \begin{tabular}{rr@{\;}l@{\qquad}l@{\quad}l@{\quad}l@{\quad}r@{\,}l}
    \item \label{item:alebot1} & $\bot$ &$\alebot M$\\[1pt]
    \item \label{item:alebot2}& $\lambda x. M$ &$\alebot \lambda x . M'$ &if&  $M
      \alebot M'$ & and & $M \neq \bot$ &or 
      $a_0 = 1$\\[1pt]
    \item \label{item:alebot3}& $M N$ &$\alebot M'N$ &if& $M \alebot M'$ & and & 
      $M \neq \bot$ &or  $a_1 = 1$\\[1pt]
    \item \label{item:alebot4}& $M N$ &$\alebot MN'$ & if& $N \alebot N'$ & and & 
      $N \neq \bot$ &or  $a_2 = 1$
    \end{tabular}
  \end{enumerate*}
\end{definition}
For the case that $\ola = 111$, we obtain the partial order $\lebot$
that is typically used for ideal completions. This order is fully
monotone, i.e.\ $M \lebot M'$ implies $\lambda x. M$
$\lebot \lambda x . M'$, $M N \lebot M'N$ and $N M \lebot N M'$. By
contrast, $\alebot$ restricts monotonicity of abstraction in case
$a_0 =0$ and of application in case $a_1 = 0$ or $a_2 =
0$. Intuitively, we have $M \alebot N$ iff $N$ can be obtained from
$M$ by replacing occurrences of $\bot$ in $M$ at non-strict positions
with arbitrary terms. For example, if $\ola = 001$, then neither
$\lambda x. \bot \alebot \lambda x. x\, x$ nor
$\lambda x. \bot\, x \alebot \lambda x. x\, x$; but we do have that
$\lambda x.x\, \bot \alebot \lambda x. x\, x$.

With this intuition in mind, we translate $\alebot$ to a corresponding
order $\talebot$ on lambda trees as follows:
\begin{definition}
  \label{def:talebot}
  Given lambda trees $s,t\in\iptree$, we have $s
  \talebot t$ if\\[2pt]
  \begin{enumerate*}[(a),mode=unboxed]
    \begin{tabular}{rll}
    \item\label{item:talebot1} &$\dom{s} \subseteq \dom{t}$,& 
      \\
    \item\label{item:talebot2} &$s(p) = t(p)$  &for all $p \in
      \dom{s}$, and
      \\
    \item\label{item:talebot3} &$p \in \dom{s} \implies p\concat\seq i
      \in \dom{s}$ &for all $\ola$-strict positions $p\concat\seq i
      \in \dom t$.
    \end{tabular}
  \end{enumerate*}
\end{definition}
Conditions \ref{item:talebot1} and \ref{item:talebot2} alone would
give us the corresponding order for the standard partial order
$\lebot$. Condition \ref{item:talebot3} ensures that the partial
order $\talebot$ may not fill a hole in a strict position in the
left-hand side tree.

One can check that $(\iptree,\talebot)$ forms a partially ordered
set. Moreover, we have the following correspondence between the two
families of orders $\alebot$ and $\talebot$:
\begin{proposition}
  \label{prop:lamtreeIso}
  $\lamtree\cdot\colon(\plam,\alebot) \to (\ptree,\talebot)$ is an
  order isomorphism.
\end{proposition}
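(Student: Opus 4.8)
The map $\lamtree\cdot\colon\plam\to\ptree$ is already known to be a bijection, so it suffices to show that it is an order embedding, i.e.\ that $M\alebot N$ if and only if $\lamtree M\talebot\lamtree N$ for all $M,N\in\plam$. I would prove the two implications separately, using the inductive characterisation of $\alebot$ for the forward direction and induction on term structure for the backward direction.

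For the forward implication I would exploit that $\alebot$ is the \emph{least} reflexive and transitive relation on $\plam$ closed under rules \ref{item:alebot1}--\ref{item:alebot4}. Set $R=\setcom{(M,N)}{\lamtree M\talebot\lamtree N}$. Since $(\iptree,\talebot)$ is a partial order, $R$ is reflexive and transitive, so by minimality it remains only to check that $R$ is closed under the four generating rules; then $\alebot\subseteq R$. Rule \ref{item:alebot1} is immediate, since $\lamtree\bot=\bot$ is the empty tree and the three clauses of Definition~\ref{def:talebot} hold vacuously when the left domain is empty. For rules \ref{item:alebot2}--\ref{item:alebot4} I would unfold the constructions $\abstree x t$ and $t_1\,t_2$ of Definition~\ref{def:treeConstr}, which embed their arguments under $\seq 0$, $\seq 1$, $\seq 2$ with a uniform relabelling: clauses \ref{item:talebot1} and \ref{item:talebot2} then transfer directly because $\lamtree M$ and $\lamtree{M'}$ agree on $\dom{\lamtree M}$. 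The only strictness-sensitive point is clause \ref{item:talebot3} at the root; e.g.\ for rule \ref{item:alebot3} the child $\seq 1$ is $\ola$-strict exactly when $a_1=0$, and in that case the side condition ``$M\neq\bot$ or $a_1=1$'' forces $M\neq\bot$, hence $\seq 1\in\dom{\lamtree M\,\lamtree N}$ whenever $\seq 1\in\dom{\lamtree{M'}\,\lamtree N}$. Rules \ref{item:alebot2} and \ref{item:alebot4} are handled identically via $a_0$ and $a_2$.

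For the backward implication I would argue by induction on the structure of $M$, using the subtree equations $\subtree{\abstree x t}{\seq 0}=t$ and $\subtree{t_1\,t_2}{\seq i}=t_i$ together with an auxiliary \textbf{subtree lemma}: if $s\talebot t$ and $\seq i\in\dom s$, then $\subtree s{\seq i}\talebot\subtree t{\seq i}$. This lemma holds because $\ola$-strictness of a position depends only on its last index, so the three clauses of Definition~\ref{def:talebot} are preserved under stripping a leading index (the bound-variable relabelling being consistent modulo renaming of free variables). The cases $M=\bot$ and $M=x$ are immediate: the former by rule \ref{item:alebot1}, the latter because the domain and label constraints of $\talebot$ force $N=x$, whence reflexivity. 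If $M=\lambda x.M_0$, clauses \ref{item:talebot1}--\ref{item:talebot2} give $N=\lambda x.N_0$, and $M_0\alebot N_0$ follows either by rule \ref{item:alebot1} (when $M_0=\bot$) or by the subtree lemma and the induction hypothesis (when $M_0\neq\bot$, so $\seq 0\in\dom{\lamtree M}$). Rule \ref{item:alebot2} then yields $M\alebot N$, except in the subcase $M_0=\bot$ and $a_0=0$, where clause \ref{item:talebot3} applied to the strict child $\seq 0$ forces $N_0=\bot=M_0$, so $M=N$ and we conclude by reflexivity. The case $M=M_1M_2$ is analogous: I obtain $M_i\alebot N_i$ as above and assemble $M_1M_2\alebot N_1M_2\alebot N_1N_2$ via rules \ref{item:alebot3} and \ref{item:alebot4}, again invoking clause \ref{item:talebot3} to force $N_i=\bot$ in precisely the subcases where a side condition fails.

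The main obstacle is not the overall induction but the relabelling bookkeeping inside the tree constructions: making the subtree lemma precise requires tracking how $\abstree x t$ rewrites a free occurrence of $x$ into the bound-variable marker $\emptyseq$ and shifts every position label $q$ to $\seq 0\concat q$, and verifying that this relabelling is identical for $M$ and $M'$ on their common domain. Once this is pinned down, the genuine content is the interaction between clause \ref{item:talebot3} and the strictness side conditions of Definition~\ref{def:alebot}; everything else is routine verification.
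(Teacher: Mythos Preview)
Your proposal is correct and follows essentially the same strategy as the paper: bijectivity plus a two-way order-embedding argument, with the forward direction exploiting minimality of $\alebot$ against the relation $R=\setcom{(M,N)}{\lamtree M\talebot\lamtree N}$ and the backward direction by structural induction on $M$, using clause~\ref{item:talebot3} to recover the strictness side conditions. The only cosmetic difference is that you package the passage from $\lamtree M\talebot\lamtree N$ to $\lamtree{M_i}\talebot\lamtree{N_i}$ as an explicit ``subtree lemma'', whereas the paper unfolds the graph of $\abstree x t$ and $t_1\,t_2$ directly from Definition~\ref{def:treeConstr}; both routes do the same relabelling bookkeeping.
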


For the remainder of this section, we turn our focus to the partial
orders $\talebot$ on lambda trees. In particular, we show that
$(\aptree,\talebot)$ forms a \emph{complete semilattice} and that it
is (order isomorphic to) the ideal completion of $(\plam,\alebot)$.  A
complete semilattice is a partially ordered set $(A,\le)$ that is a
\emph{complete partial order} (\emph{cpo}) and that has a
\emph{greatest lower bound} (\emph{glb}) $\Glb B$ for every
\emph{non-empty} set $B \subseteq A$.\footnote{Equivalently, complete
  semilattices are bounded complete cpos. Hence, complete semilattices
  are a generalisation of \emph{Scott domains} (which in addition have
  to be \emph{algebraic}).} A partially ordered set $(A,\le)$ is a cpo if it
has a least element, and each directed set $D$ in $(A,\le)$ has a
\emph{least upper bound} (\emph{lub}) $\Lub D$; a set $D \subseteq A$
is called directed if for each two $a,b \in D$ there is some $c \in D$
with $a,b\le c$.

In particular, for any sequence $(a_\iota)_{\iota<\alpha}$ in a
complete semilattice, its \emph{limit inferior}, defined by
$\liminf_{\iota \limto \alpha}a_\iota = \Lub_{\beta<\alpha}
\left(\Glb_{\beta \le \iota < \alpha} a_\iota\right)$, exists. While
the metric completion lambda calculi are based on the limit of the
underlying metric space, our ideal completion lambda calculi are based
on the limit inferior.

To show that $(\aptree,\talebot)$ forms a complete semilattice
structure, we construct the appropriate lubs and glbs:
\begin{theorem}[cpo $(\aptree,\talebot)$]
  \label{thr:alebotCpo}
  The partially ordered set $(\aptree,\talebot)$ forms a complete
  partial order. In particular, the lub $t$ of a directed set $D$
  satisfies the following:\par\vspace{5pt} $\dom{t} = \bigcup_{s \in
    D} \dom{s}\hspace{2cm} s(p) = t(p) \quad \text{ for all } s\in D,
  p\in \dom{s}$
\end{theorem}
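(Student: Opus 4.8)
The plan is to verify the two defining properties of a cpo directly. First I would exhibit the least element: the empty lambda tree $\bot$ (with $\dom\bot = \emptyset$) is $\ola$-guarded and satisfies $\bot \talebot t$ for every $t \in \aptree$, since conditions~\ref{item:talebot1}--\ref{item:talebot3} of Definition~\ref{def:talebot} all hold vacuously. The bulk of the work is the construction of lubs. Given a directed set $D \subseteq \aptree$, I would define the candidate lub $t$ by the two equations in the statement, i.e.\ $\dom t = \bigcup_{s\in D}\dom s$ and $t(p) = s(p)$ whenever $p \in \dom s$ for some $s \in D$. Well-definedness of this value is the first thing to check: if $p \in \dom{s_1}\cap\dom{s_2}$ with $s_1,s_2\in D$, directedness yields $c \in D$ above both, and condition~\ref{item:talebot2} applied to $s_1\talebot c$ and $s_2\talebot c$ forces $s_1(p) = c(p) = s_2(p)$.

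Next I would check that $t$ is genuinely an element of $\iptree$ and an upper bound of $D$. That $t$ is a lambda tree follows pointwise: each of the three clauses of Definition~\ref{def:lambda} at a position $p$ transfers from whichever $s\in D$ has $p \in \dom s$, using prefix-closedness of $\dom s$ to handle the pointer clause~\ref{item:lambda3}; and $\dom t$ is prefix closed as a union of prefix-closed sets. To see $s \talebot t$ for each $s\in D$, clauses~\ref{item:talebot1} and~\ref{item:talebot2} are immediate from the definition of $t$, while clause~\ref{item:talebot3} requires an argument: if $p\in\dom s$ and $p\concat\seq i \in\dom t$ is $\ola$-strict, pick $s'\in D$ with $p\concat\seq i\in\dom{s'}$ and a common upper bound $c\in D$ of $s,s'$; then $p\concat\seq i\in\dom c$, and clause~\ref{item:talebot3} of $s\talebot c$ puts $p\concat\seq i$ back into $\dom s$. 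Note that this establishes $s\talebot t$ in $\iptree$ without yet knowing $t\in\aptree$.

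The hard part will be showing $t \in \aptree$, i.e.\ that $t$ is $\ola$-guarded, since an infinite branch of $t$ need not lie in any single $s \in D$. Suppose toward a contradiction that $t$ has an $\ola$-bounded infinite branch $S$ with $\adepth p < n$ for all $p < S$, and write its proper prefixes as $\emptyseq = p_0 < p_1 < \dots$ with $p_{k+1} = p_k\concat\seq{i_k}$. Since each $\ola$-non-strict step increases the $\ola$-depth by one while strict steps leave it unchanged, and the depths are bounded by $n$, only finitely many steps are non-strict; hence there is $K$ such that for all $k\ge K$ the position $p_{k+1}$ is an $\ola$-strict extension of $p_k$. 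Now $p_K\in\dom t$, so $p_K\in\dom s$ for some $s\in D$; by prefix-closedness all of $p_0,\dots,p_K$ lie in $\dom s$, and an induction using $s\talebot t$ together with clause~\ref{item:talebot3} shows that every $p_k$ with $k\ge K$ lies in $\dom s$ as well (each such $p_{k+1}$ is an $\ola$-strict position of $\dom t$ extending $p_k\in\dom s$). Thus all proper prefixes of $S$ belong to $\dom s$, so $S$ is an $\ola$-bounded infinite branch of $s$, contradicting $s\in\aptree$. This pull-down of the whole branch into a single element of $D$, enabled precisely by condition~\ref{item:talebot3}, is the crux of the argument.

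Finally I would verify leastness: if $u\in\aptree$ is any upper bound of $D$, then $\dom t = \bigcup_{s\in D}\dom s\subseteq\dom u$ and $t$ agrees with $u$ on $\dom t$ (clauses~\ref{item:talebot1} and~\ref{item:talebot2}), while clause~\ref{item:talebot3} for $t\talebot u$ follows because any $\ola$-strict $p\concat\seq i\in\dom u$ with $p\in\dom t$ already has $p\in\dom s$ for some $s\in D$, whence $p\concat\seq i\in\dom s\subseteq\dom t$ by $s\talebot u$. Hence $t = \Lub D$, completing the proof that $(\aptree,\talebot)$ is a cpo with lubs of directed sets given by the stated formulas.
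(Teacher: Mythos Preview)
Your proof is correct and follows essentially the same approach as the paper's: construct $t$ as the pointwise union, use directedness for well-definedness, verify the lambda-tree axioms and the upper-bound/leastness conditions directly, and for $\ola$-guardedness argue that an $\ola$-bounded infinite branch must eventually consist only of strict steps, which by clause~\ref{item:talebot3} forces it into a single $s\in D$. The only organisational difference is that you establish $s \talebot t$ (in $\iptree$) first and then reuse it in the guardedness argument, whereas the paper proves guardedness directly via common upper bounds in $D$; both are fine.
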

\begin{proof}[Proof sketch]
  The lambda tree $\bot$ is the least element in
  $(\aptree,\talebot)$. Construct the lub $t$ of $D$ as follows:
  $t(p) = s(p)$ iff there is some $s \in D$ with $p \in \dom{s}$. One
  can check that $t$ indeed is a well-defined lambda tree that is
  $\ola$-guarded and is the least upper bound of $D$.
\end{proof}

\begin{proposition}[glbs of $\talebot$]
  \label{prop:alebotGlb}
  Every non-empty subset $T$ of $\aptree$ has a glb $\Glb T$ in
  $(\aptree,\talebot)$ such that $\dom{\Glb T}$ is the largest set $P$
  satisfying the following properties:\vspace{-5pt}
  \begin{enumerate}[(1)]
  \item If $p\in P$, then there is some $l \in \lamsig$ such that
    $s(p) = l$ for all $s \in T$.
    \label{item:talebotGlb1}
  \item If $p \concat\seq i \in P$, then $p \in P$.
    \label{item:talebotGlb2}
  \item If $p \in P$, $a_i = 0$, and $p \concat \seq i \in \dom s$ for
    some $s \in T$, then $p\concat\seq i \in P$.
    \label{item:talebotGlb3}
  \end{enumerate}
\end{proposition}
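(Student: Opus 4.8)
The plan is to show that conditions \ref{item:talebotGlb1}–\ref{item:talebotGlb3} precisely characterise the domains of lower bounds of $T$, and then to read off the greatest lower bound from the largest such set. First I would establish that the family of subsets $P \subseteq \allpos$ satisfying \ref{item:talebotGlb1}–\ref{item:talebotGlb3} is closed under arbitrary unions: each of the three conditions is preserved under unions, since in every case the hypothesis places a position into the union — hence into some single member $P_j$ — from which the conclusion follows by the corresponding condition for $P_j$. Since $\emptyset$ trivially satisfies all three conditions, the union of all such sets is itself the largest set $P$ enjoying \ref{item:talebotGlb1}–\ref{item:talebotGlb3}, and this is the candidate domain for $\Glb T$.

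Next I would define the candidate glb $u$ by $\dom u = P$ and $u(p) = s(p)$, where $s$ is any element of $T$; this is well defined by \ref{item:talebotGlb1}. I would then check that $u$ is a lambda tree by verifying conditions \ref{item:lambda1}–\ref{item:lambda3} of Definition~\ref{def:lambda}. Each verification reduces to the corresponding property of a fixed $s \in T$: condition \ref{item:talebotGlb1} yields $P \subseteq \dom s$ for every $s$ (all trees are defined and agree on $P$), while condition \ref{item:talebotGlb2} makes $P$ prefix closed, so that, for instance, $u(p) = q \in \allpos$ forces $q \le p$, then $q \in P$, and finally $u(q) = s(q) = \lambda$. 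To see that $u \in \aptree$, i.e.\ that $u$ is $\ola$-guarded, I would observe that every infinite branch of $u$ has all its positions in $\dom u \subseteq \dom s$ and is therefore an infinite branch of $s$ with the same $\ola$-depths; were it $\ola$-bounded, $s$ would fail to be $\ola$-guarded, contradicting $s \in \aptree$. Here non-emptiness of $T$ is used to supply such an $s$.

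It then remains to show $u = \Glb T$. That $u$ is a lower bound, $u \talebot s$ for all $s \in T$, is immediate: \ref{item:talebot1} and \ref{item:talebot2} hold by construction, and \ref{item:talebot3} is exactly property \ref{item:talebotGlb3} of $P$. For greatestness I would take an arbitrary lower bound $v$ and show that $\dom v$ satisfies \ref{item:talebotGlb1}–\ref{item:talebotGlb3}: \ref{item:talebotGlb1} follows from $v \talebot s$ forcing $v$ to agree with every $s$, \ref{item:talebotGlb2} from $\dom v$ being prefix closed, and \ref{item:talebotGlb3} from condition \ref{item:talebot3} of $v \talebot s$. Maximality of $P$ then gives $\dom v \subseteq P = \dom u$, and the three clauses of $v \talebot u$ follow by the same bookkeeping, using that $u$ and $v$ both agree with the trees in $T$ on their domains.

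I expect the only genuinely substantive step to be the union-closure argument securing the existence of the largest $P$; once that is in hand, the remaining work is routine pointwise verification of the lambda-tree axioms, of $\ola$-guardedness, and of the clauses of $\talebot$, all of which transfer from the members of $T$.
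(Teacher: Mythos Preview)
Your proposal is correct and follows essentially the same approach as the paper: establish the existence of the largest $P$ via closure under unions, define the candidate glb as the restriction of any member of $T$ to $P$, and then verify the lambda-tree axioms, $\ola$-guardedness, the lower-bound property, and greatestness by checking that any lower bound's domain satisfies \ref{item:talebotGlb1}--\ref{item:talebotGlb3}. The paper's full proof carries out exactly these steps with the same justifications.
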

\begin{proof}[Proof sketch]
  Let $P \subseteq \allpos$ be the largest set satisfying
  \ref{item:talebotGlb1} to \ref{item:talebotGlb3}. As these
  properties are closed under union, $P$ is well-defined. We define
  the partial function $t\colon \allpos \pfunto \lamsig$ as the
  restriction of an arbitrary lambda tree in $T$ to $P$.  Using
  \ref{item:talebotGlb1} and \ref{item:talebotGlb2}, one can show that
  $t$ is indeed a well-defined $\ola$-guarded lambda tree. One can then
  check that $t$ is the glb of $T$.
\end{proof}
For instance $\Glb\set{\abstree x\vartree x\,\vartree y,\abstree x
  \vartree y\,\vartree x}$ is $\abstree x \bot\,\bot$ for $011$,
$\abstree x \bot$ for $110$, and $\bot$ for $001$.

\begin{theorem}
  \label{thr:alebotCompSemi}
  $(\aptree,\talebot)$ is a complete semilattice for any $\ola$.
\end{theorem}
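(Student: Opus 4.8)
The plan is to read the statement off directly from the definition of a complete semilattice given immediately before Theorem~\ref{thr:alebotCpo}, together with the two results we have just established. Recall that a complete semilattice is required to be (i) a complete partial order, i.e.\ a poset with a least element in which every directed set has a least upper bound, and (ii) a poset in which every non-empty subset has a greatest lower bound. The task is therefore only to certify that $(\aptree,\talebot)$ meets both conditions, uniformly in $\ola$.

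For condition~(i), I would invoke Theorem~\ref{thr:alebotCpo}: it exhibits $\bot$ as the least element of $(\aptree,\talebot)$ and constructs, for every directed set $D\subseteq\aptree$, its least upper bound inside $\aptree$. Hence $(\aptree,\talebot)$ is a cpo. For condition~(ii), I would invoke Proposition~\ref{prop:alebotGlb}, which produces a greatest lower bound $\Glb T$ in $(\aptree,\talebot)$ for every non-empty $T\subseteq\aptree$. Since both cited results are proved for an arbitrary strictness signature, the conclusion holds for any $\ola$.

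There is no genuine obstacle remaining: all the substantive work---constructing the lubs and glbs while checking that the resulting trees are well-defined lambda trees that remain $\ola$-guarded---has already been discharged in Theorem~\ref{thr:alebotCpo} and Proposition~\ref{prop:alebotGlb}. The only point worth noting is that the two kinds of bound are taken in the same poset $\aptree$ with respect to the same order $\talebot$, so that they jointly witness the complete-semilattice structure; this is immediate by construction. Equivalently, one could route the argument through the characterisation in the footnote, deriving bounded completeness of the cpo from the existence of glbs, but the direct assembly above is the shortest.
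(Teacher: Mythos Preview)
Your proposal is correct and matches the paper's proof exactly: the paper simply states that the result follows from Theorem~\ref{thr:alebotCpo} and Proposition~\ref{prop:alebotGlb}. There is nothing to add.
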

\begin{proof}
  Follows from Theorem~\ref{thr:alebotCpo} and
  Proposition~\ref{prop:alebotGlb}.
\end{proof}

We conclude this section by establishing the partially ordered set
$(\aptree,\talebot)$ as (order isomorphic to) the ideal completion of
$(\plam, \alebot)$.  Recall that, given a partially order set
$(A,\le)$, its ideal completion is an extension of the original
partially ordered set to a cpo. A set $B \subseteq A$ is called an
\emph{ideal} in $(A,\le)$ if it is directed and
\emph{downward-closed}, where the latter means that for all
$a \in A, b \in B$ with $a \le b$, we have that $a \in B$. The
\emph{ideal completion} of $(A,\le)$, is the partially ordered set
$(\calI,\subseteq)$, where $\calI$ is the set of all ideals in
$(A,\le)$ and $\subseteq$ is standard set inclusion.

\begin{theorem}[ideal completion]
  \label{thr:alebotCompl}
  The ideal completion of $(\plam,\alebot)$ is order isomorphic to
  $(\aptree, \talebot)$.
\end{theorem}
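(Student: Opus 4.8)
The plan is to exhibit $(\aptree,\talebot)$ as an algebraic cpo whose compact elements are exactly the finite lambda trees, and then invoke the standard fact that every algebraic cpo is order isomorphic to the ideal completion of its poset of compact elements. Since $\lamtree\cdot$ is an order isomorphism from $(\plam,\alebot)$ onto $(\ptree,\talebot)$ by Proposition~\ref{prop:lamtreeIso}, identifying the compact elements of $(\aptree,\talebot)$ with $\ptree$ then yields the claim. Equivalently, I would build the isomorphism directly via the mutually inverse maps $\phi(I) = \Lub\set{\lamtree M\mid M\in I}$ and $\psi(t) = \set{M\in\plam\mid \lamtree M \talebot t}$, where $\phi(I)$ exists because $\lamtree\cdot$ carries the directed set $I$ to a directed set in the cpo $(\aptree,\talebot)$ of Theorem~\ref{thr:alebotCpo}.

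There are three technical ingredients. First, every finite lambda tree $u$ is compact: if $u \talebot \Lub D$ for a directed set $D$, then since $\dom u$ is finite and $\dom{\Lub D}=\bigcup_{s\in D}\dom s$ by Theorem~\ref{thr:alebotCpo}, directedness of $D$ places all of $\dom u$ inside a single $\dom s$; the value agreement and the strict-child condition~\ref{item:talebot3} for the pair $(u,\Lub D)$ then transfer to $(u,s)$ because $\dom s\subseteq\dom{\Lub D}$, giving $u\talebot s$. (Conversely, a compact element equals a finite lub of finite trees below it and hence is itself finite, so the compact elements are \emph{exactly} the finite trees.) Second, the set of finite trees below a given $t$ is directed: for $u_1,u_2\talebot t$ the set $P=\dom{u_1}\cup\dom{u_2}$ is finite, prefix closed, and closed under those strict children that $t$ possesses, so $t|_P$ is a finite lambda tree with $u_1,u_2\talebot t|_P\talebot t$.

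The crux — and the third ingredient — is that $t=\Lub\set{u\in\ptree\mid u\talebot t}$, i.e.\ $(\aptree,\talebot)$ is algebraic. By Theorem~\ref{thr:alebotCpo} this reduces to showing that every position $p\in\dom t$ lies in the domain of some finite $u\talebot t$. Here condition~\ref{item:talebot3} forbids the naive choice (the prefix closure of $\set p$): retaining a node forces us to retain all of its strict children present in $t$, so I must take the \emph{strict closure} $P$ of the prefixes of $p$, obtained by repeatedly adjoining strict children occurring in $t$. The finiteness of $P$ is exactly where $\ola$-guardedness is used: any infinite branch through $P$ would, beyond depth $\len p$, follow only strict edges and hence have bounded $\ola$-depth, i.e.\ be an $\ola$-bounded infinite branch of $t$, contradicting $t\in\aptree$. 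Since lambda trees are finitely branching, K\"onig's Lemma then makes $P$ finite, and $u=t|_P$ is the required finite approximant containing $p$.

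With these three facts the verifications close up: downward closure and directedness of $\psi(t)$ follow from Proposition~\ref{prop:lamtreeIso} together with the second ingredient, so $\psi(t)$ is an ideal; $\phi(\psi(t))=t$ is the algebraicity statement of the third ingredient; and $\psi(\phi(I))=I$ uses compactness of each $\lamtree M$ to reduce $\lamtree M\talebot\phi(I)$ to $\lamtree M\talebot\lamtree N$ for some $N\in I$, hence to $M\alebot N$ and $M\in I$ by downward closure. Monotonicity of $\phi$ and $\psi$ is immediate. The main obstacle is the covering lemma of the third ingredient, where the interaction of the strict-child condition~\ref{item:talebot3} with $\ola$-guardedness (via K\"onig's Lemma) is essential; everything else is routine given Proposition~\ref{prop:lamtreeIso} and Theorem~\ref{thr:alebotCpo}.
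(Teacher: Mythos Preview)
Your proof is correct and follows essentially the same strategy as the paper: define the mutually inverse maps between ideals and trees, and show that every $t\in\aptree$ is the directed lub of the finite trees below it, with K\"onig's Lemma plus $\ola$-guardedness supplying the required finiteness. The one substantive difference is in the choice of finite approximants: the paper uses the $\ola$-depth truncation $\atrunc t d$ (the restriction of $t$ to positions of $\ola$-depth $<d$), which it has already set up for the metric/order comparison, whereas you build an ad hoc strict closure of the prefixes of a single position; both are finite for the same reason, and both sit below $t$ by construction. Similarly, for directedness the paper invokes the complete-semilattice structure (bounded pairs have lubs, then argue the lub is finite), while your direct union-of-domains construction avoids that detour. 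Your explicit framing via algebraic cpos and compactness is a clean way to package the argument, but the underlying content is the same.
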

\begin{proof}[Proof sketch]
  By Proposition~\ref{prop:lamtreeIso}, it suffices to show that the
  ideal completion $(\calI,\subseteq)$ of $(\ptree,\talebot)$ is order
  isomorphic to $(\aptree,\talebot)$. To this end, we define two
  functions $\phi\colon \aptree \funto \calI$ and
  $\psi\colon \calI \funto \aptree$ as follows:
  $\phi(t) = \setcom{s\in\ptree}{s \talebot t}$; $\psi(T) = \Lub
  T$. Well-definedness of $\phi$ and $\psi$ follows from K\"onig's
  Lemma and Theorem~\ref{thr:alebotCpo}, respectively.
  Both $\phi$ and $\psi$ are obviously monotone and one can check that
  $\phi$ and $\psi$ are inverses of each other. Hence,
  $(\calI,\subseteq)$ is order isomorphic to $(\aptree,\talebot)$
\end{proof}
Now that we have established the connection between $\aptree$ and the
metric completion resp.\ the ideal completion of $\plam$, we turn our
focus to $\aptree$ for the rest of this paper.

The characterisation of lubs and glbs for the complete semilattice
$(\aptree,\talebot)$ allows us to relate the corresponding notion of
limit inferior with the limit in the complete metric space
$(\aptree,\ddta)$ as summarised in the following theorem:
\begin{theorem}
  \label{thr:limLiminf}
  Let $(t_\iota)_{\iota<\alpha}$ be a sequence in $\aptree$.\vspace{-5pt}
  \begin{enumerate}[(i)]
  \item If $\lim_{\iota\limto\alpha} t_\iota = t$ in
    $(\aptree,\ddta)$, then $\liminf_{\iota\limto \alpha} t_\iota = t$
    in $(\aptree,\talebot)$.
    \label{item:limLiminf1}
  \item If $\liminf_{\iota\limto\alpha} t_\iota = t$ in
    $(\aptree,\talebot)$ and $t$ is total, then
    $\lim_{\iota\limto\alpha} t_\iota = t$ in $(\aptree,\ddta)$.
    \label{item:limLiminf2}
  \end{enumerate}
\end{theorem}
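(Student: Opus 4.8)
The plan is to unfold the limit inferior as $s := \liminf_{\iota\limto\alpha} t_\iota = \Lub_{\beta<\alpha} u_\beta$ with $u_\beta = \Glb_{\beta\le\iota<\alpha} t_\iota$, and to compare $s$ with metric convergence. We may assume $\alpha$ is a limit ordinal, the successor case being trivial (both the limit and the limit inferior then equal the last term). Throughout I use that $\ddta(u,u') < 2^{-n}$ holds exactly when $u(p)\simeq u'(p)$ for every position $p$ with $\adepth p\le n$; thus $\lim_{\iota\limto\alpha} t_\iota = t$ means that for every $n<\omega$ there is a $\beta<\alpha$ such that $t_\iota$ and $t$ agree (in the $\simeq$ sense) on all positions of $\ola$-depth at most $n$, for all $\iota$ with $\beta\le\iota<\alpha$. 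Since $T_\beta = \setcom{t_\iota}{\beta\le\iota<\alpha} \supseteq T_{\beta'}$ for $\beta\le\beta'$, the glbs form a $\talebot$-chain $(u_\beta)_{\beta<\alpha}$, so its lub is computed by Theorem~\ref{thr:alebotCpo}: $\dom s = \bigcup_{\beta} \dom{u_\beta}$, with labels inherited from the $u_\beta$, which by Proposition~\ref{prop:alebotGlb} are in turn the common labels of the $t_\iota$.

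For part~\ref{item:limLiminf1}, I would prove $s = t$ by establishing $\dom s = \dom t$ with matching labels. For $\dom s\subseteq\dom t$: if $p\in\dom{u_\beta}$, then by condition~\ref{item:talebotGlb1} all $t_\iota$ with $\iota\ge\beta$ carry a common label $l=u_\beta(p)=s(p)$ at $p$; choosing $\iota$ large enough that $t_\iota$ also agrees with $t$ up to depth $\adepth p$ forces $t(p)=l$, so $p\in\dom t$ with $t(p)=s(p)$. For the converse, fix $p\in\dom t$, put $m=\adepth p$, pick $\beta$ witnessing metric agreement up to depth $m$, and verify that the window $P=\setcom{q\in\dom t}{\adepth q\le m}$ satisfies properties~\ref{item:talebotGlb1}--\ref{item:talebotGlb3} for $T_\beta$; since $\dom{u_\beta}$ is the \emph{largest} such set, this yields $p\in\dom{u_\beta}\subseteq\dom s$. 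The only non-routine point is condition~\ref{item:talebotGlb3}: here I use that a \emph{strict} child $q\concat\seq i$ has $\adepth{q\concat\seq i}=\adepth q+a_i=\adepth q\le m$, so it still lies inside the depth-$m$ window where metric convergence guarantees agreement with $t$; hence if some $t_\iota$ ($\iota\ge\beta$) has $q\concat\seq i$ in its domain, then $q\concat\seq i\in\dom t$, and thus $q\concat\seq i\in P$.

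For part~\ref{item:limLiminf2}, I would fix $n$ and reduce metric convergence to agreement on the finite set $F_n = \setcom{p\in\dom t}{\adepth p\le n}$. Finiteness of $F_n$ follows from König's Lemma together with $\ola$-guardedness of $t$: $F_n$ is prefix-closed and finitely branching, so were it infinite it would contain an infinite branch all of whose positions have $\ola$-depth at most $n$, i.e.\ an $\ola$-bounded branch, contradicting $t\in\aptree$. Since $F_n$ is finite and $\dom t=\bigcup_\beta\dom{u_\beta}$ with $(u_\beta)$ a chain, a single $\beta_0$ satisfies $F_n\subseteq\dom{u_{\beta_0}}$; then condition~\ref{item:talebotGlb1} gives $t_\iota(p)=u_{\beta_0}(p)=t(p)$ for all $p\in F_n$ and all $\iota\ge\beta_0$. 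It remains to upgrade agreement on $F_n$ to agreement up to depth $n$, i.e.\ to show $t_\iota(p)\simeq t(p)$ also for positions $p$ with $\adepth p\le n$ and $p\notin\dom t$. This is exactly where \emph{totality} of $t$ enters: taking the shortest prefix $q'\concat\seq i\le p$ with $q'\in\dom t$, totality ($\domBot t=\emptyset$) pins down which children of $q'$ exist, so the matching label $t_\iota(q')=t(q')$ forbids $q'\concat\seq i\in\dom{t_\iota}$, whence $p\notin\dom{t_\iota}$. Thus $\ddta(t_\iota,t)<2^{-n}$ for all $\iota\ge\beta_0$.

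The main obstacle in both parts is the behaviour of \emph{strict} positions, which do not increase $\ola$-depth: in part~\ref{item:limLiminf1} this is what makes condition~\ref{item:talebotGlb3} of the glb characterization interact non-trivially with the depth-$m$ window, and in part~\ref{item:limLiminf2} the possibly unbounded strict branching within a single depth level is precisely what must be tamed---by guardedness (for finiteness of $F_n$) and by totality. Totality is essential: it excludes non-strict holes at which the glb collapses while the sequence merely oscillates, as witnessed by $x\,y,\,x\,z,\,x\,y,\dots$ (for $\ola=111$), whose limit inferior $x\,\bot$ is non-total and whose metric limit does not exist.
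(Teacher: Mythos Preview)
Your proof is correct and follows essentially the same strategy as the paper's: the paper packages your depth window $P=\setcom{q\in\dom t}{\adepth q\le m}$ as the truncation $\atrunc{t}{m+1}$ and, instead of verifying the explicit glb characterisation from Proposition~\ref{prop:alebotGlb}, uses the universal property of the glb (from $\atrunc{t}{d}\talebot t_\iota$ conclude $\atrunc{t}{d}\talebot u_\beta$), but this amounts to the same argument. One wording nit in part~\ref{item:limLiminf2}: by ``shortest prefix $q'\concat\seq i\le p$ with $q'\in\dom t$'' you mean the shortest prefix of $p$ \emph{not} in $\dom t$ (equivalently, $q'$ is the \emph{longest} prefix of $p$ in $\dom t$); as written the condition is satisfied already by the length-one prefix.
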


The key to establish the correspondence above is the following
characterisation of the limit $t$ of a converging sequence
$(t_\iota)_{\iota<\alpha}$ in $(\aptree,\ddta)$:
\begin{center}
  $\dom{t} = \bigcup_{\beta<\alpha}\bigcap_{\beta\le\iota<\alpha}
  \dom{t_\iota},\text{ and} \quad t(p) = l \iff \exists \beta <
  \alpha \forall \beta \le
  \iota < \alpha\colon\; t_\iota(p) = l$
\end{center}
The proof of the correspondence result makes use of a notion of
truncation similar Arnold and Nivat's~\cite{arnold80fi} but
generalised to be compatible with the $\talebot$-orderings.

From the above findings we can conclude that the limit inferior in
$(\aptree,\talebot)$ restricted to total lambda trees coincides with
the limit in $(\atree,\ddta)$. In other words, the limit inferior is a
conservative extension of the limit. In the next section, we transfer
this result to (strong) convergence of reductions.

\section{Transfinite Reductions}
\label{sec:transf-reduct}

In this section, we study finite and transfinite reductions on lambda
trees. To this end, we assume for the remainder of this paper a fixed
strictness signature $\ola$ such that all subsequent definitions and
theorems work on the same universe of lambda trees $\aptree$ and its
associated structures $\ddta$ and $\talebot$ (unless stated
otherwise).  Moreover, we need a suitably general notion of reduction
steps beyond the familiar $\beta$- and $\eta$-rules in order to
accommodate B\"ohm reductions in Section~\ref{sec:bohm-like-tree}.
\begin{definition}
  
  A \emph{rewrite system} $R$ is a binary relation on $\aptree$ such
  that $(s,t) \in R$ implies that $s \neq \bot$. Given $s,t \in
  \aptree$ and $p \in \allpos$, an \emph{$R$-reduction step} from $s$
  to $t$ at $p$, denoted $s \to_{R,p} t$, is inductively defined as
  follows: if $(s,t) \in R$, then $s \to_{R,\emptyseq} t$; if $t
  \to_{R,p} t'$, then $\abstree x t \to_{R,\seq 0\concat p} \abstree
  x t'$, $t\, s \to_{R,\seq 1\concat p} t'\, s$, and $s\, t
  \to_{R,\seq 2\concat p} s\, t'$ for all $s \in \aptree$.  If $R$ or
  $p$ are irrelevant or clear from the context, we omit them in the
  notation $\to_{R,p}$. If $(t,t')\in R$, then $t$ is called an
  \emph{$R$-redex}. If $s \to_{R,p} t$, then $s$ is said to have an
  \emph{$R$-redex occurrence} at $p$. A lambda tree $t$ is called an
  \emph{$R$-normal form} if no $R$-reduction step starts from $t$. The
  prefix ``$R$-'' is dropped if $R$ is irrelevant or clear from the
  context.
\end{definition}

\begin{example}
  The familiar $\beta$- and $\eta$-rules form rewrite systems as
  follows:\vspace{-5pt}
  \begin{center}
    $\betared = \setcom{((\abstree x t)\, s,\subst t x
      s)}{s,t\in\aptree}%
    \qquad %
    \etared = \setcom{(\abstree x t\; \vartree x,t)}{t\in\aptree, x
      \nin\range{t}}$
  \end{center}\vspace{-5pt}
  where substitution $\subst t x s$ is defined as follows: for each $p
  \in \allpos$ we have that $\subst t x s (p) = t(p)$ if $t(p) \in
  \lamsig\setminus\set x$; $\subst t x s (p) = s(p_2)$ if $p = p_1\concat p_2, t(p_1) =
  x, s(p_2) \in \lamsig\setminus\allpos$; $\subst t x s (p) = p_1\concat s(p_2)$ if $p =
  p_1\concat p_2, t(p_1) = x, s(p_2) \in \allpos$; and $\subst t x s
  (p)$ is undefined
  otherwise.
\end{example}

The resulting $\betared$-reduction step relation $\to[\betared]$ on
lambda trees is isomorphic (via the isomorphism of
Theorem~\ref{thr:alebotCompl}) to the lifting of the ordinary finitary
$\beta$-reduction step relation on lambda terms to the ideal
completion via the lifting operator $[ \cdot \rangle$ of
Blom~\cite{blom01phd}. An analogous correspondence can be shown for
$\etared$ as well.

\begin{definition}
  A sequence $S =(t_\iota \to_{R,p_\iota} t_{\iota+1})_{\iota<\alpha}$ of
  $R$-reduction steps is called an \emph{$R$-reduction}; $S$ is called
  \emph{total} if each $t_\iota$ is total. If $S$ is finite, we also
  write $S\fcolon t_0 \fto*[R] t_\alpha$.
\end{definition}

The above notion of reductions is too general as it does not relate
lambda trees $t_\beta$ at a limit ordinal index $\beta$ to the lambda
trees $(t_\iota)_{\iota<\beta}$ that precede it. This shortcoming is
addressed with a suitable notion of convergence and continuity.  In
the literature on infinitary rewriting one finds two different
variants of convergence/continuity: a \emph{weak} variant, which
defines convergence/continuity only according to the underlying
structure (metric limit or limit inferior), and a \emph{strong}
variant, which also takes the position of contracted redexes into
consideration. While both the metric and the partial order lend
themselves to either variant, we only consider the strong variant here
and refer the reader to
  \iftechrep%
  Appendix~\ref{sec:weak-convergence}
  \else%
  the companion report~\cite{techrep}
  \fi%
for the weak variant.

We use the name $\mrs$-convergence and $\prs$-convergence to
distinguish between the metric- and the partial order-based notion of
convergence, respectively. Our notion of (strong) $\mrs$-convergence
is the same notion of convergence that Kennaway et
al.~\cite{kennaway97tcs} used for their infinitary lambda calculi. For
our notion of (strong) $\prs$-convergence we instantiate the abstract
notion of strong $\prs$-convergence from our previous
work~\cite{bahr10rta}. The key ingredient of $\prs$-convergence is the
notion of \emph{reduction context}, which assigns to each reduction
step $s \to t$ a lambda tree $c$ with $c \talebot s,t$. Intuitively,
this reduction context $c$ comprises the (maximal) fragment of $s$
that cannot be changed by the reduction step, regardless of the
reduction rule. For instance, the reduction context of
$\abstree x (\abstree y \vartree y)\, \vartree x \to \abstree x \vartree x$ is
$\abstree x \bot$ if $a_0 = 1$, and $\bot$ otherwise. The notion of
$\prs$-convergence is defined using the limit inferior of the sequence
of reduction contexts (instead of the original lambda trees
themselves). The canonical approach to derive such a reduction context
for any complete semilattice is to take the greatest lower bound of
the involved lambda trees $s$ and $t$ that does not contain any
position of the redex:
\begin{definition}
  The \emph{reduction context} of a reduction step $s \to_{p} t$ is
  the greatest lambda tree $c$ in $(\aptree,\talebot)$ with $c
  \talebot s,t$ and $p \nin \dom{c}$; we write $s \to[c] t$ to
  indicate the reduction context $c$.
\end{definition}

In order to simplify reasoning and provide an intuitive understanding
of the concept, we give a direct construction of reduction contexts as
well:
\begin{definition}
  Given $t \in \iptree$ and $p \in \dom t$, we write $\posminus t p$
  for the restriction of $t$ to the domain $\setcom{q\in\dom
    t}{p\not\le q}$, and $\acut p$ for the longest non-strict prefix
  of $p$.
\end{definition}
That is, $\posminus t p$ is obtained from $t$ by replacing the subtree
at $p$ with $\bot$.  Moreover, $\acut\,$ can be characterised as
follows: $\acut\emptyseq = \emptyseq$;
$\acut{(p\concat \seq i)} = p\concat \seq i$ if $a_i = 1$; and
$\acut{(p\concat \seq i)} = \acut p$ if $a_i = 0$.

\begin{lemma}
  \label{lem:reductionContextCut}
  The reduction context of $s\to_{p} t$ is equal to
  $\posminus{s}{\acut p}$ and $\posminus{t}{\acut p}$.
\end{lemma}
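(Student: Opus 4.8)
The plan is to exhibit $\posminus{s}{\acut p}$ directly as the reduction context, obtaining the second equality $\posminus{s}{\acut p} = \posminus{t}{\acut p}$ as a by-product; exhibiting this tree at the same time confirms that the reduction context is well defined. Throughout write $q = \acut p$ for the longest non-strict prefix of $p$. The first step I would take is a preliminary fact about reduction steps: if $s \to_p t$, then $s$ and $t$ agree off the subtree at $p$, i.e.\ $s(r) \simeq t(r)$ for every $r$ with $p \not\le r$. This follows by a routine induction on the inductive definition of $\to_p$; the base case $p = \emptyseq$ is vacuous, and each inductive case modifies only the subtree beneath the chosen child $\seq 0$, $\seq 1$, or $\seq 2$, so that all other labels — including the binding-position labels, whose adjustment under the abstraction and application tree constructions is uniform — are preserved. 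Since $q \le p$, any $r$ with $q \not\le r$ also satisfies $p \not\le r$, so $s$ and $t$ agree on $\set{r : q \not\le r}$; hence $\posminus{s}{q} = \posminus{t}{q}$, and I name this common tree $c_0$.

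Next I would verify that $c_0$ meets the three defining requirements of a reduction context. That $c_0$ is a well-defined lambda tree is routine, its domain being prefix-closed with binders preserved; moreover $c_0$ is $\ola$-guarded because $\dom{c_0} \subseteq \dom s$ gives it no infinite branch not already present in the $\ola$-guarded tree $s$. For $c_0 \talebot s$ (and symmetrically $c_0 \talebot t$), conditions (a) and (b) of the ordering are immediate from the definition of $\posminus{s}{q}$, and condition (c) uses that $q$ is non-strict: if $r \in \dom{c_0}$ and $r\concat\seq i \in \dom s$ is strict, then $q \not\le r\concat\seq i$ — otherwise $q = r\concat\seq i$, impossible since $q$ is non-strict while $r\concat\seq i$ is strict — so $r\concat\seq i \in \dom{c_0}$. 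Finally $p \nin \dom{c_0}$ holds because $q \le p$.

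The main work is maximality: every lambda tree $c$ with $c \talebot s,t$ and $p \nin \dom c$ satisfies $c \talebot c_0$. The crux is $\dom c \subseteq \dom{c_0}$, i.e.\ that no $r \in \dom c$ has $q \le r$. Suppose some did; by prefix-closure of $\dom c$ we get $q \in \dom c$. Here I would exploit that $q = \acut p$ is the \emph{longest} non-strict prefix: every prefix $p'$ of $p$ with $q \le p' < p$ is extended towards $p$ by a strict child, since any longer prefix than $q$ must be strict. Walking down the path from $q$ to $p$ and repeatedly applying condition (c) of $c \talebot s$ — each successive child is a strict position lying in $\dom s$, using that $p \in \dom s$ because the redex is non-$\bot$ — forces $p \in \dom c$, contradicting $p \nin \dom c$ (in the boundary case $q = p$ the contradiction is immediate). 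With $\dom c \subseteq \dom{c_0}$ established, label-agreement follows from $c \talebot s$ together with $c_0(r) = s(r)$ on $\dom{c_0}$, and the strictness condition for $c \talebot c_0$ follows from condition (c) of $c \talebot s$ and $\dom{c_0} \subseteq \dom s$. Hence $c_0$ is the greatest such tree and equals both $\posminus{s}{\acut p}$ and $\posminus{t}{\acut p}$.

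The step I expect to be the main obstacle is this downward propagation along the strict path, as it is the only place where the precise definition of $\acut p$ interacts with the strictness clause (c) of $\talebot$; making the characterisation $\acut{(p\concat\seq i)} = \acut p$ for $a_i = 0$ line up with clause (c), together with the boundary case $q = p$, is what makes the maximality argument go through.
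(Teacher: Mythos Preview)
Your proof is correct, and it takes a genuinely different route from the paper's own argument. The paper proceeds by structural induction on the position $p$, decomposing $s$ and $t$ via the tree constructions $\abstree{x}{(\cdot)}$ and $(\cdot)\,(\cdot)$ that mirror the inductive definition of a reduction step, and then invoking the induction hypothesis for the inner step $s' \to_{p'} t'$. You instead work directly with lambda trees as partial functions and verify in one shot that $c_0 = \posminus{s}{\acut p}$ satisfies the defining conditions of the reduction context. The ``walk down from $\acut p$ to $p$ through strict children'' argument you carry out inline is exactly the content of the paper's auxiliary Lemma~\ref{lem:talebotAcut}, and the fact $p \in \dom s$ that you need along the way is the paper's Lemma~\ref{lem:reductionStepPos}; so the essential ideas coincide, but your packaging avoids the case split on the outermost child and the bookkeeping of the tree constructors. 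The trade-off is that the paper's inductive organisation makes the equality $\posminus{s}{\acut p} = \posminus{t}{\acut p}$ fall out of the induction hypothesis, whereas you obtain it from the separate observation that a reduction step leaves everything outside the redex position unchanged; both are fine, and your direct approach is arguably cleaner for readers comfortable with the partial-function view.
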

\begin{proof}[Proof sketch]
  By a straightforward induction on $p$.
\end{proof}
That is, the reduction context of $s\to_{p} t$ is obtained from $s$ by
removing the most deeply nested subtree that both contains the redex
and is in a non-strict position.  The ensuing notions of strong
convergence of reductions are spelled out as follows:
\begin{definition}
  An $R$-reduction $S=(t_\iota \to_{p_\iota, c_\iota}
  t_{\iota+1})_{\iota<\alpha}$ \emph{$\mrs$-converges} to $t_\alpha$,
  denoted $S\colon t_0 \mato[R] t_\alpha$, if
  $\lim_{\iota\limto\gamma} t_\iota = t_\gamma$ and
  $(\adepth{p_\iota})_{\iota < \gamma}$ tends to infinity for all
  limit ordinals $\gamma \le \alpha$. $S$ \emph{$\prs$-converges} to
  $t_\alpha$, denoted $S\colon t_0 \pato_R t_\alpha$, if
  $\liminf_{\iota\limto\gamma} c_\iota = t_\gamma$ for all limit
  ordinals $\gamma\le\alpha$. $S$ is called \emph{$\mrs$-continuous}
  resp.\ \emph{$\prs$-continuous} if the corresponding convergence
  conditions hold for limit ordinals $\gamma < \alpha$ (instead of
  $\gamma \le \alpha$).
\end{definition}
Intuitively, strong convergence under-approximates convergence in the
underlying structure (i.e.\ weak convergence) by assuming that every
contraction changes the root symbol of the redex. Thus, given a
reduction step $s \to[p] t$, strong convergence assumes that the
shortest position at which $s$ and $t$ differ is $p$.

The semilattice structure underlying $\prs$-convergence ensures that every
$\prs$-continuous reduction also $\prs$-converges, whereas
$\mrs$-convergence does not necessarily follow from $\mrs$-continuity:
\begin{example}
  \label{ex:strongConv}
  Given
  $\Omega = (\abstree x \vartree x\,\vartree x) (\abstree x \vartree
  x\,\vartree x)$ and
  $t = (\abstree x \vartree x\,\Omega)\,\vartree y$, we consider the
  $\betared$-reduction $S\fcolon t \to t \to \dots$ that repeatedly
  contracts the redex $\Omega$ in $t$.  $S$ is trivially $\mrs$- and
  $\prs$-continuous. However, it is not $\mrs$-convergent, since
  contraction takes place at a constant $\ola$-depth, namely
  $\adepth{\seq{1,0,2}}$. But it $\prs$-converges to
  $\posminus{t}{\acut{ \seq{1,0,2}}}$, which is also the reduction
  context of each reduction step in $S$ and is equal to
  $(\abstree x \vartree x\,\bot)\,\vartree y$ if $a_2 = 1$, to
  $(\abstree x \bot)\,\vartree y$ if $a_2 = 0$ but $a_0 = 1$, to
  $\bot\,\vartree y$ if $\ola=010$, and to $\bot$ if $\ola = 000$.
\end{example}

Similarly to the correspondence between the limit and the limit inferior
in Theorem~\ref{thr:limLiminf}, we find a correspondence between
$\prs$- and $\mrs$-convergence.
\begin{proposition}
  \label{prop:mconvPconv}
  For each reduction $S\colon s \mato t$, we also have that $S\colon s
  \pato t$.%
\end{proposition}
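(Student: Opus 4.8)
The plan is to verify, at every limit ordinal $\gamma \le \alpha$, the single defining condition of $\prs$-convergence, namely $\liminf_{\iota\limto\gamma} c_\iota = t_\gamma$. Fixing such a $\gamma$, the hypothesis $S\colon s\mato t$ supplies both $\lim_{\iota\limto\gamma} t_\iota = t_\gamma$ in $(\aptree,\ddta)$ and that $(\adepth{p_\iota})_{\iota<\gamma}$ tends to infinity. By part~(i) of Theorem~\ref{thr:limLiminf} the metric limit gives $\liminf_{\iota\limto\gamma} t_\iota = t_\gamma$, so I would reduce the goal to showing $\liminf_{\iota\limto\gamma} c_\iota = \liminf_{\iota\limto\gamma} t_\iota$. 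By Lemma~\ref{lem:reductionContextCut}, $c_\iota = \posminus{t_\iota}{\acut{p_\iota}}$, so $c_\iota \talebot t_\iota$ and $c_\iota$ agrees with $t_\iota$ off the subtree rooted at $\acut{p_\iota}$. Since strict steps leave the $\ola$-depth unchanged we have $\adepth{\acut{p_\iota}} = \adepth{p_\iota}$; hence the positions on which $c_\iota$ and $t_\iota$ may differ all have $\ola$-depth at least $\adepth{p_\iota}$, and these depths tend to infinity.

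One inequality is immediate from monotonicity: from $c_\iota \talebot t_\iota$ I get $\Glb_{\beta\le\iota<\gamma} c_\iota \talebot \Glb_{\beta\le\iota<\gamma} t_\iota$ for each $\beta<\gamma$, and taking the lub over $\beta$ gives $\liminf_{\iota\limto\gamma} c_\iota \talebot \liminf_{\iota\limto\gamma} t_\iota = t_\gamma$. In particular $\dom{\liminf_{\iota\limto\gamma} c_\iota} \subseteq \dom{t_\gamma}$, with the two trees agreeing wherever the left-hand one is defined.

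The substantial part is the reverse domain inclusion. Fixing $p\in\dom{t_\gamma}$ and $n = \adepth p$, metric convergence at $\gamma$ means that for some $\beta_0<\gamma$, every $t_\iota$ with $\beta_0\le\iota<\gamma$ satisfies $t_\iota(q)\simeq t_\gamma(q)$ for all $q$ with $\adepth q\le n$; crucially, because $\simeq$ also identifies undefined values, this fixes the holes of $t_\gamma$ up to depth $n$ and not merely its labels. Choosing $\beta_1<\gamma$ with $\adepth{p_\iota}>n$ for $\iota\ge\beta_1$ and putting $\beta' = \max(\beta_0,\beta_1)$, the cut $\acut{p_\iota}$ is then a prefix of no $q$ with $\adepth q\le n$, so for every $\iota\ge\beta'$ the trees $c_\iota$ and $t_\gamma$ agree via $\simeq$ on all such $q$. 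I would then use Proposition~\ref{prop:alebotGlb} to show that $R = \setcom{q\in\dom{t_\gamma}}{\adepth q\le n}$ is contained in $\dom{\Glb_{\beta'\le\iota<\gamma} c_\iota}$: its agreement condition holds by the previous sentence, prefix-closure is immediate, and the strict-child condition stays inside $R$ because a strict child has the same $\ola$-depth as its parent and, being defined in some $c_\iota$, must by $\simeq$-agreement lie in $\dom{t_\gamma}$. Since $p\in R$, Theorem~\ref{thr:alebotCpo} then yields $p\in\dom{\Glb_{\beta'\le\iota<\gamma} c_\iota}\subseteq\dom{\liminf_{\iota\limto\gamma} c_\iota}$. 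This establishes $\dom{t_\gamma}\subseteq\dom{\liminf_{\iota\limto\gamma} c_\iota}$; combined with the previous paragraph, the two trees have equal domains and equal values, hence $\liminf_{\iota\limto\gamma} c_\iota = t_\gamma$.

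The hard part is exactly this strict-child closure condition of the greatest lower bound: the reduction context excises an entire subtree rather than a single node, so I must confirm that the excision cannot damage the glb below depth $n$. The two ingredients that I expect to make it go through — that strict positions do not change the $\ola$-depth, so depth-bounded reasoning is closed under passing to strict children, and that metric $\simeq$-agreement constrains holes and not only labels — are the points I would be most careful to state precisely.
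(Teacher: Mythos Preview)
Your proof is correct, but it takes a more laborious route than the paper's. Both arguments start from the same observations: $c_\iota = \posminus{t_\iota}{\acut{p_\iota}}$ by Lemma~\ref{lem:reductionContextCut}, and $\adepth{\acut{p_\iota}} = \adepth{p_\iota}$ tends to infinity. From this the paper draws a simpler conclusion: for each $d<\omega$, eventually $\atrunc{c_\iota}{d} = \atrunc{t_\iota}{d}$, so the sequence $(c_\iota)_{\iota<\gamma}$ has the \emph{same metric limit} as $(t_\iota)_{\iota<\gamma}$, namely $t_\gamma$. A single application of Theorem~\ref{thr:limLiminf}~\ref{item:limLiminf1} to $(c_\iota)$ then yields $\liminf_{\iota\limto\gamma} c_\iota = t_\gamma$ directly.

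You instead apply Theorem~\ref{thr:limLiminf}~\ref{item:limLiminf1} to $(t_\iota)$ first and then establish $\liminf c_\iota = \liminf t_\iota$ by hand, verifying the three closure conditions of Proposition~\ref{prop:alebotGlb} for the set $R = \setcom{q\in\dom{t_\gamma}}{\adepth q\le n}$. This works, and your care that the $\simeq$-agreement constrains holes as well as labels is exactly what makes condition~(3) go through. But this essentially re-derives part of Theorem~\ref{thr:limLiminf} in situ; the paper's route avoids all of that by recognising that the truncation argument already establishes metric convergence of $(c_\iota)$, after which the order-theoretic work is delegated entirely to the theorem.
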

\begin{proof}[Proof sketch]
  Let $S=(t_\iota \to_{p_\iota,c_\iota}
  t_{\iota+1})_{\iota<\alpha}$. If $S$ $\mrs$-converges, then
  $(\adepth{p_\iota})_{\iota<\gamma}$ tends to infinity for all limit
  ordinals $\gamma < \alpha$, i.e.\ for each $d < \omega$ we have that
  $\adepth{p_\iota}\ge d$ after some $\delta < \gamma$. With the help
  of Lemma~\ref{lem:reductionContextCut}, one can show that the latter
  implies that $t_\iota$ and $c_\iota$ coincide up to $\ola$-depth $d$
  for all $\delta \le \iota < \gamma$. Consequently,
  $\lim_{\iota\limto\gamma} t_\iota = \lim_{\iota\limto\gamma}
  c_\iota$, which, by
  Theorem~\ref{thr:limLiminf}~\ref{item:limLiminf1}, implies
  $\lim_{\iota\limto\gamma} t_\iota = \liminf_{\iota\limto\gamma}
  c_\iota$. Since this holds for all limit ordinals $\gamma \le
  \alpha$, we know that $S$ also $\prs$-converges to $t$.
\end{proof}
With the proposition above, we derive the other direction of the
correspondence:
\begin{proposition}
  \label{prop:pconvMconv}
  $S\colon s \pato t$ implies $S\colon s \mato t$ whenever $S$ and $t$
  are total.
\end{proposition}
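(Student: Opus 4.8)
The plan is to handle each limit ordinal $\gamma \le \alpha$ on its own, deriving both defining conditions of $\mrs$-convergence at $\gamma$ — namely that $(\adepth{p_\iota})_{\iota<\gamma}$ tends to infinity and that $\lim_{\iota\limto\gamma} t_\iota = t_\gamma$ — from the single $\prs$-convergence hypothesis $\liminf_{\iota\limto\gamma} c_\iota = t_\gamma$. No induction is needed, since $\prs$-convergence already supplies this identity for every limit ordinal $\gamma\le\alpha$. The first move is to observe that, as $S$ is total and $t = t_\alpha$ is total, every $t_\gamma$ with $\gamma \le \alpha$ is total; hence Theorem~\ref{thr:limLiminf}~\ref{item:limLiminf2} upgrades the limit inferior of the reduction contexts to a genuine metric limit, giving $\lim_{\iota\limto\gamma} c_\iota = t_\gamma$ in $(\aptree,\ddta)$.

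The crux, which I expect to be the \emph{main obstacle}, is the depth condition, and I would establish it by contradiction. If $(\adepth{p_\iota})_{\iota<\gamma}$ does not tend to infinity, there is a bound $d$ and a cofinal set of indices $\iota$ with $\adepth{p_\iota}\le d$. By Lemma~\ref{lem:reductionContextCut} we have $c_\iota = \posminus{t_\iota}{\acut{p_\iota}}$, so $c_\iota$ is undefined at $\acut{p_\iota}$, a position whose $\ola$-depth equals $\adepth{p_\iota}\le d$ (the operator $\acut{\cdot}$ only strips strict steps, which do not contribute to the depth). Choosing $\beta<\gamma$ past which $\ddta(c_\iota,t_\gamma)<2^{-d}$ — available from $\lim_{\iota\limto\gamma}c_\iota=t_\gamma$ — forces $c_\iota$ and $t_\gamma$ to agree on all positions of $\ola$-depth at most $d$. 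For a cofinal $\iota\ge\beta$ with $\adepth{p_\iota}\le d$ this yields $\acut{p_\iota}\notin\dom{t_\gamma}$.

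To turn this into a contradiction with totality of $t_\gamma$, I would inspect the parent of $r := \acut{p_\iota}$. Since $\acut{\cdot}$ returns $\emptyseq$ or a non-strict position, either $r=\emptyseq$ — in which case $\emptyseq\notin\dom{t_\gamma}$ puts $\emptyseq\in\domBot{t_\gamma}$ directly — or $r = r'\concat\seq j$ with $r' \in \dom{t_\iota}$ and hence $t_\iota(r')\in\set{\lambda,@}$ matching $j$. As $r'$ is a proper prefix of $\acut{p_\iota}$ it survives the cut, so $c_\iota(r') = t_\iota(r')$, and since $\adepth{r'}\le d$ the agreement gives $t_\gamma(r') = t_\iota(r')$. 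Now $t_\gamma(r')\in\set{\lambda,@}$ together with $r'\concat\seq j = r\notin\dom{t_\gamma}$ places $r$ into $\domBot{t_\gamma}$ by clause (b) resp.\ (c) of the definition of $\domBot{\cdot}$, contradicting totality of $t_\gamma$. Hence the $\ola$-depths tend to infinity.

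With the depth condition in hand the metric limit follows quickly: because $c_\iota$ and $t_\iota$ agree on every position not below $\acut{p_\iota}$, and all positions below $\acut{p_\iota}$ have $\ola$-depth at least $\adepth{p_\iota}$, we get $\ddta(c_\iota,t_\iota)\le 2^{-\adepth{p_\iota}}$, which tends to $0$. The ultrametric inequality then gives $\ddta(t_\iota,t_\gamma)\le\max(\ddta(t_\iota,c_\iota),\ddta(c_\iota,t_\gamma))$, and both summands vanish, so $\lim_{\iota\limto\gamma}t_\iota = t_\gamma$. Since both conditions hold at every limit ordinal $\gamma\le\alpha$, we conclude $S\colon s\mato t$.
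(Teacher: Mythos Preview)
Your proof is correct and takes a genuinely different route from the paper. The paper stays entirely in the order-theoretic world: it unfolds $t_\gamma = \Lub_\beta \Glb_{\iota\ge\beta} c_\iota$, uses the finiteness of the truncation $\atrunc{t_\gamma}{d}$ (Lemma~\ref{lem:atruncFin}) to find a uniform $\delta$ with $\dom{\atrunc{t_\gamma}{d}}\subseteq\dom{c_\iota}$ for all $\iota\ge\delta$, and from that derives the depth bound by a case split on whether $p_\iota\in\dom{t_\gamma}$; the passage from the depth condition to $\mrs$-convergence then goes through Lemma~\ref{lem:pConvMconv}, which in turn appeals to an external result (Proposition~5.5 of \cite{bahr10rta}) and to Proposition~\ref{prop:mconvPconv}. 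You instead immediately trade the liminf for a metric limit of the contexts via Theorem~\ref{thr:limLiminf}~\ref{item:limLiminf2}, argue the depth condition by locating a $\bot$-position of $t_\gamma$ at $\acut{p_\iota}$, and then recover $\lim_{\iota\to\gamma} t_\iota = t_\gamma$ by the ultrametric triangle inequality together with $\ddta(c_\iota,t_\iota)\le 2^{-\adepth{p_\iota}}$. Your argument is more self-contained (no external reference, no induction, no appeal to Proposition~\ref{prop:mconvPconv}), while the paper's order-theoretic derivation yields the reusable intermediate facts that $t_\iota$ and $t_\gamma$ agree on $\dom{\atrunc{t_\gamma}{d}}$ for all late $\iota$, which feed into later proofs in the appendix.
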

\begin{proof}[Proof sketch]
  One can show that the totality of $S$ and $t$ implies that the
  $\ola$-depth of contracted redexes in each open prefix of $S$ tends
  to infinity. Using Proposition~5.5 from \cite{bahr10rta}, we can
  show that the latter implies that $S$ also $\mrs$-converges. Then
  according to Proposition~\ref{prop:mconvPconv}, $S$ must
  $\mrs$-converge to the same lambda tree $t$.
\end{proof}
Note that it is not sufficient that the two trees $s$ and $t$ are
total. For example, the $\betared$-reduction
$S\colon(\abstree x \vartree y)\,\Omega \pato (\abstree x \vartree y)\,\bot
\to \vartree y$ $\prs$-converges to $\vartree y$ but does not $\mrs$-converge.

Putting Propositions~\ref{prop:mconvPconv} and \ref{prop:pconvMconv}
together we obtain that $\prs$-convergence is a conservative
extension of $\mrs$-convergence:
\begin{corollary}
\label{cor:mconvPconv}
  $S\colon s \mato t$ iff $S\colon s \pato t$ whenever $S$ and $t$ are
  total.
\end{corollary}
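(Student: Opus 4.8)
The plan is to recognise that this corollary is simply the conjunction of the two immediately preceding propositions, so that no new argument is required beyond combining them; I would split the biconditional into its two directions and discharge each by a direct appeal.

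For the left-to-right direction, I would assume $S\colon s \mato t$ and invoke Proposition~\ref{prop:mconvPconv} to conclude $S\colon s \pato t$. It is worth noting that this direction holds \emph{unconditionally}: Proposition~\ref{prop:mconvPconv} makes no reference to totality, so the hypotheses that $S$ and $t$ are total are not consumed here. Intuitively, $\mrs$-convergence already forces $(\adepth{p_\iota})$ to tend to infinity at every limit, which is exactly what makes each reduction context $c_\iota$ agree with $t_\iota$ up to arbitrary $\ola$-depth, so that the limit inferior of the contexts coincides with the metric limit of the trees.

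For the right-to-left direction, I would assume $S\colon s \pato t$ together with the totality of $S$ and $t$, and invoke Proposition~\ref{prop:pconvMconv} to obtain $S\colon s \mato t$. Here the totality hypotheses are genuinely needed: as the example following Proposition~\ref{prop:pconvMconv} shows, a $\prs$-converging reduction need not $\mrs$-converge once $\bot$ is permitted, because $\prs$-convergence can be witnessed by contractions whose $\ola$-depth stays bounded, provided the affected subtrees collapse to $\bot$ in the limit. Combining the two implications under the shared totality hypothesis yields the biconditional. There is no real obstacle at this level---the substance lies entirely in Propositions~\ref{prop:mconvPconv} and \ref{prop:pconvMconv}---and the only point demanding care is to keep track of the asymmetry, namely that the forward direction is free of any totality assumption whereas the backward direction relies on it.
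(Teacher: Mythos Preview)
Your proposal is correct and matches the paper's approach exactly: the corollary is stated immediately after Propositions~\ref{prop:mconvPconv} and~\ref{prop:pconvMconv} as their direct combination, with no further argument given. Your observation about the asymmetry in the totality hypotheses is accurate and adds clarity beyond what the paper spells out.
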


\section{Beta Reduction}
\label{sec:bohm-like-tree}

So far we have only studied the properties of $\prs$-convergence
independent of the rewrite system. In this section, we specifically
study $\betared$-reduction and show infinitary normalisation for all
of our calculi, and infinitary confluence for three of them. However,
considering pure $\betared$-reduction, infinitary confluence only
holds for the $111$ calculus. We can construct counterexamples for the
other calculi:
\begin{example}[\cite{kennaway97tcs}]
  \label{ex:counterCR}
  Given $a_2 = 0$ and $t = (\abstree x \vartree y)\, \Omega$, we find
  reductions $t \pato[\betared] \bot$ and
  $t \to[\betared] \vartree y$. Given $a_2 = 1$, $a_1 = 0$, and
  $t = (\abstree x \vartree x \,\vartree y)\,\Omega$, we have
  $t \pato[\betared] (\abstree x \vartree x \,\vartree y) \,
  \bot\to[\betared] \bot\,\vartree y$ and
  $t \to[\betared] \Omega\, \vartree y \pato[\betared]
  \bot$. Similarly, given $a_2 = 1$, $a_0 = 0$, and
  $t = (\abstree x \abstree y \vartree x)\,\Omega$, we have
  $t \pato[\betared] (\abstree x \abstree y \vartree x) \,
  \bot\to[\betared] \abstree y \bot$ and
  $t \to[\betared] \abstree y \Omega \pato[\betared] \bot$.
\end{example}
Infinitary confluence of pure $\betared$-reduction fails for all
$\mrs$-convergence calculi of Kennaway et al.\cite{kennaway97tcs} --
including the $111$ calculus. On the other hand, the B\"ohm reduction
calculi of Kennaway et al.~\cite{kennaway99jflp}, which extend pure
$\betared$-reduction with infinitely many rules of the form
$t \to \bot$, do satisfy infinitary confluence for the $001$, $101$,
and $111$ calculi.

We would like to obtain similar confluence results for the $001$,
$101$, and $111$ $\prs$-convergence calculi. However, the gap we have
to bridge to achieve infinitary confluence is much narrower in our
$\prs$-convergence calculi. Intuitively, confluence fails for $001$
and $101$ because $\prs$-convergence only captures partiality that is
due to infinite reductions, but not partiality that can propagate via
finite reductions: For example, in the $101$ calculus we have
$\Omega\, \vartree y \pato[\betared] \bot$ but
$\bot\,\vartree y \not\pato[\betared] \bot$. In order to obtain the
desired confluence properties, we have to add the rules
$\abstree x \bot \to \bot$ (for $001$) and $\bot\, t \to \bot$ (for
$001$ and $101$). More generally we define these $\strictred$-rules
formally as follows:
\begin{align*}
  \strictred =\ &\setcom{(t_1\,t_2,\bot)}{t_1,t_2 \in \aptree, t_i
                  =\bot, a_i = 0}
  \cup \setcom{(\abstree x \bot,\bot) )}{a_0 = 0}
\end{align*}
We use the notation $\betasred$ to denote $\betared \cup
\strictred$. Abusing notation, we also write $\betaSred$ to refer to
$\betared$ or $\betasred$, e.g.\ if a property holds for either
system. Note that for the $111$ calculus, $\betasred = \betared$.

In addition, we continue studying the relation between
$\mrs$-convergence and $\prs$-convergence: In general, they are subtly
different, but we show that a $\prs$-converging $\betaSred$-reduction
can be adequately simulated by an $\mrs$-converging
$\bohmred$-reduction and vice versa, where $\bohmred$ is an extension
of $\betared$, called B\"ohm rewrite system, which additionally
contains rules of the form $t \to \bot$. This result uses the same
construction used by Kennaway et al.~\cite{kennaway99jflp} to study
so-called \emph{meaningless terms}.

In the remainder of this section we first characterise the set of
lambda trees that $\prs$-converge to $\bot$
(Section~\ref{sec:partiality}); we then establish a correspondence
between pure $\prs$-convergence and $\mrs$-convergence extended with
rules $t \to \bot$ for lambda trees $t$ that $\prs$-converge to $\bot$
(Section~\ref{sec:correspondence}); and finally we prove infinitary
confluence and normalisation for $\prs$-convergent
$\betasred$-reductions in the $001$, $101$, and $111$ calculi
(Section~\ref{sec:infin-norm-confl}).  For the infinitary confluence
result, we make use of the correspondence between $\prs$-convergence
and $\mrs$-convergence.

\subsection{Partiality}
\label{sec:partiality}

We begin with the characterisation of lambda trees that
$\prs$-converge to $\bot$:
\begin{definition}
  Given an open reduction
  $S = (t_\iota \to[p_\iota] t_{\iota+1})_{\iota<\alpha}$, a position
  $p$ is called \emph{volatile} in $S$ if, for each $\beta < \alpha$,
  there is some $\beta \le \gamma < \alpha$ with
  $\acut{p_\gamma} \le p \le p_\gamma$. If $p$ is volatile in $S$ but
  no proper prefix of $p$ is, then $p$ is called \emph{outermost-volatile}
  in $S$.
\end{definition}

For instance, in the $\betared$-reduction in
Example~\ref{ex:strongConv}, $\seq{1,0,2}$ is volatile and
$\acut{\seq{1,0,2}}$ is outermost-volatile. Note that
outermost-volatile positions must be non-strict, because if $p$ is
volatile, then so is $\acut p$.

The presence of volatile positions characterises partiality in
$\prs$-convergent reductions, which by Corollary~\ref{cor:mconvPconv}
can be stated as follows:
\begin{proposition}
  \label{prop:pconvMconvVolatile}
  $S \fcolon s \mato t$ iff no prefix of $S$ has volatile
  positions and $S \fcolon s \pato t$.
\end{proposition}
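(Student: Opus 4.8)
The plan is to isolate the entire content of the statement in a combinatorial characterisation of redex depths on open prefixes, and then assemble it with the results already established. The core lemma I would prove is: for every limit ordinal $\gamma\le\alpha$, the open prefix $\prefix S\gamma$ has a volatile position if and only if $(\adepth{p_\iota})_{\iota<\gamma}$ does \emph{not} tend to infinity. The starting observation is $(\star)$: if $\acut{p_\iota}\le p\le p_\iota$, then $\adepth p=\adepth{p_\iota}$, since every step strictly below $\acut{p_\iota}$ is $\ola$-strict and contributes $0$ to the $\ola$-depth. Hence a volatile $p$ forces $\adepth{p_\iota}=\adepth p$ for cofinally many $\iota$, keeping the depths bounded; this gives the easy half of the core lemma, and by contraposition the forward implication of the proposition: if $S\fcolon s\mato t$ then $S\fcolon s\pato t$ by Proposition~\ref{prop:mconvPconv}, while $\mrs$-convergence forces $(\adepth{p_\iota})_{\iota<\gamma}$ to tend to infinity at every limit $\gamma\le\alpha$, ruling out volatile positions in every prefix.

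For the backward implication I assume that no prefix of $S$ has a volatile position and that $S\fcolon s\pato t$, and I derive $S\fcolon s\mato t$. By the core lemma the redex depths tend to infinity on every open prefix, which is already the depth condition of $\mrs$-convergence; it remains only to upgrade the limit inferior to a metric limit. Here I reuse the mechanism behind Proposition~\ref{prop:mconvPconv}: an $R$-step at $p_\iota$ alters only positions extending $p_\iota$, so once $\adepth{p_\iota}\ge d$ the restriction of $t_\iota$ to positions of $\ola$-depth below $d$ is unchanged; as this happens eventually for each $d<\omega$, the sequence $(t_\iota)_{\iota<\gamma}$ is Cauchy in $(\aptree,\ddta)$ and converges to some $t'$. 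By Lemma~\ref{lem:reductionContextCut} the contexts $c_\iota$ agree with $t_\iota$ up to $\ola$-depth $d$ eventually, so $\lim_{\iota\limto\gamma}c_\iota=t'$ as well, and Theorem~\ref{thr:limLiminf}~\ref{item:limLiminf1} gives $\liminf_{\iota\limto\gamma}c_\iota=t'$. Since $\prs$-convergence yields $\liminf_{\iota\limto\gamma}c_\iota=t_\gamma$, we get $t'=t_\gamma$, hence $\lim_{\iota\limto\gamma}t_\iota=t_\gamma$, establishing $\mrs$-convergence at $\gamma$; no totality assumption is needed.

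It then remains to prove the hard half of the core lemma: if $(\adepth{p_\iota})_{\iota<\gamma}$ does not tend to infinity, then $\prefix S\gamma$ has a volatile position. Fix $d$ minimal so that $I=\setcom{\iota<\gamma}{\adepth{p_\iota}=d}$ is cofinal. If some non-strict $r$ of $\ola$-depth $d$ satisfies $\acut{p_\iota}=r$ for cofinally many $\iota\in I$, then $r$ is already volatile, since $\acut{p_\iota}=r\le r\le p_\iota$ for those $\iota$. Otherwise each candidate root recurs only boundedly often, so $\setcom{\acut{p_\iota}}{\iota\in I}$ is an infinite set of pairwise distinct non-strict positions of $\ola$-depth $d$; collecting the prefixes occurring cofinally, $\setcom{q\in\allpos}{\adepth q\le d,\ q\le\acut{p_\iota}\text{ for cofinally many }\iota\in I}$ is a prefix-closed, at most ternary-branching, infinite tree, and König's Lemma produces an $\ola$-bounded infinite branch. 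Using $(\star)$, any node of this branch of $\ola$-depth exactly $d$ would lie in a redex cone cofinally and hence be volatile, so it suffices to rule out the remaining case, and for that I would contradict guardedness of $t_\gamma=\liminf_{\iota\limto\gamma}c_\iota\in\aptree$, whose positions of $\ola$-depth $\le d$ form a finite set by König's Lemma.

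The step I expect to be the main obstacle is showing that this $\ola$-bounded branch actually lives in $\dom{t_\gamma}$: membership in the branch is a \emph{cofinal} (limit superior) condition, whereas $\dom{t_\gamma}=\bigcup_{\beta<\gamma}\bigcap_{\beta\le\iota<\gamma}\dom{c_\iota}$ is an \emph{eventual} (limit inferior) condition. The encouraging point is that an $\ola$-bounded branch contains at most $d+1$ non-strict nodes but infinitely many strict ones; the strict nodes are pulled into every late greatest lower bound $\Glb_{\beta\le\iota<\gamma}c_\iota$ by the \emph{existential} clause~\ref{item:talebotGlb3} of Proposition~\ref{prop:alebotGlb}, so only the finitely many non-strict nodes require genuinely eventual agreement. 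By minimality of $d$ and the absence of cofinally recurring roots, for each such node $q$ the roots $\acut{p_\iota}$ lie above $q$ only boundedly often, so $q$ leaves $\dom{c_\iota}$ only boundedly often; securing eventual agreement for these finitely many nodes then places the whole branch into $\dom{t_\gamma}$, contradicting guardedness. Closing this cofinal-to-eventual gap rigorously is the delicate heart of the argument; once it is done, the forward direction, the upgrade step of the second paragraph, and the easy half of the core lemma combine into the stated equivalence.
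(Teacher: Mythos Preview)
Your decomposition and the easy half of the core lemma match the paper's approach, and your second paragraph correctly establishes the $\mrs$-limit once depths tend to infinity; the paper packages that step as a separate lemma (Lemma~\ref{lem:pConvMconv}, itself reduced to an abstract result from~\cite{bahr10rta}), but your inlined argument is equivalent.

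The real divergence is in the hard half of the core lemma. Your route---build the prefix tree $Q$ of cofinally recurring roots, extract an $\ola$-bounded infinite branch via König, then embed that branch in $\dom{t_\gamma}$ to contradict guardedness of the \emph{limit}---is workable, but the ``cofinal-to-eventual'' gap you flag as the delicate heart is a detour you create by targeting $t_\gamma$. The paper's intended argument avoids it entirely by targeting a late \emph{intermediate} tree $t_\beta$ instead. Concretely: with $d=\liminf_{\iota\to\gamma}\adepth{p_\iota}$, pick $\beta$ so that $\adepth{p_\iota}\ge d$ for all $\iota\ge\beta$. Then each step beyond $\beta$ leaves positions of $\ola$-depth $<d$ untouched, so all $t_\iota$ with $\iota\ge\beta$ share the same truncation $\atrunc{t_\beta}{d}$, which is \emph{finite} by guardedness (Lemma~\ref{lem:atruncFin}). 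For each $\iota\in I$ with $\iota\ge\beta$, the root $\acut{p_\iota}$ is either $\emptyseq$ (if $d=0$) or has an immediate predecessor of depth $d-1$ lying in this finite set; hence there are only finitely many possible values for $\acut{p_\iota}$, and the infinite pigeonhole principle gives one that recurs cofinally---a volatile position, directly. No König, no limit-inferior bookkeeping, no clause~\ref{item:talebotGlb3} gymnastics.

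Your sketch does close the gap (the ``minimality of $d$'' you invoke is exactly what forces the late trees to agree on depth $<d$, which is the same engine), but it obscures that the finiteness you need is already available in $t_\beta$ without passing to the limit.
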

\begin{proof}[Proof sketch]
  Let $S = (t_\iota \to[p_\iota] t_{\iota+1})_{\iota<\alpha}$.  The
  ``only if'' direction follows from Proposition~\ref{prop:mconvPconv}
  and the fact that if $(\adepth{p_\iota})_{\iota<\beta}$ tends to
  infinity, then $\prefix{S}{\beta}$ has no volatile positions.  For
  the ``if'' direction, the infinite pigeonhole principle yields that
  $(\adepth{p_\iota})_{\iota<\beta}$ tends to infinity. Using this
  fact, one can show that $S \fcolon s \mato t$.
\end{proof}

More specifically, outermost-volatile positions pinpoint the exact
location of partiality in the result of a $\prs$-converging reduction.
\begin{lemma}
  \label{lem:volatileBot}
  If $p$ is outermost-volatile in $S \fcolon s \pato t$, then $p \in
  \domBot t$.
\end{lemma}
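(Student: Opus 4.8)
The plan is to show that an outermost-volatile position $p$ in $S\fcolon s\pato t$ ends up as a $\bot$-position in the limit $t$, by proving that $p\nin\dom t$ (together with the fact that every proper prefix of $p$ lies in $\dom t$, which makes $p$ a genuine hole rather than a position outside the tree altogether). Recall from the definition of $\prs$-convergence that $t = t_\alpha = \liminf_{\iota\limto\alpha} c_\iota$, and from the characterisation of lubs and glbs in a complete semilattice that $\dom t = \bigcup_{\beta<\alpha}\bigcap_{\beta\le\iota<\alpha}\dom{c_\iota}$, where $c_\iota$ are the reduction contexts. So the core of the argument is to analyse membership of $p$ in the domains of the reduction contexts $c_\iota$.

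First I would use volatility of $p$ directly. Since $p$ is volatile, for every $\beta<\alpha$ there is some $\beta\le\gamma<\alpha$ with $\acut{p_\gamma}\le p\le p_\gamma$. By Lemma~\ref{lem:reductionContextCut}, the reduction context $c_\gamma$ equals $\posminus{t_\gamma}{\acut{p_\gamma}}$, i.e.\ it is obtained by deleting the subtree at $\acut{p_\gamma}$. Because $\acut{p_\gamma}\le p$, the position $p$ lies inside the deleted subtree, so $p\nin\dom{c_\gamma}$. Hence for every $\beta<\alpha$ there is some $\gamma\ge\beta$ with $p\nin\dom{c_\gamma}$, which means $p\nin\bigcap_{\beta\le\iota<\alpha}\dom{c_\iota}$ for every $\beta$, and therefore $p\nin\dom t$ by the domain characterisation above.

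Next I would establish the matching lower bound, namely that every proper prefix $q<p$ satisfies $q\in\dom t$, which is needed to conclude $p\in\domBot t$ rather than having $p$ simply fall outside the tree. Here I would use that $p$ is \emph{outermost}-volatile: no proper prefix of $p$ is volatile. Fix a proper prefix $q<p$. By non-volatility of $q$, there is some $\beta<\alpha$ such that for all $\beta\le\iota<\alpha$ we do \emph{not} have $\acut{p_\iota}\le q\le p_\iota$. I would argue that for such $\iota$ the context $c_\iota=\posminus{t_\iota}{\acut{p_\iota}}$ retains the symbol at $q$: if $\acut{p_\iota}\not\le q$ then $q$ survives the deletion; the remaining case $q<\acut{p_\iota}$ but $p_\iota\not\ge q$ is incompatible with $q<p$ and the structure of prefixes, so it can be ruled out. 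Thus $q\in\dom{c_\iota}$ for all sufficiently large $\iota$, giving $q\in\bigcap_{\beta\le\iota<\alpha}\dom{c_\iota}\subseteq\dom t$.

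Finally I would assemble the pieces: every proper prefix of $p$ is in $\dom t$ while $p$ itself is not, so by the definition of $\domBot t$ (the clauses that turn a missing child of a $\lambda$- or $@$-node, or a missing root, into a $\bot$-position) we get $p\in\domBot t$. The main obstacle I anticipate is the prefix bookkeeping in the third paragraph: carefully justifying, using the characterisation $\acut{(q\concat\seq i)}\in\{q\concat\seq i,\acut q\}$ and the fact that $p$ is outermost-volatile (hence non-strict, so $p=\acut p$), that non-volatility of every proper prefix genuinely forces those prefixes into the stabilised domain $\bigcap_{\beta\le\iota<\alpha}\dom{c_\iota}$. The first paragraph's upper bound is comparatively routine once Lemma~\ref{lem:reductionContextCut} is invoked.
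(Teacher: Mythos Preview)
Your first step—using volatility and Lemma~\ref{lem:reductionContextCut} to show $p\notin\dom t$—is exactly what the paper does.

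The second step has two genuine gaps, both in the place you flag as the obstacle. First, non-volatility of a single proper prefix $q$ only tells you that eventually \emph{not} $(\acut{p_\iota}\le q\le p_\iota)$; this still allows $\acut{p_\iota}\le q$ together with $q\not\le p_\iota$ (for instance when $q$ and $p_\iota$ diverge after $\acut{p_\iota}$), and in that case $q$ \emph{is} removed from $c_\iota=\posminus{t_\iota}{\acut{p_\iota}}$. So your ``remaining case'' cannot be ruled out the way you suggest. What you actually need is the stronger fact that eventually $\acut{p_\iota}\not\le q$. This does follow from outermost-volatility of $p$, but via pigeonhole over the finitely many prefixes of $q$: if it failed, cofinally many $\acut{p_\iota}$ would equal some fixed prefix $r\le q<p$, whence $\acut{p_\iota}=r\le r\le p_\iota$ cofinally, making $r$ itself volatile.

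Second, even granted $\acut{p_\iota}\not\le q$ for all $\iota\ge\beta$, you still need $q\in\dom{t_\iota}$ for all such $\iota$ before you can conclude $q\in\dom{c_\iota}$; this persistence is not automatic, particularly across limit ordinals. The paper packages precisely this step as a separate lemma (Lemma~\ref{lem:pconvOpenPos}(ii) in the appendix): if $q\in\dom{t_\beta}$ and $\acut{p_\iota}\not\le q$ for all $\iota\ge\beta$, then $t_\iota(q)=t(q)$ for all $\iota\ge\beta$. Its proof is a short transfinite induction using the glb characterisation of Proposition~\ref{prop:alebotGlb}; you are in effect inlining this lemma without supplying its argument. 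Once you have it, the \emph{label} $t(q)\in\{\lambda,@\}$ (which you need to invoke the defining clauses of $\domBot t$, not just membership $q\in\dom t$) comes for free from $t_\beta(q)$, so it suffices to treat only the immediate parent rather than every proper prefix.
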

\begin{proof}[Proof sketch]
  Let $S = (t_\iota \to[p_\iota,c_\iota]
  t_{\iota+1})_{\iota<\alpha}$. Since $p$ is volatile in $S$, we find
  for each $\beta < \alpha$ some $\beta \le \iota < \alpha$ with
  $\acut{p_\iota} \le p$. Hence, by
  Lemma~\ref{lem:reductionContextCut}, we know that
  $p \nin \dom{c_\iota}$. Consequently, by Theorem~\ref{thr:alebotCpo}
  and Proposition~\ref{prop:alebotGlb}, we have that $p \nin \dom{t}$.
  If $p = \emptyseq$, then $p \in \domBot{t}$ follows immediately. If
  $p = q\concat\seq 0$, then one can use the fact that no prefix of
  $q$ is volatile to show that $t(q) = \lambda$, which means that
  $p \in \domBot{t}$. The argument for the cases $p = q\concat\seq 1$
  and $p = q\concat\seq 2$ is analogous.
\end{proof}

This characterisation of partiality in terms of volatile positions
motivates the following notions of destructiveness and fragility:
\begin{definition}
  A reduction $S$ is called \emph{destructive} if it is
  $\prs$-continuous and $\emptyseq$ is volatile in $S$. A lambda tree
  $t\in \aptree$ is called \emph{fragile} if there is a destructive
  $\betared$-reduction starting from $t$. The set of all fragile
  \emph{total} lambda trees is denoted $\afrag$.
\end{definition}
Note that fragility is defined in terms of destructive
$\betared$-reductions. However, one can show that a destructive
$\betared$-reduction exists iff a destructive $\betasred$-reduction
exists.

The following proposition explains why destructive reductions have
deserved their name:
\begin{proposition}
  \label{prop:destructBot}
  An open reduction is destructive iff it $\prs$-converges to
  $\bot$.
\end{proposition}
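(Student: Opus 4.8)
The length $\alpha$ of the open reduction $S=(t_\iota \to[p_\iota,c_\iota] t_{\iota+1})_{\iota<\alpha}$ is a limit ordinal, and since $(\aptree,\talebot)$ is a complete semilattice, $\prs$-continuity of $S$ already entails $\prs$-convergence, namely to $t_\alpha = \liminf_{\iota\limto\alpha} c_\iota$. Both ``destructive'' and ``$\prs$-converges to $\bot$'' therefore presuppose $\prs$-continuity, and the real content of the equivalence is that $\emptyseq$ is volatile in $S$ precisely when $t_\alpha=\bot$. I would prove the two implications separately.

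For the implication from destructiveness, suppose $S$ is destructive. Then it is $\prs$-continuous, hence $\prs$-converges to some $t_\alpha$, and $\emptyseq$ is volatile. As no position is a proper prefix of $\emptyseq$, the empty position is vacuously outermost-volatile, so Lemma~\ref{lem:volatileBot} gives $\emptyseq\in\domBot{t_\alpha}$. By the definition of $\domBot{\cdot}$ this forces $\emptyseq\nin\dom{t_\alpha}$, i.e.\ $t_\alpha=\bot$; thus $S\pato\bot$.

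For the converse, assume $S\pato\bot$, so $S$ is in particular $\prs$-continuous and it remains to show that $\emptyseq$ is volatile. From $\bot=t_\alpha=\liminf_{\iota\limto\alpha} c_\iota=\Lub_{\beta<\alpha}\Glb_{\beta\le\iota<\alpha} c_\iota$, together with the observation that the family $\bigl(\Glb_{\beta\le\iota<\alpha} c_\iota\bigr)_{\beta<\alpha}$ is $\talebot$-increasing with least upper bound $\bot$, I conclude $\Glb_{\beta\le\iota<\alpha} c_\iota=\bot$ for every $\beta<\alpha$. Since $c_\iota=\bot$ is equivalent to $\acut{p_\iota}=\emptyseq$ by Lemma~\ref{lem:reductionContextCut}, the volatility of $\emptyseq$ reduces to the following claim: if $\Glb_{\beta\le\iota<\alpha} c_\iota=\bot$, then $c_\gamma=\bot$ for some $\gamma\in[\beta,\alpha)$.

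I would establish this claim by contraposition, and this is the step I expect to be the main obstacle. Assume $c_\iota\neq\bot$ for all $\iota\in[\beta,\alpha)$ and show $\emptyseq\in\dom{\Glb_{\beta\le\iota<\alpha} c_\iota}$, contradicting $\Glb=\bot$. The key technical point is a persistence property along every \emph{all-strict} path from the root, i.e.\ along a position $q$ all of whose non-empty prefixes are $\ola$-strict: such a $q$ has $\emptyseq$ as its only non-strict prefix, so $\acut{p_\iota}\le q$ would force $\acut{p_\iota}=\emptyseq$ and hence $c_\iota=\bot$, which is excluded; thus $q$ lies off the cut $\acut{p_\iota}$, and by Lemma~\ref{lem:reductionContextCut} we get $t_\iota(q)=c_\iota(q)=t_{\iota+1}(q)$. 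A transfinite induction, using this identity at successors and $\prs$-continuity to pass the stabilised value through limit ordinals, then shows that $t_\iota$ restricted to all-strict-path positions is independent of $\iota\in[\beta,\alpha)$, and that each $c_\iota$ agrees with this common restriction. Taking $P$ to be the set of all-strict-path positions lying in $\dom{t_\beta}$, one checks that $P$ contains $\emptyseq$ and satisfies conditions~\ref{item:talebotGlb1}--\ref{item:talebotGlb3} of Proposition~\ref{prop:alebotGlb} (agreement holds by the persistence property, while prefix-closure and strict-closure hold by construction), so $\emptyseq\in P\subseteq\dom{\Glb_{\beta\le\iota<\alpha} c_\iota}$. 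The delicate point is precisely the interaction of the strict-closure condition~\ref{item:talebotGlb3} with potential holes at strict positions: the persistence property is exactly what rules out the otherwise conceivable scenario in which the root stabilises while a forced strict descendant alternates between a hole and a proper subtree, which would render the glb trivial without any single context being $\bot$.
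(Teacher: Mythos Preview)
Your proof is correct and follows essentially the same route as the paper. The ``only if'' direction is identical (Lemma~\ref{lem:volatileBot} applied to the outermost-volatile position $\emptyseq$). For the ``if'' direction, the paper's full proof invokes an auxiliary lemma (the one stating that if $p\in\dom{t_\beta}$ and $\acut{p_\iota}\not\le p$ for all $\iota\ge\beta$, then $t_\iota(p)=t(p)$ for all such $\iota$) and takes its contrapositive at $p=\emptyseq$; what you do is re-derive the special case of that lemma needed here, working directly with the glb characterisation of Proposition~\ref{prop:alebotGlb} over the set of all-strict-path positions. Your persistence argument along all-strict-path positions is exactly the content of that lemma's proof restricted to this set, including the transfinite induction with the limit case handled by verifying conditions~\ref{item:talebotGlb1}--\ref{item:talebotGlb3}. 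One small remark: your phrase ``strict-closure holds by construction'' undersells the point---condition~\ref{item:talebotGlb3} requires that if $p\in P$ and $p\concat\seq i\in\dom{c_\iota}$ for \emph{some} $\iota$, then $p\concat\seq i\in P$, and for this you genuinely need the two-sided persistence $t_\iota(q)\simeq t_\beta(q)$ (defined iff defined) along all-strict-path $q$, not just agreement on $\dom{t_\beta}$; you do identify this correctly in your closing sentence about holes alternating at strict descendants, so the argument goes through.
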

\begin{proof}[Proof sketch]
  The ``only if'' direction follows from Lemma~\ref{lem:volatileBot};
  the converse direction can be shown using the characterisation of
  the limit inferior (Theorem~\ref{thr:alebotCpo},
  Proposition~\ref{prop:alebotGlb}).
\end{proof}
For example, the $\betared$-reduction $\Omega \to \Omega \to \dots$
(cf.\ Example~\ref{ex:strongConv}) $\prs$-converges to $\bot$ and is
thus destructive. As a corollary from the above proposition, we obtain
that every fragile lambda tree -- such as $\Omega$ -- can be
contracted to $\bot$ by an open $\prs$-convergent reduction.

\subsection{Correspondence}
\label{sec:correspondence}

To compare $\mrs$- and $\prs$-converging reductions, we employ B\"ohm
rewrite systems and the underlying notion of $\bot$-instantiation from
Kennaway et al.'s work on meaningless terms \cite{kennaway99jflp}.
\begin{definition}
  Let $\calU \subseteq \itree$ and $t \in \iptree$. A lambda tree
  $s \in \itree$ is called a $\bot$-instance of $t$ w.r.t.\ $\calU$ if
  $s$ is obtained from $t$ by inserting elements of $\calU$ into $t$
  at each position $p \in \domBot{t}$, i.e.\ $s(p) = t(p)$ for all
  $p \in \dom{t}$ and $\subtree s p \in \calU$ for all
  $p \in \domBot t$. The set of lambda trees that have a
  $\bot$-instance w.r.t.\ $\calU$ that is in $\calU$ itself is denoted
  $\calU_\bot$. In other words, $t \in \calU_\bot$ iff there is a
  lambda tree $s \in \calU$ such that $s$ is obtained from $t$ by
  replacing occurrences of $\bot$ in $t$ by lambda trees from $\calU$.
\end{definition}

In particular, we will use the above construction with the set of
fragile total lambda trees $\afrag$, which gives us the set
$\afrag_\bot$.

Finally, we give the construction of B\"ohm rewrite systems.
\begin{definition}
  For each set $\calU \subseteq \atree$, we define the following two
  rewrite systems:
  \begin{align*}
    \botred[\calU] = \setcom{(t,\bot)}{t
      \in\calU_\bot\setminus\set\bot},\hspace{2cm} \bohmred[\calU] = \betared\
    \cup\botred[\calU]
  \end{align*}
  If $\calU$ is clear from the context, we instead use the notation
  $\botred$ and $\bohmred$, respectively.
\end{definition}
In particular, we consider the B\"ohm rewrite system w.r.t.\ fragile
total lambda trees, denoted by $\bohmred[\afrag]$. We start with one
direction of the correspondence between $\prs$-converging
$\betaSred$-reductions and $\mrs$-converging
$\bohmred[\afrag]$-reductions:
\begin{theorem}
  \label{thr:pConvBohmred}
  If $s \pato[\betasred] t$, then $s \mato[\bohmred] t$, where
  $\bohmred = \bohmred[\afrag]$.
\end{theorem}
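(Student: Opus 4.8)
The plan is to transform the given $\prs$-converging $\betasred$-reduction $S\colon s \pato[\betasred] t$ into an $\mrs$-converging $\bohmred$-reduction with the same endpoints, exploiting that the discrepancy between $\prs$- and $\mrs$-convergence is pinpointed exactly by volatile positions (Proposition~\ref{prop:pconvMconvVolatile}). First I would dispose of the $\strictred$-steps by showing $\strictred \subseteq \botred[\afrag]$, so that every $\betasred$-step is already a $\bohmred$-step. A $\strictred$-redex carries a $\bot$ at a strict position (for instance $\bot\, t_2$ with $a_1 = 0$, or $\abstree x \bot$ with $a_0 = 0$); filling every $\bot$ with the finite fragile tree $\lamtree\Omega$ yields a total, $\ola$-guarded $\bot$-instance. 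Contracting the inserted copy destructively makes the \emph{root} volatile, because the enclosing position is strict, so the longest non-strict prefix $\acut{q}$ of each contracted position $q$ collapses to $\emptyseq$. By Proposition~\ref{prop:destructBot} this instance is fragile, placing the redex in $\afrag_\bot\setminus\set\bot$; hence $\strictred \subseteq \botred[\afrag]$ and $\betasred \subseteq \bohmred$.

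With this in hand, if no open prefix of $S$ has a volatile position, then $S$ is already $\mrs$-converging by Proposition~\ref{prop:pconvMconvVolatile}, and since $\betasred \subseteq \bohmred$ the very same step sequence witnesses $s \mato[\bohmred] t$. The substantive case is when $S$ has volatile positions. Fix an outermost-volatile position $p$. Since no proper prefix of $p$ is volatile, for each proper prefix $p'$ of $p$ there is a stage beyond which no contracted position $p_\gamma$ satisfies $\acut{p_\gamma} \le p' \le p_\gamma$; taking the maximum over the finitely many prefixes, there is a single stage beyond which the spine above $p$ is no longer disturbed and the subtree at $p$ is reduced only internally. The restriction of $S$ to $p$ is then $\prs$-continuous with a volatile root, i.e.\ destructive, so by Proposition~\ref{prop:destructBot} the subtree at $p$ $\prs$-converges to $\bot$ and is therefore fragile; and by Lemma~\ref{lem:volatileBot} we have $p \in \domBot t$, so the target already carries a hole precisely at $p$.

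The construction then proceeds by transfinite recursion, mirroring the steps of $S$ but short-circuiting each outermost-volatile position: once the subtree at such a $p$ has become stable and fragile, I would replace its entire internal destructive sub-reduction by the single $\botred[\afrag]$-step contracting that fragile subtree to $\bot$. A single step at $p$ cannot be volatile, and by the previous paragraph it installs exactly the hole that $t$ carries at $p$, so the $\prs$-limit is left unchanged. Carrying this replacement out for all outermost-volatile positions yields a $\bohmred$-reduction $S'\colon s \pato[\bohmred] t$ no prefix of which has a volatile position; Proposition~\ref{prop:pconvMconvVolatile} then upgrades this to $S'\colon s \mato[\bohmred] t$, as required.

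The main obstacle is the bookkeeping of the transfinite recursion. Outermost-volatile positions may appear cofinally, and substitution in $\betared$-steps above them may create unboundedly many parallel copies, so one must track residuals of positions under reduction and schedule the $\botred[\afrag]$-steps so that each is fired only after its subtree has stabilised and become fragile, while ensuring the rearrangement neither introduces fresh volatile positions nor perturbs the limit inferior at any limit stage. Verifying that $S'$ is genuinely $\prs$-continuous with the same $\prs$-limit $t$ — comparing the reduction contexts of $S$ and $S'$ by means of the explicit characterisations of lubs and glbs (Theorem~\ref{thr:alebotCpo}, Proposition~\ref{prop:alebotGlb}) — is where the real effort lies; the convergence conclusion itself is then immediate from Proposition~\ref{prop:pconvMconvVolatile}.
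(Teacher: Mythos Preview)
Your approach is essentially the paper's: transform $S$ into a $\bohmred$-reduction none of whose open prefixes has a volatile position by replacing each outermost-volatile sub-reduction with a single $\botred$-step, then invoke Proposition~\ref{prop:pconvMconvVolatile}. The paper's full proof does exactly this, including your observation that $\strictred \subseteq \botred[\afrag]$.

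One point to tighten: when you write that the subtree at an outermost-volatile position $p$ ``is therefore fragile'', recall that fragility is defined only for \emph{total} trees, whereas $\subtree{t_\gamma}{p}$ need not be total. What you actually need is $\subtree{t_\gamma}{p} \in \afrag_\bot\setminus\set\bot$, obtained by the same $\bot$-instantiation trick you already used for the $\strictred$ case: fill the holes with elements of $\afrag$ and lift the destructive reduction to the instance. The paper makes this step explicit, first converting the destructive $\betasred$-reduction at $p$ into a destructive $\betared$-reduction (such a conversion always exists, as noted after the definition of fragility) so that the lifted reduction genuinely witnesses fragility of the total instance.
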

\begin{proof}[Proof sketch]
  Given $S\fcolon s \pato[\betasred] t$, we construct a
  $\bohmred$-reduction $T$ from $S$ that also $\prs$-converges to $t$
  but that has no volatile positions in any of its open
  prefixes. Thus, according to
  Proposition~\ref{prop:pconvMconvVolatile},
  $T\fcolon s \mato[\bohmred] t$. The construction of $T$ removes
  steps in $S$ that take place at or below any outermost-volatile
  position of some prefix of $S$ and replaces them by a single
  $\botred$-step. Such a $\botred$-step can be performed since a
  fragile lambda tree must be responsible for an outermost-volatile
  position. Moreover, $\strictred$-steps in $S$ are $\botred$-steps in
  $T$ since $\strictred \subseteq\
  \botred[\afrag]$. Lemma~\ref{lem:volatileBot} can then be used to
  show that the resulting $\bohmred$-reduction $T$ 
  $\prs$-converges to $t$.
\end{proof}

The converse direction of Theorem~\ref{thr:pConvBohmred} does not hold
in general. The problem is that $\botred$-steps can be more selective
in which fragile lambda subtree to contract to $\bot$ compared to
$\prs$-convergent reductions with volatile positions. If $p$ is a
volatile position, then so is $\acut p$. Consequently, volatile
positions and thus `$\bot$'s in the result of a $\prs$-converging
reduction are propagated upwards through strict positions. For
example, let $a_0 = 0$, and $t = \abstree y \Omega$. Since $\Omega$ is
fragile, we have the reduction $t \to[\botred] \abstree y \bot$. On
the other hand, via $\prs$-convergent $\betared$-reductions, $t$ only
reduces to itself and $\bot$. This phenomenon, however, does not occur
if we restrict ourselves to the strictness signature $111$ or if we
only consider $\botred$-normal forms. Indeed, in the above example,
$\abstree y \bot$ is not a $\botred$-normal form and can be contracted
to $\bot$ with a $\botred$-step.
\begin{theorem}
  \label{thr:bohmredPConv}
  Let $\bohmred = \bohmred[\afrag]$ and $s \mato[\bohmred] t$ such
  that $s$ is total. Then $s \pato[\betared] t$ if $\ola = 111$ or $t$
  is a $\botred$-normal form.
\end{theorem}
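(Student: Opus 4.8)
The plan is to convert the given $\mrs$-converging $\bohmred$-reduction $S\fcolon s \mato[\bohmred] t$ into a pure $\betared$-reduction $T$ with the same endpoints that $\prs$-converges, by replacing each $\botred$-step with a destructive $\betared$-reduction that contracts the fragile redex to $\bot$ in place. Write $S = (t_\iota \to[p_\iota] t_{\iota+1})_{\iota<\alpha}$. Every step is either a $\betared$-step, which I keep verbatim, or a $\botred$-step contracting a redex $r = \subtree{t_\iota}{p_\iota} \in \afrag_\bot\setminus\set\bot$ to $\bot$ at $p_\iota$; for the latter I substitute an open destructive $\betared$-reduction of $r$.

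The first ingredient is that every $\botred$-redex reduces to $\bot$ by $\prs$-convergent $\betared$-reduction: for $r \in \afrag_\bot\setminus\set\bot$ there is, by definition of $\afrag_\bot$, a total fragile $\bot$-instance $r' \in \afrag$, and projecting a destructive $\betared$-reduction of $r'$ (which exists by fragility) back onto $r$ yields an open destructive reduction of $r$; by Proposition~\ref{prop:destructBot} this reduction $\prs$-converges to $\bot$. The care needed here is that the root of $r$ is volatile already through redex occurrences present in $r$ itself, so that filling the $\bot$-positions of $r$ is not required to witness destructiveness. Embedding this reduction under the context at $p_\iota$ gives a $\prs$-converging $\betared$-reduction $t_\iota \pato[\betared] u_{\iota+1}$, where, by Lemma~\ref{lem:reductionContextCut}, the created $\bot$ propagates up to the longest non-strict prefix $\acut{p_\iota}$; hence $u_{\iota+1} = \posminus{t_\iota}{\acut{p_\iota}} \talebot t_{\iota+1}$, with equality exactly when $p_\iota$ is non-strict.

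Concatenating the kept $\betared$-steps and the inserted reductions yields a $\betared$-reduction $T$ from $s$. Since $S$ $\mrs$-converges, the $\ola$-depths $\adepth{p_\iota}$ tend to infinity at every limit; as $\adepth{\acut{p_\iota}} = \adepth{p_\iota}$ (trailing strict steps contribute $0$ to depth) and the inserted redexes sit no shallower, $T$ is $\prs$-continuous, and by the semilattice structure it then $\prs$-converges, say $T\fcolon s \pato[\betared] t'$. Using the characterisation of the limit inferior (Theorem~\ref{thr:alebotCpo}, Proposition~\ref{prop:alebotGlb}) together with the step-wise inequalities $u_{\iota+1}\talebot t_{\iota+1}$, one obtains $t' \talebot t$; the only discrepancies are extra `$\bot$'s created by upward propagation at strict positions, exactly the phenomenon witnessed by the example $a_0 = 0$, $t = \abstree y \Omega$ preceding the theorem.

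It remains to upgrade $t' \talebot t$ to $t' = t$, which is where the hypothesis enters and which I expect to be the main obstacle. If $\ola = 111$ there are no strict positions, so $\acut{p_\iota} = p_\iota$ and $u_{\iota+1} = t_{\iota+1}$ throughout; no propagation occurs and $t' = t$. If instead $t$ is a $\botred$-normal form, I would argue that every position where $T$ carries an extra `$\bot$' is already a `$\bot$' of $t$: such a position is $\acut{p_\iota}$ for a strict $p_\iota$, and then $\subtree{t_{\iota+1}}{\acut{p_\iota}}$ is a non-trivial element of $\afrag_\bot$ (a $\strictred$-redex lies in $\afrag_\bot$, and $\strictred\subseteq\botred[\afrag]$), so $t_{\iota+1}$ is not $\botred$-normal; since $S$ $\mrs$-converges to the $\botred$-normal form $t$, this subtree is eventually contracted to $\bot$ along $S$, forcing $\acut{p_\iota}\in\domBot t$. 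Hence $\dom t \subseteq \dom{t'}$, and with $t'\talebot t$ we conclude $t' = t$, so $T\fcolon s\pato[\betared]t$. The delicate points are the projection argument of the first ingredient and, above all, matching each propagated `$\bot$' of $T$ with the corresponding eventual contraction in $S$ under the $\botred$-normality of $t$.
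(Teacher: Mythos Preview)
Your first ingredient is false as stated, and this breaks the step-by-step replacement strategy. The claim is that every $r\in\afrag_\bot\setminus\set\bot$ admits a destructive $\betared$-reduction, but take $\ola$ with $a_0=0$ (e.g.\ $001$) and $r=\abstree y\bot$: this lies in $\afrag_\bot$ (instantiate $\bot$ by $\Omega$) yet is a $\betared$-normal form, so no $\betared$-reduction starts from it at all. The same happens for $101$ with $r=\bot\,\vartree y$. Since $s$ is total but the intermediate $t_\iota$ in $S$ become partial after the first $\botred$-step, such $r$ can and do occur as later $\botred$-redexes in $S$; your substitution then has nothing to insert. This is precisely the phenomenon flagged by the example just before the theorem, and the totality hypothesis on $s$ does not by itself rule it out along $S$.

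A second, related gap is that your concatenation defining $T$ is not well-formed once $u_{\iota+1}\neq t_{\iota+1}$: the next step of $S$ starts from $t_{\iota+1}$, not from $u_{\iota+1}$, and you do not explain how to project subsequent $\betared$-steps (their redex occurrence may have been swallowed by the propagated $\bot$). The paper avoids both problems by first factoring $s\mato[\bohmred] t$ into $s\mato[\betared] s'$ followed by $s'\mato[\botred] t$ (via a postponement lemma) and then rearranging the $\botred$-tail so that it contracts pairwise disjoint redexes of $s'$. Because $s$ is total and $\betared$ preserves totality under $\mrs$-convergence, $s'$ is total; disjointness then makes every $\botred$-redex total, hence in $\afrag$ rather than merely $\afrag_\bot$, so a destructive $\betared$-reduction genuinely exists. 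Disjointness also makes the replacements compose cleanly. The hypothesis ($\ola=111$ or $t$ is $\botred$-normal) is then used only to show that each such position is itself non-strict, so that the inserted destructive reduction has it as an \emph{outermost}-volatile position, giving the right target by Lemma~\ref{lem:volatileBot}.
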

\begin{proof}[Proof sketch]
  The reduction $s \mato[\bohmred] t$ can be factored into
  $S\fcolon s \mato[\betared] s'$ and $T\fcolon s' \mato[\botred] t$
  (by the same proof as Lemma~27 of Kennaway et
  al.~\cite{kennaway99jflp}).  Moreover, we may assume w.l.o.g.\ that
  $T$ contracts disjoint $\botred$-redexes in $s'$ (using an argument
  similar to Lemma~7.2.4 of Ketema~\cite{ketema06phd}). By
  Proposition~\ref{prop:mconvPconv}, we have that
  $S\fcolon s \pato[\betared] s'$ and that
  $T\fcolon s' \pato[\botred] t$. For each step $u \to[\botred,p] v$
  in $T$ we find a reduction $T_p \fcolon u \pato[\betared] v'$ in
  which $p$ is volatile since $\subtree u p$ must be fragile. Given
  that $\ola = 111$ or that $t$ is a $\botred$-normal form, we can
  show that $p$ is in fact outermost-volatile in $T_p$. Hence, the
  equality $v = v'$ follows from
  Lemma~\ref{lem:volatileBot}. Therefore, we may replace each step
  $u \to[\botred,p] v$ in $T$ by $T_p$, which yields a reduction
  $s' \pato[\betared] t$.
\end{proof}

That is, in general we get one direction of the correspondence --
namely from metric to partial order reduction -- only for reductions
to normal forms. However, this does not matter that much as
$\prs$-converging $\betaSred$-reductions (an thus also
$\mrs$-converging $\bohmred[\afrag]$-reductions) are normalising as we
show below.

\subsection{Infinitary Normalisation and Confluence}
\label{sec:infin-norm-confl}

We begin by recalling the notion of active lambda
trees~\cite{kennaway99jflp}, which we use to establish infinitary
normalisation and as an alternative characterisation of fragile lambda
trees (in the $001$, $101$, and $111$ calculi).

\begin{definition}
  A lambda tree $t$ is called \emph{stable} if no lambda tree $t'$
  with $t \fto*[\betared] t'$ has a $\betared$-redex occurrence at
  $\ola$-depth 0; $t$ is called \emph{active} if no lambda tree $t'$
  with $t \fto*[\betared] t'$ is stable. The set of all active \emph{total}
  lambda trees is denoted by $\aact$.
\end{definition}
To construct normalising $\prs$-convergent reductions, we follow the
idea of Kennaway et al.~\cite{kennaway99jflp}: We contract all
subtrees of the initial lambda tree into stable form. The only way to
achieve this for active subtrees is to annihilate them by a
destructive reduction. The basis for that strategy is the following
observation:
\begin{lemma}
  \label{lem:aactAfrag}
  Every active lambda tree is fragile.
\end{lemma}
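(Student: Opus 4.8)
The plan is to construct, from an active lambda tree $t$, an open $\betared$-reduction of length $\omega$ in which redexes at $\ola$-depth $0$ are contracted cofinally often. Unfolding the definitions, such a reduction is destructive, and hence $t$ is fragile.

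First I would translate volatility of $\emptyseq$ into a purely combinatorial condition on the contracted positions. For $p = \emptyseq$ the requirement $\acut{p_\gamma} \le \emptyseq \le p_\gamma$ collapses to $\acut{p_\gamma} = \emptyseq$, since the empty sequence has only itself as a prefix. By the recursive characterisation of $\acut{p}$ — which strips trailing strict steps — we have $\acut{p_\gamma} = \emptyseq$ exactly when $p_\gamma$ contains no non-strict step, that is, when $\adepth{p_\gamma} = 0$. Hence $\emptyseq$ is volatile in an open reduction $(t_\iota \to[p_\iota] t_{\iota+1})_{\iota<\omega}$ iff redexes at $\ola$-depth $0$ are contracted cofinally often. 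Moreover, any reduction of length $\omega$ is $\prs$-continuous vacuously, since $\omega$ is the least limit ordinal and $\prs$-continuity only constrains limit ordinals $\gamma < \omega$. So it suffices to produce an $\omega$-length $\betared$-reduction from $t$ with infinitely many contractions at $\ola$-depth $0$.

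The construction is driven by the activity of $t$. Since $t$ is active, no lambda tree reachable from $t$ by a finite $\betared$-reduction is stable. Unfolding stability, from any such tree $u$ (with $t \fto*[\betared] u$) there is a finite reduction $u \fto*[\betared] u'$ to a tree $u'$ carrying a $\betared$-redex occurrence at some position $p$ of $\ola$-depth $0$. Contracting that redex appends a step $u' \to[p] u''$ at $\ola$-depth $0$, and $u''$ is again reachable from $t$ by a finite reduction, hence again non-stable. Starting from $u = t$ and iterating, we obtain infinitely many non-empty finite segments, each ending in a depth-$0$ contraction; their concatenation is a single $\betared$-reduction of length $\omega$ (each segment contributes at least one step) that contracts at $\ola$-depth $0$ cofinally often, with all terms in $\aptree$.

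By the second paragraph this $\omega$-reduction is $\prs$-continuous and has $\emptyseq$ volatile, so it is destructive; therefore $t$ is fragile. (If desired, Proposition~\ref{prop:destructBot} then confirms that it indeed $\prs$-converges to $\bot$.) The only non-routine point is extracting the repeatable move from activity — that non-stability of every finite-reduction descendant of $t$ always yields a fresh reachable depth-$0$ redex — which is precisely what sustains the construction; the remaining work (concatenating the segments over $\omega$, the vacuous $\prs$-continuity at $\omega$, and the translation between $\ola$-depth $0$ and $\acut{p} = \emptyseq$) is routine bookkeeping.
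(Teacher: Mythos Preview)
Your proposal is correct and follows essentially the same approach as the paper's proof: iteratively use the non-stability of each finite-$\betared$-reduct of $t$ to reach a tree with a depth-$0$ redex, contract it, and repeat, yielding a destructive $\omega$-reduction. The paper's proof is terser and phrases the invariant as ``$t_1$ is active'' rather than ``$u''$ is non-stable,'' but these are equivalent here; your additional unpacking of why $\emptyseq$ is volatile and why $\prs$-continuity is vacuous at length $\omega$ is sound detail that the paper leaves implicit.
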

\begin{proof}
  If $t_0$ is active, we find a reduction $t_0 \fto*[\betared] t'_0$
  to a $\betared$-redex at $\ola$-depth $0$. By contracting this redex
  we get a lambda tree $t_1$ that is active, too. By repeating this
  argument we obtain a destructive reduction $t_0 \fto*[\betared] t_0'
  \to[\betared] t_1 \fto*[\betared] t_1' \to[\betared]
  \dots$.
\end{proof}
The following normalisation result then follows straightforwardly:
\begin{theorem}
  \label{thr:prsNormalising}
  For each $s \in \aptree$, there is a normalising reduction $s
  \pato[\betaSred] t$.
\end{theorem}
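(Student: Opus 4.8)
The plan is to build the normalising reduction by a breadth-first strategy that finalises the tree one $\ola$-depth layer at a time, in the spirit of Kennaway et al.~\cite{kennaway99jflp}. First I would record that, since $s$ is $\ola$-guarded, König's Lemma guarantees that only finitely many positions of any fixed $\ola$-depth $n$ are exposed before the reduction descends further; hence each layer can be handled by a reduction of length below $\omega^2$. To finalise an exposed position $p$, I distinguish two cases according to whether $\subtree s p$ is active. If it is active, then it is fragile by Lemma~\ref{lem:aactAfrag}, so by Proposition~\ref{prop:destructBot} there is an open destructive $\betared$-reduction contracting $\subtree s p$ to $\bot$; scheduling it finalises $p$ as a $\bot$-position. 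If $\subtree s p$ is not active, then some reduct of it is stable, so a finite $\betared$-reduction $\subtree s p \fto*[\betared] u$ reaches a stable $u$, fixing the root symbol and exposing its finitely many children at the next layer. Since every sub-reduction uses only $\betared$-steps, the concatenation $S$ is simultaneously a $\betared$- and a $\betasred$-reduction $s \pato[\betaSred] t$.

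Next I would verify that $t$ is a $\betaSred$-normal form. By construction the symbol at every position of $t$ is fixed: it is either a stable head symbol or $\bot$. Stability ensures that no reduct ever exposes a $\betared$-redex at the root of that subtree, so $t$ has no $\betared$-redex. For the $\strictred$-rules I would use the observation that an active subtree sitting at a strict position $q\concat\seq i$ (with $a_i = 0$) forces the parent at $q$ to be active as well: a redex at $\ola$-depth $0$ of the child is also at $\ola$-depth $0$ of the parent, so the parent can never become stable. Hence such a parent is annihilated from the outside rather than left as a defined symbol with a bare $\bot$ in a strict argument; consequently $t$ has no $\strictred$-redex either. As this exhausts the rules of both systems, $t$ is normal for $\betared$ and for $\betasred$.

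Finally I would check that $S$ actually $\prs$-converges to $t$, which I expect to be the main obstacle. The decisive quantitative fact is $\adepth{\acut p} = \adepth p$: passing from $p$ to $\acut p$ strips a block of strict steps, and these contribute $0$ to the $\ola$-depth. By Lemma~\ref{lem:reductionContextCut} the reduction context of a step at a position of $\ola$-depth at least $d$ equals $\posminus s {\acut p}$, which therefore agrees with the current tree on every position of $\ola$-depth below $d$. Because the schedule performs all contractions for layer $n$ at $\ola$-depth at least $n$, for each fixed $d$ the fragment of $\ola$-depth below $d$ is finalised after a bounded prefix and is never touched again; beyond that prefix every reduction context agrees with $t$ up to $\ola$-depth below $d$. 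Letting $d$ range over $\omega$ forces the limit inferior of the reduction contexts to equal $t$ at every limit ordinal, which is exactly $\prs$-convergence. The local $\prs$-convergence of each annihilation to its $\bot$-endpoint is handed to us by Proposition~\ref{prop:destructBot}; the remaining delicacy is to confirm the limit-inferior condition uniformly at all limit ordinals — both those internal to an annihilation and those lying between successive layers — and the depth estimate above settles all of them at once.
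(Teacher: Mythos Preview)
Your layer-by-layer construction and convergence argument match the paper's approach and correctly yield a $\prs$-convergent reduction $s \pato[\betared] t$ to a $\betared$-normal form. The gap is in your claim that this same $t$ is already an $\strictred$-normal form, so that the single reduction also serves for $\betasred$.

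Your argument that an active subtree at a strict position forces the parent to be active only rules out $\bot$'s at strict positions that would arise from \emph{annihilation}. For total $s$ this would suffice, since annihilation is then the sole source of $\bot$'s and indeed occurs only at non-strict positions. But the theorem is stated for arbitrary $s \in \aptree$, including partial trees. A partial $s$ may already carry $\bot$ at a strict position, or a finite stabilising $\betared$-step may move one there. Concretely, for $\ola = 101$ take $s = \bot\,\vartree y$: it is already stable, your construction does nothing, and $t = s$ is an $\strictred$-redex. Or take $s = (\abstree x\, \vartree x\,\vartree y)\,\bot$: a single stabilising $\betared$-step yields $\bot\,\vartree y$. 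In neither example is any subtree active, so your ``active child forces active parent'' observation never engages.

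The paper does not try to avoid this. It first builds $s \pato[\betared] t$ exactly as you do, and then appends a second phase $t \pato[\strictred] u$ by iterated complete developments of all $\strictred$-redexes. Since an $\strictred$-contraction at depth $d$ creates at most one new $\strictred$-redex, at depth $d-1$, and no $\betared$-redexes, this phase terminates in a $\betasred$-normal form $u$. Your construction can be repaired in the same way.
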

\begin{proof}[Proof sketch]
  Similar to Theorem~1 of Kennaway et al.~\cite{kennaway99jflp}: an
  active subtree at position $p$ is by Lemma~\ref{lem:aactAfrag} also
  fragile. Hence, there is a $\betared$-reduction in which a prefix of
  $p$ is outermost-volatile. By Lemma~\ref{lem:volatileBot}, such a
  reduction annihilates the active subtree at $p$. This yields a
  reduction $s \pato[\betared] t$ to $\betared$-normal form $t$, which
  can be extended by a reduction $t \pato[\strictred] u$ to a
  $\betasred$-normal form $u$.
\end{proof}

From the above we immediately obtain the corresponding result for
$\mrs$-convergence:
\begin{theorem}
  \label{thr:bohmNormalising}
  For each $s \in \aptree$ there is a normalising reduction $s
  \mato[{\bohmred[\afrag]}] t$.
\end{theorem}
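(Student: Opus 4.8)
The plan is to transport the partial-order normalisation result of Theorem~\ref{thr:prsNormalising} across the correspondence of Theorem~\ref{thr:pConvBohmred}. Given $s\in\aptree$, I apply Theorem~\ref{thr:prsNormalising} to the system $\betasred$ to obtain a $\prs$-convergent reduction $S\colon s\pato[\betasred] t$ whose limit $t$ is a $\betasred$-normal form. Feeding $S$ into Theorem~\ref{thr:pConvBohmred} then produces an $\mrs$-convergent reduction $T\colon s\mato[\bohmred] t$ with the \emph{same} limit, where $\bohmred=\bohmred[\afrag]$. The existence of the $\mrs$-reduction is therefore immediate; the one thing still to be verified is that $T$ is \emph{normalising}, i.e.\ that $t$ is a $\bohmred[\afrag]$-normal form.

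This verification is the actual content. Since $\strictred\subseteq\botred[\afrag]$, we have $\betasred\subseteq\bohmred[\afrag]$, so every $\bohmred[\afrag]$-normal form is a $\betasred$-normal form but not obviously conversely: knowing that $t$ carries neither a $\betared$- nor a $\strictred$-redex does not by itself rule out a proper $\botred[\afrag]$-redex, namely a subtree $\subtree t p\in\afrag_\bot\setminus\set\bot$. Hence I would prove the missing inclusion that every $\betasred$-normal form is already a $\bohmred[\afrag]$-normal form, which in particular pins down the limit $t$ as a $\bohmred[\afrag]$-normal form.

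For the inclusion, fix a non-$\bot$ subtree $u=\subtree t p$ and assume for contradiction that $u\in\afrag_\bot$. By definition of $\afrag_\bot$, $u$ has a $\bot$-instance $u'\in\afrag$; thus $u'$ is fragile, so there is a destructive $\betared$-reduction from $u'$, one that is (by definition) $\prs$-continuous with $\emptyseq$ volatile. Unwinding the definition of $\acut\,$, volatility of $\emptyseq$ means that redexes at $\ola$-depth $0$ are contracted cofinally often. On the other hand $u$, being a subtree of the $\betared$-normal form $t$, is itself a $\betared$-normal form and hence trivially stable; moreover $\strictred$-normality of $t$ forces every $\bot$-position of $u$ to lie at $\ola$-depth $\ge 1$, since a hole at $\ola$-depth $0$ would exhibit its parent as a $\strictred$-redex of the form $\abstree x\bot$ or $t_1\,t_2$ with a $\bot$ in a strict argument. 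Consequently $u'$ agrees with $u$ on the entire $\ola$-depth-$0$ skeleton, which, as part of a normal form, contains no redex and no abstraction in function position. The crux is to show that this redex-free $\ola$-depth-$0$ skeleton persists under every $\betared$-reduction of $u'$, so that no reduct of $u'$ ever acquires a redex at $\ola$-depth $0$; this contradicts the destructive reduction above, giving $u'\notin\afrag$ and therefore $u\notin\afrag_\bot$. Thus $t$ has no proper $\botred[\afrag]$-redex, and $T$ is normalising.

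I expect the persistence step to be the main obstacle: it is exactly the statement that a $\bot$-instantiation strictly below $\ola$-depth $0$ cannot create a fresh $\ola$-depth-$0$ redex in any reduct, a stability-preservation property closely tied to the active/stable analysis opening Section~\ref{sec:infin-norm-confl} and to Lemma~\ref{lem:aactAfrag}. I would isolate it as a small lemma; everything else is bookkeeping, since the two reductions are handed to us directly by the cited theorems and the opposite normal-form implication is the trivial one noted above.
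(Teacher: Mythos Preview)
Your plan coincides with the paper's one-line proof: invoke Theorem~\ref{thr:prsNormalising} for a $\prs$-normalising $\betasred$-reduction, transport it via Theorem~\ref{thr:pConvBohmred}, and then use that every $\betasred$-normal form is already a $\bohmred[\afrag]$-normal form. The paper asserts this last implication without argument; you correctly isolate it as the only real content and propose to derive it from a persistence lemma for the $\ola$-depth-$0$ skeleton.

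The gap is that your argument tacitly assumes the $\ola$-depth-$0$ skeleton of $u'$ determines whether $u'$ has a redex at $\ola$-depth $0$. That holds only when $a_1=0$: if $a_1=1$, the function position $p\concat\seq 1$ of an application at a depth-$0$ position $p$ sits at depth $1$, outside your skeleton. If moreover $a_0=0$, a fragile tree can start with $\lambda$---for instance $\abstree x\Omega$, whose loop at position $\seq 0$ is at $\ola$-depth $0$---and inserting it at $p\concat\seq 1$ creates a depth-$0$ redex in $u'$ immediately. Concretely, for $\ola=011$ the tree $\bot\,\vartree y$ is a $\betasred$-normal form, yet its $\bot$-instance $(\abstree x\Omega)\,\vartree y$ reduces at the root to $\Omega$ and is therefore fragile; hence $\bot\,\vartree y\in\afrag_\bot$ is a $\botred$-redex, and the implication both you and the paper rely on fails for this signature. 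When $a_1=0$ (in particular for $001$ and $101$) your sketch does go through as written, since function positions of depth-$0$ applications stay inside the skeleton; for $111$ the missing ingredient is that fragile coincides with active (Lemma~\ref{lem:afragAact}) and no active tree has a finite reduct with root label $\lambda$ (such a reduct would be stable), which blocks the creation of a root redex even though $\seq 1$ lies outside the skeleton.
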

\begin{proof}
  By Theorem~\ref{thr:prsNormalising} and \ref{thr:pConvBohmred},
  as $\betasred$-normal forms are also $\bohmred[\afrag]$-normal
  forms.
\end{proof}
Consequently, we can derive the following correspondence result.
\begin{corollary}
  For each $s \in \atree$ with $s \mato[{\bohmred[\afrag]}] t$, there
  is a reduction $t \mato[{\bohmred[\afrag]}] t'$ such that $s
  \pato[\betared] t'$.
\end{corollary}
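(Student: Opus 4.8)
The plan is to reduce the statement to the metric-to-partial-order correspondence of Theorem~\ref{thr:bohmredPConv}. That theorem already gives $s \pato[\betared] t$ from $s \mato[{\bohmred[\afrag]}] t$ whenever $s$ is total and the \emph{target} is a $\botred$-normal form; the only missing ingredient is therefore to push $t$ further to such a normal form, and this is exactly what the normalisation result of Theorem~\ref{thr:bohmNormalising} provides.

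Concretely, write $S_1 \fcolon s \mato[{\bohmred[\afrag]}] t$ for the given reduction. Since $\mrs$-convergence stays within the universe $\aptree$, we have $t \in \aptree$, so Theorem~\ref{thr:bohmNormalising} supplies a normalising reduction $S_2 \fcolon t \mato[{\bohmred[\afrag]}] t'$ to a $\bohmred[\afrag]$-normal form $t'$. This $S_2$ is precisely the reduction demanded by the corollary, so it remains only to establish $s \pato[\betared] t'$. First I would concatenate $S_1$ and $S_2$ into a single reduction $s \mato[{\bohmred[\afrag]}] t'$. Then, because $\bohmred[\afrag] = \betared \cup \botred[\afrag]$, a $\bohmred[\afrag]$-normal form in particular admits no $\botred$-step, so $t'$ is a $\botred$-normal form. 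As $s$ is total by hypothesis and $t'$ is a $\botred$-normal form, Theorem~\ref{thr:bohmredPConv} applies directly and yields $s \pato[\betared] t'$, finishing the proof.

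The one step needing genuine care is the concatenation of $S_1$ and $S_2$: I must check that glueing two strongly $\mrs$-converging reductions at their shared endpoint $t$ again yields a strongly $\mrs$-converging reduction. This is the standard closure of strong convergence under transfinite concatenation in transfinite abstract reduction systems~\cite{bahr10rta}; the convergence condition at the junction ordinal holds exactly because $S_1$ $\mrs$-converges to $t$ while $S_2$ starts from $t$. The remaining observations—that $t \in \aptree$ and that a $\bohmred[\afrag]$-normal form is a $\botred$-normal form—are immediate, so beyond this bookkeeping I expect no real obstacle.
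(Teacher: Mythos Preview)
Your proposal is correct and follows essentially the same route as the paper: apply Theorem~\ref{thr:bohmNormalising} to extend $t$ to a $\bohmred[\afrag]$-normal form $t'$, then concatenate and invoke Theorem~\ref{thr:bohmredPConv}. The paper's proof is simply terser, leaving the concatenation and the observation that a $\bohmred[\afrag]$-normal form is a $\botred$-normal form implicit.
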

\begin{proof}
  According to Theorem~\ref{thr:bohmNormalising}, there is a
  normalising reduction $t \mato[{\bohmred[\afrag]}] t'$. Then a
  reduction $s \pato[\betared] t'$ exists by
  Theorem~\ref{thr:bohmredPConv}.
\end{proof}

A shortcoming of this correspondence property and the correspondence
properties established in Section~\ref{sec:correspondence} is that
they consider $\mrs$-convergence in the system $\bohmred[\afrag]$,
which is unsatisfactory since $\afrag$ is defined using
$\prs$-convergence. A more appropriate choice would be the set $\aact$
of active terms, which is defined in terms of finitary reduction
only. To obtain a correspondence in terms of $\aact$, we will show
that $\afrag = \aact$ for strictness signatures $001$, $101$, and
$111$. To prove this equality of fragility and activeness, we need the
following key lemma, which can be proved using descendants and
complete developments%
\iftechrep\ (cf.\ Appendix~\ref{sec:infin-strip-lemma})\fi.

\begin{lemma}[Infinitary Strip Lemma]
  \label{lem:stripLem'}
  Given $S\fcolon s \pato[\betasred] t_1$ and
  $T \fcolon s \fto*[\betasred] t_2$, there are reductions
  $S'\fcolon t_1 \pato[\betasred] t$ and
  $T'\fcolon t_2 \pato[\betasred] t$, provided
  $\ola \in \set{001,101,111}$.
\end{lemma}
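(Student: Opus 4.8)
The plan is to project the transfinite reduction $S$ over the finite reduction $T$ using descendants and complete developments: the common reduct $t$ will be obtained by completely developing, in $t_1$, the descendants of the redexes contracted along $T$, while $T'$ is the residual of $S$ after $T$. First I would reduce to the case that $T$ is a single step, arguing by induction on the length $n$ of $T\fcolon s\fto*[\betasred] t_2$. The case $n=0$ is trivial. For $n>0$ I split $T$ into a prefix $s\fto*[\betasred] s'$ of length $n-1$ followed by a single step $s'\to[\betasred] t_2$; the induction hypothesis applied to $S$ and the prefix gives $t_1\pato[\betasred] w$ and $s'\pato[\betasred] w$, and the single-step case applied to $s'\pato[\betasred] w$ and $s'\to[\betasred] t_2$ gives $w\pato[\betasred] t$ and $t_2\pato[\betasred] t$; concatenating $t_1\pato[\betasred] w\pato[\betasred] t$ yields $S'$ and the remaining reduction is $T'$. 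It therefore suffices to treat a single step $T\fcolon s\to[\betasred] t_2$ contracting a redex at a fixed position $q$.

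For this single-step case, write $S=(t_\iota\to[p_\iota] t_{\iota+1})_{\iota<\alpha}$ and let $D_\iota$ be the set of descendants of $q$ along the prefix $\prefix S\iota$; let $u_\iota$ be the complete development of $D_\iota$ in $t_\iota$, so that $u_0=t_2$. For each successor stage the finitary parallel-moves lemma for $\betasred$ supplies an elementary diagram: projecting $t_\iota\to[p_\iota] t_{\iota+1}$ over the development of $D_\iota$ yields a finite reduction $u_\iota\fto*[\betasred] u_{\iota+1}$ that contracts precisely the residuals of $p_\iota$, with $D_{\iota+1}$ the residual of $D_\iota$. I then set $t:=u_\alpha$, take $T'$ to be the concatenation of these finite reductions, the terms at limit stages being fixed as the limit inferior of the accumulated reduction contexts, and take $S'$ to be the complete development of $D_\alpha$ in $t_\alpha=t_1$.

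It then remains to verify that $T'$ and $S'$ are bona fide $\prs$-converging reductions meeting at $t$. Because $(\aptree,\talebot)$ is a complete semilattice (Theorem~\ref{thr:alebotCompSemi}), defining the limit stages of $T'$ as the limit inferior of its reduction contexts makes $T'$ $\prs$-continuous by construction, and every $\prs$-continuous reduction $\prs$-converges; the real content is that the successor diagrams glue with these limits into a single reduction whose limit is exactly $u_\alpha$, and that $S'$ likewise $\prs$-converges to $u_\alpha$. This is where I would invoke an infinitary finite-developments property — that complete developments of the (possibly infinite) residual sets $D_\iota$ strongly $\prs$-converge and commute with limits — proved via descendants using the characterisations of lubs and glbs in Theorem~\ref{thr:alebotCpo} and Proposition~\ref{prop:alebotGlb} and the reduction-context description of Lemma~\ref{lem:reductionContextCut}. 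Crucially, destructive (volatile-creating) segments of $S$ must project to segments of $T'$ that are again destructive and annihilate the corresponding subtree, so that by Lemma~\ref{lem:volatileBot} the $\bot$-positions of the two sides coincide; matching these limits pins both sides to the single tree $t=u_\alpha$.

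The main obstacle, and the only place the hypothesis $\ola\in\set{001,101,111}$ is needed, is exactly this convergence-and-matching analysis for developments. The delicate point is that a $\betared$-step $(\abstree x M)\,N\to\subst M x N$ relocates the copies of the argument $N$ to the occurrences of $x$ in $M$, so a descendant of $q$ (or a redex of $S$) lying inside $N$ may be duplicated and moved to smaller $\ola$-depth, and a fragile subtree in a strict position forces a $\bot$ higher up; keeping the projected developments strongly convergent and fragility-preserving therefore depends on how abstraction- and application-strictness interact. This is benign precisely for the three signatures above — the classical B\"ohm, L\'evy--Longo and Berarducci signatures — and fails otherwise, the failure being exactly the non-confluence witnessed in Example~\ref{ex:counterCR}. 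Carrying out this residual analysis and verifying that developments of descendant sets $\prs$-converge uniformly over $001$, $101$ and $111$ is the crux; with it in hand, the $\prs$-continuity of $T'$ and $S'$ and the coincidence of their limits close the diamond.
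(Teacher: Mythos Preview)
Your approach is essentially the paper's: reduce to a single step by induction on $\len T$, then build the tiling diagram with vertical arrows given by complete developments of the descendant sets $D_\iota=\desc{q}{\prefix S\iota}$ and horizontal arrows given by projecting each step of $S$; the paper packages the single-step case as a separate lemma (Lemma~\ref{lem:stripLem}) for a complete development $T\fcolon t_0\pato[U] s_0$ of an arbitrary redex set $U$, and derives Lemma~\ref{lem:stripLem'} by iteration.

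Two small corrections. First, the bottom segments $u_\iota\pato u_{\iota+1}$ are not in general finite: they are complete developments of $\desc{p_\iota}{D_\iota}$, and since $D_\iota$ may already be infinite this residual set can be infinite too, so you need the infinitary diamond for complete developments (the paper's Corollary~\ref{cor:compDevDiamond}) rather than a ``finitary parallel-moves lemma''. Second, the hypothesis $\ola\in\set{001,101,111}$ is not used only at limit stages: it is needed already for the elementary squares, because residuals of redex occurrences remain redex occurrences only when $a_0=1$ or $a_1=0$ (excluding $01{*}$), and complete developments have a unique result (hence satisfy the diamond) only for $001$, $101$, $111$; the paper establishes uniqueness via a path-labelling argument (Lemma~\ref{lem:pathPreserve}, Corollary~\ref{cor:compDevTerm}) and gives explicit counterexamples for the remaining signatures. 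With those points straightened out, your limit analysis and the matching argument $s_\alpha = \pathLabs{t_\alpha}{D_\alpha}$ line up with the paper's proof.
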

Recall that $\betasred = \betared$ for $\ola = 111$, i.e.\ the
infinitary strip lemma holds for pure $\betared$-reduction in the
$111$ calculus; but it does not hold for $001$ and $101$ as
Example~\ref{ex:counterCR} demonstrates. Hence, the need for
$\strictred$-rules. By contrast, in the metric calculi of Kennaway et
al.~\cite{kennaway97tcs} the infinitary strip lemma does not hold for
any $\ola$. In order to obtain the infinitary strip lemma and
confluence, Kennaway et al.\ extended $\beta$-reduction to B\"ohm
reduction.

We use the Infinitary Strip Lemma to show that $\prs$-convergent
reductions to $\bot$ can be compressed to length at most $\omega$.
\begin{lemma}
  \label{lem:botCompress}
  If $\ola \in \set{001,101,111}$ and
  $S\fcolon t \pato[\betasred] \bot$, then there is a reduction
  $T\fcolon t \pato[\betasred] \bot$ of length $\le\omega$. If $t$ is
  total, then $T$ is a $\betared$-reduction of length $\omega$.
\end{lemma}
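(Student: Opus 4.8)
The plan is to build a reduction to $\bot$ of length $\le\omega$ directly, as an $\omega$-concatenation of finite reductions each of which performs at least one contraction at $\ola$-depth $0$. The Infinitary Strip Lemma (Lemma~\ref{lem:stripLem'}) is used to guarantee that reducibility to $\bot$ is preserved as we build the reduction, and the characterisation of destructiveness (Proposition~\ref{prop:destructBot}) then certifies that the assembled $\omega$-reduction indeed $\prs$-converges to $\bot$.

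First I would record a \emph{persistence} property that follows at once from the Strip Lemma: if $s\pato[\betasred]\bot$ and $s\fto*[\betasred]s'$, then $s'\pato[\betasred]\bot$. Applying Lemma~\ref{lem:stripLem'} to $S\fcolon s\pato[\betasred]\bot$ and $T\fcolon s\fto*[\betasred]s'$ yields a common reduct reached by $\bot\pato[\betasred]t$ and $s'\pato[\betasred]t$; since no rewrite rule may start from $\bot$, the lambda tree $\bot$ is a $\betasred$-normal form, so $t=\bot$ and hence $s'\pato[\betasred]\bot$. This is the only place where the hypothesis $\ola\in\set{001,101,111}$ is consumed.

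Second, I would establish the key exposure step: \emph{if $u\pato[\betasred]\bot$, then there is a finite reduction $u\fto*[\betasred]u'$ containing a contraction at some position of $\ola$-depth $0$.} By Proposition~\ref{prop:destructBot} the reduction $u\pato[\betasred]\bot$ is destructive, so $\emptyseq$ is volatile; hence there is a step at some position $p$ with $\acut p=\emptyseq$, i.e.\ $\adepth p=0$, and I take the least index $\gamma_0$ at which such a step occurs. All steps preceding $\gamma_0$ are at $\ola$-depth $\ge 1$, so by Lemma~\ref{lem:reductionContextCut} they leave every position of $\ola$-depth $0$ untouched. For $\ola\in\set{001,101}$ the application/abstraction node of the depth-$0$ redex together with the strict child carrying the $\lambda$ or the $\bot$ again have $\ola$-depth $0$, so the whole redex is already present in $u$ and can be contracted in a single step. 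For $\ola=111$ the only depth-$0$ position is $\emptyseq$, so the step at $\gamma_0$ is a $\betared$-contraction at the root, which requires the left child to have reached $\lambda$-head form; since the root symbol of a lambda tree can change only through a head contraction and $\prs$-convergence forces this head form to appear after finitely many head steps, a finite head reduction of the left child exposes the root redex, which is then contracted. Verifying this $111$-case is the main obstacle, and is the only point requiring a genuine argument rather than routine bookkeeping.

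Finally I would assemble the reduction. Setting $v_0=t$ and alternating the exposure step with persistence, I obtain finite reductions $v_k\fto*[\betasred]v_{k+1}$, each containing a contraction at $\ola$-depth $0$ and each with $v_{k+1}\pato[\betasred]\bot$. Their concatenation $T$ has length $\le\omega$ and contains cofinally many contractions at $\ola$-depth $0$, so $\emptyseq$ is volatile in $T$; being of length $\le\omega$, $T$ is trivially $\prs$-continuous, hence destructive, and Proposition~\ref{prop:destructBot} yields $T\fcolon t\pato[\betasred]\bot$. If $t$ is total, then no $\strictred$-redex can occur at a total lambda tree and $\betared$ preserves totality, so every macro-step is a pure $\betared$-reduction and every $v_k$ is total; consequently $T$ cannot terminate at a finite stage, since a finite $\betared$-reduction from a total tree yields a total tree $\neq\bot$. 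Hence in the total case $T$ is a $\betared$-reduction of length exactly $\omega$.
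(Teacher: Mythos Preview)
Your overall architecture---iterate an ``exposure'' step producing a finite reduction with a depth-$0$ contraction, use the Infinitary Strip Lemma for persistence, then invoke Proposition~\ref{prop:destructBot} on the assembled $\omega$-reduction---is exactly the route the paper takes, and your treatment of persistence and of the final assembly (including the total case) is correct.

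The genuine gap is the exposure step for $\ola=111$. You correctly note that the step at $\gamma_0$ is a root $\betared$-contraction, so $t_{\gamma_0}(\seq 1)=\lambda$, and you want a finite reduction from $u$ reaching $\lambda$ at $\seq 1$. But ``$\prs$-convergence forces this head form to appear after finitely many head steps'' is not an argument. One can indeed show there are only finitely many contractions at $\seq 1$ in $S|_{\gamma_0}$ (else $\seq 1$ would be volatile in some prefix, forcing $\seq 1\in\domBot{t_{\gamma_0}}$), but each such contraction first requires $\lambda$ at $\seq{1,1}$, and so on; nothing bounds the depth of this recursion, so the total number of steps needed is not obviously finite. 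The paper closes this gap with the Finitary Approximation Lemma (Lemma~\ref{lem:finApprox}, proved via descendants): from $u\pato[\betasred] t_{\gamma_0}$ with $\seq 1\in\dom{t_{\gamma_0}}$ one obtains a \emph{finite} $\betared$-reduction $u\fto*[\betared] u'$ with $u'(\seq 1)=\lambda$. This is wrapped as Lemma~\ref{lem:fintoRedex} and used inside Lemma~\ref{lem:botRedFiniteRedex}, which is the paper's exposure step.

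For $\ola\in\{001,101\}$ your direct argument---that both the $@$/$\lambda$ node of a $\betared$-redex and the strict child of an $\strictred$-redex sit at $\ola$-depth $0$, hence are already present in $u$---is essentially correct and is a pleasant shortcut compared with the paper's uniform use of Lemma~\ref{lem:fintoRedex}. However, Lemma~\ref{lem:reductionContextCut} only gives preservation across a single step; for the transfinite prefix $S|_{\gamma_0}$ you must also check the invariant at limit stages, which requires the characterisations in Theorem~\ref{thr:alebotCpo} and Proposition~\ref{prop:alebotGlb} (equivalently, Lemma~\ref{lem:pconvOpenPos} and Corollary~\ref{cor:pconvOpenPosBot}). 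You should make that explicit. A minor point: your appeal to Proposition~\ref{prop:destructBot} to find a depth-$0$ step assumes the reduction $u\pato[\betasred]\bot$ is open; the closed case (where the last step is an $\strictred$-contraction to $\bot$) needs a separate one-line remark.
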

\begin{proof}[Proof sketch]
  If $\len{S}\le\omega$, we are done. Otherwise, we can construct a
  finite reduction $t \fto*[\betasred] t'$ with at least one
  contraction at $\ola$-depth $0$ either using a finite approximation
  property of $\prs$-convergence (in case $S$ contracts
  $\betared$-redex at $\ola$-depth $0$) or by an induction argument
  (in case $S$ contracts $\strictred$-redex at root position). By
  Lemma~\ref{lem:stripLem'}, there is a reduction
  $S'\fcolon t' \pato[\betasred] \bot$. Thus, we can repeat the
  argument for $S'$. Iterating this argument yields either a reduction
  $t \fto*[\betasred] \bot$ or a reduction $t \pato[\betasred] s'$ of
  length $\omega$ with infinitely many contractions at $\ola$-depth
  $0$, and thus $s' = \bot$.
  If $s$ is total, then $T$ cannot be finite, as finite
  $\betasred$-reductions preserve totality. Hence, no step in $T$ can
  be an $\strictred$-step.
\end{proof}

\begin{lemma}
\label{lem:afragAact}
If $\ola \in \set{001,101,111}$, a total lambda tree is active iff
it is fragile.
\end{lemma}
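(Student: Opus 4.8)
The plan is to prove the two directions separately. The direction ``active $\implies$ fragile'' is already supplied by Lemma~\ref{lem:aactAfrag} (which holds for every $\ola$), so all the work lies in the converse: assuming $\ola\in\set{001,101,111}$ and that a total lambda tree $t$ is fragile, I would show that $t$ is active. The governing idea is to prove the \emph{stronger} statement that every finite $\betared$-reduct of $t$ can still be $\prs$-reduced to $\bot$, and then to read off from such a reduction a contraction at $\ola$-depth $0$, which is exactly what is needed to defeat stability.

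First I would observe that fragility of $t$, together with Proposition~\ref{prop:destructBot}, yields a destructive reduction, i.e.\ a $\betared$-reduction $t\pato[\betared]\bot$. Now let $t'$ be an arbitrary lambda tree with $t\fto*[\betared] t'$; since $t$ is total and finite $\betared$-reductions preserve totality, $t'$ is total as well. Applying the Infinitary Strip Lemma (Lemma~\ref{lem:stripLem'}) to $t\pato[\betasred]\bot$ (a fortiori from $t\pato[\betared]\bot$) and $t\fto*[\betasred] t'$ produces a common reduct of $\bot$ and $t'$. But $\bot$ is a $\betasred$-normal form, so the only reduction issuing from it is empty and the common reduct must be $\bot$ itself; hence $t'\pato[\betasred]\bot$. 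This is the step that forces the restriction $\ola\in\set{001,101,111}$, since the strip lemma fails for the remaining signatures.

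Next, because $t'$ is total, Lemma~\ref{lem:botCompress} upgrades $t'\pato[\betasred]\bot$ to a \emph{$\betared$}-reduction $T'\fcolon t'\pato[\betared]\bot$ of length exactly $\omega$. By Proposition~\ref{prop:destructBot}, $T'$ is destructive, so $\emptyseq$ is volatile in $T'$; unfolding the definition of volatility, some step of $T'$ takes place at a position $q$ with $\acut q=\emptyseq$, equivalently $\adepth q=0$. Since $T'$ has length $\omega$, the first such step occurs at a finite index, so its source is a genuinely finite $\betared$-reduct $t''$ of $t'$ carrying a $\betared$-redex occurrence at $\ola$-depth $0$. Hence $t'$ is not stable. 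As $t'$ was an arbitrary finite $\betared$-reduct of $t$, no reduct of $t$ is stable, i.e.\ $t$ is active.

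I expect the main obstacle to be the appeal to the Infinitary Strip Lemma: it is precisely the ingredient that both restricts the strictness signature and justifies the role of the $\strictred$-rules, and without the commutation of a $\prs$-convergent reduction against a finite one the argument does not get off the ground. Once that commutation is available, the rest is routine: collapsing the common reduct because $\bot$ is normal, and extracting a finite prefix with a depth-$0$ contraction via the length-$\omega$ compression of Lemma~\ref{lem:botCompress}. A minor point to verify carefully is the equivalence $\acut q=\emptyseq \iff \adepth q=0$, which ensures that a volatile root really yields a contraction at $\ola$-depth $0$ rather than merely at a deeply nested strict position.
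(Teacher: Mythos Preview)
Your proposal is correct and follows essentially the same route as the paper's proof: one direction by Lemma~\ref{lem:aactAfrag}, and for the converse, given any finite $\betared$-reduct $t'$ of the total fragile $t$, apply the Infinitary Strip Lemma to obtain $t'\pato[\betasred]\bot$, compress via Lemma~\ref{lem:botCompress} to a $\betared$-reduction of length $\omega$, and read off from destructiveness a finite prefix ending in a redex at $\ola$-depth $0$. Your additional unpacking of the equivalence $\acut q=\emptyseq\iff\adepth q=0$ and of why the common reduct in the strip lemma must be $\bot$ are both sound and make the argument slightly more explicit than the paper's version.
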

\begin{proof}
  The ``only if'' direction follows from
  Lemma~\ref{lem:aactAfrag}. For the converse direction let $t$ be
  total and fragile, and let $t \fto*[\betared] t_1$. Since $t$ is
  fragile, there is a reduction $t \pato[\betasred] \bot$ according to
  Proposition~\ref{prop:destructBot}. Hence, by
  Lemma~\ref{lem:stripLem'}, there is a reduction
  $T\fcolon t_1 \pato[\betasred] \bot$, which we can assume, according
  to Lemma~\ref{lem:botCompress}, to be a $\betared$-reduction of
  length $\omega$. Since $T$ is, by
  Proposition~\ref{prop:destructBot}, destructive, there is a proper
  prefix $T' \fcolon t_1 \pato[\betared] t_2$ of $T$ such that $t_2$
  has a redex occurrence at $\ola$-depth $0$. Because $T$ is of length
  $\omega$, $T'$ is finite i.e.\ $T' \fcolon t_1 \fto*[\betared] t_2$.
\end{proof}

The above lemma allows us to derive confluence w.r.t.\
$\prs$-convergent reductions from the confluence results w.r.t.\
$\mrs$-convergence of Kennaway et al.~\cite{kennaway97tcs}:
\begin{theorem}[infinitary confluence]
  \label{thr:confluence}
  Given $\ola \in \set{001,101,111}$, we have that $s
  \pato[\betasred] t_1$ and $s \pato[\betasred] t_2$ implies that $t_1
  \pato[\betasred] t$ and $t_2 \pato[\betasred] t$.
\end{theorem}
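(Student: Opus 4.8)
The plan is to route through the metric B\"ohm calculus of Kennaway et al.~\cite{kennaway97tcs,kennaway99jflp} and to exploit that, by Lemma~\ref{lem:afragAact}, the B\"ohm rewrite system $\bohmred[\afrag]$ coincides with $\bohmred[\aact]$, the latter being exactly the system for which Kennaway et al.\ established infinitary confluence for $\ola\in\set{001,101,111}$. Write $\bohmred = \bohmred[\afrag] = \bohmred[\aact]$. Together with normalisation (Theorem~\ref{thr:bohmNormalising}), infinitary confluence of $\bohmred$ gives every lambda tree a \emph{unique} $\bohmred$-normal form that is invariant under $\mato[\bohmred]$. The idea is to show that both given reductions lead to this single normal form.

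Concretely, I would first normalise both sides: by Theorem~\ref{thr:prsNormalising} there are $\prs$-convergent reductions $t_1 \pato[\betasred] u_1$ and $t_2 \pato[\betasred] u_2$ with $u_1,u_2$ in $\betasred$-normal form. Concatenating these with the given reductions (concatenation preserves $\prs$-convergence, cf.~\cite{bahr10rta}) yields $s \pato[\betasred] u_1$ and $s \pato[\betasred] u_2$, and the forward transfer of Theorem~\ref{thr:pConvBohmred} turns these into $\mrs$-converging $\bohmred$-reductions $s \mato[\bohmred] u_1$ and $s \mato[\bohmred] u_2$. Note this uses only Theorem~\ref{thr:pConvBohmred}, which needs no totality hypothesis, so the route deliberately avoids the backward transfer of Theorem~\ref{thr:bohmredPConv} (whose totality condition the reducts $t_1,t_2$ need not satisfy).

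The crux is the following normal-form lemma: \emph{every $\betasred$-normal form is also a $\bohmred$-normal form.} Granting it, $u_1$ and $u_2$ are two $\bohmred$-normal forms reachable from $s$ by $\mato[\bohmred]$, so uniqueness forces $u_1 = u_2 =: t$, whence $t_1 \pato[\betasred] t$ and $t_2 \pato[\betasred] t$, as required. To prove the lemma I would argue by contradiction: suppose a $\betasred$-normal form $u$ has a subtree $\subtree u p \in \afrag_\bot\setminus\set\bot$. Then $\subtree u p$ is itself $\betared$-normal, and filling its $\bot$-positions with fragile --- hence, by Lemma~\ref{lem:afragAact}, active --- total trees produces an active total tree $w$. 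I would rule this out by a head analysis. Since $u$ is $\betasred$-normal, the $\strictred$-rules keep $\bot$ out of strict positions, so for $\ola\in\set{001,101}$ the head of $\subtree u p$ is a variable, while for $\ola=111$ it is a variable or a $\bot$ filled by an active tree. A variable head (possibly under leading abstractions) makes $w$ a head normal form; and a head that is a $\bot$ filled by an active tree can never expose a root $\betared$-redex, because an active tree never reduces to an abstraction (an abstraction is stable). In every case $w$ is stable, contradicting activeness; hence no such subtree exists and $u$ is $\bohmred$-normal.

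The main obstacle is precisely this normal-form lemma. It is the only genuinely new ingredient --- everything else is bookkeeping (concatenation of $\prs$-reductions, the forward transfer, and quoting Kennaway et al.'s confluence through the identification $\afrag=\aact$) --- and it is the step that depends essentially on the $\strictred$-rules, which for $001$ and $101$ prevent $\bot$ from sitting in a strict (head) position of a normal form. The delicate point is the observation that active trees never reduce to abstractions, which is what prevents a $\betared$-normal skeleton from being ``reactivated'' by instantiating its holes with active subtrees.
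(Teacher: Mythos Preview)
Your proposal is correct and follows essentially the same route as the paper: normalise both branches via Theorem~\ref{thr:prsNormalising}, push the resulting reductions to the metric side via Theorem~\ref{thr:pConvBohmred}, identify $\bohmred[\afrag]$ with $\bohmred[\aact]$ via Lemma~\ref{lem:afragAact}, and invoke Kennaway et al.'s confluence to conclude $u_1=u_2$. The only difference is that you spell out an argument for the ``normal-form lemma'' ($\betasred$-normal forms are $\bohmred$-normal forms), which the paper simply asserts parenthetically; your head-analysis sketch is on the right track, though the $111$ case needs a short induction on reduction length to show the left-spine structure is preserved (since the head $\bot$ may be filled by an active tree whose reducts are applications, not just non-abstractions).
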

\begin{proof}
  According to Theorem~\ref{thr:prsNormalising}, we can extend the
  existing reductions by normalising reductions
  $t_1 \pato[\betasred] t_1'$ and $t_2 \pato[\betasred]
  t_2'$. According to Theorem~\ref{thr:pConvBohmred} and
  Lemma~\ref{lem:afragAact}, the resulting normalising reductions
  $s \pato[\betasred] t_1'$ and $s \pato[\betasred] t_2'$ are also
  $\mrs$-convergent $\bohmred[\aact]$-reductions. Kennaway et
  al.~\cite{kennaway97tcs} have shown that such reductions are
  confluent. Hence, $t_1' = t_2'$ (as $\betasred$-normal forms are
  $\bohmred[\aact]$-normal forms too).
\end{proof}

Together with the earlier normalisation result, this means that the
$001$, $101$, and $111$ calculi have unique normal forms w.r.t.\
$\pato[\betasred]$. By the correspondence results between the metric
and the partial order calculi, these normal forms are the same as the
unique normal forms w.r.t.\
$\mato[{\bohmred[\aact]}]$~\cite{kennaway97tcs}, which in turn
correspond to B\"ohm Trees, Levy-Longo Trees, and Berarducci Trees,
respectively.

\section{Related Work}
\label{sec:related-work}

The use of ideal completion in lambda calculus to construct infinite
terms has a long history (see e.g.\ Ketema~\cite{ketema06phd} for an
overview), in particular in the form of constructing infinite normal
forms such as B\"ohm Trees. In that line of work, the ideal completion
is typically based on the fully monotone partial order $\lebot$
generated by $\bot \lebot M$ for any term $M$. Different kinds of
infinite normal forms are then obtained by modulating the set of rules
that are used to generate the normal forms. In this paper, we instead
modulated the partial order and we have constructed full infinitary
calculi in the style of Kennaway et al.~\cite{kennaway97tcs}. Blom's
abstract theory of infinite normal forms and infinitary rewriting
based on ideal completion~\cite{blom01phd} has been crucial for
developing our infinitary calculi.

In previous work, we have compared infinitary rewriting based on
partial orders vs.\ metric spaces in a first-order setting
\cite{bahr10rta2,bahr12rta}. However, in that work we have only
considered fully non-strict convergence, whereas we consider varying
modes of strictness in the present paper.

Blom's work \cite{blom04rta} on \emph{preservation calculi} is similar
to our ideal completion calculi. Blom also considers different calculi
indexed by strictness signatures and relates them to the corresponding
metric calculi. However, he uses the same partial order
$\talebot[111]$ for all calculi; the different calculi vary in the
notion of reduction contexts they use. Blom's reduction contexts are
the same as our reduction contexts, and his $\Omega$-rules are more
general variants of our $\strictred$-rules. However, his approach of
using a single partial order has some caveats:

Firstly, there is no corresponding weak notion of preservation
sequences that corresponds to weak $\mrs$-convergence. Secondly, the
partially ordered set $(\aptree,\talebot[111])$ is only a complete
semilattice for $\ola = 111$; otherwise it is not even a cpo and limit
inferiors do not always exist. For example, let $t$ be an
$\ola$-unguarded lambda tree (i.e.\ $t \nin \aptree$), and for each
$i < \omega$ let $t_i$ be the restriction of $t$ to positions of depth
$< i$, which means that $t_i \in \aptree$. Then
$\liminf_{i \limto \omega} t_i$ w.r.t.\ $\talebot[111]$ is $t$ itself
and thus not in $\aptree$ even though all $t_i$ are. This does not
cause a problem, if one only considers reduction contexts of
$\prs$-continuous reductions, though.

For the comparison of his preservation calculi with the metric
calculi, Blom uses a notion of \emph{$0$-active} terms, which is
different from the notion of active terms as used here and by Kennaway
et al.~\cite{kennaway97tcs,kennaway99jflp} (under the names
$0$-activeness resp.\ $abc$-activeness). Blom defines that a lambda
tree is $0$-active iff there is a destructive reduction of length
$\omega$ starting from it. $0$-activeness is demonstrably different
from activeness for any strictness signature with $a_2 = 0$ as
Example~\ref{ex:counterCR} shows. But $0$-activeness and activeness do
coincide for $001$, $101$, and $111$ as we have shown with the
combination of Lemma~\ref{lem:botCompress} and
Lemma~\ref{lem:afragAact}.

\bibliographystyle{plain}
\bibliography{../common/po-lam}

\iftechrep
\appendix
\tableofcontents
\clearpage

\section{Full Proofs}
\label{sec:full-proofs}

\subsection{Ideal Completion}
\label{sec:ideal-completion-1}

\begin{proposition}
  \label{prop:lamtreeBij}
  The function $\lamtree\cdot\colon \plam\to\ptree$ is a bijection.
\end{proposition}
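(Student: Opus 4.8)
The plan is to establish well-definedness first and then prove injectivity and surjectivity separately, both by induction, relying on the decomposition of lambda trees noted immediately after Definition~\ref{def:lamtree}: every $t\in\iptree$ is, according to the value of $t(\emptyseq)$, of exactly one of the four forms $\bot$ (when $\emptyseq\nin\dom t$), $\vartree x$ with $x\in\calV$ (when $t(\emptyseq)=x$, in which case conditions \ref{item:lambda1}--\ref{item:lambda2} of Definition~\ref{def:lambda} force $\dom t=\set\emptyseq$), $\abstree x {t'}$ (when $t(\emptyseq)=\lambda$), or $t_1\,t_2$ (when $t(\emptyseq)=@$). Note that $t(\emptyseq)$ can never be a bound-variable label $q\in\allpos$, since condition~\ref{item:lambda3} would force $q\le\emptyseq$, hence $q=\emptyseq$ and $t(\emptyseq)=\lambda$, contradicting $t(\emptyseq)=\emptyseq\in\allpos$. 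For well-definedness I would note, as the text already remarks, that each of the four constructions of Definition~\ref{def:treeConstr} yields a lambda tree, and since a finite term adds only finitely many positions at each construction step, $\lamtree M\in\ptree$ for every $M\in\plam$; because bound variables are encoded namelessly via the binder position, $\alpha$-equivalent terms are sent to the identical tree, so $\lamtree\cdot$ is well-defined on the $\alpha$-quotient $\plam$.

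For surjectivity I would argue by induction on $\card{\dom t}$ for $t\in\ptree$. The base case $\dom t=\emptyset$ gives $t=\bot=\lamtree\bot$. For the inductive step I split on $t(\emptyseq)$ via the decomposition above: if $t=\vartree x$ then $t=\lamtree x$; if $t=\abstree x{t'}$ then $\card{\dom{t'}}=\card{\dom t}-1$, so the induction hypothesis yields $t'=\lamtree{M'}$ and hence $t=\abstree x\lamtree{M'}=\lamtree{\lambda x.M'}$; if $t=t_1\,t_2$ then $t_1,t_2$ both have strictly smaller domains, so $t_i=\lamtree{M_i}$ by the induction hypothesis and $t=\lamtree{M_1\,M_2}$. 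The subtrees are uniquely determined modulo renaming of free variables (as observed after Definition~\ref{def:lamtree}), so the choice of $x$ above is immaterial modulo $\alpha$-equivalence.

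For injectivity I would use structural induction on $M\in\plam$, showing that $\lamtree M=\lamtree N$ forces $M=N$ in $\plam$. The root label $\lamtree M(\emptyseq)$ determines the outermost constructor of $M$ (undefined for $\bot$, a label in $\calV$ for a variable, $\lambda$ for an abstraction, $@$ for an application), so $\lamtree M=\lamtree N$ first forces $M$ and $N$ to share their outermost constructor. For $\bot$ and variables the claim is immediate; for applications, equality of the subtrees at $\seq 1$ and $\seq 2$ together with the induction hypothesis gives equality of the immediate subterms. The delicate case is abstraction, $M=\lambda x.M'$ and $N=\lambda y.N'$: here I would show that $\abstree{\cdot}{\cdot}$ is injective modulo renaming, i.e.\ that $\abstree x\lamtree{M'}=\abstree y\lamtree{N'}$ implies $\lamtree{M'}=\lamtree{\rename{N'}{y}{x}}$ (using that $\lamtree\cdot$ commutes with free-variable renaming), and then conclude $M'=\rename{N'}{y}{x}$ by the induction hypothesis, whence $M=N$ as $\alpha$-equivalent terms.

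The main obstacle is precisely the bookkeeping around bound variables and $\alpha$-equivalence in the abstraction case. The nameless encoding of Definition~\ref{def:treeConstr} replaces free occurrences of the abstracted variable $x$ by the bound-variable label pointing to the binder, so the correctness of both the decomposition used for surjectivity and the injectivity of $\abstree{\cdot}{\cdot}$ rest on the fact that this relabelling is a bijection between the $\alpha$-equivalence class of the body and its nameless image. Everything else is a routine case analysis driven by the root label $t(\emptyseq)$ and the recursive clauses of Definition~\ref{def:lamtree}.
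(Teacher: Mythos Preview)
Your proposal is correct and follows essentially the same route as the paper's proof: structural induction on $M$ for injectivity (with the abstraction case handled via $\alpha$-renaming) and induction on $\card{\dom t}$ for surjectivity with a case split on $t(\emptyseq)$. The only cosmetic differences are that the paper spells out the inverse constructions of the subtree(s) explicitly rather than appealing to the decomposition lemma, and in the abstraction case of injectivity the paper renames both bound variables to a common fresh $z$ instead of renaming one body to match the other.
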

\begin{proof}[Proof Proof of Proposition~\ref{prop:lamtreeBij}]
  For injectivity assume some $M,N \in \plam$ such that $\lamtree M =
  \lamtree N$. We proceed by induction on $M$.  If $M = \bot$, then
  also $N = \bot$. Likewise, if $M=x \in \calV$, then $N = x$, too.

  If $M = \lambda x . M'$, then $N = \lambda y. N'$. W.l.o.g.\ we may
  assume that $x = y$ (otherwise we can just rename both to a fresh
  variable $z$). Hence, $\lamtree{M'} = \lamtree{N'}$ and thus, by
  induction hypothesis, $M' = N'$. Consequently, we have that $\lambda
  x . M' = \lambda y. N'$.

  If $M = M_1M_2$, then $N = N_1N_2$ and $\lamtree{M_i} =
  \lamtree{N_i}$. By applying the induction hypothesis to the latter,
  we obtain that $M_i = N_i$ and thus $M_1M_2 = N_1N_2$.

  For surjectivity we assume some $t \in \ptree$ and construct by
  induction on the cardinality of $\dom{t}$ a term $M \in \plam$ with
  $\lamtree M = t$. If $\dom{t} = \emptyset$, then $\lamtree{\bot} =
  t$. Otherwise, we know that $\emptyseq \in \dom{t}$. If
  $t(\emptyseq) = x$, then $\lamtree{x} = t$. The case $t(\emptyseq)
  \in \allpos$ is not possible.

  If $t(\emptyseq) = \lambda$, then construct the lambda tree
  $s$ as follows:
  \[
  s(p) =
  \begin{cases}
    x &\text{if } t(\seq 0 \concat p) = \emptyseq\\
    q &\text{if } t(\seq 0 \concat p) = \seq 0 \concat q\\
    t(\seq 0 \concat p) &\text{otherwise}\\
  \end{cases}
  \]
  where $x$ is a fresh variable not occurring in the image of $t$. One
  can easily check that $s$ is indeed a lambda tree. Since
  $\card{\dom{s}} = \card{\dom{t}} - 1$, we can apply the induction
  hypothesis to obtain some $M \in \plam$ with $\lamtree M = s$. We
  then have that $\lamtree{\lambda x . M} = t$.

  If $t(\emptyseq) = @$, then construct for each $i\in\set{1,2}$ a
  lambda tree $s_i$ as follows:
  \[
  s_i(p) =
  \begin{cases}
    q &\text{if } t(\seq i \concat p) = \seq i \concat q\\
    t(\seq i \concat p) &\text{otherwise}\\
  \end{cases}
  \]
  One can easily check that both $s_1$ and $s_2$ are lambda
  terms. Since $\card{\dom{s_1}} + \card{\dom{s_2}} = \card t - 1$, we
  may use the induction hypothesis for $s_i$. Hence, we find $M_i \in
  \plam$ with $\lamtree{M_i} = s_i$ and we can conclude that
  $\lamtree{M_1 M_2} = t$.
\end{proof}

Before we proceed, we give the explicit proof that $\talebot$ is
indeed a partial order as this fact is needed in the proof of
Proposition~\ref{prop:lamtreeIso}.
\begin{proposition}
  \label{prop:talebotPO}
  For each strictness signature $\ola$, the relation $\talebot$ is
  a partial order on $\iptree$. 
\end{proposition}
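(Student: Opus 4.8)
The plan is to check the three defining properties of a partial order---reflexivity, antisymmetry, and transitivity---directly against Definition~\ref{def:talebot}.

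Reflexivity $s \talebot s$ and antisymmetry are both immediate. For reflexivity, conditions \ref{item:talebot1} and \ref{item:talebot2} hold trivially since $\dom s \subseteq \dom s$ and $s(p) = s(p)$, while condition \ref{item:talebot3} is vacuous because its hypothesis and conclusion coincide. For antisymmetry, if $s \talebot t$ and $t \talebot s$, then \ref{item:talebot1} applied in both directions gives $\dom s = \dom t$, and \ref{item:talebot2} then forces $s(p) = t(p)$ on this common domain; hence $s = t$. Note that condition \ref{item:talebot3} plays no role in either of these two arguments.

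The only real content lies in transitivity: assuming $s \talebot t$ and $t \talebot u$, I would establish $s \talebot u$. Condition \ref{item:talebot1} follows by chaining the inclusions $\dom s \subseteq \dom t \subseteq \dom u$. Condition \ref{item:talebot2} follows for every $p \in \dom s$ from $s(p) = t(p)$ (as $p \in \dom s$) together with $t(p) = u(p)$ (as $p \in \dom s \subseteq \dom t$).

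The step I expect to require the most care is condition \ref{item:talebot3} for $s \talebot u$, which has to be obtained by invoking \ref{item:talebot3} twice in the right order. Given $p \in \dom s$ and an $\ola$-strict position $p\concat\seq i \in \dom u$, I would first apply condition \ref{item:talebot3} of $t \talebot u$---using $p \in \dom s \subseteq \dom t$---to conclude $p\concat\seq i \in \dom t$, and then apply condition \ref{item:talebot3} of $s \talebot t$---using $p \in \dom s$ and the freshly obtained $p\concat\seq i \in \dom t$---to conclude $p\concat\seq i \in \dom s$, as required. Since strictness of a position depends only on its shape and not on the tree, the same position $p\concat\seq i$ remains $\ola$-strict throughout, so both applications are legitimate and the proof is complete.
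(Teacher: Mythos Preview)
Your proof is correct and follows essentially the same approach as the paper: reflexivity and antisymmetry are dismissed as immediate, and transitivity for condition~\ref{item:talebot3} is obtained by first passing from $\dom s$ to $\dom t$ via~\ref{item:talebot1}, then invoking~\ref{item:talebot3} of $t \talebot u$ to pull $p\concat\seq i$ into $\dom t$, and finally~\ref{item:talebot3} of $s \talebot t$ to pull it into $\dom s$. The paper's argument is terser but structurally identical.
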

\begin{proof}
  Reflexivity and antisymmetry of $\talebot$ follow immediately from
  the definition. For transitivity, let $t_1 \talebot t_2$ and
  $t_2\talebot t_3$. Then conditions \ref{item:talebot1} and
  \ref{item:talebot2} for $t_1 \talebot t_3$ follow immediately. For
  \ref{item:talebot3}, let $a_i = 0$, $p \in \dom{t_1}$ and
  $p\concat\seq i \in \dom{t_3}$. Then $p \in \dom{t_2}$ due to $t_1
  \talebot t_2$, which in turn implies $p\concat\seq i \in \dom{t_2}$
  due to $t_2\talebot t_3$. Hence, $p\concat\seq i \in \dom{t_1}$ due
  to $t_1 \talebot t_2$.
\end{proof}

\begin{thmcpy}{Proposition~\ref{prop:lamtreeIso}}
  The function $\lamtree\cdot\colon\plam \to \ptree$ is an order isomorphism
  from $(\plam,\alebot)$ to $(\ptree,\talebot)$.
\end{thmcpy}
\begin{proof}[Proof Proof of Proposition~\ref{prop:lamtreeIso}]
  By Proposition~\ref{prop:lamtreeBij} it remains to be shown that $M
  \alebot N$ iff $\lamtree M \talebot \lamtree N$ for all $M,N \in
  \plam$.

  We show that the relation $R$ defined by $(M, N) \in R \iff \lamtree
  M \talebot \lamtree N$ has the properties given in
  Definition~\ref{def:alebot}. Since $\alebot$ is the least such
  relation, we obtain the ``only if'' direction of the
  equivalence. The relation $R$ is a preorder since $\talebot$ is a
  preorder according to Proposition~\ref{prop:talebotPO}.

  We trivially have $\lamtree \bot \talebot \lamtree M$ since
  $\dom{\lamtree{\bot}} = \emptyset$. If $a_0 = 1$ or $M \neq \bot$,
  we can easily check that $\lamtree M \talebot \lamtree N$ implies
  $\lamtree {\lambda x . M} \talebot \lamtree {\lambda x . N}$, using
  Definition~\ref{def:lamtree}. The same goes for the remaining two
  closure properties in Definition~\ref{def:alebot}.

  For the converse direction, we assume that $\lamtree{M} \talebot
  \lamtree{N}$ and show that then $M \alebot N$ by induction on
  $M$. If $M = \bot$, we immediately have $M \alebot N$ by
  \ref{item:alebot1} of Definition~\ref{def:alebot}. If $M = x$, then
  $\lamtree{M} \talebot \lamtree{N}$ implies, by \ref{item:talebot2}
  of Definition~\ref{def:talebot}, that $N = x$. By reflexivity of
  $\alebot$, we thus have $M \alebot N$.

  If $M = M_1 M_2$, then $\lamtree{M} \talebot \lamtree{N}$ implies,
  by \ref{item:talebot2} of Definition~\ref{def:talebot}, that $N$ is
  of the form $N_1 N_2$. Using the definition of $\lamtree{M}$ (resp.\
  $\lamtree{N}$) in terms of $\lamtree{M_1}$ and $\lamtree{M_2}$
  (resp.\ $\lamtree{N_1}$ and $\lamtree{N_2}$), we can derive that
  $\lamtree{M_i} \talebot \lamtree{N_i}$ for all $i \in \set{1,2}$. By
  induction hypothesis, we thus have that $M_i \alebot N_i$ for all $i
  \in \set{1,2}$. Moreover, if $a_i = 0$ for $i\in{1,2}$, then we can
  derive from $\lamtree{M} \talebot \lamtree{N}$, using
  \ref{item:talebot3} of Definition~\ref{def:talebot}, that $\emptyseq
  \in \lamtree{N_i}$ implies $\emptyseq \in \lamtree{M_i}$. That is,
  $M_i = \bot$ implies $N_i = \bot$. Consequently, we have that
  $M_1M_2 \alebot N_1 M_2$ due to \ref{item:alebot3} of
  Definition~\ref{def:alebot} in case $a_1 = 1$, or $a_1 = 0$ and $M_1
  \neq \bot$. In case, $a_1 = 0$ and $M_1 = \bot$, we know that also
  $N_1 = \bot$. Thus, $M_1M_2 \alebot N_1 M_2$ follows by
  reflexivity. By the same argument, also $N_1M_2 \alebot N_1 N_2$
  holds, which means that by transitivity, we obtain that $M_1M_2
  \alebot N_1 N_2$.

  If $M = \lambda x. M'$, then $\lamtree{M} \talebot \lamtree{N}$
  implies, by \ref{item:talebot2} of Definition~\ref{def:talebot},
  that $N$ is of the form $\lambda y. N'$ and w.l.o.g.\ we may assume
  that $y = x$. By the same argument as above, we derive from
  $\lamtree{M} \talebot \lamtree{N}$, that $\lamtree{M'} \talebot
  \lamtree{N'}$, which, by induction hypothesis, yields $M' \alebot
  N'$. Likewise we obtain, in case that $a_0 = 0$ that $\emptyseq \in
  \lamtree{N'}$ implies $\emptyseq \in \lamtree{M'}$, which means that
  $M' = \bot$ implies $N' = \bot$. Hence, if $a_0 = 0$ and $M' =
  \bot$, we obtain that $N' = \bot$, which means that $\lambda x. M'
  \alebot \lambda y. N'$ follows by reflexivity. Otherwise, we may
  apply \ref{item:alebot2} of Definition~\ref{def:alebot} to obtain
  $\lambda x. M' \alebot \lambda y. N'$.
\end{proof}

\begin{thmcpy}{Theorem~\ref{thr:alebotCpo}}
  For each strictness signature $\ola$, the partially ordered set
  $(\aptree,\talebot)$ forms a complete partial order. In particular,
  the lub $t$ of a directed set $D$ satisfies the following:
  \begin{align*}
    \dom{t} = \bigcup_{s \in D} \dom{s}\hspace{2cm}
    s(p) = t(p) \quad \text{ for all } s\in D, p\in \dom{s}
  \end{align*}
\end{thmcpy}
\begin{proof}[Proof Proof of Theorem~\ref{thr:alebotCpo}]
  The lambda tree $\bot$ is obviously the least element in
  $(\aptree,\talebot)$. To show that $(\aptree,\talebot)$ is directed
  complete, we assume a directed set $D$ in $(\aptree,\talebot)$ and
  construct a lambda tree $t \in \aptree$ that is the lub of
  $D$. Define $t$ as follows: $t(p) = s(p)$ iff there is some $s \in
  D$ with $p \in \dom{s}$.
  \begin{itemize}
  \item At first we show that this indeed defines a partial function
    $t\colon \allpos\pfunto \lamsig$. To this end assume $s_1,s_2\in D$ and
    $p\in \dom{s_1} \cap \dom{s_2}$.  Since $D$ is directed, there is
    some $s \in D$ with $s_1,s_2 \talebot s$, which implies $s_1(p)
    = s(p) = s_2(p)$.
  \item Next, we show that $t$ is a well-defined lambda tree
    according to Definition~\ref{def:lambda}:
    \begin{enumerate}
    \item[\ref{item:talebot1}] If $p\concat\seq 0 \in \dom{t}$, then
      $p\concat\seq 0 \in \dom{s}$ for some $s\in D$. Hence, $s(p) =
      \lambda$ and thus $t(p) = \lambda$.
    \item[\ref{item:talebot2}] If $p\concat\seq 1 \in \dom{t}$ or
      $p\concat\seq 2 \in \dom{t}$, then $p\concat\seq 1 \in \dom{s}$
      or $p\concat\seq 2 \in \dom{s}$ for some $s \in D$. Hence, $s(p)
      = @$ and thus $t(p) = @$.
    \item [\ref{item:talebot3}] If $t(p) = q \in \allpos$, then
      $s(p) = q$ for some $s \in D$. Hence, $q \le p$ and $s(q) =
      \lambda$, which implies $t(q) = \lambda$.
    \end{enumerate}
  \item Next, we show that $t$ is $\ola$-guarded and thus member of
    $\aptree$. Assume that $t$ is $\ola$-unguarded, i.e.\ $t$ has an
    $\ola$-bounded infinite branch $(m_i)_{i<\omega}$. That means $p_j
    \in \dom t$ for all $j<\omega$ where $p_j = (m_i)_{i<j}$ and there
    is some $n < \omega$ such that $a_{m_i} = 0$ for all $n \le i <
    \omega$. Consequently, we find for each $j < \omega$ some $s_j \in
    D$ such that $p_j \in \dom{s_j}$. We will show by induction on $j$
    that $p_j \in \dom{s_n}$ for all $j < \omega$. From this we can
    then conclude that $(m_i)_{i<\omega}$ is an $\ola$-bounded
    infinite branch in $s_n$, which means that $s_n \nin
    \aptree$. This contradicts the assumption that $D \subseteq
    \aptree$, and we can thus conclude that $t$ is $\ola$-guarded.

    The case $j \le n$ is trivial since $p_n \in \dom{s_n}$ and thus
    $p_j \in \dom{s_n}$. For the case $n < j + 1 < \omega$, we have
    that $p_{j + 1} = p_j \concat \seq{m_j}$ with $a_{m_j} = 0$. By
    induction hypothesis, we have that $p_j \in \dom{s_n}$ and since
    $D$ is directed, we find some $s\in D$ with $s_{j+1} \talebot s$
    and $s_n \talebot s$. The former yields that $p_{j+1} \in
    \dom{s}$. According to \ref{item:talebot3} of
    Definition~\ref{def:talebot}, the latter yields that $p_{j+1} \in
    \dom{s_n}$ due to $a_{m_j}=0$, $p_{j+1} \in \dom{s}$ and $p_j \in
    \dom{s_n}$.
  \item Next, we show that $t$ is an upper bound of $D$. To this end
    we assume some $s \in D$ and show that $s \talebot t$:
    \begin{enumerate}
    \item[\ref{item:talebot1}] \& \ref{item:talebot2} Immediate.
    \item[\ref{item:talebot3}] Let $a_i = 0$, $p \in \dom{s}$ and
      $p\concat\seq i \in \dom{t}$. Then there is some $s_1\in D$ with
      $p\concat\seq i \in \dom{s_1}$. As $D$ is directed, we find some
      $s_2\in D$ with $s_1 \talebot s_2$ and $s \talebot
      s_2$. The former yields that $p \concat\seq i \in \dom{s_2}$,
      which together with the latter implies that $p \concat\seq i \in
      \dom{s}$.
    \end{enumerate}
  \item Finally, we show that $t$ is the least upper bound of $D$. To
    this end, we assume some $t'$ with $s \talebot t'$ for all $s
    \in D$ and show that then $t \talebot t'$.
    \begin{enumerate}
    \item[\ref{item:talebot1}] \& \ref{item:talebot2} If $p \in \dom{t}$
      with $t(p) = l$ then there is some $s \in D$ with $s(p) =
      l$. Since $s \talebot t'$, we then obtain that $t'(p) = l$,
      too.
    \item[\ref{item:talebot3}] Let $a_i = 0$, $p \in \dom{t}$ and
      $p\concat\seq i \in \dom{t'}$. Then there is some $s \in D$ with
      $p \in \dom{s}$, which implies $p\concat\seq i \in \dom{s}$ due
      to $s \talebot t'$. Hence, $p\concat\seq i \in \dom{t}$.
    \end{enumerate}
  \end{itemize}
\end{proof}

\begin{thmcpy}{Proposition~\ref{prop:alebotGlb}}
  Every non-empty subset $T$ of $\aptree$ has a glb $\Glb T$ in
  $(\aptree,\talebot)$ such that $\dom{\Glb T}$ is the largest set $P$
  satisfying the following properties:
  \begin{enumerate}[(1)]
  \item If $p\in P$, then there is some $l \in \lamsig$ such that
    $s(p) = l$ for all $s \in T$.
    \label{item:talebotGlb1}
  \item If $p \concat\seq i \in P$, then $p \in P$.
    \label{item:talebotGlb2}
  \item If $p \in P$, $a_i = 0$, and $p \concat \seq i \in \dom s$ for
    some $s \in T$, then $p\concat\seq i \in P$.
    \label{item:talebotGlb3}
  \end{enumerate}
\end{thmcpy}
\begin{proof}[Proof Proof of Proposition~\ref{prop:alebotGlb}]
  Given a non-empty subset $T$ of $\aptree$, we construct a lambda
  tree $t \in \aptree$ and show that it is the glb of $T$.

  Let $P$ be the largest subset of $\allpos$ satisfying properties
  \ref{item:talebotGlb1} through \ref{item:talebotGlb3}.  Since these
  properties are closed under union, $P$ is well-defined.  Let $\oh s$
  be and arbitrary lambda tree in $T$. We define the partial function
  $t\colon \allpos \pfunto \lamsig$ as the restriction of $\oh s$ to
  $P$. This construction is justified since $P \subseteq \dom{\oh s}$
  by \ref{item:talebotGlb1}.
  
  \begin{itemize}
  \item At first, we show that $t$ is a well-defined lambda
    tree. For all three parts below, we make use of the fact that $P$
    is closed under taking prefixes according to
    \ref{item:talebotGlb2}.
    \begin{enumerate}
    \item[\ref{item:lambda1}] If $p \concat\seq 0 \in P$, then $p
      \concat\seq 0 \in \dom{\oh s}$ by \ref{item:talebotGlb1}. Hence,
      $t(p) = \oh s(p) = \lambda$.
    \item[\ref{item:lambda2}] If $p \concat\seq 1 \in P$ or $p
      \concat\seq 2 \in P$, then $p \concat\seq 1 \in \dom{\oh s}$ or
      $p \concat\seq 2 \in P$ by \ref{item:talebotGlb1}. Hence, $t(p) =
      \oh s(p) = @$.
    \item[\ref{item:lambda3}] If $t(p) = q \in \allpos$, then
      $\oh s(p) = q$ by \ref{item:talebotGlb1}. Hence, $q \le p$ and
      $t(q) = \oh s(q) = \lambda$.
    \end{enumerate}
  \item Next, we show that $t$ is $\ola$-guarded and thus member of
    $\aptree$. To this end, we assume that $t$ is $\ola$-unguarded. That
    is, $t$ has an $\ola$-bounded infinite branch $S$. By
    \ref{item:talebotGlb1}, each position along $S$ is also in $\oh s$,
    which means that $S$ is an infinite branch of $\oh s$ as
    well. Hence, $\oh s$ is $\ola$-unguarded, too. Since this
    contradicts the assumption that $T \subseteq \aptree$, we can
    conclude that $t$ is $\ola$-guarded.
  \item Next, we show that $t$ is a lower bound of $T$. To this end,
    we assume some $s \in T$ and show that then $t \talebot s$:
    \begin{enumerate}
    \item[\ref{item:talebot1}] Immediate consequence of the
      construction of $t$.
    \item[\ref{item:talebot2}] If $p \in P$, then $t(p) = \oh s(p)
      \stackrel{\ref{item:talebotGlb1}}= s(p)$.
    \item[\ref{item:talebot3}] Immediate consequence of
      \ref{item:talebotGlb3}.
    \end{enumerate}
  \item Finally, we show that $t$ is the greatest lower bound of
    $T$. To this end, we assume some $t' \in \aptree$ with
    $t'\talebot s$ for all $s \in T$ and show that then $t'
    \talebot t$:
    \begin{enumerate}
    \item[\ref{item:talebot1}] In order to prove the inclusion
      $\dom{t'} \subseteq P$, we show that $\dom{t'}$ satisfies
      \ref{item:talebotGlb1} through \ref{item:talebotGlb3} of the
      coinductive definition of $P$: \ref{item:talebotGlb1} and
      \ref{item:talebotGlb3} follow from the fact that $t' \talebot s$
      for all $s \in T$, whereas \ref{item:talebotGlb2} follows from
      the fact that $t'$ is a lambda tree.
    \item[\ref{item:talebot2}] If $p \in \dom{t'}$, then $t'(p) = \oh
      s(p)$ since $t' \talebot \oh s$. Because $\dom{t'} \subseteq
      P$ as shown above, we know that $p \in P$. Hence, $t(p) = \oh
      s(p) = t'(p)$.
    \item[\ref{item:talebot3}] Let $a_i = 0$, $p \in \dom{t'}$, and $p
      \concat\seq i \in P$. From the latter, we obtain that $p
      \concat\seq i \in \dom{\oh s}$, which implies $p \concat\seq i
      \in \dom{t'}$ due to $t' \talebot \oh s$.
    \end{enumerate}
  \end{itemize}
\end{proof}

In the proof of Theorem~\ref{thr:alebotCompl}, we use the following
lemma, which allows us to construct ideals:

\begin{lemma}
  \label{lem:findownIdeal}
  For each $t \in \aptree$, the set $\findown{t} = \setcom{t\in\ptree}{s
    \talebot t}$ forms an ideal in $(\ptree,\talebot)$.
\end{lemma}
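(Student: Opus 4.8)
The plan is to establish the two defining properties of an ideal—downward-closure and directedness—for $\findown{t}$ as a subset of $(\ptree, \talebot)$, noting in passing that $\findown{t}$ is non-empty, since $\bot$ is a finite lambda tree with $\bot \talebot t$ and hence $\bot \in \findown{t}$.

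Downward-closure is immediate: if $s \in \findown{t}$ (so $s \talebot t$) and $s' \in \ptree$ with $s' \talebot s$, then transitivity of $\talebot$ (Proposition~\ref{prop:talebotPO}) gives $s' \talebot t$, whence $s' \in \findown{t}$.

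The real content is directedness. Given $s_1, s_2 \in \findown{t}$, I would take as candidate upper bound the restriction $s$ of $t$ to the domain $\dom{s_1} \cup \dom{s_2}$. This is legitimate because condition \ref{item:talebot2} of Definition~\ref{def:talebot} forces $s_1$ and $s_2$ to agree with $t$ on their respective domains, so $s$ genuinely extends both. Since $\dom{s_1}, \dom{s_2}$ are finite and prefix-closed subsets of $\dom{t}$, their union is again finite and prefix-closed, and $s$ inherits properties \ref{item:lambda1}--\ref{item:lambda3} of Definition~\ref{def:lambda} from $t$ by restriction; thus $s \in \ptree$.

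It then remains to verify $s \talebot t$, $s_1 \talebot s$, and $s_2 \talebot s$. Conditions \ref{item:talebot1} and \ref{item:talebot2} hold trivially by construction, so the only genuine step—and the point where I expect the (mild) obstacle to lie—is the strictness condition \ref{item:talebot3}, which must be discharged using the full hypotheses $s_1, s_2 \talebot t$ rather than the inclusions of domains alone. For $s \talebot t$: if $p \in \dom{s}$, say $p \in \dom{s_1}$, and $p \concat \seq i \in \dom{t}$ with $a_i = 0$, then $s_1 \talebot t$ yields $p \concat \seq i \in \dom{s_1} \subseteq \dom{s}$. For $s_1 \talebot s$ (and symmetrically $s_2 \talebot s$): if $p \in \dom{s_1}$ and $p \concat \seq i \in \dom{s} \subseteq \dom{t}$ with $a_i = 0$, then again $s_1 \talebot t$ delivers $p \concat \seq i \in \dom{s_1}$. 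Hence $s$ is an upper bound of $\set{s_1, s_2}$ lying in $\findown{t}$, so $\findown{t}$ is directed and therefore an ideal.
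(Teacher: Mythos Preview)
Your argument is correct, and it is genuinely simpler than the paper's own proof. The paper proceeds abstractly: it invokes the complete semilattice structure of $(\aptree,\talebot)$ (Theorem~\ref{thr:alebotCompSemi}) to obtain the lub $s$ of $\{s_1,s_2\}$, observes $s \talebot t$, and then must still establish that $s$ is finite. That last step is done by contradiction via K\"onig's Lemma: an infinite branch of $s$ would eventually leave $\dom{s_1}\cup\dom{s_2}$ and, using $\ola$-guardedness, one can truncate $s$ at a non-strict position along that branch to produce a strictly smaller upper bound $s' \talbot s$, contradicting that $s$ is the lub.

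You bypass all of this by writing down the upper bound explicitly as the restriction of $t$ to $\dom{s_1}\cup\dom{s_2}$, which is manifestly finite. The only place any care is needed is condition~\ref{item:talebot3}, and you handle it correctly by routing through $s_j \talebot t$ rather than through $s_j \talebot s$ directly. In fact your $s$ coincides with the paper's lub, though you neither need nor claim this. Your approach is more self-contained (it does not rely on Theorem~\ref{thr:alebotCompSemi} or K\"onig's Lemma) and arguably clearer; the paper's approach has the minor advantage of identifying the upper bound as the lub, which is not needed here but fits its general development.
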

\begin{proof}
  By construction, $\findown t$ is downwards-closed. To argue that
  $\findown t$ is directed, we first observe that $\findown t$ is
  non-empty since $\bot \in \findown t$. Furthermore, let $s_1,s_2 \in
  \findown t$, i.e.\ $s_1, s_2 \talebot t$. Since $(\aptree,\talebot)$
  is a complete semilattice according to
  Theorem~\ref{thr:alebotCompSemi}, every set with an upper bound also
  has a lub. In particular, $\set{s_1, s_2}$ has a lub $s$ in
  $(\aptree,\talebot)$. Since $t$ is an upper bound of
  $\set{s_1,s_2}$, we have that $s \talebot t$. It only remains to be
  shown that $s$ is finite.

  Assume that $s$ is not finite. By K\"{o}nig's Lemma, there is an
  infinite branch $S$ in $s$. Moreover, since $s_1,s_2$ are finite,
  $S$ cannot be an infinite branch in $s_1$ and $s_2$. That is, there
  is some $p_1 < S$ such that $p_1 \nin
  \dom{s_1}\cup\dom{s_2}$. Moreover, since $s$ is $\ola$-guarded, $S$
  cannot be $\ola$-bounded, which means that we find some
  $p_2\concat\seq k$ with $p_1 < p_2 \concat\seq k < S$ and $a_k = 1$.

  Let $s'$ be the restriction of $s$ to
  $\setcom{p\in\dom{s}}{\text{not }p_2 < p < S}$. Clearly, $s'$ is an
  $\ola$-guarded lambda tree, too.  We show that $s' \talebot s$:
  \begin{enumerate}
  \item[\ref{item:talebot1}] \& \ref{item:talebot2} follow from the
    construction of $s'$.
  \item[\ref{item:talebot3}] If $a_i = 0$, $p \in \dom{s'}$, and $p
    \concat\seq i \in \dom s$, then we know that $p \not< S$ or $p
    \not> p_2$. In the former case, also $p \concat\seq i \not< S$. In
    the latter case, if $p \concat \seq i > p_2$, then $p =
    p_2$. Consequently, $p \concat\seq i \not< S$ since otherwise $a_i
    = a_k = 1$. For either case, we conclude that $p \concat\seq i \in
    \dom{s'}$.
  \end{enumerate}
  Since $p_2 \concat\seq{i}$ is in $\dom{s}$ but not in $\dom{s'}$, we
  know that $s \neq s'$ and thus $s' \talbot s$.

  Next, we show that $s_j \talebot s'$ for all $j \in \set{1,2}$:
  \begin{enumerate}
  \item[\ref{item:talebot1}] If $p\in \dom{s_j}$, then $p \in \dom{s}$
    since $s_j \talebot s$. Since $p \in \dom{s_j}$ and thus $p \not\ge
    p_1$ and a fortiori $p \not\ge p_2$, we have that $p \in
    \dom{s'}$.
  \item[\ref{item:talebot2}] If $p\in \dom{s_j}$, then $s_j(p) = s(p)$
    due to $s_j \talebot s$. Moreover, by construction of $s'$, we
    obtain $s'(p) = s(p) = s_j(p)$.
  \item[\ref{item:talebot3}] If $a_i = 0$, $p \in \dom{s_j}$, and $p
    \concat\seq i \in \dom{s'}$, then $p \concat\seq i \in \dom{s}$,
    too. Since, $s_j \talebot s$, we can then conclude that
    $p\concat\seq i \in \dom{s_j}$.
  \end{enumerate}
  This contradicts the fact that $s$ is the lub of $s_1, s_2$. Hence,
  $s$ must be finite.
\end{proof}

\begin{thmcpy}{Theorem~\ref{thr:alebotCompl}}
  The ideal completion of $(\plam,\alebot)$ is order isomorphic to
  $(\aptree,\talebot)$.
\end{thmcpy}
\begin{proof}[Proof Proof of Theorem~\ref{thr:alebotCompl}]
  By Proposition~\ref{prop:lamtreeIso}, it suffices to show that the
  ideal completion $(I,\subseteq)$ of $(\ptree,\talebot)$ is order
  isomorphic to $(\aptree,\talebot)$. To this end, we define two
  functions $\phi\colon \aptree \funto I$ and $\psi\colon I \funto
  \aptree$:
  \[
  \phi(t) = \setcom{s\in\ptree}{s \talebot t} \qquad\psi(T) = \Lub T
  \]
  By Lemma~\ref{lem:findownIdeal}, $\phi$ is well-defined. Moreover,
  as each ideal $T$ of $(\ptree,\talebot)$ is directed and thus has a
  lub in $(\aptree,\talebot)$ according to
  Theorem~\ref{thr:alebotCpo}, $\psi$ is well-defined, too. Both
  $\phi$ and $\psi$ are obviously monotonic. Hence, it remains to be
  shown that $\phi$ and $\psi$ are inverses of each other:

  \begin{itemize}
  \item For each $T \in I$, we show $\phi(\psi(T)) \subseteq T$. If $t
    \in \phi(\psi(T))$, then $t \in \ptree$ and $t \talebot \oh t$ for
    $\oh t = \Lub T$. According to Theorem~\ref{thr:alebotCpo}, there
    is, for each $p \in \dom{\oh t}$, a $t_p \in T$ such that $t_p(p)
    = \oh t(p)$. Since $t \talebot \oh t$, we thus have that 
    \begin{equation}
      t(p) = \oh t(p) = t_p(p) \text{ for each } p \in \dom{t}.
      \tag{1}
      \label{eq:alebotCompl1}
    \end{equation}
    Moreover, as $\dom{t}$ is finite and $T$ is directed, we find some
    $s \in T$ with
    \begin{equation}
      t_p \talebot s \text{ for all } p \in \dom{t}.
      \tag{2}
      \label{eq:alebotCompl2}
    \end{equation}
    We will show that $t \talebot s$:
    \begin{enumerate}
    \item[\ref{item:talebot1}] \& \ref{item:talebot2} If $p \in \dom{t}$,
      then $t(p) \stackrel{\eqref{eq:alebotCompl1}}= t_p(p)
      \stackrel{\eqref{eq:alebotCompl2}}= s(p)$.
    \item[\ref{item:talebot3}] If $a_i = 0$, $p \in \dom{t}$, and
      $p\concat\seq{i}\in \dom{s}$, then $p\concat\seq{i}\in \dom{\oh
        t}$ since $s \talebot \oh t$. Because $t \talebot \oh t$, we can
      then conclude that $p \concat \seq i \in \dom{t}$.
    \end{enumerate}
    As $s \in T$ and $T$ is downwards-closed, $t \talebot s$ implies
    that $t \in T$.
  \item For each $T \in I$, we show $\phi(\psi(T)) \supseteq T$. If
    $t\in T$, then $t \in \ptree$ and $t \talebot \Lub T$. That is,
    $t\in \phi(\psi(T))$.
  \item For each $t \in \aptree$, we show that $\psi(\phi(t)) = t$. Let
    $\oh t = \Lub \phi(t)$. We will show that $t = \oh t$.

    If $p \in \dom{\oh t}$, then there is some $s \in \phi(t)$ with $p
    \in \dom{s}$ and $\oh t(p) = s(p)$, according to
    Theorem~\ref{thr:alebotCpo}. Since $s \talebot t$, this means that
    $\oh t(p) = s(p) = t(p)$.

    If $p \in \dom{t}$, then consider $\atrunc t d$ for $d = \adepth p
    + 1$. By construction of $\atrunc t d$, we have that $p \in
    \dom{\atrunc t d}$. According to Lemma~\ref{lem:atruncFin} and
    Lemma~\ref{lem:atruncAlebot}, $\atrunc t d$ is finite and $\atrunc
    t d \talebot t$. That is, $\atrunc t d \in \phi(t)$. Hence, we can
    employ Theorem~\ref{thr:alebotCpo} to conclude that $t(p) =
    \atrunc t d (p) = \oh t(p)$.
  \end{itemize}
\end{proof}


Next we show correspondences between the limit inferior in
$(\aptree,\talebot)$ and the limit in $(\aptree, \ddta)$, akin to the
corresponding result on first-order terms~\cite{bahr10rta2}, but with
the addition of selective strictness according to $\ola$. As the first
step, we give a direct characterisation of the limit of a converging
sequence of lambda trees:
\begin{lemma}
  \label{lem:alebotLim}
  If a sequence $(t_\iota)_{\iota<\alpha}$ converges to $t$ in
  $(\aptree,\ddta)$, then\vspace{-5pt}
  \begin{center}
    $\dom{t} = \bigcup_{\beta<\alpha}\bigcap_{\beta\le\iota<\alpha}
    \dom{t_\iota},\text{ and} \qquad t(p) = l \iff \exists \beta <
    \alpha \forall \beta \le
    \iota < \alpha\colon\; t_\iota(p) = l$
  \end{center}
\end{lemma}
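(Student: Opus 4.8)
The plan is to translate the metric notion of convergence into a purely combinatorial statement about eventual agreement of domains and labels, and then read off both equalities from that statement. First I would unfold the definition of $\ddta$ to record the following reformulation: for lambda trees $s,t$ one has $\ddta(s,t) < 2^{-d}$ if and only if $s(p) \simeq t(p)$ for every position $p$ with $\adepth{p} \le d$, since $\ddta(s,t)$ equals $2^{-d'}$ for the least $\ola$-depth $d'$ of a disagreement (and equals $0$ when $s = t$). Feeding $\epsilon = 2^{-d}$ into $\lim_{\iota\limto\alpha} t_\iota = t$ then yields, for each $d < \omega$, a threshold $\beta_d < \alpha$ such that $t_\iota(p) \simeq t(p)$ for all $\beta_d \le \iota < \alpha$ and all $p$ with $\adepth{p} \le d$.

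Next I would establish the label characterisation $t(p) = l \iff \exists \beta < \alpha\, \forall \beta \le \iota < \alpha\colon t_\iota(p) = l$, which I prove first since the domain equality will follow from it. For the ``$\Rightarrow$'' direction, suppose $t(p) = l$ and put $d = \adepth{p}$; then for every $\iota \ge \beta_d$ we have $t_\iota(p) \simeq t(p) = l$, and as the right-hand side is defined this forces $t_\iota(p) = l$, so $\beta = \beta_d$ witnesses the claim. For ``$\Leftarrow$'', given $\beta$ with $t_\iota(p) = l$ for all $\iota \ge \beta$, set $d = \adepth{p}$ and take $\iota = \max(\beta,\beta_d)$, which satisfies $\iota < \alpha$; then $t_\iota(p) = l$ while simultaneously $t_\iota(p) \simeq t(p)$, and since $t_\iota(p)$ is defined we conclude $t(p) = l$.

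Finally I would deduce the domain equality, writing $D$ for $\bigcup_{\beta<\alpha}\bigcap_{\beta\le\iota<\alpha}\dom{t_\iota}$. The inclusion $\dom{t}\subseteq D$ is immediate from the ``$\Rightarrow$'' direction above: if $t(p)$ is defined, then some $\beta$ places $p$ in every $\dom{t_\iota}$ with $\iota \ge \beta$, hence $p \in D$. The reverse inclusion $D \subseteq \dom{t}$ is the one genuinely delicate point, and is where I expect the only real work: knowing merely that $p \in \dom{t_\iota}$ for all $\iota$ beyond some $\beta$ does not by itself pin down a limiting label, since the values $t_\iota(p)$ could a priori keep changing. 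Here I must use metric convergence rather than mere set-theoretic stabilisation: given such a $\beta$, I set $d = \adepth{p}$ and take $\iota = \max(\beta, \beta_d)$, so that $t_\iota(p)$ is defined and $t_\iota(p)\simeq t(p)$, which forces $t(p)$ to be defined as well, i.e.\ $p \in \dom{t}$. Apart from this upgrade from eventual domain-membership to an actual limit value, every step is a direct unfolding of the reformulated convergence condition.
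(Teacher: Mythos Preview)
Your proof is correct and follows essentially the same approach as the paper: unfolding the definition of $\ddta$ to obtain, for each $d$, a threshold beyond which $t_\iota$ and $t$ agree on all positions of $\ola$-depth at most $d$, and then reading off both equalities. The paper's version is terser (it sets $d=\adepth{p}+1$, proves only one direction of each equivalence, and says the other ``follows analogously''), whereas you spell out all four directions; but the underlying argument is the same.
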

\begin{proof}
  We only show one direction for each of the two equalities above. The
  other direction follows analogously. Let $t(p) = l$ and $d =
  \adepth{p} + 1$. Since $(t_\iota)_{\iota<\alpha}$ converges to $t$,
  there is some $\beta < \alpha$ such that $\ddta(t_\iota,t) < 2^{-d}$
  for all $\beta \le \iota < \alpha$. That is, $t_\iota(q) \simeq
  t(q)$ for all $q \in \allpos$ with $\adepth{q} < d$. In particular,
  $t_\iota(p) \simeq t(p)$. Since $p \in \dom{t}$, this means that $p
  \in \bigcap_{\beta\le\iota<\alpha} \dom{t_\iota}$, and since $t(p) =
  l$, we have that $t_\iota(p) = t(p) = l$ for all $\beta \le \iota <
  \alpha$.
\end{proof}

The following definition of truncations will help us to compare the
limit inferior in $(\aptree,\talebot)$ and the limit in the
corresponding metric space $(\aptree,\ddta)$:
\begin{definition}
  \label{def:atrunc}
  Given a strictness signature $\ola$, a depth $d \le \omega$, and a
  lambda tree $t\in \iptree$, the \emph{$\ola$-truncation}
  $\atrunc{t}{d}$ of $t$ at $d$ is defined as the restriction of $t$
  to the domain $\setcom{p\in \dom{t}}{\adepth{p}<d}$.
\end{definition}
The above definition of truncation is a straightforward translation of
the notion of truncation used by Arnold and Nivat~\cite{arnold80fi} to
the $\ola$-depth measures that we use here.  In the following, we make
use of the fact that the metric $\ddta$ can be characterised by
$\ddta(s,t) = 2^{-d}$ with $d = \max\setcom{d\le \omega}{\atrunc{s}{d}
  = \atrunc{t}{d}}$. This observation follows immediately from
Definition~\ref{def:atrunc}.

\begin{lemma}
  \label{lem:atruncFin}
  If $t\in \aptree$ and $d < \omega$, then $\atrunc{t}{d} \in \ptree$.
\end{lemma}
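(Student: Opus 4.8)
The plan is to establish two facts about $\atrunc t d$ and combine them: that $\atrunc t d$ is a lambda tree (in the sense of Definition~\ref{def:lambda}), and that its domain is finite. Together these yield $\atrunc t d \in \ptree$, since $\ptree$ is exactly the set of finite lambda trees.

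First I would record that the $\ola$-depth is monotone along prefixes: from $\adepth{q\concat\seq i} = \adepth q + a_i \ge \adepth q$ it follows that $\adepth p \le \adepth{p'}$ whenever $p \le p'$. Consequently the truncation domain $\setcom{p\in\dom t}{\adepth p < d}$ is prefix closed, being the intersection of the prefix-closed set $\dom t$ with a set of positions that is itself closed under taking prefixes. With prefix-closedness in hand, verifying conditions \ref{item:lambda1}--\ref{item:lambda3} of Definition~\ref{def:lambda} for $\atrunc t d$ is routine, as each is inherited from $t$: for \ref{item:lambda1} and \ref{item:lambda2} the relevant immediate prefix remains in the truncated domain, and for \ref{item:lambda3} the binder target $q \le p$ survives truncation precisely because $\adepth q \le \adepth p < d$, so $\atrunc t d(q) = t(q) = \lambda$.

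The crux is finiteness of $\dom{\atrunc t d}$, and this is where $\ola$-guardedness of $t$ (Definition~\ref{def:afinite}) enters. I would argue by contradiction: suppose $\atrunc t d$ is infinite. By the instance of K\"onig's Lemma for lambda trees noted in the text (a lambda tree is infinite iff it has an infinite branch), $\atrunc t d$ has an infinite branch $S$. Since $\dom{\atrunc t d} \subseteq \dom t$, the branch $S$ is also an infinite branch of $t$. But every position $p < S$ lies in $\dom{\atrunc t d}$ and hence satisfies $\adepth p < d$; thus $S$ is $\ola$-bounded (by $d$), making $t$ $\ola$-unguarded and contradicting $t \in \aptree$. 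Therefore $\dom{\atrunc t d}$ is finite.

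I expect the finiteness step to be the only real content; the verification that $\atrunc t d$ is a lambda tree is essentially bookkeeping once prefix-closedness of the domain is observed. The single point deserving care is condition \ref{item:lambda3}, where one must confirm that the bound-variable target is not truncated away—which holds exactly because the $\ola$-depth of $q$ cannot exceed that of $p$.
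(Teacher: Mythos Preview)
Your proposal is correct and, for the substantive part (finiteness), follows exactly the paper's argument: assume $\atrunc t d$ infinite, invoke K\"onig's Lemma to get an infinite branch, observe it is $\ola$-bounded by $d$ and lives in $t$, contradicting $t\in\aptree$. The paper's proof records only this finiteness step and leaves the well-formedness of $\atrunc t d$ as a lambda tree implicit; your additional bookkeeping verification of conditions \ref{item:lambda1}--\ref{item:lambda3} is a harmless (and arguably welcome) elaboration.
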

\begin{proof}
  We show the contraposition: Assume that $\atrunc{t}{d}$ is
  infinite. Then, by K\"{o}nig's Lemma, $\atrunc t d$ has an infinite
  branch $S$. By construction of $\atrunc t d$, $S$ is $\ola$-bounded,
  viz.\ by $d$. Since $\dom{\atrunc t d} \subseteq \dom{t}$, $S$ is
  also an infinite branch of $t$, which means that $t$ is
  $\ola$-unguarded.
\end{proof}

We can then derive the following proposition that characterises
$\ola$-guarded lambda trees:
\begin{proposition}
  \label{prop:adepth}
  A lambda tree is $\ola$-guarded iff it does not have infinitely many
  positions that have the same $\ola$-depth.
\end{proposition}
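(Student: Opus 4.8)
The plan is to prove the equivalent contrapositive biconditional: $t$ is $\ola$-unguarded if and only if $t$ has infinitely many positions of the same $\ola$-depth. Both directions are short, since the substantive combinatorial content has already been isolated in Lemma~\ref{lem:atruncFin}.

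For the direction from $\ola$-unguardedness to infinitely many equi-depth positions, I would start from an $\ola$-bounded infinite branch $S$, whose existence is exactly what Definition~\ref{def:afinite} provides, and fix a bound $n < \omega$ with $\adepth{p} < n$ for every position $p$ along $S$. Since $S$ is infinite, it has infinitely many pairwise distinct proper prefixes, each of which lies in $\dom{t}$ by definition of an infinite branch. Each such prefix has $\ola$-depth in the finite set $\set{0,1,\dots,n-1}$, so by the infinite pigeonhole principle some value $d < n$ is attained by infinitely many of them; these are then infinitely many positions of $t$ all sharing the $\ola$-depth $d$.

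For the converse, suppose some $\ola$-depth $d$ is attained by infinitely many positions of $t$. Each of these positions $p$ satisfies $\adepth{p} = d < d+1$ and hence belongs to $\dom{\atrunc{t}{d+1}}$ by Definition~\ref{def:atrunc}, so the truncation $\atrunc{t}{d+1}$ is infinite. As $d+1 < \omega$, the contrapositive of Lemma~\ref{lem:atruncFin} yields $t \nin \aptree$, i.e.\ $t$ is $\ola$-unguarded.

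I do not expect a genuine obstacle here. The only two points requiring care are: keeping the pigeonhole argument localised to a single branch (whose prefixes are automatically both in $\dom{t}$ and uniformly bounded in $\ola$-depth), rather than reasoning over all of $\dom{t}$ at once; and invoking Lemma~\ref{lem:atruncFin} in contrapositive form, so that infinitude of a finite-level truncation is what forces $\ola$-unguardedness.
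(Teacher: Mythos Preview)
Your proposal is correct and matches the paper's own proof essentially line for line: the paper handles one direction via the pigeonhole argument on an $\ola$-bounded infinite branch and the other via (the contrapositive of) Lemma~\ref{lem:atruncFin}. You have simply spelled out in more detail what the paper compresses into two clauses.
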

\begin{proof}
  The ``if'' direction follows from the fact that an $\ola$-bounded
  infinite branch has infinitely many positions of the same
  $\ola$-depth; the converse direction follows from
  Lemma~\ref{lem:atruncFin}.
\end{proof}

The $\ola$-truncation construction is monotonic w.r.t.\ $\talebot$:
\begin{lemma}
  \label{lem:atruncAlebot}
  For each $t\in \iptree$ and $d \le e \le \omega$, we have
  $\atrunc{t}{d} \talebot \atrunc{t}{e}$. In particular,
  $\atrunc{t}{d} \talebot t$.
\end{lemma}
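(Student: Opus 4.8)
The plan is to verify the three defining conditions of $\talebot$ from Definition~\ref{def:talebot} directly for $s = \atrunc t d$ and $u = \atrunc t e$, using only that both are domain restrictions of the single partial function $t$ together with the elementary arithmetic of $\ola$-depth. Recall from Definition~\ref{def:atrunc} that $\dom{\atrunc t d} = \setcom{p \in \dom t}{\adepth p < d}$, and analogously for $e$.

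First, condition \ref{item:talebot1} (domain inclusion): since $d \le e$, any position $p$ with $\adepth p < d$ also satisfies $\adepth p < e$, so $\dom s \subseteq \dom u$. Condition \ref{item:talebot2} (agreement of labels) is immediate, because $s$ and $u$ are both restrictions of $t$; hence $s(p) = t(p) = u(p)$ for every $p \in \dom s$.

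The only condition requiring an argument is \ref{item:talebot3}, and the key observation is that $\ola$-strict positions do not increase the $\ola$-depth: if $a_i = 0$, then $\adepth{p\concat\seq i} = \adepth p + a_i = \adepth p$. Concretely, I would suppose $p \in \dom s$, $a_i = 0$, and $p\concat\seq i \in \dom u$. From $p \in \dom s$ I get $\adepth p < d$, hence $\adepth{p\concat\seq i} = \adepth p < d$; and from $p\concat\seq i \in \dom u \subseteq \dom t$ I get $p\concat\seq i \in \dom t$. Combining the two yields $p\concat\seq i \in \dom s$, as required. This depth-preservation of strict positions is really the entire content of the lemma: it is precisely the point at which the definition of $\atrunc{\cdot}{\cdot}$ via $\ola$-depth is tailored to cooperate with condition \ref{item:talebot3} of $\talebot$, which is what allows a strict child to survive truncation whenever its parent does.

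Finally, for the `in particular' claim I would instantiate $e = \omega$. Since every position is a finite sequence, its $\ola$-depth is a finite sum and thus strictly below $\omega$; hence $\dom{\atrunc t \omega} = \dom t$ and $\atrunc t \omega = t$. The general statement then gives $\atrunc t d \talebot t$. I expect no genuine obstacle here: the proof is a direct unfolding of the definitions, and its only subtlety—and the reason the lemma holds at all—is the trivial arithmetic identity $\adepth{p\concat\seq i} = \adepth p$ in the strict case $a_i = 0$.
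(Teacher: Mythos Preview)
Your proof is correct and follows essentially the same approach as the paper: verify conditions \ref{item:talebot1}--\ref{item:talebot3} of Definition~\ref{def:talebot} directly, with the key observation for \ref{item:talebot3} being the identity $\adepth{p\concat\seq i} = \adepth p$ when $a_i = 0$. Your argument is in fact slightly more explicit than the paper's, since you spell out that $p\concat\seq i \in \dom u \subseteq \dom t$ is needed to conclude $p\concat\seq i \in \dom s$, and you justify the ``in particular'' clause via $\atrunc t \omega = t$.
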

\begin{proof}
  We show the properties \ref{item:talebot1} through \ref{item:talebot3}
  from Definition~\ref{def:talebot}:
  \begin{enumerate}
  \item[\ref{item:talebot1}] If $p \in \dom{\atrunc{t}{d}}$, then $p\in
    \dom{t}$ with $\adepth{p} < d \le e$. Hence, $p \in
    \dom{\atrunc{t}{e}}$.
  \item[\ref{item:talebot2}] If $p \in \dom{\atrunc{t}{d}}$, then
    $\atrunc{t}{d}(p) = t(p) = \atrunc{t}{e}(p)$.
  \item[\ref{item:talebot3}] If $a_i = 0$ and $p \in
    \dom{\atrunc{t}{d}}$, then $\adepth{p\concat\seq i} = \adepth p +
    a_i = \adepth p < d$, i.e.\ $p\concat \seq i \in \dom{\atrunc t
      d}$.
  \end{enumerate}
\end{proof}

The theorem below detail the two directions of the correspondence
between the limit inferior and the limit.
\vspace{5pt}

\begin{thmcpy}{Theorem~\ref{thr:limLiminf}}
  Let $(t_\iota)_{\iota<\alpha}$ be a sequence in $\aptree$.
  \begin{enumerate}[(i)]
  \item If $\lim_{\iota\limto\alpha} t_\iota = t$ in
    $(\aptree,\ddta)$, then $\liminf_{\iota\limto \alpha} t_\iota = t$
    in $(\aptree,\talebot)$.
  \item If $\liminf_{\iota\limto\alpha} t_\iota = t$ in
    $(\aptree,\talebot)$ and $t$ is total, then
    $\lim_{\iota\limto\alpha} t_\iota = t$ in $(\aptree,\ddta)$.
  \end{enumerate}
\end{thmcpy}
\begin{proof}[Proof of Theorem~\ref{thr:limLiminf}]
  \begin{enumerate}[(i)]
  \item 
    If $(t_\iota)_{\iota<\alpha}$ convergences to $t$, we find for
    each $d < \omega$ some $\beta < \alpha$ such that $\atrunc{t}{d} =
    \atrunc{t_\iota}{d}$ for all $\beta \le \iota < \alpha$. By
    Lemma~\ref{lem:atruncAlebot}, we thus have that $\atrunc{t}{d}
    \talebot t_\iota$ for all $\beta \le \iota < \alpha$, which means
    that $\atrunc{t}{d} \talebot s_{\beta}$, for $s_\beta =
    \Glb_{\beta\le\iota<\alpha} t_\iota$. This inequality implies that
    $\Lub_{d < \omega}\atrunc{t}{d} \talebot \Lub_{\beta<\alpha}
    s_{\beta}$.  The left-hand side is well-defined as the set
    $\setcom{d<\omega}{\atrunc{t}{d}}$ is directed according to
    Lemma~\ref{lem:atruncAlebot}. Moreover, by
    Theorem~\ref{thr:alebotCpo}, the left-hand side is equal to
    $t$. Since the right-hand side is by definition equal to $t' =
    \liminf_{\iota\limto\alpha} t_\iota$, we obtain the inequality $t
    \talebot t'$. Consequently, $\dom t \subseteq \dom{t'}$ and $t(p)
    = t'(p)$ for all $p \in \dom t$. It thus remains to be shown that
    $\dom{t'} \subseteq \dom t$. To this end, assume some $p \in
    \dom{t'}$. By Theorem~\ref{thr:alebotCpo}, we have that there is
    some $\beta < \alpha$ such that $p \in
    \dom{\Glb_{\beta\le\iota<\alpha} t_\iota}$ and thus $p \in
    \dom{t_\iota}$ for all $\beta\le\iota<\alpha$. Therefore, by
    Lemma~\ref{lem:alebotLim}, we have that $p \in \dom{t}$.

  \item 
    In order to prove that $(t_\iota)_{\iota<\alpha}$ converges to
    $t$, we need to show that for each $d < \omega$ there is some
    $\beta < \alpha$ such that $\atrunc{t_\iota}{d} = \atrunc{t}{d}$
    for all $\beta \le \iota< \alpha$. Let $d < \omega$. According to
    the definition of the limit inferior, $t = \Lub_{\beta<\alpha}
    s_\beta$ with $s_\beta = \Glb_{\beta\le\iota<\alpha} t_\iota$. By
    Theorem~\ref{thr:alebotCpo}, we thus know that for each $p \in
    \dom t$ there is some $\beta_p < \alpha$ such that $s_{\beta_p}(p)
    = t(p)$. Let $B = \setcom{\beta_p}{p\in\dom{\atrunc t d}}$. Since,
    according to Lemma~\ref{lem:atruncFin}, $\dom{\atrunc t d}$ is
    finite, so is $B$. Hence, $B$ has a maximal element, say
    $\beta$. Since $(s_\iota)_{\iota< \alpha}$ is monotonic w.r.t.\
    $\talebot$, we thus have that $s_{\beta_p} \talebot s_{\beta}$ for
    each $p\in \dom{\atrunc t d}$, which means that $t(p) =
    s_{\beta_p}(p) = s_{\beta}(p)$ for all $p\in \dom{\atrunc t
      d}$. By construction, $s_\beta \talebot t_\iota$ for all $\beta
    \le \iota < \alpha$, and thus $t(p) = s_\beta(p) = t_\iota(p)$ for
    all $p\in \dom{\atrunc t d}$ and $\beta \le \iota < \alpha$. We
    can therefore conclude that $\atrunc t d(p) =
    \atrunc{t_\iota}d(p)$ for all $p\in \dom{\atrunc t d}$ and $\beta
    \le \iota < \alpha$.

    Now it only remains to be shown that whenever $p\nin \dom{\atrunc
      t d}$ then $p\nin \dom{\atrunc {t_\iota} d}$ for all $\beta \le
    \iota < \alpha$. If $\adepth{p} \ge d$, then $p\nin \dom{\atrunc
      {t_\iota} d}$ trivially holds. Otherwise, if $\adepth{p} < d$,
    then $p\nin \dom{\atrunc t d}$ implies $p\nin \dom t$. Since $t$
    is total, $\emptyseq \in \dom t$, i.e.\ there is some prefix of
    $p$ in $\dom{t}$. Let $q$ be the longest such prefix. Then $q \in
    \dom{\atrunc t d}$ and, by the previous paragraph, we know that
    $t_\iota(q) = t(q)$ for all $\beta \le \iota < \alpha$. Since $q$
    is maximal in $\dom t$ and $t$ is total, we know that $t(q) \nin
    \set{@,\lambda}$ and thus $t_\iota(q) \nin \set{@,\lambda}$ for
    all $\beta \le \iota < \alpha$. Hence, $p$ is not in
    $\dom{t_\iota}$ and therefore not in $\dom{\atrunc{t_\iota}d}$.
  \end{enumerate}
\end{proof}

\subsection{Transfinite Reductions}
\label{sec:transf-reduct-1}

For the proof of Lemma~\ref{lem:reductionContextCut}, we need the
following property:
\begin{lemma}
  \label{lem:reductionStepPos}
  For each reduction step $s \to_{R,p} t$, we have that $p \in
  \dom{s}$.
\end{lemma}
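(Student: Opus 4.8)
The plan is to proceed by induction on the derivation of the reduction step $s \to_{R,p} t$ (equivalently, by induction on the position $p$), following the inductive definition of the step relation.

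In the base case we have $(s,t) \in R$ and $p = \emptyseq$, so I must show $\emptyseq \in \dom{s}$. Here I would invoke the defining constraint of a rewrite system, namely that $(s,t) \in R$ implies $s \neq \bot$. Since $\bot$ is the empty partial function and hence the unique lambda tree with empty domain, $s \neq \bot$ forces $\dom{s} \neq \emptyset$. As $\dom{s}$ is prefix closed (Definition~\ref{def:lambda}) and $\emptyseq$ is a prefix of every position, a non-empty prefix-closed domain must contain $\emptyseq$; thus $\emptyseq \in \dom{s}$, as required.

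For the inductive steps, suppose the premise is $t \to_{R,p} t'$, so that $p \in \dom{t}$ by the induction hypothesis. In the abstraction case $\abstree x t \to_{R,\seq 0 \concat p} \abstree x t'$, the construction of $\abstree x t$ in Definition~\ref{def:treeConstr} shows that $\seq 0 \concat q \in \dom{\abstree x t}$ whenever $q \in \dom{t}$, so in particular $\seq 0 \concat p \in \dom{\abstree x t}$. The two application cases $t\,s \to_{R,\seq 1 \concat p} t'\,s$ and $s\,t \to_{R,\seq 2 \concat p} s\,t'$ are handled identically, using that the construction $t_1\,t_2$ places $\seq i \concat q$ in its domain precisely when $q$ lies in the domain of its $i$-th immediate subtree.

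The only point requiring any care is the base case, where one must observe that a non-$\bot$ lambda tree necessarily has $\emptyseq$ in its domain; the inductive cases are a direct reading-off of the domains of the tree constructions of Definition~\ref{def:treeConstr}, so no real obstacle arises there.
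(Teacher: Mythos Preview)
Your proof is correct and follows essentially the same approach as the paper's own proof: induction on $p$, with the base case handled by the rewrite-system constraint that left-hand sides are non-$\bot$, and the inductive cases read off from the tree constructions. You simply spell out in more detail why $s \neq \bot$ entails $\emptyseq \in \dom{s}$, whereas the paper leaves this implicit.
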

\begin{proof}
  We proceed by induction on $p$. If $p = \emptyseq$, then $p \in
  \dom{s}$ follows from the restriction of rewrite systems such that
  $(l,r)\in R$ implies that $l \neq \bot$. If $p = \seq i\concat q$,
  then $p \in \dom{s}$ follows immediately by the induction
  hypothesis.
\end{proof}

\begin{lemma}
  \label{lem:talebotAcut}
  For all $s,t \in \iptree$ with $s \talebot t$ and for all $p \in
  \dom t$, we have that $\acut p \in \dom s$ implies $p \in \dom s$.
\end{lemma}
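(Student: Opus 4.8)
The plan is to induct on the length $\len p$ of the position $p$, keeping $s,t$ and the assumption $s \talebot t$ fixed throughout. The key observation driving the argument is that the recursive definition of $\acut{\cdot}$ strips off exactly the $\ola$-strict positions from the end of $p$: since $\acut p$ is the \emph{longest} non-strict prefix of $p$, every step from $\acut p$ down to $p$ lands on an $\ola$-strict position. Condition~\ref{item:talebot3} of Definition~\ref{def:talebot} is precisely the clause that lets membership in $\dom s$ propagate downward across such a strict step, so the whole proof amounts to applying that clause once per strict symbol in the tail of $p$.

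Concretely, for the base case $p = \emptyseq$ we have $\acut p = \emptyseq = p$, so the hypothesis $\acut p \in \dom s$ already is the conclusion. For the inductive step I would write $p = q \concat \seq i$ and split on the value of $a_i$, matching the two cases in the recursive characterisation of $\acut{\cdot}$. If $a_i = 1$, then $\acut p = p$ and again there is nothing to prove. If $a_i = 0$, then $\acut p = \acut q$; here I would first use prefix-closedness of $\dom t$ to get $q \in \dom t$, note that $\acut q = \acut p \in \dom s$, and apply the induction hypothesis to the shorter position $q$ to obtain $q \in \dom s$. Since $p = q \concat \seq i$ is an $\ola$-strict position in $\dom t$ and $q \in \dom s$, condition~\ref{item:talebot3} of $s \talebot t$ yields $p \in \dom s$, closing the induction.

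I do not expect a genuine obstacle here; the only thing requiring attention is lining up the two recursion cases of $\acut{\cdot}$ with the strict/non-strict dichotomy and checking that condition~\ref{item:talebot3} applies in exactly the case $a_i = 0$. In spirit the argument ``refills'' the strict tail of $p$ inside $\dom s$ one symbol at a time, each refill justified by a single invocation of the third clause of the $\talebot$-ordering.
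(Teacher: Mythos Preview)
Your proposal is correct and follows essentially the same approach as the paper: both arguments walk from $\acut p$ down to $p$ one symbol at a time, invoking condition~\ref{item:talebot3} of $s \talebot t$ at each strict step. The only cosmetic difference is that the paper fixes $p$ and inducts on the intermediate position $q$ with $\acut p \le q \le p$, whereas you induct on $\len p$ and let the recursive characterisation of $\acut{\cdot}$ do the case split; the content is the same.
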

\begin{proof}
  We show that if $\acut p \in \dom s$, then all $q$ with $\acut p \le
  q \le p$ are in $\dom{s}$. We proceed by induction on $q$. The case
  where $q \le \acut p$ is trivial. Otherwise, $q = q' \concat \seq i$
  for some $q'$ with $\acut p \le q' \le p$. Hence, by induction
  hypothesis, $q' \in \dom{s}$. Moreover, $\acut p \le q' \le p$
  implies that $a_i = 0$. Additionally, since $p$ is in $\dom t$, so
  is its prefix $q$. Since $s \talebot t$, we thus can conclude that
  $q \in \dom{s}$.
\end{proof}

\begin{thmcpy}{Lemma~\ref{lem:reductionContextCut}}
  The reduction context of a step $s\to_{p} t$ is equal to
  $\posminus{s}{\acut p}$ and $\posminus{t}{\acut p}$.
\end{thmcpy}
\begin{proof}[Proof Proof of Lemma~\ref{lem:reductionContextCut}]
  Let $\oh c$ be the reduction context of $s\to_{p} t$. We proceed by
  induction on $p$. The case $p = \emptyseq$ is trivial since then
  $\oh c = \posminus{s}{(\acut p)} = \posminus{t}{(\acut p)} = \bot$.

  Let $p = \seq j \concat p'$. We only look at the case where $j = 0$,
  the cases $j = 1$ and $j = 2$ follow by a similar argument. If $p =
  \seq 0 \concat p'$, then $s = \abstree x s'$, $t = \abstree x t'$
  and $s' \to[p'] t'$. By induction hypothesis, the reduction context
  $c'$ of $s' \to[p'] t'$ is equal to $\posminus{s'}{\acut{p'}}$ and
  $\posminus{t'}{\acut{p'}}$.

  We consider two cases. At first, suppose $\acut p =
  \emptyseq$. Consequently, $\posminus{s}{\acut p} =
  \posminus{t}{\acut p} = \bot$. If $\oh c\neq \bot$, then $\acut p
  \in \dom{\oh c}$. Since, by Lemma~\ref{lem:reductionStepPos}, $p\in
  \dom s$, and since $\oh c \talebot s$, we can apply
  Lemma~\ref{lem:talebotAcut} to conclude that $p \in \dom{\oh
    c}$. This contradicts the definition of reduction contexts. Hence,
  $\oh c = \bot$.

  Finally, suppose that $\acut p \neq \emptyseq$. Consequently, $\acut
  p = \seq 0 \concat \acut{p'}$, which means that $\posminus s {\acut
    p} = \abstree x (\posminus{s'}{\acut{p'}})$ and $\posminus t
  {\acut p} = \abstree x (\posminus{t'}{\acut{p'}})$. Hence,
  $\posminus s {\acut p} = \abstree x c' =\posminus t {\acut p}$. We
  claim that $\oh c = \abstree x c'$. To prove this we show the
  following two statements, where $c = \abstree x c'$:
  \begin{enumerate*}[(i)]
  \item $c \talebot s, t$, and \item if $d \talebot s, t$ with $p\nin
    \dom d$, then $d \talebot c$.
  \end{enumerate*}
  
  \begin{enumerate}[(i)]
  \item We show \ref{item:talebot1}-\ref{item:talebot3} of
    Definition~\ref{def:talebot}:
    \begin{itemize}
    \item[\ref{item:talebot1}\&\ref{item:talebot2}] Let $c(q) = l$. If
      $q = \emptyseq$ then $l = \lambda$ and $s(q) = t(q) = \lambda$.

      If $q = \seq 0 \concat q'$, then we have three cases to
      distinguish:
      \begin{itemize}
      \item $l \in \lamsig\setminus\allpos$: Then $c'(q') = l$, which means
        that $s'(q') = t'(q') = l$ since $c' \talebot s',
        t'$.
      \item $l = \emptyseq$: Then $c'(q') = x$, which
        means that $s'(q') = t'(q') = x$ since $c' \talebot s',
        t'$.
      \item $l = \seq 0 \concat \ol q$: Then $c'(q') = \ol q$, which
        means that $s'(q') = t'(q') = \ol q$ since $c' \talebot s',
        t'$.
      \end{itemize}
      In all cases we can conclude that $s(q) = t(q) = l$.
    \item[\ref{item:talebot3}] Assume that $q \in \dom c$ and that $q
      \concat \seq i$ is in $\dom s \cup \dom t$ and strict. We have
      to show that $q \concat \seq i \in \dom c$.
      \begin{itemize}
      \item If $q = \emptyseq$, then $i = 0$, $a_i = 0$ and $\emptyseq
        \in \dom{s'}\cup\dom{t'}$. Since $\acut p \neq \emptyseq$ and
        $a_0 = 0$, we know that $\acut{p'} \neq \emptyseq$. Since $c'
        = \posminus{s'}{\acut{p'}} = \posminus{t'}{\acut{p'}}$, we can
        thus conclude that $\emptyseq \in \dom {c'}$. That means that
        $q \concat \seq i \in \dom c$.
      \item If $q = \seq 0 \concat q'$, then $q' \in \dom{c'}$,
        $q'\concat \seq i \in \dom{s'}\cup \dom{t'}$ and $q' \concat
        \seq i$ is strict. Since $c' \talebot s',t'$, we can
        thus conclude that $q'\concat \seq i \in \dom{c'}$, which
        means that $q\concat\seq i \in \dom c$.
      \end{itemize}
    \end{itemize}
    \item Assuming $d \talebot s, t$ and $p\nin \dom d$, we show
      \ref{item:talebot1}-\ref{item:talebot3} of
      Definition~\ref{def:talebot} to prove that $d \talebot c$:
      \begin{itemize}
      \item[\ref{item:talebot1}\&\ref{item:talebot2}] Let $d(q) = l$.
        If $\acut p \le q$, then also $\acut p \in \dom d$. Since $p
        \in \dom s$, by Lemma~\ref{lem:reductionStepPos}, and $d
        \talebot s$, we can apply Lemma~\ref{lem:talebotAcut} to
        obtain that $p \in \dom d$, which contradicts the
        assumption. Thus, we can assume that $\acut p \not\le
        q$. Consequently, $(\posminus s {\acut p})(q) = l$, which
        means that $c(q) = l$.
      \item[\ref{item:talebot1}] Assume that $q \in \dom d$ and that
        $q\concat\seq i$ is in $\dom c$ and strict. Consequently,
        since $c \talebot s$, we have that $q \concat \seq i \in \dom
        s$. The latter implies that $q \concat\seq i \in \dom d$ since
        $d \talebot s$.
    \end{itemize}
  \end{enumerate}
\end{proof}

The following property, which relates $\mrs$-convergence and
-continuity, follows from the fact that our definition of
$\mrs$-convergence on $\aptree$ instantiates the abstract notion of
(strong) $\mrs$-convergence from our previous work~\cite{bahr10rta}:
\begin{lemma}
  \label{lem:mContConv}
  Let $S = (t_\iota \to[p_\iota] t_{\iota+1})_{\iota<\alpha}$ be an
  open $\mrs$-continuous reduction. If
  $(\adepth{p_\iota})_{\iota<\alpha}$ tends to infinity, then $S$ is
  $\mrs$-convergent.
\end{lemma}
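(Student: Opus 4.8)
The plan is to reduce the lemma to the existence of the metric limit $\lim_{\iota\limto\alpha} t_\iota$, since this is the only part of $\mrs$-convergence at $\gamma=\alpha$ not already supplied by the hypotheses. Indeed, as $S$ is open, $\alpha$ is a limit ordinal; $\mrs$-continuity gives $\lim_{\iota\limto\gamma}t_\iota = t_\gamma$ and the tending-to-infinity of $(\adepth{p_\iota})_{\iota<\gamma}$ for every limit ordinal $\gamma<\alpha$, while the extra hypothesis supplies the depth condition for $\gamma=\alpha$. Hence, once I exhibit $t_\alpha := \lim_{\iota\limto\alpha}t_\iota$, all convergence conditions for $\gamma\le\alpha$ hold and $S$ $\mrs$-converges to $t_\alpha$.

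To produce this limit I would show that the $\ola$-truncations of the $t_\iota$ stabilise at every finite depth. Fix $d<\omega$. Since $(\adepth{p_\iota})_{\iota<\alpha}$ tends to infinity, there is $\beta<\alpha$ with $\adepth{p_\iota}\ge d$ for all $\beta\le\iota<\alpha$. Each such step preserves the truncation at $d$: by Lemma~\ref{lem:reductionContextCut} the reduction context of $t_\iota \to[p_\iota] t_{\iota+1}$ is $\posminus{t_\iota}{\acut{p_\iota}} = \posminus{t_{\iota+1}}{\acut{p_\iota}}$, so $t_\iota$ and $t_{\iota+1}$ agree on every position $q$ with $\acut{p_\iota}\not\le q$; and since $\ola$-depth is monotone along prefixes with $\adepth{\acut{p_\iota}}=\adepth{p_\iota}\ge d$, every position of $\ola$-depth $<d$ is of this form. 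Thus $\atrunc{t_\iota}{d}=\atrunc{t_{\iota+1}}{d}$.

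Next I would prove by transfinite induction that $\atrunc{t_\iota}{d}=\atrunc{t_\beta}{d}$ for all $\beta\le\iota<\alpha$. The base case is trivial and the successor case is the observation above. For a limit ordinal $\gamma$ with $\beta<\gamma<\alpha$, $\mrs$-continuity gives $t_\gamma=\lim_{\iota\limto\gamma}t_\iota$, so some $\iota$ with $\beta\le\iota<\gamma$ has $\ddta(t_\iota,t_\gamma)<2^{-d}$, i.e.\ $\atrunc{t_\gamma}{d}=\atrunc{t_\iota}{d}$, which by the induction hypothesis equals $\atrunc{t_\beta}{d}$. Hence the depth-$d$ truncation is eventually constant for every $d$, and these stabilised truncations cohere into the lambda tree $t$ whose depth-$d$ truncation is this eventual value; equivalently, $t$ is the lub of these truncations, which lies in $\aptree$ by Theorem~\ref{thr:alebotCpo}. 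Then $\lim_{\iota\limto\alpha}t_\iota = t$ in $(\aptree,\ddta)$, and taking $t_\alpha=t$ completes the argument.

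The only delicate point is the limit-ordinal case, where $\mrs$-continuity is genuinely needed to prevent the sequence from jumping at intermediate limit points; the successor case and the depth bookkeeping are routine. I note that the lemma can alternatively be obtained by instantiating the abstract correspondence between strong $\mrs$-continuity and $\mrs$-convergence of~\cite{bahr10rta}, of which the above is the concrete unfolding.
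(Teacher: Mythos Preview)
Your argument is correct. The paper, however, does not spell any of this out: its entire proof of Lemma~\ref{lem:mContConv} is a one-line citation, observing that the statement is a special case of Proposition~5.5 of~\cite{bahr10rta} (with an additional pointer to~\cite{bongaerts11master}). You anticipated this in your final sentence.

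What you have written is the concrete unfolding of that abstract result in the present setting: showing that successor steps at $\ola$-depth $\ge d$ preserve the depth-$d$ truncation (via Lemma~\ref{lem:reductionContextCut} and $\adepth{\acut{p_\iota}}=\adepth{p_\iota}$), propagating this through limit ordinals using $\mrs$-continuity, and assembling the stabilised truncations into a limit via Theorem~\ref{thr:alebotCpo}. The paper's approach buys brevity and avoids redoing work already carried out at the abstract level; yours buys self-containment and makes explicit where $\mrs$-continuity and the depth hypothesis each enter. Two minor points of polish: in the limit case you should say that the convergence $t_\gamma=\lim_{\iota\to\gamma}t_\iota$ lets you pick $\iota$ with both $\iota\ge\beta$ and $\ddta(t_\iota,t_\gamma)<2^{-d}$ (not merely ``some $\iota$''); and when forming $t=\Lub_{d<\omega}s_d$ it is worth noting explicitly that $\atrunc{t}{d}=s_d$ (which follows from the characterisation of lubs in Theorem~\ref{thr:alebotCpo}), since this is what yields $\ddta(t_\iota,t)\le 2^{-d}$ eventually.
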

\begin{proof}
  Special case of Proposition~5.5 from \cite{bahr10rta}; also cf.\
  \cite[Thm.~B.2.5]{bongaerts11master}.
\end{proof}

\begin{thmcpy}{Proposition~\ref{prop:mconvPconv}}
  For each reduction $S\colon s \mato t$, we also have that $S\colon s
  \pato t$.%
\end{thmcpy}
\begin{proof}[Proof of Proposition~\ref{prop:mconvPconv}]
  Let $S=(t_\iota \to_{p_\iota,c_\iota}
  t_{\iota+1})_{\iota<\alpha}$. Given a limit ordinal $\gamma \le
  \alpha$, we have to show that $\liminf_{\iota<\gamma} c_\iota =
  t_\gamma$, assuming $t_\alpha = t$. By $\mrs$-convergence of $S$, we
  know that $(\adepth{p_\iota})_{\iota<\gamma}$ tends to infinity and
  thus so does $(\adepth{\acut{p_\iota}})_{\iota<\gamma}$. In other
  words, for each $d < \omega$ there is some $\delta < \gamma$ with
  $\adepth{\acut{p_\iota}} \ge d$ for all
  $\delta\le\iota<\gamma$. Since, by
  Lemma~\ref{lem:reductionContextCut}, $c_\iota =
  \posminus{t_\iota}{\acut{p_\iota}}$, we thus have that
  $\atrunc{c_\iota}d = \atrunc{t_\iota}d$ for all
  $\delta\le\iota<\gamma$. Consequently, $(c_\iota)_{\iota<\gamma}$
  converges to the same lambda tree as $(t_\iota)_{\iota<\gamma}$ if
  any. Since $\mrs$-convergence of $S$ implies that
  $\lim_{\iota\limto\gamma} t_\iota = t_\gamma$, we can therefore
  conclude that $\lim_{\iota\limto\gamma} c_\iota =
  t_\gamma$. According to Theorem~\ref{thr:limLiminf}~\ref{item:limLiminf1}, we thus have
  that $\liminf_{\iota\limto\gamma} c_\iota = t_\gamma$.
\end{proof}

The above proposition allows us to prove the following lemma that
provides a characterisation for when $\prs$-convergence implies
$\mrs$-convergence:
\begin{lemma}
  \label{lem:pConvMconv}
  Let $S \fcolon s \pato t$. Then $S \fcolon s \mato t$ iff the
  $\ola$-depth of contracted redex occurrences tends to infinity for
  each open prefix of $S$.
\end{lemma}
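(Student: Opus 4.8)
The plan is to treat the two implications separately. The ``only if'' direction is essentially a matter of unfolding definitions, while the ``if'' direction is the substance and is carried out by transfinite induction on the length of the prefix.

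For the ``only if'' direction, suppose $S \fcolon s \mato t$. By the definition of $\mrs$-convergence, $(\adepth{p_\iota})_{\iota<\gamma}$ tends to infinity for every limit ordinal $\gamma \le \alpha$. The open prefixes of $S$ are precisely the $\prefix S \gamma$ with $\gamma$ a limit ordinal $\le \alpha$, and the redexes contracted along $\prefix S \gamma$ are $(p_\iota)_{\iota<\gamma}$; hence the condition that the $\ola$-depth of contracted redexes tends to infinity along every open prefix holds immediately.

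For the ``if'' direction, assume $S \fcolon s \pato t$ together with the depth hypothesis, and write $S = (t_\iota \to[p_\iota,c_\iota] t_{\iota+1})_{\iota<\alpha}$. I would prove by transfinite induction on $\gamma \le \alpha$ that $\prefix S \gamma \fcolon s \mato t_\gamma$; instantiating at $\gamma = \alpha$ then gives the claim. The base case is the empty reduction, and the successor case is automatic, since passing from $\prefix S \delta$ to $\prefix S {\delta+1}$ creates no new limit ordinal and hence imposes no new convergence condition. The crux is the limit case: for a limit ordinal $\gamma$, the induction hypothesis provides $\prefix S \delta \fcolon s \mato t_\delta$ for all $\delta < \gamma$, which is exactly the assertion that $\prefix S \gamma$ is $\mrs$-continuous. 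Combining this with the hypothesis that $(\adepth{p_\iota})_{\iota<\gamma}$ tends to infinity, Lemma~\ref{lem:mContConv} shows that $\prefix S \gamma$ is $\mrs$-convergent, say to $t' = \lim_{\iota\limto\gamma} t_\iota$.

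It remains to identify this metric limit $t'$ with the term $t_\gamma$ occurring in $S$, and this is the main obstacle: Lemma~\ref{lem:mContConv} guarantees only that the metric limit \emph{exists}, not that it is the prescribed term, and one cannot recover it via Theorem~\ref{thr:limLiminf}~\ref{item:limLiminf2} because $t_\gamma$ may well be partial. To bridge the gap I would route the argument through Proposition~\ref{prop:mconvPconv}: from $\prefix S \gamma \fcolon s \mato t'$ it yields $\prefix S \gamma \fcolon s \pato t'$, i.e.\ $\liminf_{\iota\limto\gamma} c_\iota = t'$; but the assumed $\prs$-convergence of $S$ gives $\liminf_{\iota\limto\gamma} c_\iota = t_\gamma$ at this same limit ordinal, and since the limit inferior of a fixed sequence is uniquely determined, $t' = t_\gamma$. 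Hence $\lim_{\iota\limto\gamma} t_\iota = t_\gamma$, which together with the $\mrs$-continuity from the induction hypothesis and the depth hypothesis establishes $\prefix S \gamma \fcolon s \mato t_\gamma$ and closes the induction. Equivalently, one can argue directly that $\atrunc{c_\iota}{d} = \atrunc{t_\iota}{d}$ for all large $\iota$ by Lemma~\ref{lem:reductionContextCut} and the divergence of $(\adepth{\acut{p_\iota}})_{\iota<\gamma}$, so that $t'$ is also the metric limit of $(c_\iota)_{\iota<\gamma}$, and then apply Theorem~\ref{thr:limLiminf}~\ref{item:limLiminf1}; either route avoids any totality assumption.
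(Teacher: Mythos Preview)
Your proof is correct and follows essentially the same approach as the paper: transfinite induction on the prefix length, with the limit case handled by invoking Lemma~\ref{lem:mContConv} to obtain an $\mrs$-limit and then Proposition~\ref{prop:mconvPconv} to identify it with the $\prs$-limit $t_\gamma$. Your exposition is in fact more explicit than the paper's about why the identification step is needed and why no totality assumption is required.
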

\begin{proof}
  The ``only if'' direction follows from the definition of
  $\mrs$-convergence. For the converse direction, we assume that
  $S=(t_\iota \to_{p_\iota} t_{\iota+1})_{\iota<\alpha}$, and show
  that $\prefix S \beta\fcolon s \mato t_\beta$ for each prefix
  $\prefix S \beta$ of $S$ by induction on its length $\beta$.  The
  case $\beta = 0$ is trivial; if $\beta$ is a successor ordinal the
  statement follows immediately from the induction hypothesis.  Let
  $\beta$ be a limit ordinal. By the induction hypothesis, each proper
  prefix $\prefix S \gamma$ of $\prefix S \beta$ $\mrs$-converges to
  $t_\gamma$, which means that $\prefix S \beta$ is
  $\mrs$-continuous. Since $(\adepth{p_\iota})_{\iota<\alpha}$ tends
  to infinity, there is, by Lemma~\ref{lem:mContConv}, some $t'_\beta$
  such that $\prefix S \beta \fcolon s \mato t'_\beta$, which implies
  $\prefix S \beta \fcolon s \pato t'_\beta$ by
  Proposition~\ref{prop:mconvPconv}. As we know that $\prefix S \beta
  \fcolon s \pato t_\beta$, we can then conclude that $t_\beta' =
  t_\beta$, and thus $\prefix S \beta \fcolon s \mato t_\beta$.
\end{proof}

\begin{thmcpy}{Proposition~\ref{prop:pconvMconv}}
  $S\colon s \pato t$ implies $S\colon s \mato t$ whenever $S$ and $t$
  are total.
\end{thmcpy}
\begin{proof}[Proof Proof of Proposition~\ref{prop:pconvMconv}]
  Let $S=(t_\iota \to_{p_\iota,c_\iota}
  t_{\iota+1})_{\iota<\alpha}$. To complete the proof it remains to be
  show that $(\adepth{p_\iota})_{\iota<\gamma}$ tends to infinity for
  each limit ordinal $\gamma\le\alpha$. To this end, we assume some
  limit ordinal $\gamma \le \alpha$ and $d < \omega$ and construct
  some $\delta < \gamma$ such that $\adepth{p_\iota} \ge d$ for all
  $\delta \le \iota < \gamma$.

  By $\prs$-convergence, we know that
  $t_\gamma=\liminf_{\iota\limto\gamma} c_\iota$, i.e.\ $t_\gamma =
  \Lub_{\beta<\gamma} s_\beta$ where $s_\beta =
  \Glb_{\beta\le\iota<\gamma} c_\iota$. By
  Theorem~\ref{thr:alebotCpo}, we find for each $p \in
  \dom{\atrunc{t_\gamma}{d}}$ some $\delta(p) < \gamma$ with $p\in
  \dom{s_{\delta(p)}}$. Since $\dom{s_{\delta'}} \subseteq
  \dom{s_{\delta''}}$ whenever $\delta' \le \delta''$ and since
  $\atrunc{t}{a}$ is finite according to Lemma~\ref{lem:atruncFin}, we
  find some $\delta < \gamma$ with $\dom{\atrunc{t_\gamma}{d}}
  \subseteq \dom{s_\delta}$, namely $\delta =
  \max\setcom{\delta(p)}{p\in\dom{\atrunc{t_\gamma} d}}$. Since, by
  definition, $s_\delta \talebot c_\iota$ for all
  $\delta\le\iota<\gamma$, we then have that
  $\dom{\atrunc{t_\gamma}{d}} \subseteq \dom{c_\iota}$ for all $\delta
  \le \iota < \gamma$. From this we derive the following for all
  $\delta \le \iota < \gamma$:\par\vspace{5pt}
  \begin{enumerate*}[(1)]
  \item $p_\iota \nin \dom{\atrunc{t_\gamma}{d}}$, and \hspace{5mm}
    \label{it:strongConvPM1}
  \item $t_\iota(p) = t_\gamma(p)$ for all
    $p\in\dom{\atrunc{t_\gamma}{d}}$.
    \label{it:strongConvPM2}
  \end{enumerate*}\\[5pt]
  \ref{it:strongConvPM1} follows from the fact that $p_\iota \nin
  \dom{c_\iota}$. For \ref{it:strongConvPM2}, assume that $p \in
  \dom{\atrunc{t_\gamma}{d}}$. Then $p \in \dom{s_\delta}$, which
  implies that $s_\delta(p) = t_\gamma(p)$ as $s_\delta \talebot
  t_\gamma$. Since $s_\delta \talebot c_\iota$, we also have that
  $s_\delta(p) = c_\iota(p)$, and since $c_\iota \talebot t_\iota$, we
  have that $c_\iota(p) = t_\iota(p)$. Altogether, we thus have that
  $t_\gamma(p) = t_\iota(p)$.

  Finally, we prove the claim that $\adepth{p_\iota} \ge d$ for all
  $\delta \le \iota < \gamma$. If $p_\iota \in \dom{t_\gamma}$, then
  $\adepth{p_\iota} \ge d$ follows immediately from
  \ref{it:strongConvPM1}. Otherwise, if $p_\iota \nin \dom{t_\gamma}$,
  we find a maximal prefix $q < p_\iota$ with $q\in
  \dom{t_\gamma}$. Because $t_\gamma$ is total, we know that
  $t_\gamma(q)\in\calV\uplus\allpos$. Assume that $\adepth{p_\iota}
  \ge d$ does not hold. Consequently, $\adepth{q} \le \adepth{p_\iota}
  < d$, which means that $q \in \dom{\atrunc{t_\gamma}{d}}$. According
  to \ref{it:strongConvPM2}, we thus obtain that $t_\iota(q) =
  t_\gamma(q)$. Hence, $t_\iota(q)\in \calV\uplus\allpos$ which means
  that $p_\iota \nin \dom{t_\iota}$, which, according to
  Lemma~\ref{lem:reductionStepPos}, contradicts the fact that there is
  a reduction step from $t_\iota$ at $p_\iota$ since $\domBot{t_\iota}
  = \emptyset$. Consequently, $\adepth{p_\iota} \ge d$.
\end{proof}

\subsection{Beta Reduction}
\label{sec:bohm-reduction}

\begin{thmcpy}{Proposition~\ref{prop:pconvMconvVolatile}}
  $S \fcolon s \mato t$ iff no prefix of $S$ has volatile positions
  and $S \fcolon s \pato t$.
\end{thmcpy}
\begin{proof}[Proof of Proposition~\ref{prop:pconvMconvVolatile}]
  Let $S = (t_\iota \to[p_\iota] t_{\iota+1})_{\iota<\alpha}$.  The
  ``only if'' direction follows from Proposition~\ref{prop:mconvPconv}
  and the fact that if $(\adepth{p_\iota})_{\iota<\beta}$ tends to
  infinity, then $\prefix{S}{\beta}$ has no volatile positions.  The
  ``if'' direction follows from Lemma~\ref{lem:pConvMconv} as the
  absence of volatile positions implies that the $\ola$-depth of
  contracted redex occurrences tends to infinity (by the infinite
  pigeonhole principle).
\end{proof}

\begin{lemma}
  \label{lem:pconvOpenPos}
  Let $S = (t_\iota \to[c_\iota,p_\iota] t_{\iota+1})_{\iota<\alpha}$
  be an open reduction $\prs$-converging to $t$. Then we have the
  following:
  \begin{enumerate}[(i)]
  \item $p \in \dom t \implies \exists \beta < \alpha\forall \beta\le\iota <
    \alpha: \acut{p_\iota}\not\le p \text{ and } t_\iota(p)=t(p)$.
    \label{item:pconvOpenPos1}
  \item $p \in \dom{t_\beta} \text{ and } \forall
    \beta\le\iota < \alpha: \acut{p_\iota}\not\le p \implies \forall
    \beta \le \iota < \alpha: t_\iota(p)=t(p)$.
    \label{item:pconvOpenPos2}
  \end{enumerate}
\end{lemma}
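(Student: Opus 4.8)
The plan is to unfold the definition of $\prs$-convergence and argue directly from the characterisations of lubs and glbs. Since $S$ is open, $\alpha$ is a limit ordinal and $t = t_\alpha = \liminf_{\iota\limto\alpha} c_\iota = \Lub_{\delta<\alpha} s_\delta$ with $s_\delta = \Glb_{\delta\le\iota<\alpha} c_\iota$; the family $(s_\delta)_{\delta<\alpha}$ is a $\talebot$-chain, so by Theorem~\ref{thr:alebotCpo} we have $\dom t = \bigcup_{\delta<\alpha}\dom{s_\delta}$ and $t(p) = s_\delta(p)$ whenever $p\in\dom{s_\delta}$. The structural fact I would use throughout is Lemma~\ref{lem:reductionContextCut}: each context satisfies $c_\iota = \posminus{t_\iota}{\acut{p_\iota}} = \posminus{t_{\iota+1}}{\acut{p_\iota}}$, so for any position $q$ with $\acut{p_\iota}\not\le q$ one has $q\in\dom{c_\iota}$ iff $q\in\dom{t_\iota}$ iff $q\in\dom{t_{\iota+1}}$, and then $t_\iota(q) = c_\iota(q) = t_{\iota+1}(q)$. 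In words, a step leaves a position untouched exactly when $\acut{p_\iota}$ is not a prefix of it.

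For part~\ref{item:pconvOpenPos1} this is almost immediate. If $p\in\dom t$, then by the above $p\in\dom{s_\delta}$ for some $\delta<\alpha$ with $t(p) = s_\delta(p)$. Since $s_\delta\talebot c_\iota$ for every $\delta\le\iota<\alpha$, I get $p\in\dom{c_\iota}$ and $c_\iota(p) = s_\delta(p) = t(p)$; unfolding $c_\iota = \posminus{t_\iota}{\acut{p_\iota}}$ then yields $\acut{p_\iota}\not\le p$ and $t_\iota(p) = c_\iota(p) = t(p)$ for all $\delta\le\iota<\alpha$. Taking $\beta = \delta$ gives the claim.

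Part~\ref{item:pconvOpenPos2} is the real work. A naive transfinite induction on $\iota$ showing ``$t_\iota(p) = t_\beta(p)$'' breaks at limit ordinals: to locate $p$ in the domain of a glb I must invoke Proposition~\ref{prop:alebotGlb}, and its closure condition~\ref{item:talebotGlb3} forces \emph{strict} children of positions into the glb-domain, so I cannot reason about $p$ in isolation. My fix is to strengthen the induction to the whole set $\mathcal G = \setcom{q}{\acut{p_\iota}\not\le q\text{ for all }\beta\le\iota<\alpha}$ of globally untouched positions, proving $t_\iota(q) \simeq t_\beta(q)$ for all $q\in\mathcal G$ simultaneously. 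Two facts about $\acut{}$ make $\mathcal G$ well-behaved: $\acut{p_\iota}$ is either empty or ends in a non-strict step, and $\acut{}$ respects prefixes. From these $\mathcal G$ is prefix-closed and, crucially, closed under \emph{strict} children: if $q\in\mathcal G$ and $a_i = 0$, then $\acut{p_\iota}\le q\concat\seq i$ would force (since $q\concat\seq i$ ends in a strict step) $\acut{p_\iota}\le q$, contradicting $q\in\mathcal G$, so $q\concat\seq i\in\mathcal G$.

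I would then induct on $\iota\in[\beta,\alpha]$. The base case is trivial and the successor case is exactly the ``untouched position'' observation. For a limit $\gamma$, set $s_\delta = \Glb_{\delta\le\iota<\gamma} c_\iota$ and consider $A = \dom{t_\beta}\cap\mathcal G$. Using the induction hypothesis ($c_\iota(q) = t_\iota(q) = t_\beta(q)$ for $q\in A$ and $\beta\le\iota<\gamma$) I verify that $A$ satisfies the three conditions of Proposition~\ref{prop:alebotGlb} for the family $\set{c_\iota}_{\beta\le\iota<\gamma}$: condition~\ref{item:talebotGlb1} holds with common value $t_\beta(q)$, condition~\ref{item:talebotGlb2} is prefix-closedness, and condition~\ref{item:talebotGlb3} is precisely where closure of $\mathcal G$ under strict children is needed, since a forced child $q\concat\seq i\in\dom{c_\iota}$ lies in $\mathcal G$, hence by the induction hypothesis in $\dom{t_\beta}$, hence in $A$. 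Maximality of the glb-domain gives $A\subseteq\dom{s_\beta}$, so by Theorem~\ref{thr:alebotCpo} every $q\in A$ satisfies $q\in\dom{t_\gamma}$ with $t_\gamma(q) = s_\beta(q) = t_\beta(q)$; and if some $q\in\mathcal G\setminus\dom{t_\beta}$ were in $\dom{t_\gamma}$ it would lie in some $\dom{s_\delta}$ with $\beta\le\delta<\gamma$, hence in $\dom{t_\delta}$, contradicting $t_\delta(q)\simeq t_\beta(q)$. This closes the induction, and instantiating it at the given $p\in\dom{t_\beta}\cap\mathcal G$ and at $\iota = \alpha$ yields $t_\iota(p) = t(p)$ for all $\beta\le\iota<\alpha$. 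The main obstacle is exactly this limit step, specifically the interaction of condition~\ref{item:talebotGlb3} with $\mathcal G$; the rest is bookkeeping.
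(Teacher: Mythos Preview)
Your proof is correct and follows essentially the same route as the paper: part~\ref{item:pconvOpenPos1} is argued identically from Theorem~\ref{thr:alebotCpo} and Lemma~\ref{lem:reductionContextCut}, and for part~\ref{item:pconvOpenPos2} both you and the paper run a transfinite induction on $[\beta,\alpha]$ and, at limit stages, feed the set of untouched positions into Proposition~\ref{prop:alebotGlb}. The only difference is packaging: the paper works directly with $P=\dom{t_\beta}\cap\mathcal G$ and carries the pair of invariants $c_\iota(p)=t_\beta(p)$ and $t_\gamma(p)=t_\beta(p)$ for $p\in P$, whereas you carry the single, slightly stronger invariant $t_\iota(q)\simeq t_\beta(q)$ for all $q\in\mathcal G$; your formulation makes the check of condition~\ref{item:talebotGlb3} (that $q\concat\seq i$ lands back in $\dom{t_\beta}$) explicit where the paper's write-up leaves that step implicit.
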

\begin{proof}
  \begin{enumerate}[(i)]
  \item If $p \in \dom t$, then, by Theorem~\ref{thr:alebotCpo}, there
    is some $\beta <\alpha$ such that $s(p) = t(p)$, where $s =
    \Glb_{\beta\le\iota<\alpha} c_\iota$. Since $s \talebot c_\iota$,
    we thus have that $c_\iota(p) = s(p) = t(p)$ for all
    $\beta\le\iota<\alpha$. According to
    Lemma~\ref{lem:reductionContextCut}, $c_\iota =
    \posminus{t_\iota}{\acut p_\iota}$. Consequently, $\acut p_\iota
    \not\le p$ and $t_\iota(p)=t(p)$ for all $\beta \le \iota <
    \alpha$.
  \item Let $\beta < \alpha$ and $P = \setcom{p \in
      \dom{t_\beta}}{\forall \beta \le\iota<\alpha:\acut{p_\iota}
      \not\le p}$. We show the
    following statements for all $\beta\le\gamma\le\alpha$:
    \begin{align}
      \label{eq:pconvOpenPos1}
      c_\iota(p) &= t_\beta(p) & \quad&\text{for all } p \in P, \beta
      \le \iota < \gamma\tag{A}\\
      t_\gamma(p) &= t_\beta(p) & \quad&\text{for all } p \in P
      \tag{B}
      \label{eq:pconvOpenPos2}
    \end{align}
    Then \ref{item:pconvOpenPos2} follows from
    \eqref{eq:pconvOpenPos2}. We proceed by induction on $\gamma$.

    For $\gamma = \beta$, \eqref{eq:pconvOpenPos1} is vacuously true
    and \eqref{eq:pconvOpenPos2} is trivial. 

    Let $\gamma = \gamma' + 1 > \beta$. For \eqref{eq:pconvOpenPos1},
    it remains to be shown that $c_{\gamma'}(p) = t_\beta(p)$ for all
    $p \in P$. Since, according to
    Lemma~\ref{lem:reductionContextCut}, $c_{\gamma'} =
    \posminus{t_{\gamma'}}{\acut{p_{\gamma'}}}$, we know that
    $c_{\gamma'}(p) = t_{\gamma'}(p)$ for all $p \in P$. By the
    induction hypothesis for \eqref{eq:pconvOpenPos2}, we then have
    that $c_{\gamma'}(p) = t_{\beta}(p)$ for all $p \in P$. Moreover,
    since $c_{\gamma'} \talebot t_\gamma$, we have that $t_{\gamma} (p) =
    c_{\gamma'}(p) = t_{\beta}(p)$ for all $p \in P$.

    Let $\gamma$ be a limit ordinal. Then \eqref{eq:pconvOpenPos1}
    follows immediately from the induction hypothesis. For
    \eqref{eq:pconvOpenPos2}, we observer that properties
    \ref{item:talebotGlb1} to \ref{item:talebotGlb3} from
    Proposition~\ref{prop:alebotGlb} are satisfied for the glb $s =
    \Glb_{\beta\le\iota<\gamma}c_\iota$. Property
    \ref{item:talebotGlb1} follows from the induction hypothesis for
    \eqref{eq:pconvOpenPos1} and \ref{item:talebotGlb2} is immediate
    from the construction of $P$. To see that \ref{item:talebotGlb3}
    is true, assume some $\beta\le \iota<\gamma$ and $p \in P$ with
    $p\concat\seq i \in \dom{c_\iota}$ but $p \concat\seq i \nin
    P$. Then there is some $\beta\le\iota'<\gamma$ such that
    $p\concat\seq i = \acut{p_{\iota'}}$. Consequently, $a_i=1$.
    Since \ref{item:talebotGlb1}-\ref{item:talebotGlb3} are fulfilled,
    we may apply Proposition~\ref{prop:alebotGlb}, to conclude that
    $s(p) = c_\beta(p)$ for all $p \in P$. Hence, because $s \talebot
    t_\gamma$, we have that $t_\gamma(p) = c_\beta(p)$ for all $p \in
    P$. By applying the induction hypothesis for
    \eqref{eq:pconvOpenPos1}, we can then conclude that $t_\gamma(p) =
    t_\beta(p)$ for all $p \in P$.
  \end{enumerate}
\end{proof}

\begin{corollary}
  \label{cor:pconvOpenPosBot}
  Let $S = (t_\iota \to[c_\iota,p_\iota] t_{\iota+1})_{\iota<\alpha}$
  be an open reduction $\prs$-converging to $t$. Then we have the
  following:
  \begin{enumerate}[(i)]
  \item $p \concat\seq i \in \domBot{t} \text{ and } a_i = 0 \implies
    \exists \beta < \alpha\forall \beta \le \iota < \alpha: p \concat \seq i
    \in \domBot{t_\iota}$
    \label{item:pconvOpenPosBot1}
  \item $p\concat\seq i \in \domBot{t_\beta} \text{ and } \forall
    \beta\le\iota < \alpha: \acut{p_\iota}\not\le p \implies \forall
    \beta \le \iota < \alpha: p\concat\seq i \in \domBot{t}$.
    \label{item:pconvOpenPosBot2}
  \end{enumerate}
\end{corollary}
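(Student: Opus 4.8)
The plan is to derive both parts directly from Lemma~\ref{lem:pconvOpenPos}, the defining clauses of $\domBot{\cdot}$, and the identity $c_\iota = \posminus{t_\iota}{\acut{p_\iota}}$ of Lemma~\ref{lem:reductionContextCut}, handling the parent position $p$ and the child position $p\concat\seq i$ separately. The recurring observation is that, by definition of $\domBot{\cdot}$, a nonempty position $p\concat\seq i$ lies in $\domBot u$ exactly when $p\in\dom u$ carries the matching label ($\lambda$ if $i=0$, $@$ if $i\in\set{1,2}$) while $p\concat\seq i\nin\dom u$. Hence in each part it suffices to pin down the label at $p$ and the (non-)membership of $p\concat\seq i$.

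For part~\ref{item:pconvOpenPosBot1} I would read off from $p\concat\seq i\in\domBot t$ that $p\in\dom t$ with the appropriate label and $p\concat\seq i\nin\dom t$. The first part of Lemma~\ref{lem:pconvOpenPos} applied to $p$ yields a $\beta_1$ with $\acut{p_\iota}\not\le p$ and $t_\iota(p)=t(p)$ for all $\beta_1\le\iota<\alpha$, fixing the label at $p$. The harder half is forcing $p\concat\seq i\nin\dom{t_\iota}$, and this is where the hypothesis $a_i=0$ is essential. Writing $s_\beta=\Glb_{\beta\le\iota<\alpha}c_\iota$, the liminf characterisation of Theorem~\ref{thr:alebotCpo} gives some $\beta_0$ with $p\in\dom{s_{\beta_0}}$, and since $(s_\beta)_\beta$ is increasing we get $p\in\dom{s_\beta}$ for all $\beta\ge\beta_0$, while $p\concat\seq i\nin\dom t\supseteq\dom{s_\beta}$. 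Because $a_i=0$, the contrapositive of clause~\ref{item:talebotGlb3} of Proposition~\ref{prop:alebotGlb} then forces $p\concat\seq i\nin\dom{c_\iota}$ for every $\beta\le\iota<\alpha$. Finally, since $\acut{p_\iota}$ is non-strict whereas $p\concat\seq i$ is strict, the relation $\acut{p_\iota}\not\le p$ upgrades to $\acut{p_\iota}\not\le p\concat\seq i$, so $p\concat\seq i\in\dom{c_\iota}\iff p\concat\seq i\in\dom{t_\iota}$ and hence $p\concat\seq i\nin\dom{t_\iota}$. Taking $\beta=\max(\beta_0,\beta_1)$ gives $p\concat\seq i\in\domBot{t_\iota}$ for all $\iota\ge\beta$.

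For part~\ref{item:pconvOpenPosBot2} the second part of Lemma~\ref{lem:pconvOpenPos} applied to $p$ (using $\acut{p_\iota}\not\le p$) gives $t_\iota(p)=t(p)=t_\beta(p)$ for all $\beta\le\iota<\alpha$, so the label at $p$ is constant and correct; it remains to show $p\concat\seq i\nin\dom{t_\iota}$, which I would establish by transfinite induction on $\iota\in[\beta,\alpha]$. Successor steps use $c_\iota=\posminus{t_\iota}{\acut{p_\iota}}=\posminus{t_{\iota+1}}{\acut{p_\iota}}$: the crucial point is that $\acut{p_\iota}\le p\concat\seq i$ would, since $\acut{p_\iota}\not\le p$, force $\acut{p_\iota}=p\concat\seq i$ and hence $p\concat\seq i\le p_\iota\in\dom{t_\iota}$ (by Lemma~\ref{lem:reductionStepPos}), contradicting the induction hypothesis; so $\acut{p_\iota}\not\le p\concat\seq i$ and absence of $p\concat\seq i$ is preserved across the step. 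At limit stages $\gamma$ I would use that $\dom{c_\iota}\subseteq\dom{t_\iota}$ together with the induction hypothesis $p\concat\seq i\nin\dom{t_\iota}$, so $p\concat\seq i$ never enters any $\dom{c_\iota}$ and is therefore absent from $\liminf_{\iota\limto\gamma}c_\iota=t_\gamma$ (Theorem~\ref{thr:alebotCpo}, Proposition~\ref{prop:alebotGlb}). Combining with the constant correct label at $p$ yields $p\concat\seq i\in\domBot{t_\iota}$ at every stage, and the case $\iota=\alpha$ gives $p\concat\seq i\in\domBot t$.

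The main obstacle in both parts is the same: Lemma~\ref{lem:pconvOpenPos} only controls positions that are present in $\dom t$ (resp.\ $\dom{t_\beta}$), so propagating the \emph{absence} of $p\concat\seq i$ to the individual trees $t_\iota$ needs a separate argument. In part~\ref{item:pconvOpenPosBot1} this is supplied by strictness ($a_i=0$) through clause~\ref{item:talebotGlb3} of Proposition~\ref{prop:alebotGlb}, and in part~\ref{item:pconvOpenPosBot2} by the induction hypothesis, which rules out a contracted redex ever sitting at or below the hole $p\concat\seq i$. In both cases the small but indispensable fact is that $\acut{p_\iota}$, being the longest non-strict prefix of $p_\iota$, can never be strict, so $\acut{p_\iota}\not\le p$ transfers to $\acut{p_\iota}\not\le p\concat\seq i$.
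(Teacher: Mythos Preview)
Your proposal is correct, and the overall structure matches the paper's: both parts hinge on Lemma~\ref{lem:pconvOpenPos} applied to the parent position $p$, followed by a separate argument that the child $p\concat\seq i$ stays out of the relevant domains.

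Two points of comparison. For part~\ref{item:pconvOpenPosBot1}, the paper is a bit more economical: once $\acut{p_\iota}\not\le p\concat\seq i$ is established (using $a_i=0$ exactly as you do), it simply applies Lemma~\ref{lem:pconvOpenPos}~\ref{item:pconvOpenPos2} in the contrapositive---if $p\concat\seq i$ were ever in $\dom{t_\iota}$ for some $\iota\ge\beta$, that lemma would put it in $\dom t$, contradicting $p\concat\seq i\in\domBot t$. Your route via clause~\ref{item:talebotGlb3} of Proposition~\ref{prop:alebotGlb} is correct but essentially re-derives that same implication by unpacking the glb. For part~\ref{item:pconvOpenPosBot2}, the paper runs the induction in the opposite direction: it proves the contrapositive ``$p\concat\seq i\in\dom t$ implies $p\concat\seq i\in\dom{t_\beta}$'' by induction on the length $\alpha$ (using Lemma~\ref{lem:pconvOpenPos}~\ref{item:pconvOpenPos1} at limit stages). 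Your forward induction on $\iota\in[\beta,\alpha]$ is the natural dual and works just as well.

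One small inaccuracy: your closing summary says the non-strictness of $\acut{p_\iota}$ is what upgrades $\acut{p_\iota}\not\le p$ to $\acut{p_\iota}\not\le p\concat\seq i$ ``in both cases''. That is only how you argue in part~\ref{item:pconvOpenPosBot1}, where $a_i=0$. In part~\ref{item:pconvOpenPosBot2} there is no hypothesis on $a_i$, and indeed you (correctly) rule out $\acut{p_\iota}=p\concat\seq i$ via the induction hypothesis instead. The actual argument is fine; only the summary sentence overstates the role of non-strictness.
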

\begin{proof}
  \begin{enumerate}[(i)]
  \item Since $p \concat\seq i \in \domBot{t}$, we know that
    $p \in \dom{t}$. Consequently, by
    Lemma~\ref{lem:pconvOpenPos}~\ref{item:pconvOpenPos1}, there is
    some $\beta < \alpha$ such that $\acut{p_\iota}\not\le p$ and
    $t_\iota(p)=t(p)$ for all $\beta \le \iota < \alpha$. Since
    $a_i = 0$, we have that $\acut{p_\iota}\not\le p\concat\seq{i}$
    for all $\beta \le \iota < \alpha$. Hence,
    $p\concat \seq i \nin \dom{t_\iota}$ for all
    $\beta \le \iota < \alpha$, because otherwise
    $p\concat \seq i \nin \dom{t}$ according to
    Lemma~\ref{lem:pconvOpenPos}~\ref{item:pconvOpenPos2}. Consequently,
    $p\concat \seq i \in \domBot{t_\iota}$ for all
    $\beta \le \iota < \alpha$.

  \item Let $i = 0$. The other cases follow by a similar argument.

    Then $t_\beta(p) = \lambda$, and by
    Lemma~\ref{lem:pconvOpenPos}~\ref{item:pconvOpenPos2}, also
    $t(p) = \lambda$. Hence,
    $p\concat \seq i \in \dom t \cup \domBot t$. By induction, we show
    below that $p\concat \seq i \in \dom t$ implies
    $p\concat \seq i \in \dom{t_\beta}$ for any reduction $S$
    (including closed ones). Since
    $p\concat \seq i \nin \dom{t_\beta}$, that must mean that
    $p\concat \seq i \in \domBot t$.

    The base case $\beta = \alpha$ is trivial. If
    $\alpha = \gamma +1$, then also
    $p\concat \seq i \in \dom {t_\gamma}$ and by induction hypothesis
    $p\concat \seq i \in \dom {t_\beta}$. If $\alpha$ is a limit
    ordinal, then $p\concat \seq i \in \dom {t_\gamma}$ for some
    $\beta \le \gamma < \alpha$ according
    Lemma~\ref{lem:pconvOpenPos}~\ref{item:pconvOpenPos1}. Then
    $p\concat \seq i \in \dom{t_\beta}$ follows by the induction
    hypothesis.
  \end{enumerate}

\end{proof}

\begin{thmcpy}{Lemma~\ref{lem:volatileBot}}
  If $p$ is outermost-volatile in $S \fcolon s \pato t$,
  then $p \in \domBot t$.
\end{thmcpy}
\begin{proof}[Proof of Lemma~\ref{lem:volatileBot}]
  
  Let $S = (t_\iota \to[p_\iota,c_\iota]
  t_{\iota+1})_{\iota<\alpha}$. Since $p$ is volatile in $S$, we find
  for each $\beta < \alpha$ some $\beta \le \iota < \alpha$ with
  $\acut{p_\iota} \le p$. Hence, by
  Lemma~\ref{lem:reductionContextCut}, we know that $p \nin
  \dom{c_\iota}$. Consequently, by Theorem~\ref{thr:alebotCpo} and
  Proposition~\ref{prop:alebotGlb}, we have that $p \nin \dom{t}$.

  If $p = \emptyseq$, then $p \in \domBot{t}$ follows immediately. If
  $p = q\concat\seq 0$, then we have to show that $t(q) = \lambda$ to
  conclude that $p \in \domBot{t}$. Since $p$ is outermost-volatile in
  $S$, we find some $\beta < \alpha$ such that $p \in \dom{t_\beta}$
  and $\acut{p_\iota} \not\le q$ for all $\beta\le \iota < \alpha$
  (the latter because otherwise some prefix of $q$ would be volatile
  in $S$). Hence, by Lemma~\ref{lem:pconvOpenPos} $t(q) =
  t_\beta(q)$. Moreover, since $q\concat\seq 0 \in \dom{t_\beta}$, we
  know that $t_\beta(q) = \lambda$. Consequently, $t(q) =
  \lambda$. The argument for the cases $p = q\concat\seq 1$ and $p =
  q\concat\seq 2$ is analogous.
\end{proof}

\begin{corollary}
  \label{cor:volatileBot}
  If $p$ is volatile in $S \fcolon s \pato t$, then $p \nin \dom t$.
\end{corollary}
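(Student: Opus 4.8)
The plan is to derive the statement directly from Lemma~\ref{lem:volatileBot} by passing to an outermost-volatile prefix of $p$. First I would observe that $p$ has only finitely many prefixes and that $p$ itself is volatile, so the set of volatile prefixes of $p$ is non-empty and finite; let $q$ be the shortest such prefix. Any proper prefix of $q$ is then a proper prefix of $p$ that is strictly shorter than $q$, so by minimality it is not volatile. Hence $q$ is outermost-volatile in $S$.

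Next I would invoke Lemma~\ref{lem:volatileBot} to conclude $q \in \domBot{t}$. By the definition of $\domBot{t}$ in Definition~\ref{def:lambda}, every $\bot$-position lies outside the domain, so $q \nin \dom{t}$. Since $\dom{t}$ is prefix-closed and $q \le p$, it follows that $p \nin \dom{t}$, which is the claim.

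Alternatively, and perhaps more transparently, the first paragraph of the proof of Lemma~\ref{lem:volatileBot} already establishes $p \nin \dom{t}$ for \emph{any} volatile $p$ without appealing to outermost-volatility: volatility supplies, for cofinally many $\iota$, a reduction context $c_\iota$ with $\acut{p_\iota} \le p$, whence $p \nin \dom{c_\iota}$ by Lemma~\ref{lem:reductionContextCut}; the characterisation of the limit inferior (Theorem~\ref{thr:alebotCpo} together with Proposition~\ref{prop:alebotGlb}) then yields $p \nin \dom{t}$. Either route works, and I would most likely present the second one as a one-line remark since the hypothesis of Lemma~\ref{lem:volatileBot} is never actually needed in that opening argument.

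I expect no real obstacle here: the only points requiring care are the existence of a shortest volatile prefix (immediate from finiteness of the chain of prefixes of $p$) and the disjointness of $\domBot{t}$ and $\dom{t}$ (immediate from the definition of $\domBot{t}$), both of which are routine. The corollary is essentially a cosmetic strengthening of Lemma~\ref{lem:volatileBot} that drops the outermost-ness hypothesis while correspondingly weakening the conclusion from $p \in \domBot{t}$ to $p \nin \dom{t}$.
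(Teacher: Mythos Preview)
Your proposal is correct and matches the paper's approach: the paper's proof is the single line ``Follows from Lemma~\ref{lem:volatileBot}'', and both routes you describe are valid ways to unpack that citation. In particular, your second route is exactly the observation that the opening paragraph of the proof of Lemma~\ref{lem:volatileBot} already establishes $p \nin \dom t$ without using outermost-volatility.
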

\begin{proof}
  Follows from Lemma~\ref{lem:volatileBot}.
\end{proof}

\begin{thmcpy}{Proposition~\ref{prop:destructBot}}
  An open reduction is destructive iff it $\prs$-converges to $\bot$.
\end{thmcpy}
\begin{proof}
  The ``only if'' direction follows immediately from
  Lemma~\ref{lem:volatileBot}.

  For the ``if'' direction, let $S = (t_\iota \to[p_\iota]
  t_{\iota+1})_{\iota<\alpha}$. Because $\bot$ is not a redex, we know
  that $\dom{t_\iota}$ is non-empty for all $\iota < \alpha$. Since
  $S$ $\prs$-converges to $\bot$, we thus know, according to (the
  contrapositive of)
  Lemma~\ref{lem:pconvOpenPos}~\ref{item:pconvOpenPos2}, that for each
  $\beta < \alpha$ there is some $\beta\le\iota<\alpha$ such that
  $\acut{p_\iota}\le \emptyseq$. That is, $S$ is destructive.
\end{proof}

\begin{lemma}
  \label{lem:botAfrag}
  Given $t \in \aptree$ and $p \in \domBot t$ with $\adepth{p} = 0$,
  we have that $t \in \afrag_\bot$.
\end{lemma}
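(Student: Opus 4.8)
The plan is to exhibit an explicit fragile $\bot$-instance of $t$. Since a $\bot$-instance w.r.t.\ $\afrag$ must fill \emph{every} hole with a fragile total lambda tree, I would fill each position in $\domBot t$ with a copy of $\Omega$ (from Example~\ref{ex:strongConv}), obtaining a lambda tree $s$. As $\Omega$ is a fragile total lambda tree, i.e.\ $\Omega \in \afrag$ (the reduction $\Omega \to \Omega \to \dots$ $\prs$-converges to $\bot$ and is thus destructive by Proposition~\ref{prop:destructBot}), the tree $s$ is by construction a $\bot$-instance of $t$ w.r.t.\ $\afrag$. Thus it suffices to prove $s \in \afrag$, for then $t \in \afrag_\bot$ follows directly from the definition of $\afrag_\bot$.

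Next I would verify $s \in \atree$, i.e.\ that $s$ is total and $\ola$-guarded. Totality is immediate, since filling all of $\domBot t$ leaves $\domBot s = \emptyset$. For $\ola$-guardedness the key point is that each inserted $\Omega$ is \emph{finite}, so no infinite branch of $s$ can descend into an inserted copy; every infinite branch of $s$ is therefore an infinite branch of $t$, and since $t \in \aptree$ is $\ola$-guarded, that branch is $\ola$-unbounded. Hence $s$ has no $\ola$-bounded infinite branch, i.e.\ $s \in \atree$.

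It remains to show that $s$ is fragile by producing a destructive $\betared$-reduction from it. By construction $\subtree s p = \Omega$ is a $\betared$-redex whose contraction reproduces $\Omega$, so there is an open reduction $S\fcolon s \to[\betared,p] s \to[\betared,p] \dots$ of length $\omega$ with $p_\iota = p$ for all $\iota < \omega$; it is vacuously $\prs$-continuous, as there is no limit ordinal below $\omega$. The decisive use of the hypothesis is the identity $\acut p = \emptyseq$: since $\adepth p = 0$, the position $p$ has no non-strict prefix, and the recursive characterisation of $\acut{}$ (which drops every strict step) peels $p$ all the way down to $\emptyseq$. Consequently $\acut{p_\iota} = \emptyseq \le \emptyseq \le p_\iota$ for every $\iota$, so $\emptyseq$ is volatile in $S$ and $S$ is destructive. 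Therefore $s$ is fragile, so $s \in \afrag$, and $t \in \afrag_\bot$ as required.

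I expect the one genuinely load-bearing step to be the identity $\acut p = \emptyseq$, which is exactly where $\adepth p = 0$ enters and which is what forces the volatile position to sit at the root; the $\ola$-guardedness check, while routine, should not be omitted, since inserting subtrees could a priori introduce new $\ola$-bounded infinite branches --- it does not, precisely because $\Omega$ is finite.
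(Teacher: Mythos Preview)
Your proof is correct and follows essentially the same approach as the paper: construct a $\bot$-instance $s$ of $t$ w.r.t.\ $\afrag$, embed a destructive reduction at position $p$, and use $\adepth p = 0$ to conclude that $\emptyseq$ is volatile, whence $s\in\afrag$ and $t\in\afrag_\bot$. The only difference is that the paper argues with an \emph{arbitrary} $\bot$-instance (whatever fragile tree sits at $p$ has a destructive reduction, which is embedded), whereas you pick the concrete instance using $\Omega$ everywhere; your version is more explicit (you spell out $\acut p = \emptyseq$ and the $\ola$-guardedness check), but the underlying idea is identical.
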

\begin{proof}
  Let $s$ be a $\bot$-instance of $t$ w.r.t.\ $\afrag$. Then $\subtree
  s p \in \afrag$, i.e.\ there is a destructive reduction $\subtree s
  p \pato[\betared] \bot$. By embedding this reduction into $s$ at
  position $p$, we obtain a reduction $S\fcolon s \to s'$ that has a
  volatile position $p$. Since $\adepth p = 0$, also $\emptyseq$ is
  volatile in $S$. Consequently, $s \in \afrag$, which means that $t
  \in \afrag_\bot$.
\end{proof}

\begin{lemma}
  \label{lem:openOneBetaRed}
  For any open $\betasred$-reduction $t \pato[\betasred] \bot$, we
  find an open $\betared$-reduction $t \pato[\betared] \bot$.
\end{lemma}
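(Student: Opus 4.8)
The plan is to rephrase everything through destructiveness and then simply delete the $\strictred$-steps from the given reduction. By Proposition~\ref{prop:destructBot}, an open reduction $\prs$-converges to $\bot$ iff it is destructive, so the hypothesis yields a destructive reduction $S \fcolon t \pato[\betasred] \bot$, i.e.\ $S$ is $\prs$-continuous and $\emptyseq$ is volatile in $S$; and it suffices to produce a destructive (hence, again by Proposition~\ref{prop:destructBot}, an open $\prs$-convergent-to-$\bot$) $\betared$-reduction from $t$. First I would form $T$ by omitting every $\strictred$-step of $S$, keeping each $\betared$-step at its tracked position, so that wherever $S$ replaces a subtree by $\bot$, the reduction $T$ retains the original subtree.

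For well-definedness of $T$ as a genuine $\betared$-reduction, I would maintain the invariant that at every stage the current $S$-tree lies $\talebot$ below the corresponding $T$-tree. This holds initially (both equal $t$), is preserved by a shared $\betared$-step (here I would verify that contracting a fixed redex position is monotone w.r.t.\ $\talebot$), and is preserved by a $\strictred$-step of $S$, since $u \to[\strictred,p] \posminus{u}{p}$ only shrinks the $S$-tree while $T$ stands still. A $\betared$-redex occurrence at $p$ is witnessed by the values at $p$ and $p\concat\seq 1$, both in the domain of the smaller tree, so condition~\ref{item:talebot2} of Definition~\ref{def:talebot} guarantees every retained redex of $S$ is still a redex in the larger $T$-tree; hence the kept steps are legitimate $\betared$-steps. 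I would likewise transport $\prs$-continuity of $S$ to $T$ at limit stages.

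The crux is to show that $\emptyseq$ remains volatile in $T$, i.e.\ that cofinally many kept $\betared$-steps $p_\iota$ satisfy $\acut{p_\iota} = \emptyseq$. The key point is that a $\strictred$-step with $\acut{p} = \emptyseq$ cannot sustain root-volatility on its own: it merely consumes a $\bot$ at a strict child $p\concat\seq i$ (so $\acut{(p\concat\seq i)} = \acut{p} = \emptyseq$), and contracting it to $\bot$ only turns the parent of $p$ into the next such redex, so these steps march strictly towards the root and, left to themselves, reach $\bot$ in finitely many steps, ending the reduction at a successor stage and contradicting openness. Consequently every $\bot$ consumed along the root path must ultimately be produced by a destructive $\betared$-subreduction at a position $q$ with $\acut{q} = \emptyseq$, and since the only non-strict prefix of a root-path position is $\emptyseq$, all of that subreduction's contracted positions again have longest non-strict prefix $\emptyseq$. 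As there are cofinally many root-path $\strictred$-steps in $S$, these $\betared$-witnesses occur cofinally as well, so $\emptyseq$ is volatile in $T$, and Proposition~\ref{prop:destructBot} gives $T \fcolon t \pato[\betared] \bot$.

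I expect the main obstacle to be exactly this last step: making the informal claim ``every consumed $\bot$ traces back to $\betared$-activity'' rigorous, since the relevant holes are moved and duplicated by the substitutions of intervening $\betared$-steps and the provenance chain is transfinite. I would discharge it by a well-founded induction on the step index, tracking strict $\bot$-positions with Lemma~\ref{lem:pconvOpenPos} and Corollary~\ref{cor:pconvOpenPosBot}, and identifying the fragile subtree responsible for each hole through Lemma~\ref{lem:botAfrag}, applied at each consumed position $p\concat\seq i$ with $\adepth{p\concat\seq i} = 0$.
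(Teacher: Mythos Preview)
Your construction (delete the $\strictred$-steps and keep the $\betared$-steps) is exactly what the paper does, but two points in your execution diverge from the paper and are worth flagging.

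First, the invariant you propose does not hold as stated. After an $\strictred$-step $s \to[\strictred,p] \posminus s p$ at a \emph{strict} position $p = q\concat\seq i$ (with $a_i = 0$), condition~\ref{item:talebot3} of Definition~\ref{def:talebot} fails for the pair $\posminus s p \talebot s'$: the position $q$ survives in $\posminus s p$, the strict child $q\concat\seq i$ is still in $\dom{s'}$, but it has been removed from $\posminus s p$. What you actually need (and what suffices to keep the retained $\betared$-steps legitimate) is only conditions~\ref{item:talebot1} and~\ref{item:talebot2}, i.e.\ the order $\talebot[111]$; that weaker invariant \emph{is} preserved by $\strictred$-steps and by simultaneous $\betared$-contraction. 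This is a small fix, but the claim as written is false.

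Second, the ``provenance of $\bot$'' argument you identify as the main obstacle is avoidable, and the paper's route is much shorter. Once you know $\emptyseq$ is volatile in $S$, the steps at $\ola$-depth $0$ are cofinal. Suppose only finitely many of them are $\betared$-steps; then cofinally many are $\strictred$-steps at $\ola$-depth $0$. Now observe: a $\betasred$-step at $\ola$-depth $>0$ never creates an $\strictred$-redex at $\ola$-depth $0$, and an $\strictred$-step at $\ola$-depth $0$ creates at most one new $\strictred$-redex at $\ola$-depth $0$, and that one sits at strictly smaller $111$-depth. Since by Proposition~\ref{prop:adepth} any $\ola$-guarded tree has only finitely many positions of $\ola$-depth $0$, the supply of $\strictred$-redexes at $\ola$-depth $0$ is exhausted after finitely many such steps past the last depth-$0$ $\betared$-step, contradicting cofinality. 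Hence $S$ already contains infinitely many $\betared$-steps at $\ola$-depth $0$, and these are precisely the kept steps witnessing volatility of $\emptyseq$ in $T$. No transfinite tracking of how individual holes were produced is needed; Lemma~\ref{lem:botAfrag} and Corollary~\ref{cor:pconvOpenPosBot} play no role here.
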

\begin{proof}
  We first show that any reduction $S \fcolon t \pato[\betasred] \bot$
  contracts infinitely many $\betared$-redexes at $\ola$-depth $0$.

  To this end, suppose this was not the case. Then, by
  Proposition~\ref{prop:destructBot}, $S$ contracts infinitely many
  $\strictred$-redexes at $\ola$-depth $0$. However,
  $\betasred$-reduction at $\ola$-depth $>0$ creates no new
  $\strictred$-redexes at $\ola$-depth $0$, and $\strictred$-reduction
  at $\ola$-depth $0$ creates at most one $\strictred$-redex at the
  same $\ola$-depth (but at a strictly smaller $111$-depth). Hence,
  contraction of infinitely many $\strictred$-redexes at $\ola$-depth
  $0$ requires $t$ to contain infinitely many $\strictred$-redexes at
  $\ola$-depth $0$, which is impossible by
  Proposition~\ref{prop:adepth}

  Finally we construct a $\betared$-reduction
  $T \fcolon t \pato[\betared] \bot$ from $S$ by removing all
  $\strictred$-steps. Clearly $\betared$-redexes contracted in $S$ are
  still $\betared$-redexes in $T$. Moreover, since infinitely many
  redexes at $\ola$-depth $0$ are contracted $T$ also $\prs$-converges
  to $\bot$ by Proposition~\ref{prop:destructBot}.
\end{proof}

\begin{thmcpy}{Theorem~\ref{thr:pConvBohmred}}
  If $s \pato[\betasred] t$, then $s \mato[\bohmred] t$, where
  $\bohmred = \bohmred[\afrag]$.
\end{thmcpy}
\begin{proof}[Proof of Theorem~\ref{thr:pConvBohmred}]
  Let
  $S = (\phi_\iota\fcolon t_\iota \to[p_\iota]
  t_{\iota+1})_{\iota<\alpha}$ be a $\betasred$-reduction that
  $\prs$-converges to $t_\alpha$. We construct a $\bohmred$-reduction
  $T$ from $S$ that also $\prs$-converges to $t_\alpha$ but that has
  no volatile positions in any of its open prefixes. Thus, according
  to Proposition~\ref{prop:pconvMconvVolatile}, $T$ also
  $\mrs$-converges to $t_\alpha$. The construction removes steps in
  $S$ that take place at or below outermost-volatile positions of some
  prefix of $S$ and replaces them by $\botred$-steps.  Let $p$ be an
  outermost-volatile position of some prefix $\prefix{S}{\beta}$. Then
  there is some ordinal $\gamma < \beta$ such that no reduction step
  between $\gamma$ and $\beta$ in $S$ takes place strictly above $p$,
  i.e. $p_\iota \not< p$ for all $\gamma \le \iota < \beta$. Hence, we
  can construct a destructive reduction
  $S_1\fcolon \subtree{t_\gamma}{p} \pato[\betasred] \bot$ by taking
  the subsequence of the segment $\segm S \gamma {\beta}$ that
  contains the reduction steps $\phi_\iota$ with $p \le
  p_\iota$. Moreover, by Lemma~\ref{lem:openOneBetaRed}, we find a
  $\betared$-reduction
  $S_2\fcolon \subtree{t_\gamma}{p} \pato[\betared] \bot$.

  Note that $\subtree{t_\gamma}{p}$ may not be total. However, the
  applicability of $\betared$-steps is preserved by forming
  $\bot$-instances. In particular, we can form $\bot$-instances
  w.r.t.\ $\afrag$. Let $r$ be such a $\bot$-instance of
  $\subtree{t_\gamma}{p}$ w.r.t.\ $\afrag$. Then there is a
  destructive reduction $S_3\fcolon r \pato[\betared] \bot$ that
  contracts redexes at the same positions as $S_2$. Hence,
  $r \in \afrag$, which means that
  $\subtree{t_\gamma}{p} \in \afrag_\bot$. Additionally,
  $\subtree{t_\gamma}{p} \neq \bot$ since $\subtree{t_\gamma}{p}$
  contains a $\betared$-redex. Consequently, there is a pair
  $(\subtree{t_\gamma}{p}, \bot) \in\ \botred$. Let $T'$ be the
  reduction that is obtained from $\prefix S {\beta}$ by replacing the
  $\gamma$-th step, which we can assume w.l.o.g.\ to take place at
  $p$, by a $\botred$-step at the same position $p$ and removing all
  reduction steps $\phi_\iota$ with $p \le p_\iota$ and
  $\gamma < \iota < \beta$. Let $t'$ be the lambda tree that the
  reduction $T'$ $\prs$-converges to. $t_{\beta}$ and $t'$ can only
  differ at position $p$ or below. However, by construction, we have
  $p \in \domBot{t'}$ and, by Lemma~\ref{lem:volatileBot},
  $p \in \domBot{t_{\beta}}$, too. Consequently, $t' = t_\beta$.
\end{proof}

\begin{lemma}
  \label{lem:botredEventual}
  If $\bohmred = \bohmred[\afrag]$ and $s \mato[\bohmred] t$, then $s
  \mato[\betared] s'$ and $s' \mato[\botred] t$.
\end{lemma}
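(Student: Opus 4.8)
The statement is a postponement result: in any $\mrs$-converging $\bohmred[\afrag]$-reduction the $\botred$-steps can be shifted past all $\betared$-steps. Throughout, write $S = (t_\iota \to[p_\iota] t_{\iota+1})_{\iota<\alpha}$ for the given reduction, with $t_0 = s$, $t_\alpha = t$, and recall $\bohmred[\afrag] = \betared\cup\botred[\afrag]$. The plan is to follow the strategy of Kennaway et al.~\cite{kennaway99jflp} (their Lemma~27): first establish a local postponement step, then lift it to transfinite reductions by a projection-and-limit construction.

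For local postponement, consider a $\botred$-step immediately followed by a $\betared$-step, $u \to[\botred,p] v \to[\betared,q] w$. The key observation is that $v|_p = \bot$, and $\bot$ is neither a redex nor has proper subpositions; hence the $\betared$-redex position $q$ can be neither $p$ nor strictly below $p$, so it lies above or beside $p$. Since $u$ and $v$ agree everywhere off the subtree at $p$, the redex pattern at $q$ is already present in $u$, so I would contract it first, obtaining $u \to[\betared,q] u'$. The fragile subtree $u|_p$ then survives as copies at the pairwise-disjoint residuals of $p$ in $u'$, each still a $\botred$-redex (an element of $\afrag_\bot\setminus\set\bot$); contracting all of them back to $\bot$ recovers $w$. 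This yields $\to[\botred]\cdot\to[\betared]\ \subseteq\ \to[\betared]\cdot\mato[\botred]$, where the trailing $\botred$-phase may be infinitary because a single $\betared$-step can duplicate its argument infinitely often.

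To handle the whole transfinite reduction, I would build, by recursion along $S$, a $\betared$-reduction $S'\fcolon s = u_0 \mato[\betared] u_\alpha$ together with a family $(B_\iota)$ of pairwise-disjoint sets $B_\iota \subseteq \domBot{t_\iota}$, maintaining the invariant that $t_\iota$ arises from $u_\iota$ by $\botred$-contracting the subtrees at $B_\iota$ (each a $\botred$-redex), while $u_\iota$ and $t_\iota$ coincide on all other $\bot$-positions. A $\betared$-step of $S$ at $p_\iota$ is mirrored verbatim: since $p_\iota$ lies above or beside the lifted subtrees (never below a $\bot$ of $t_\iota$), the same redex is present in $u_\iota$, and $B_{\iota+1}$ is taken as the residuals of $B_\iota$ (copies of $\botred$-redexes are $\botred$-redexes). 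A $\botred$-step of $S$ at $p$ is a stutter in $S'$ (set $u_{\iota+1}=u_\iota$) that merely records $p$, with $B_{\iota+1} = \set{p}\cup\setcom{b\in B_\iota}{p\not\le b}$, the new subtree $u_\iota|_p$ again being a $\botred$-redex. At a limit ordinal $\gamma$, after deleting the stutters, $S'$ is $\mrs$-continuous and the $\ola$-depths of its steps tend to infinity (inherited from the $\mrs$-convergence of $S$), so $S'$ $\mrs$-converges by Lemma~\ref{lem:mContConv}; I then define $B_\gamma$ from the positions that stabilise in $(B_\iota)_{\iota<\gamma}$ and re-establish the invariant at $u_\gamma = \lim_{\iota\limto\gamma}u_\iota$ via the metric-limit characterisation of Lemma~\ref{lem:alebotLim}. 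Setting $s' = u_\alpha$ gives $S'\fcolon s \mato[\betared] s'$, and contracting all $B_\alpha$-subtrees in increasing order of $\ola$-depth — finitely many per depth by Proposition~\ref{prop:adepth} — yields $s' \mato[\botred] t$.

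The main obstacle is the transfinite bookkeeping concentrated in the limit step and in maintaining the invariant across $\botred$-steps. Two properties carry the weight: first, that $\afrag_\bot$ is closed under replacing its $\bot$'s by arbitrary $\botred$-redexes — equivalently, that membership in $\afrag_\bot$ is insensitive to which fragile trees are chosen for the $\bot$-instance — so that the recorded subtrees stay contractible; and second, that the $\botred$-contraction relation between $u_\iota$ and $t_\iota$ survives the metric limit at $\gamma$, which I would verify position-by-position through Lemma~\ref{lem:alebotLim} together with a transfinite descendant analysis of how $B_\iota$ propagates along the $\betared$-reduction. These are precisely the ingredients Kennaway et al.~\cite{kennaway99jflp} isolate for meaningless terms and package in their Lemma~27; adapting them to the $\ola$-indexed orders and to the stuttering reduction $S'$ is the technical heart of the argument.
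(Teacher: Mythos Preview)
Your approach is the same as the paper's: both rest on Kennaway et al.'s Lemma~27. The difference is only in presentation --- the paper simply cites that lemma and verifies its hypothesis, whereas you sketch its proof from scratch. The paper's argument is two lines: Lemma~27 applies to any $\bohmred[\calU]$ provided $\calU$ is closed under substitution, and $\afrag$ is closed under substitution because a destructive $\betared$-reduction from $t$ induces one from $\sigma(t)$ by applying $\sigma$ termwise.

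There is, however, a small gap in your sketch. In the $\betared$ case you write ``copies of $\botred$-redexes are $\botred$-redexes'', but a $\botred$-redex at a position $b$ lying \emph{inside the body} of the $\betared$-redex at $p_\iota$ does not survive as a verbatim copy: its residual carries the substituted subtree $(\subtree{u_\iota}{b})[x := N]$, and you must argue that this still lies in $\afrag_\bot$. That is exactly where closure of $\afrag$ under substitution is needed --- the hypothesis the paper isolates and checks --- and it is distinct from the property you highlight (insensitivity of $\afrag_\bot$ to the choice of $\bot$-instance), which is the one you correctly need for the $\botred$-step case when absorbing nested lifted subtrees into a newly recorded position. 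Both properties do hold for $\afrag$, so your construction can be completed, but your write-up attributes the weight of the $\betared$ case to the wrong ingredient.
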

\begin{proof}
  According to Lemma~27 of Kennaway et al.~\cite{kennaway99jflp}, this
  property holds for the metric $\ddta[111]$ for all $\bohmred[\calU]$
  given $\calU$ is closed under substitution. The proof works for all
  other metrics of the form $\ddta$ as well, and $\afrag$ is clearly
  closed under substitution: given a total, fragile lambda tree $t$
  witnessed by the destructive $\betared$-reduction $S$ and a
  substitution $\sigma$ (of total lambda trees), then $\sigma(t)$ is
  also fragile witnessed by the reduction obtained from $S$ by
  applying $\sigma$ to each of its lambda trees.
\end{proof}

\begin{lemma}
  \label{lem:botredOmega}
  Let $\botred = \botred[\calU]$ for some $\calU\subseteq \atree$ and
  $S\fcolon s \mato[\botred] t$. Then there is a reduction $T\fcolon s
  \mato[\botred] t$ of length at most $\omega$ contracting disjoint
  $\botred$-redexes of $s$.
\end{lemma}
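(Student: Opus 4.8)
The plan is to exploit that every $\botred$-step merely erases: a step $u \to[\botred] v$ contracting at a position $q$ replaces the subtree at $q$ by $\bot$, so that $v = \posminus u q$, with $\dom v \subseteq \dom u$ and $u,v$ agreeing on $\dom v$. Writing $S = (t_\iota \to[\botred] t_{\iota+1})_{\iota<\alpha}$ with contracted positions $p_\iota$, a transfinite induction using the characterisation of the $\mrs$-limit (Lemma~\ref{lem:alebotLim}, applied to $(t_\iota)_{\iota<\gamma}$ at each limit $\gamma$) shows that the invariant ``$\dom{t_\iota}\subseteq\dom s$ and $t_\iota$ agrees with $s$ on its domain'' is preserved at limits. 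Hence $\dom t \subseteq \dom s$ and $t(p)=s(p)$ for all $p\in\dom t$, so $t$ is obtained from $s$ by deleting (replacing by $\bot$) the subtrees at a uniquely determined set of positions.

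First I would pin down that set as $P$, the prefix-minimal elements of $\set{p_\iota\mid \iota<\alpha}$. Since a contracted position stays a $\bot$-hole forever and $\bot$ is never a redex, no contraction can occur strictly above a minimal one; thus every proper prefix of a $p\in P$ survives into $t$ while $p$ itself does not, whence $\dom t = \dom s\setminus\setcom{q}{\exists p\in P\colon p\le q}$ and $P$ is an antichain. The remaining, and main, obstacle is to show that each $\subtree s p$ with $p\in P$ is a genuine $\botred$-redex of $s$, i.e.\ $\subtree s p\in\calU_\bot\setminus\set{\bot}$. Collecting the steps of $S$ at positions $\ge p$ (there are none strictly above $p$, and steps at incomparable positions leave $\subtree{\cdot}{p}$ untouched) yields a $\mrs$-converging $\botred$-reduction $\subtree s p \mato[\botred]\bot$. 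Because collapsing a tree to $\bot$ empties the domain, and the root can only leave the domain through a contraction at the root, this reduction has successor length and factors as $\subtree s p \mato[\botred] r \to[\botred]\bot$ with $r\in\calU_\bot$. It then remains to transport $\calU_\bot$-membership backwards along $\subtree s p\mato[\botred] r$; this is the crux, and it is exactly where closure of $\calU$ under substitution is used (as for $\calU=\afrag$ in Lemma~\ref{lem:botredEventual}, and paralleling Ketema~\cite[Lemma~7.2.4]{ketema06phd}), without which this reflection step is not available.

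Finally I would bound the length and verify strong convergence. As $s$ is $\ola$-guarded, Proposition~\ref{prop:adepth} gives that $\dom s$, and hence $P$, contains only finitely many positions of each $\ola$-depth; enumerating $P$ in order of non-decreasing $\ola$-depth produces a reduction $T\fcolon s\mato[\botred] t$ of length $\le\omega$. Since the contracted positions are pairwise disjoint, each $\subtree s p$ is still present—and thus still a redex—at the moment it is contracted, so every step of $T$ is legitimate and its result is exactly $t$. If $P$ is infinite, the finiteness-per-depth forces the $\ola$-depths of the contracted redexes to tend to infinity, so $T$ is $\mrs$-continuous and, by Lemma~\ref{lem:mContConv}, $\mrs$-converges, necessarily to $t$; if $P$ is finite, $T$ is finite and we are done.
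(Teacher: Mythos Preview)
Your overall strategy coincides with the paper's: take the prefix-minimal (outermost) contracted positions $P=\setcom{p_\beta}{\forall\iota<\alpha\colon p_\iota\not<p_\beta}$, argue that each $\subtree s p$ with $p\in P$ is itself a $\botred$-redex, and then contract these disjoint redexes in order of increasing $\ola$-depth, invoking Proposition~\ref{prop:adepth} for $\mrs$-convergence. The paper's proof is a four-line version of this.

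Where you depart from the paper is precisely where you should: you flag the step ``$\subtree s p$ is a $\botred$-redex'' as the crux and note it needs a closure property of~$\calU$. The paper dismisses this step with ``by the definition of~$\calU_\bot$'', which is not adequate. In fact the lemma is false for arbitrary $\calU\subseteq\atree$: take $\calU=\set{\vartree x,\,\vartree x\,\vartree x}$ and $s=\vartree x\,(\vartree x\,\vartree x)$. Then
\[
s \;\to[\botred,\seq 2]\; \vartree x\,\bot \;\to[\botred,\emptyseq]\; \bot,
\]
so $s\mato[\botred]\bot$, but $s$ is total with $s\nin\calU$, hence $s\nin\calU_\bot$; since any reduction $T$ contracting only disjoint redexes of $s$ and ending in $\bot$ must contract at the root, no such $T$ exists. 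So your caution is justified and the paper's proof glosses over a genuine gap; the statement only survives under an extra hypothesis on~$\calU$ (which does hold for the sole application, $\calU=\afrag$).

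One caveat on your proposed fix: the property you actually use in the inductive back-transport is not literally ``$\calU$ closed under substitution'' but rather the closure of $\calU_\bot$ under replacing a $\bot$-hole by an element of $\calU_\bot$ (equivalently, closure of $\calU$ under replacing a $\calU$-subtree by another $\calU$-element). These are not the same in general, though they both hold for $\afrag$. Your reference to Ketema's Lemma~7.2.4 is apt; just be explicit about which closure you invoke.

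The remaining parts of your argument---the monotone-domain invariant at limits via Lemma~\ref{lem:alebotLim}, the successor-length argument for the projected reduction $\subtree s p\mato[\botred]\bot$, and the convergence of $T$ via Proposition~\ref{prop:adepth}/Lemma~\ref{lem:mContConv}---are correct and more carefully spelled out than in the paper.
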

\begin{proof}
  Let $S = (t_\iota\to[p_\iota] t_{\iota+1})_{\iota<\alpha}$, and let
  $P$ be the set of outermost positions of redexes contracted in $S$,
  i.e.\ $P = \setcom{p_\beta}{\beta < \alpha, \forall \iota <
    \alpha\fcolon p_\iota \not< p_\beta}$. Then, for each $p \in P$,
  also $\subtree s p$ is a $\botred$-redex by the definition of
  $\calU_\bot$. Let $T$ be the reduction that contracts all redexes in
  $s$ at positions $P$. By Proposition~\ref{prop:adepth}, this yields
  a reduction $\mrs$-converging to some lambda tree $t'$. Since all
  redexes that are contracted in $S$ are subtrees of some redex
  contracted in $T$, we have that $t' = t$.
\end{proof}

\begin{thmcpy}{Theorem~\ref{thr:bohmredPConv}}
  Let $\bohmred = \bohmred[\afrag]$ and $s \mato[\bohmred] t$ such
  that $s$ is total. Then $s \pato[\betared] t$ if $\ola = 111$ or $t$
  is a $\botred$-normal form.
\end{thmcpy}
\begin{proof}[Proof of Theorem~\ref{thr:bohmredPConv}]
  By Lemmas~\ref{lem:botredEventual} and \ref{lem:botredOmega}, we
  find reductions $S\fcolon s \mato[\betared] s'$ and $T\fcolon s'
  \mato[\botred] t$, where $T$ contracts disjoint $\botred$-redexes in
  $s'$. By Proposition~\ref{prop:mconvPconv}, we have that $S\fcolon s
  \pato[\betared] s'$ and that $T\fcolon s' \pato[\botred] t$. Let $u
  \to[\botred,p] v$ be a step in $T$. Then $\subtree u p \in
  \afrag_\bot$. Because $s$ is total, so is $s'$. Together with the
  fact that all steps in $T$ occur at disjoint positions, this implies
  that $\subtree u p$ is total and thus an element of
  $\afrag$. Consequently, we have a destructive reduction starting in
  $\subtree u p$. By embedding this reduction in $u$ at position $p$,
  we obtain a reduction $U\fcolon u \pato[\betared] u'$ that has $p$
  as a volatile position. Following Lemma~\ref{lem:volatileBot}, we
  only have to show that $p$ is outermost-volatile in $U$ in order to
  obtain that $u' = v$. Since all steps in $U$ take place at $p$ or
  below, $p$ can only fail to be outermost-volatile if it is
  strict. We show that $p$ is non-strict. If $p = \emptyseq$, this is
  trivial. Otherwise, $p = q \concat \seq i$. If $\ola = 111$, then
  $p$ is non-strict. Otherwise, $t$ must be a $\botred$-normal form
  according to the assumption. Moreover, we know that $\subtree t p =
  \bot$. Hence, $\subtree t q \neq \bot$, and, according to
  Lemma~\ref{lem:botAfrag}, $\subtree t q \in \afrag_\bot$ whenever
  $a_i = 0$. That means, $\subtree t q$ is a $\botred$-redex whenever
  $a_i = 0$, which contradicts the assumption that $t$ is a
  $\botred$-normal form. Hence, $a_i = 1$ and, thus, $p$ is non-strict.

  Let $T'$ be the reduction that is obtained from $T$ by replacing
  each step $u \to[\botred,p] v$ with a reduction $u \pato[\betared]
  v$ as constructed above. Clearly, we then have that $T'\fcolon s'
  \pato[\betared] t$.
\end{proof}
Note that the restriction to total lambda trees $s$ is crucial: if
$a_0 = 0$, then we have a single step reduction $\abstree x \bot
\to[\bohmred] \bot$ to a $\botred$-normal form, but there is no
$\prs$-converging reduction $\abstree x \bot \pato[\betared] \bot$ as
$\abstree x \bot$ is a $\betared$-normal form.\\[.7em]
\begin{thmcpy}{Theorem~\ref{thr:prsNormalising}}
  For each $s \in \aptree$, there is a normalising reduction $s
  \pato[\betaSred] t$.
\end{thmcpy}
\begin{proof}[Proof Proof of Theorem~\ref{thr:bohmNormalising}]
  For each lambda tree $u$ and non-strict position $p\in \dom{u}$, we
  have the following: If $\subtree u p$ is not active, then there is a
  finite reduction $u \fto*[\betared] v$, where $\subtree v p$ is
  stable. If, on the other hand, $\subtree u p$ is active, then it is,
  according to Lemma~\ref{lem:aactAfrag}, also fragile. Consequently,
  we find a reduction $S\fcolon u \pato[\betared] v$, in which $p$ is
  volatile. Hence, according to Lemma~\ref{lem:volatileBot}, we have
  that $p\nin \dom v$, i.e.\ subtree at $p$ has been annihilated.

  By performing the above reductions starting with $s$ at root
  position and proceeding at positions of increasing depth, we obtain
  a $\prs$-converging reduction $s \pato[\betared] t$ such that each
  subtree of $t$ is stable. That is, $t$ is a $\betared$-normal form.

  We also find a reduction $s \pato[\betasred] u$ to
  $\betasred$-normal form $u$ by extending the $\betared$-reduction
  $s \pato[\betared] t$ with a $\strictred$-reduction
  $t \pato[\strictred] u$ that consecutively contracts all
  $\strictred$-redexes:
  \[
    t = t_0 \pato[\strictred] t_1 \pato[\strictred] t_2 \pato[\strictred] \dots
  \]
  where each reduction $t_i \pato[\strictred] t_{i+1}$ is a complete
  development (cf.\ Section~\ref{sec:infin-strip-lemma}) of all
  $\strictred$-redexes in $t_i$. Since each contraction of a
  $\strictred$-redex at depth $d$ creates at most one new redex at
  depth $d-1$ and no other redexes, this process will terminate. In
  other words, there is some $n < \omega$ such that $t_n$ is a
  $\strictred$-normal form. Since contraction of $\strictred$-redexes
  creates no $\betared$-redexes and $t_0$ is a $\betared$-normal form,
  we know that $t_n$ is a $\betasred$-normal form.
\end{proof}

\begin{lemma}
  \label{lem:fintoRedex}
  If $s \pato[\betasred] t$ contracts a $\betared$-redex at position
  $p$, then there is a finite reduction $s \fto*[\betared] u$ to a
  term $u$ with a $\betared$-redex occurrence at $p$.
\end{lemma}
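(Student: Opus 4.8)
The plan is to prove a stronger \emph{finite approximation} statement by transfinite induction on the length of the reduction, and then instantiate it. Concretely, I would establish the following claim: whenever $S\fcolon s \pato[\betasred] t$ and $P \subseteq \dom t$ is a finite, prefix-closed set of positions, there is a \emph{finite} $\betared$-reduction $s \fto*[\betared] u$ with $u(q) = t(q)$ for all $q \in P$. The lemma follows immediately: if $S$ contracts a $\betared$-redex at $p$ via a step $t_\iota \to[\betared, p] t_{\iota+1}$, then $t_\iota$ has a $\betared$-redex at $p$, i.e.\ $t_\iota(p) = @$ and $t_\iota(p\concat\seq 1) = \lambda$. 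Applying the claim to the prefix $\prefix S \iota \fcolon s \pato[\betasred] t_\iota$ and the finite prefix-closed set $P$ of all prefixes of $p\concat\seq 1$ yields $s \fto*[\betared] u$ with $u(p) = @$ and $u(p\concat\seq 1) = \lambda$, hence a $\betared$-redex occurrence at $p$. A guiding observation is that $\strictred$-steps never help build up defined labels: every $\strictred$-rule has right-hand side $\bot$, so a contraction at $q$ removes $q$ from the domain. This is exactly why the approximating reduction can be taken to be pure $\betared$.

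The \textbf{base case} ($S$ empty, $s=t$) is trivial. For \textbf{limit length} $\alpha$, $S$ is an open reduction $\prs$-converging to $t$, so Lemma~\ref{lem:pconvOpenPos}(i) gives, for each $q \in \dom t$, an index $\beta_q < \alpha$ with $t_\iota(q) = t(q)$ for all $\beta_q \le \iota < \alpha$. As $P$ is finite I would take $\beta = \max_{q \in P} \beta_q < \alpha$, so that $t_\beta(q) = t(q)$ for all $q \in P$, and apply the induction hypothesis to the shorter reduction $\prefix S \beta \fcolon s \pato[\betasred] t_\beta$ with the same set $P$ (still prefix-closed and now $\subseteq \dom{t_\beta}$).

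The interesting case is \textbf{successor length} $\kappa+1$ with last step $t_\kappa \to[\betasred, q] t$. I split $P$ into $P_0 = \{r \in P : q \not\le r\}$ and $P_1 = \{r \in P : q \le r\}$; positions in $P_0$ are untouched, so $t_\kappa(r) = t(r)$ there. If $P_1 = \emptyset$ (in particular whenever the last step is a $\strictred$-step, since then $q \nin \dom t$ and prefix-closedness of $\dom t$ forbids $q \le r \in P$), I apply the induction hypothesis to $\prefix S \kappa$ with $P$ and discard the last step. Otherwise $P_1 \ne \emptyset$ forces the last step to be a $\betared$-step contracting $\subtree{t_\kappa}{q} = (\abstree x M)\, N$ to $\subtree t q = \subst M x N$. \emph{The hard part will be} choosing a finite prefix-closed $Q \subseteq \dom{t_\kappa}$ large enough that the labels of $\subst M x N$ at the finitely many positions $\{r' : q\concat r' \in P_1\}$ are already determined by the restriction of $t_\kappa$ to $Q$. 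Unwinding the definition of substitution on lambda trees, each such label is either a label of $M$ (with its bound-variable pointer possibly shifted) or a label of $N$ reached through an occurrence of $x$ in $M$; only finitely many positions of $M$ (under $q\concat\seq 1\concat\seq 0$) and of $N$ (under $q\concat\seq 2$) are involved, so adjoining these, together with $P_0$, $q$, $q\concat\seq 1$, and all prefixes, gives a finite prefix-closed $Q$.

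Finally, the induction hypothesis applied to $\prefix S \kappa$ with $Q$ yields a finite $\betared$-reduction $s \fto*[\betared] u'$ with $u'(r) = t_\kappa(r)$ for all $r \in Q$; in particular $u'$ has a $\betared$-redex at $q$ (as $q, q\concat\seq 1 \in Q$), which I contract via $u' \to[\betared, q] u$. By the choice of $Q$ the substitution in $u'$ reproduces the labels of $t$ on $P_1$, while $P_0$-labels carry over unchanged, so $u(r) = t(r)$ for all $r \in P$, closing the induction. The only genuinely delicate point is the substitution bookkeeping in the successor case (tracking ancestors of positions through $\subst M x N$ and the pointer adjustment for bound variables); the base, limit, and instantiation steps are routine.
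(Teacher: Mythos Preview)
Your proposal is correct and follows the same overall strategy as the paper. In fact, your auxiliary ``finite approximation'' claim is precisely the paper's Finitary Approximation Lemma (Lemma~\ref{lem:finApprox}), which the paper proves separately and then invokes in a two-line proof of Lemma~\ref{lem:fintoRedex}: take the prefix $\prefix S \iota$ ending at the lambda tree with the $\betared$-redex at $p$, then apply the approximation lemma to the singleton $\{p\concat\seq 1\}$.

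The one methodological difference worth noting is in the successor case. Where you unfold the definition of substitution by hand to compute, for each position in $P_1$, which finitely many positions of $M$ and $N$ determine its label, the paper instead packages this bookkeeping once and for all via the machinery of descendants and ancestors (Definition~\ref{def:desc}, Lemma~\ref{lem:ancestor}, Lemma~\ref{lem:descPreserveLabel}): for each $r \in P$ there is a unique ancestor $r' \in \dom{t_\kappa}$ with $r \in \desc{r'}{\phi}$ and $t_\kappa(r') = t(r)$, so $Q = \{r' : r \in P\} \cup \{q\concat\seq 1\}$ works. This buys reusability (the same machinery drives the Infinitary Strip Lemma) and avoids the delicate pointer-adjustment argument you flag as ``the hard part''. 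Your direct approach is perfectly sound but reinvents a special case of that machinery inline. The prefix-closedness hypothesis you impose on $P$ is harmless but unnecessary; the paper's version of the approximation lemma does without it.
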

\begin{proof}
  Let $S\fcolon s \pato[\betasred] s'$ be the prefix of
  $s \pato[\betasred] t$ that converges to a lambda tree $s'$ that has
  a $\betared$-redex occurrence at position $p$. This means that
  $s'(p\concat \seq 1) = \lambda$. By Lemma~\ref{lem:finApprox}, there
  is a finite reduction $s \fto*[\betared] t$ with
  $t(p\concat \seq 1) = \lambda$. That is, $t$ has a $\betared$-redex
  occurrence at $p$.
\end{proof}

\begin{lemma}
  \label{lem:botRedFiniteRedex}
  Let $S\fcolon t \pato[\betasred] \bot$. Then there is a finite
  reduction $t \fto*[\betasred] u$ such that either $u = \bot$ or $u$
  has a $\betared$-redex occurrence at $\ola$-depth $0$.
\end{lemma}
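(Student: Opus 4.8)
The plan is to pinpoint the first step of $S$ acting at $\ola$-depth $0$ and to branch on whether it is a $\betared$- or a $\strictred$-step. If $t = \bot$ the empty reduction gives $u = \bot$, so assume $t \neq \bot$ and write $S = (t_\iota \to[p_\iota,c_\iota] t_{\iota+1})_{\iota<\alpha}$ with $t_\alpha = \bot$.

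The first ingredient is a \emph{depth-$0$ preservation} property: a $\prs$-converging reduction all of whose steps satisfy $\adepth{p_\iota}\ge 1$ (equivalently $\acut{p_\iota}\neq\emptyseq$) agrees with its starting tree at every position of $\ola$-depth $0$, both in domain and in label. For a single step this is immediate from Lemma~\ref{lem:reductionContextCut}, because $c_\iota=\posminus{t_\iota}{\acut{p_\iota}}=\posminus{t_{\iota+1}}{\acut{p_\iota}}$ and every position $d$ with $\adepth d = 0$ satisfies $\acut{p_\iota}\not\le d$ (as $\adepth{\cdot}$ is monotone along prefixes and $\adepth{\acut{p_\iota}}=\adepth{p_\iota}\ge 1$); the limit case follows by transfinite induction from the description of the limit inferior in Theorem~\ref{thr:alebotCpo} and Proposition~\ref{prop:alebotGlb}. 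Applying this invariant to all of $S$ shows that if $S$ had \emph{no} step at $\ola$-depth $0$, then $t$ would agree with $t_\alpha = \bot$ at $\ola$-depth $0$, in particular $\emptyseq\nin\dom t$, i.e.\ $t=\bot$ --- contrary to assumption. Hence there is a least $\gamma<\alpha$ with $\adepth{p_\gamma}=0$, and by minimality $\prefix S\gamma\fcolon t\pato[\betasred] t_\gamma$ has only steps of $\ola$-depth $\ge 1$, so $t$ and $t_\gamma$ coincide on all positions of $\ola$-depth $0$.

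Now branch on the rule used at $p_\gamma$. If it is a $\betared$-step, then $S$ contracts a $\betared$-redex at the $\ola$-depth-$0$ position $p_\gamma$, and Lemma~\ref{lem:fintoRedex} yields a finite reduction $t\fto*[\betared] u$ to a tree $u$ with a $\betared$-redex occurrence at $p_\gamma$, which is the desired conclusion. If it is a $\strictred$-step, then $\subtree{t_\gamma}{p_\gamma}$ is one of $\bot\,s$, $s\,\bot$, or $\abstree x\bot$, with $\bot$ sitting at a strict child $p_\gamma\concat\seq j$ (so $a_j=0$ and $\adepth{p_\gamma\concat\seq j}=0$). Since $t$ and $t_\gamma$ agree at $\ola$-depth $0$, we have $t(p_\gamma)=t_\gamma(p_\gamma)\in\set{@,\lambda}$ and $p_\gamma\concat\seq j\nin\dom t$, so $t$ itself already carries a $\strictred$-redex at $p_\gamma$.

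It remains to collapse this redex up to the root by finitely many $\strictred$-steps. As $\adepth{p_\gamma}=0$, every edge of $p_\gamma$ is strict; contracting the redex at $p_\gamma=p'\concat\seq i$ places $\bot$ at the strict child $i$ of its parent, which --- being an application or abstraction by the lambda-tree conditions --- is then itself a $\strictred$-redex at $p'$, at strictly smaller $111$-depth but still at $\ola$-depth $0$. Iterating this $\len{p_\gamma}$ further times reaches a $\strictred$-redex at the root and contracts it to $\bot$, giving a finite reduction $t\fto*[\strictred]\bot$, i.e.\ $u=\bot$. The one genuinely delicate point is the depth-$0$ preservation property at limit ordinals; granting that invariant, the remaining case distinction and the collapsing chain are straightforward finite bookkeeping.
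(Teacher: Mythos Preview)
Your argument is correct and takes a genuinely different route from the paper's proof. The paper proceeds by transfinite induction on the length of $S$: it decomposes $S$ at its last limit ordinal as $t \pato[\betasred] s \fto*[\betasred] \bot$ and distinguishes three cases for the finite tail $s \fto*[\betasred] \bot$. If it contains a $\betared$-step at $\ola$-depth $0$, one invokes Lemma~\ref{lem:fintoRedex} as you do. If it is empty (so $S$ is open), the paper appeals to Lemma~\ref{lem:openOneBetaRed} to replace $S$ by a pure $\betared$-reduction to $\bot$, which by Proposition~\ref{prop:destructBot} contains a $\betared$-step at depth $0$. If the tail is non-empty but has no $\betared$-step at depth $0$, then $s$ has a non-root $\bot$ at $\ola$-depth $0$, and Corollary~\ref{cor:pconvOpenPosBot} pulls this $\bot$ back to an earlier stage $u$ of the open prefix $t\pato[\betasred] s$, producing a strictly shorter reduction $t\pato[\betasred] u \fto*[\strictred]\bot$ to which the induction hypothesis applies.

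Your approach instead locates the \emph{first} step at $\ola$-depth $0$ and uses the depth-$0$ preservation invariant to transfer the relevant structure (a $\betared$-redex or an $\strictred$-redex) directly back to $t$. This dispenses with the transfinite induction on $\len S$, the detour through Lemma~\ref{lem:openOneBetaRed}, and the back-propagation via Corollary~\ref{cor:pconvOpenPosBot}; the finitary $\strictred$-collapse you give in the last paragraph replaces the induction step. Your invariant is essentially a two-sided specialisation of Lemma~\ref{lem:pconvOpenPos} to positions of $\ola$-depth $0$ (both directions of domain agreement are needed, and both indeed follow from Lemma~\ref{lem:reductionContextCut} together with the lub/glb descriptions). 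The paper's approach has the virtue of reusing already-established lemmas and following the standard decomposition pattern for infinitary rewriting proofs; yours is more self-contained and arguably more elementary.
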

\begin{proof}
  We proceed by induction on the length of $S$. In case $S$ is finite,
  there is nothing to show. Otherwise, $S$ is of the form
  $t \pato[\betasred] s \fto*[\betasred] \bot$.
  \begin{itemize}
  \item Let $s \fto*[\betasred] \bot$ contain a $\betared$-step at
    $\ola$-depth $0$. By Lemma~\ref{lem:fintoRedex}, there is a finite
    reduction $t \fto*[\betared] u$ where $u$ has a $\betared$-redex
    occurrence at $\ola$-depth $0$.
  \item Let $s \fto*[\betasred] \bot$ be empty, i.e.\
    $S\fcolon t \fto*[\betasred] \bot$ is empty. Then $S$ can be
    turned into an open $\betared$-reduction $t \pato[\betared] \bot$
    according to Lemma~\ref{lem:openOneBetaRed}. Which according to
    Proposition~\ref{prop:destructBot}, contains a $\betared$-step at
    $\ola$-depth $0$. By Lemma~\ref{lem:fintoRedex}, there is a finite
    reduction $t \fto*[\betared] u$ a $\betared$-redex at $\ola$-depth
    $0$.
  \item Let $s \fto*[\betasred] \bot$ be non-empty with no
    $\betared$-step at $\ola$-depth $0$. Then $s$ must contain a
    non-root occurrence of $\bot$ at $\ola$-depth $0$, i.e.\ there is
    some $p \in \domBot s$ with $p\neq \emptyset$ and $\adepth p =
    0$. That means $p = q \concat \seq i$ with $a_i = 0$. By
    Corollary~\ref{cor:pconvOpenPosBot}~\ref{item:pconvOpenPosBot1},
    there is a proper prefix $t \pato[\betasred] u$ of the reduction
    $t \pato[\betasred] s$ such that $p\in \domBot u$, too. Hence,
    $t \pato[\betasred] u\fto*[\strictred] \bot$. Since this reduction
    is strictly shorter than $S$, we may apply the induction
    hypothesis to obtain the desired finite reduction
    $t \fto*[\betasred] v$ such that either $v = \bot$ or $v$ has a
    $\betared$-redex occurrence at $\ola$-depth $0$.
  \end{itemize}
\end{proof}

\begin{thmcpy}{Lemma~\ref{lem:botCompress}}
  If $\ola \in \set{001,101,111}$ and
  $S\fcolon t \pato[\betasred] \bot$, then there is a reduction
  $T\fcolon t \pato[\betasred] \bot$ of length $\le\omega$. If $s$ is
  total, then $T$ is a $\betared$-reduction of length $\omega$.
\end{thmcpy}
\begin{proof}
  By Lemma~\ref{lem:botRedFiniteRedex}, we find a finite reduction
  $t \fto*[\betasred] t_1$ that contracts a redex at $\ola$-depth $0$
  or ends in $\bot$.  By Lemma~\ref{lem:stripLem'} there is also a
  reduction $S'\fcolon t_1 \pato[\betasred] \bot$. Thus we can repeat
  the argument for $S'$ (instead of $S$). By iterating this argument,
  we obtain a reduction
  \[
    T\fcolon t \fto*[\betasred] t_1 \fto*[\betasred] t_2 \fto*[\betasred] \dots
  \]
  that either stops at some $t_n = \bot$ or is of length $\omega$ and
  contracts infinitely many redexes at $\ola$-depth $0$ and thus
  $\prs$-converges to $\bot$ according to
  Proposition~\ref{prop:destructBot}. In either case $T\fcolon t
  \pato[\betared] \bot$.
  
  If $s$ is total then $T$ cannot be finite, as finite
  $\betasred$-reductions preserve totality. Hence, no step in $T$ can
  be an $\strictred$-step.
\end{proof}

\section{Finitary Approximation Lemma and Infinitary Strip Lemma}
\label{sec:infin-strip-lemma}

In order to prove the finitary approximation lemma and the infinitary
strip lemma for $\prs$-converging $\betasred$-reductions, we adopt the
familiar technique of descendants and complete developments.

Throughout this section we consider only $\betasred$-reductions over
$\aptree$. As we develop the theory, we have to make restrictions on
the strictness signatures $\ola$ we consider.

Our definition of complete developments for $\prs$-converging
$\betasred$-reductions is a straightforward adaptation of the concept
from the literature~\cite{kennaway95ic,kennaway03book,ketema11ic}:
\begin{definition}[descendants]
  \label{def:desc}
  Let $S\fcolon t_0 \pato[\betasred] t_\alpha$ of length $\alpha$, and
  $U \subseteq \dom{t_0}$. The \emph{descendants} of $U$ by $S$,
  denoted $\desc{U}{S}$, is a subset of $\dom{t_\alpha}$ inductively
  defined as follows:
  \begin{enumerate}[(a)]
  \item If $S = \emptyseq$, then $\desc{U}{S} = U$.
    \label{item:descA}
  \item If $S = \seq\phi$ with $\phi\fcolon s \to[p] t$, then
    $\desc{U}{S} = \bigcup_{u \in U} R_u$, where:
    \\\noindent
    If $\phi$ is a $\betared$-step:
      \[
        R_u =
        \begin{cases}
          \set u & \text{ if } p \not\le u\\%
          \emptyset &\text{ if } u \in \set{p, p\concat \seq 1}\\%
          \setcom{p \concat q \concat w}{
            \begin{aligned}&s (p\concat\seq{1,0}\concat q) \\& =
              p\concat\seq 1 \end{aligned} }
          & \text{ if } u = p \concat \seq 2 \concat w\\
          \set{p\concat w} &\text{ if } u = p \concat \seq{1,0}\concat
          w
          \text{ and } s(u) \neq p\concat \seq 1\\
          \emptyset &\text{ if } u = p \concat \seq{1,0}\concat w
          \text{ and } s(u) = p\concat \seq 1
        \end{cases}
      \]
      \\\noindent
      If $\phi$ is a $\strictred$-step:
      \[
        R_u =
        \begin{cases}
          \emptyset &\text{ if } p \le u \\
          \set{u}  &\text{ if } p \not\le u
        \end{cases}
      \]

  \item If $S = T \concat \seq \phi$, then $\desc{U}{S} =
    \desc{(\desc{U}{T})}{\seq{\phi}}$
    \label{item:descC}
  \item If $S$ is open, then \quad $\desc{U}{S} = \dom{t_\alpha}
    \cap \liminf_{\iota \limto \alpha} \desc{U}{\prefix{S}{\iota}}$
    \\
    That is, $u \in \desc{U}{S} \quad \text{ iff } \quad u \in
    \dom{t_\alpha} \text{ and } \exists \beta < \alpha \forall
    \beta \le \iota < \alpha\fcolon u \in \desc{U}{\prefix{S}{\iota}}$
    \label{item:descD}
  \end{enumerate}
  If, in particular, $U$ is a set of redex occurrences, then
  $\desc{U}{S}$ is also called the set of \emph{residuals} of $U$ by
  $S$. Moreover, by abuse of notation, we write $\desc{u}{S}$ instead
  of $\desc{\set{u}}{S}$.
\end{definition}

The following lemma provides a more convenient characterisation of
descendants in the case of open reductions.
\begin{lemma}
  \label{lem:descAltChar}
  Let $S = (\phi_\iota\fcolon t_\iota \to[p_\iota] t_{\iota+1})_{\iota<\alpha}$ be an open
  $\betasred$-reduction with $S \fcolon s \pato[\betasred] t$ and $U
  \subseteq \dom s$. Then we have the following:
  \[
  p \in \desc U S \iff \exists \beta < \alpha: p \in \desc{U}{\prefix
    S \beta} \text{ and } \forall \beta\le\iota<\alpha: \acut{p_\iota}
  \not \le p
  \]
\end{lemma}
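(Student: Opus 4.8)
The plan is to prove the two implications separately, exploiting the fact that the cut $\acut{}$ makes the displayed condition strictly stronger than ``$p$ is untouched'' and that Lemma~\ref{lem:pconvOpenPos} converts it into domain membership. Write $S = (\phi_\iota \fcolon t_\iota \to[p_\iota] t_{\iota+1})_{\iota<\alpha}$ with limit $t = t_\alpha$. The one combinatorial fact I will lean on repeatedly is that, by the direct characterisation of $\acut{}$, we always have $\acut{p_\iota} \le p_\iota$; hence $\acut{p_\iota} \not\le p$ implies the literal condition $p_\iota \not\le p$.

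For the \emph{``only if''} direction I would unfold $p \in \desc U S$ via Definition~\ref{def:desc}(d): this gives $p \in \dom t$ together with some $\beta_0 < \alpha$ such that $p \in \desc{U}{\prefix S \iota}$ for all $\beta_0 \le \iota < \alpha$. From $p \in \dom t$, Lemma~\ref{lem:pconvOpenPos}(i) supplies a $\beta_1 < \alpha$ with $\acut{p_\iota} \not\le p$ for all $\beta_1 \le \iota < \alpha$. Setting $\beta = \max(\beta_0, \beta_1)$ then witnesses both halves of the right-hand side.

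For the \emph{``if''} direction I would fix $\beta$ as on the right-hand side. Since descendants always lie in the domain, $p \in \dom{t_\beta}$, so Lemma~\ref{lem:pconvOpenPos}(ii) applied with this $\beta$ yields $t_\iota(p) = t(p)$ for all $\beta \le \iota < \alpha$; in particular $p \in \dom{t_\iota}$ for every such $\iota$ and $p \in \dom t$. With domain membership in hand, I will show $p \in \desc{U}{\prefix S \iota}$ for all $\beta \le \iota \le \alpha$ by transfinite induction on $\iota$. The base case is the hypothesis; at a successor $\iota+1$ the step $\phi_\iota$ fires at $p_\iota$ with $p_\iota \not\le p$, so the residual set $R_p$ of Definition~\ref{def:desc}(b) equals $\set p$ in both the $\betared$- and $\strictred$-cases, giving $p \in \desc{U}{\prefix S (\iota+1)}$; at a limit $\lambda$ the induction hypothesis provides eventual descendant-ness and $p \in \dom{t_\lambda}$ was already secured above, so Definition~\ref{def:desc}(d) applies. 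Instantiating at $\iota = \alpha$ (a limit, as $S$ is open) gives $p \in \desc{U}{\prefix S \alpha} = \desc U S$.

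The hard part will be the limit case of this induction: clause (d) insists on $p \in \dom{t_\lambda}$ \emph{in addition to} eventual descendant-ness, and it is exactly here that the cut $\acut{}$ is indispensable rather than the weaker $p_\iota \not\le p$. The reduction-context condition $\acut{p_\iota} \not\le p$ is what lets Lemma~\ref{lem:pconvOpenPos}(ii) certify that $p$ never drops out of the domain across the limit, whereas a position merely untouched by each individual step could still disappear at a limit ordinal through an accumulating collapse weakly above it.
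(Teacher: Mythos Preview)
Your proof is correct and follows essentially the same approach as the paper's: both directions invoke Lemma~\ref{lem:pconvOpenPos} in the same way, and the ``if'' direction is a transfinite induction with the successor case handled via $p_\iota \not\le p$ and the limit case via domain membership from Lemma~\ref{lem:pconvOpenPos}(ii). The only cosmetic difference is that you establish $p \in \dom{t_\lambda}$ for all limits at once up front by applying Lemma~\ref{lem:pconvOpenPos}(ii) to the whole reduction $S$, whereas the paper re-applies it to each open prefix $\prefix S \gamma$ inside the induction; both are valid.
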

\begin{proof}
  We first prove the ``$\Longrightarrow$'' direction. To this end, we
  assume some $p \in \desc U S$. Consequently, $p \in \dom t$ and
  there is some $\beta_1 < \alpha$ such that $p \in \desc{U}{\prefix S
    \iota}$ for all $\beta_1 \le \iota < \alpha$. According to
  Lemma~\ref{lem:pconvOpenPos}~\ref{item:pconvOpenPos1}, we thus also
  find some $\beta_2 < \alpha$ such that $\acut{p_\iota} \not\le p$
  for all $\beta_2\le \iota < \alpha$. Consequently, given $\beta =
  \max\set{\beta_1,\beta_2}$, we have that $p \in \desc{U}{\prefix S
    \beta}$ and that $\acut{p_\iota} \not\le p$ for all
  $\beta\le\iota<\alpha$.

  For the ``$\Longleftarrow$'' direction, we show by induction on
  $\gamma$ that $p \in \desc{U}{\prefix S \gamma}$ for all $\beta \le
  \gamma \le \alpha$.

  The case $\gamma = \beta$ is trivial. Let $\gamma = \gamma' + 1 >
  \beta$. That is, $\desc{U}{\prefix S \gamma} =
  \desc{\left(\desc{U}{\prefix S
        {\gamma'}}\right)}{\seq{\phi_{\gamma'}}}$. By the induction
  hypothesis, we know that $p \in \desc{U}{\prefix{S}{\gamma'}}$.
  Moreover, $\acut{p_{\gamma'}} \not\le p$ implies that $p_{\gamma'}
  \not\le p$. Consequently, $p \in \desc{U}{\prefix{S}{\gamma'}}$
  implies $p \in \desc{U}{\prefix{S}{\gamma}}$.

  Let $\gamma$ be a limit ordinal. By the induction hypothesis, we
  know that $p \in \desc{U}{\prefix S \iota}$ for all $\beta \le \iota
  < \gamma$. Hence, it remains to be shown that $p \in
  \dom{t_\gamma}$. Since $p \in \desc{U}{\prefix S \beta}$, we know
  that $p \in \dom{t_\beta}$. The latter, combined with the assumption
  that $\acut{p_\iota} \not \le p$ for all $\beta\le\iota<\gamma$,
  implies by Lemma~\ref{lem:pconvOpenPos}~\ref{item:pconvOpenPos2}
  that $p \in \dom{t_\gamma}$.
\end{proof}

\begin{lemma}[monotonicity]
  \label{lem:descMonotone}
  Let $S\fcolon s \pato[\betasred] t$ and $U,V \subseteq \dom s$. If $U
  \subseteq V$, then $\desc U S \subseteq \desc V S$.
\end{lemma}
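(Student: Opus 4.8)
The plan is to prove the inclusion by transfinite induction on the length $\alpha = \len{S}$ of the reduction $S$, mirroring the four clauses of Definition~\ref{def:desc}. The only genuinely new ingredient is monotonicity for a single step; the remaining clauses merely propagate it upward, either by appending a single final step (clause~(c)) or through the limit-inferior characterisation of descendants along open reductions (clause~(d)).

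First I would record the single-step case as the workhorse. For a one-step reduction $\seq\phi$ with $\phi\fcolon s\to[p]t$ the definition gives $\desc{U}{\seq\phi} = \bigcup_{u\in U}R_u$, where $R_u\subseteq\dom{t}$ depends only on $u$ and $\phi$, not on the ambient set. Hence $U\subseteq V$ immediately yields $\bigcup_{u\in U}R_u\subseteq\bigcup_{u\in V}R_u$, i.e.\ $\desc{U}{\seq\phi}\subseteq\desc{V}{\seq\phi}$. This observation is purely set-theoretic and applies uniformly to both $\betared$- and $\strictred$-steps, since in either case $R_u$ is defined elementwise in $u$.

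With this in hand the induction is routine. In the base case $\alpha=0$ we have $\desc{U}{S}=U\subseteq V=\desc{V}{S}$ by clause~(a). If $\alpha=\beta+1$ is a successor, write $S=\prefix{S}{\beta}\concat\seq\phi$ with $T=\prefix{S}{\beta}$ of length $\beta<\alpha$; the induction hypothesis gives $\desc{U}{T}\subseteq\desc{V}{T}$, and applying the single-step observation to these two sets together with the final step $\phi$ yields $\desc{U}{S}=\desc{(\desc{U}{T})}{\seq\phi}\subseteq\desc{(\desc{V}{T})}{\seq\phi}=\desc{V}{S}$ by clause~(c). If $\alpha$ is a limit ordinal, then $S$ is open and clause~(d) characterises membership as $u\in\desc{U}{S}$ iff $u\in\dom{t_\alpha}$ and there is some $\beta<\alpha$ with $u\in\desc{U}{\prefix{S}{\iota}}$ for all $\beta\le\iota<\alpha$. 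Since each $\prefix{S}{\iota}$ has length $\iota<\alpha$, the induction hypothesis gives $\desc{U}{\prefix{S}{\iota}}\subseteq\desc{V}{\prefix{S}{\iota}}$, so the very same witness $\beta$ and the same condition $u\in\dom{t_\alpha}$ certify $u\in\desc{V}{S}$.

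I do not expect any real obstacle: the statement is essentially a structural monotonicity fact, and the elementwise form of $R_u$ makes even the single-step case immediate. The only point requiring a little care is the limit case, where one must observe that the implication $u\in\desc{U}{\prefix{S}{\iota}}\Rightarrow u\in\desc{V}{\prefix{S}{\iota}}$ is pointwise in $u$ and uniform in $\iota$, so the existential witness $\beta<\alpha$ chosen for $U$ transfers verbatim to $V$ with no re-indexing; in particular no appeal to the underlying order or to $\prs$-convergence is needed.
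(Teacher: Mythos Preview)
Your proof is correct and follows essentially the same approach as the paper: transfinite induction on the length of $S$, with the limit case handled via the clause~(d) characterisation and the same witness $\beta$ transferred verbatim. The only difference is cosmetic---you isolate the single-step monotonicity as a preliminary observation, whereas the paper folds it implicitly into the successor case.
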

\begin{proof}
  We prove this statement by induction on the length of $S$. If $S$ is
  empty, the statement is trivial. If $S = T \concat \seq \phi$, then
  \[
  \desc U S = \desc{\left(\desc U T\right)}{\seq \phi}
  \stackrel{\text{IH}}\subseteq \desc{\left(\desc V T\right)}{\seq
    \phi} = \desc V S
  \]

  Let $S$ be open, $\alpha =\len S$, and $p \in \desc U S$. Then $p
  \in \dom t$ and there is some $\beta < \alpha$ such that $p \in
  \desc{U}{\prefix S \iota}$ for all $\beta \le \iota <
  \alpha$. According to the induction hypothesis, we then have that $p
  \in \desc{V}{\prefix S \iota}$ for all $\beta \le \iota <
  \alpha$. Consequently, $p \in \desc V S$.
\end{proof}

\begin{proposition}
  \label{prop:descSingle}
  Let $S \fcolon s \pato[\betasred] t$ and $U \subseteq \dom s$. Then
  $\desc{U}{S} = \bigcup_{u\in U}\desc u S$.
\end{proposition}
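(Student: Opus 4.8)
The plan is to proceed by induction on the length of $S$, following the clause structure of Definition~\ref{def:desc}. The two base cases are immediate: if $S = \emptyseq$, then both sides equal $U$ by clause \ref{item:descA}; if $S = \seq\phi$ is a single step, then clause (b) gives $\desc U S = \bigcup_{u\in U} R_u$, while $\desc{\set u}{S} = R_u$ for each $u$, so the two sides coincide by construction. This single-step distributivity is the engine that drives the remaining cases.

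For a reduction of successor length, I would write $S = T\concat\seq\phi$ and apply clause \ref{item:descC}, namely $\desc U S = \desc{(\desc U T)}{\seq\phi}$. Using single-step distributivity on the set $\desc U T$, then the induction hypothesis on $T$ to rewrite $\desc U T = \bigcup_{u\in U}\desc{\set u}T$, and finally single-step distributivity once more on each $\desc{\set u}T$, I obtain
\[
  \desc U S = \bigcup_{v\in\desc U T}\desc{\set v}{\seq\phi} = \bigcup_{u\in U}\desc{(\desc{\set u}T)}{\seq\phi} = \bigcup_{u\in U}\desc{\set u}S,
\]
where the middle equality merely regroups the union according to which $u\in U$ produced the intermediate position $v$.

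The open (limit-length) case is where the real work lies and is the main obstacle, since clause \ref{item:descD} defines descendants through an ``eventually'' condition whose interaction with an arbitrary union over $U$ is not transparent. To tame this I would use the alternative characterization of Lemma~\ref{lem:descAltChar}, which localizes membership at each position $p$: one has $p\in\desc U S$ iff there is some $\beta<\alpha$ with $p\in\desc U{\prefix S\beta}$ and $\acut{p_\iota}\not\le p$ for all $\beta\le\iota<\alpha$. Fixing a position $p$, the induction hypothesis applied to the proper prefix $\prefix S\beta$ turns $p\in\desc U{\prefix S\beta}$ into the existence of some $u\in U$ with $p\in\desc{\set u}{\prefix S\beta}$; feeding this back into Lemma~\ref{lem:descAltChar} for the singleton $\set u$ yields $p\in\desc{\set u}S$, and conversely the reverse inclusion follows from monotonicity (Lemma~\ref{lem:descMonotone}). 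Crucially, the side condition $\acut{p_\iota}\not\le p$ depends only on $p$ and not on the chosen $u$, so the same witness $\beta$ serves both directions; hence no uniformity across $U$ is required and the desired equality follows positionwise.
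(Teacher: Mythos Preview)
Your proof is correct and follows essentially the same approach as the paper: induction on the length of $S$, with the successor case handled by combining single-step distributivity with the induction hypothesis, and the limit case handled via Lemma~\ref{lem:descAltChar} (for the $\subseteq$ direction) together with monotonicity (Lemma~\ref{lem:descMonotone}, for $\supseteq$). Your exposition of the limit case, in particular the observation that the side condition $\acut{p_\iota}\not\le p$ is independent of the choice of $u$, is slightly more explicit than the paper's but otherwise identical in substance.
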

\begin{proof}
  We prove this proposition by induction on the length of $S$.  The
  cases $S = \emptyseq$ and $S = \seq \phi$ are trivial.

  If $S = T \concat \seq\phi$, we can reason as follows:
  \begin{align*}
    \desc{U}{S} & =\desc{(\desc{U}{T})}{\seq\phi}
    \stackrel{IH}{=} \desc{\underbrace{(\bigcup_{u \in
          U} \overbrace{\desc{u}{T}}^{V_u})}_V}{\seq\phi}
    \stackrel{IH}= \bigcup_{u\in V} \desc{u}{\seq\phi}
    \\
    &= \bigcup_{u \in U} \bigcup_{v \in V_u}
    \desc{v}{\seq\phi}
    \stackrel{IH}= \bigcup_{u\in U} \desc{V_u}{\seq\phi}
    = \bigcup_{u\in U} \desc{(\desc{u}{T})}{\seq\phi}
    = \bigcup_{u \in U} \desc{u}{S}
  \end{align*}
  
  Let $S$ be open. The ``$\supseteq$'' follows from
  Lemma~\ref{lem:descMonotone}. For the converse direction, we assume
  $S = (t_\iota \to[p_\iota] t_{\iota+1})_{\iota<\alpha}$ and $p \in
  \desc U S$. By Lemma~\ref{lem:descAltChar}, there is some $\beta <
  \alpha$ such that $p \in \desc{U}{\prefix S \beta}$ and
  $\acut{p_\iota} \not\le p$ for all $\beta \le\iota<\alpha$. Hence,
  by induction hypothesis, $p \in \bigcup_{u\in U} \desc{u}{\prefix S
    \beta}$. That is, there is some $u^*\in U$ with $p \in
  \desc{u^*}{\prefix S \beta}$. By Lemma~\ref{lem:descAltChar}, we may
  thus conclude that $p \in \desc{u^*}{S}$ and thus $p \in
  \bigcup_{u\in U}\desc{u}{S}$.
\end{proof}

\begin{proposition}
  \label{prop:descUnique}
  Let $S \fcolon s \pato[\betasred] t$ and $U,V \subseteq \dom s$. If
  $U \cap V = \emptyset$, then $\desc U S \cap \desc V S = \emptyset$.
\end{proposition}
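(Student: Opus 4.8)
The plan is to proceed by induction on the length $\alpha$ of $S$, using the same case split (empty, successor, open) as in the proofs of Proposition~\ref{prop:descSingle} and Lemma~\ref{lem:descMonotone}. The base case $S = \emptyseq$ is immediate: here $\desc U S = U$ and $\desc V S = V$, so $\desc U S \cap \desc V S = U \cap V = \emptyset$ by assumption.

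For the successor case $S = T \concat \seq\phi$, the induction hypothesis gives $\desc U T \cap \desc V T = \emptyset$, so it remains to show that a single step preserves disjointness. By the single-step clause of Definition~\ref{def:desc}, $\desc{U'}{\seq\phi} = \bigcup_{u\in U'}R_u$ and similarly for $V'$, where $U' = \desc U T$ and $V' = \desc V T$; hence it suffices to prove $R_u \cap R_{u'} = \emptyset$ whenever $u \neq u'$. For a $\strictred$-step this is trivial, since each $R_u$ is either $\emptyset$ or $\set u$. For a $\betared$-step at position $p$, I would case on the location of $u$ and $u'$ relative to $p$: residuals of positions with $p \not\le u$ are the singletons $\set u$ lying outside $p$, whereas residuals of positions inside the redex all have $p$ as a prefix, so an outside/inside pair is automatically disjoint. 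The remaining sub-cases (both in the body, both in the argument, or one in each) use that a bound-variable occurrence is a leaf of the lambda tree: two distinct occurrence positions are incomparable and nothing lies strictly below one, which rules out any coincidence among the replicated residuals.

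For the open case, suppose towards a contradiction that some $p \in \desc U S \cap \desc V S$, where $S$ has limit length $\alpha$. By the characterisation of open descendants (Definition~\ref{def:desc}\ref{item:descD}), there are $\beta_U, \beta_V < \alpha$ with $p \in \desc{U}{\prefix S \iota}$ for all $\beta_U \le \iota < \alpha$ and $p \in \desc{V}{\prefix S \iota}$ for all $\beta_V \le \iota < \alpha$. Setting $\beta = \max\set{\beta_U,\beta_V} < \alpha$, we obtain $p \in \desc{U}{\prefix S \beta} \cap \desc{V}{\prefix S \beta}$. Since $\prefix S \beta$ is a proper prefix of length $\beta < \alpha$, the induction hypothesis gives $\desc{U}{\prefix S \beta} \cap \desc{V}{\prefix S \beta} = \emptyset$, a contradiction.

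The main obstacle is the single-step $\betared$-case: the residual of a position in the argument is duplicated once for every occurrence of the bound variable in the body, so one must exclude that a residual of one argument position coincides with a residual of another, or with a residual of a body position. This is exactly where the incomparability of distinct bound-variable occurrence positions is needed; the other sub-cases and the remaining two induction steps are routine bookkeeping.
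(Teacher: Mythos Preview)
Your proof is correct and follows essentially the same transfinite induction on the length of $S$ as the paper, with the same treatment of the open case (take $\beta = \max\{\beta_U,\beta_V\}$ and invoke the induction hypothesis on the proper prefix). The only cosmetic difference is that the paper first reduces to singletons via Proposition~\ref{prop:descSingle} and then shows that a common descendant forces $u = v$, whereas you prove disjointness of the descendant sets directly; both arguments hinge on the single-step fact $R_u \cap R_{u'} = \emptyset$ for $u \neq u'$, which you spell out explicitly (and correctly, using that bound-variable occurrences are leaves) while the paper treats it as immediate.
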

\begin{proof}
  We show the contraposition of the above statement. To this end,
  assume that there is some $w \in \desc U S \cap \desc V S$. By
  Proposition~\ref{prop:descSingle}, we thus find some $u \in U, v \in
  V$ with $w \in \desc u S \cap \desc v S$. We show by induction on
  the length of $S$ that $u = v$, which then implies that $U \cap V
  \neq \emptyset$.

  The case $S=\emptyseq$ is trivial and the case that $S = T \concat
  \seq \phi$ follows immediately from the induction hypothesis.

  Let $S$ be open with length $\alpha$. Since $w \in \desc u S \cap
  \desc v S$, we find $\beta_1,\beta_2$ such that $w \in
  \desc{u}{\prefix S \iota}$ for all $\beta_1 \le \iota < \alpha$ and
  $w \in \desc{v}{\prefix S \iota}$ for all $\beta_2 \le \iota <
  \alpha$. Given $\beta = \max\set{\beta_1,\beta_2}$, we thus have
  that $w \in \desc{u}{\prefix S \beta} \cap \desc{v}{\prefix S
    \beta}$. Hence, by induction hypothesis, we have that $u = v$.
\end{proof}

By combining Proposition~\ref{prop:descSingle} and
Proposition~\ref{prop:descUnique}, we know that for each $p \in \desc
U S$ there is a unique $q \in U$ such that $p\in \desc q S$. This
unique position $q$ is also called an \emph{ancestor}. Moreover, we
can show that every position in a lambda tree has an ancestor:
\begin{lemma}
  \label{lem:ancestor}
  Let $S \fcolon s \pato[\betasred] t$ and $p \in \dom t$. Then there
  is a unique $q \in \dom s$ with $p \in \desc q S$.
\end{lemma}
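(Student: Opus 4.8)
The plan is to establish existence of an ancestor by transfinite induction on the length $\alpha = \len S$ of $S$, and to dispatch uniqueness separately and at once. For uniqueness, suppose $p \in \desc{q_1}{S} \cap \desc{q_2}{S}$ for distinct $q_1,q_2 \in \dom s$. Since $\set{q_1}\cap\set{q_2} = \emptyset$, Proposition~\ref{prop:descUnique} forces $\desc{q_1}{S}\cap\desc{q_2}{S} = \emptyset$, contradicting the fact that $p$ lies in the intersection. Hence at most one ancestor exists, and it remains only to show that every $p \in \dom t$ is a descendant of some $q \in \dom s$, i.e.\ $\dom t \subseteq \desc{\dom s}{S}$.

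In the base case $S = \emptyseq$ we have $t = s$ and $\desc p S = \set p$, so $q = p$ works. For the successor case $S = T\concat\seq\phi$ with $\phi\fcolon s' \to[p'] t$, I would first prove the single-step claim that every $p \in \dom t$ lies in $\desc{q'}{\seq\phi}$ for some $q' \in \dom{s'}$, and then lift it: by the induction hypothesis applied to $T\fcolon s \pato[\betasred] s'$ (a prefix of a $\prs$-converging reduction is $\prs$-converging) there is $q \in \dom s$ with $q' \in \desc q T$, whence by Lemma~\ref{lem:descMonotone} and Definition~\ref{def:desc}\ref{item:descC} we obtain $p \in \desc{q'}{\seq\phi} \subseteq \desc{(\desc q T)}{\seq\phi} = \desc q S$. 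The single-step claim is immediate for an $\strictred$-step, where $t = \posminus{s'}{p'}$ and every $p \in \dom t$ has $p' \not\le p$, so $p \in \desc p{\seq\phi}$; for a $\betared$-step contracting $(\abstree x u)\,v$ at $p'$ it reduces to a routine but fiddly case analysis tracing each position of the contractum $\subst u x v$ back through the explicit substitution: positions with $p' \not\le p$ are their own ancestors, body positions $p'\concat w$ descend from $p'\concat\seq{1,0}\concat w$, and a position inside a copied argument descends from the matching position $p'\concat\seq 2\concat w$ of $v$.

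The limit case is the conceptual heart. Given $p \in \dom t$, Lemma~\ref{lem:pconvOpenPos}\ref{item:pconvOpenPos1} supplies some $\beta < \alpha$ with $t_\iota(p) = t(p)$ and $\acut{p_\iota}\not\le p$ for all $\beta\le\iota<\alpha$; in particular $p \in \dom{t_\beta}$. Applying the induction hypothesis to the prefix $\prefix S\beta\fcolon s \pato[\betasred] t_\beta$, of length $\beta < \alpha$, yields $q \in \dom s$ with $p \in \desc q{\prefix S\beta}$. Since moreover $\acut{p_\iota}\not\le p$ for all $\beta\le\iota<\alpha$, the open-reduction characterisation of descendants, Lemma~\ref{lem:descAltChar}, promotes this to $p \in \desc q S$, as required.

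I expect the main obstacle to be the single-step $\betared$ bookkeeping: one must match every position of $\subst u x v$ to exactly one residual clause of Definition~\ref{def:desc}, taking care that occurrences of the bound variable in the body vanish (empty residual set) while $v$ is duplicated once per such occurrence, so that the residual sets neither overlap nor miss a position. This is purely mechanical given the explicit form of substitution, but it carries the genuine content of the proof; the limit step, by contrast, is clean once Lemma~\ref{lem:pconvOpenPos} and Lemma~\ref{lem:descAltChar} are in hand.
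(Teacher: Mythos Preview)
Your proposal is correct and follows essentially the same approach as the paper: uniqueness via Proposition~\ref{prop:descUnique}, existence by transfinite induction on $\len S$, with the limit case handled by Lemma~\ref{lem:pconvOpenPos}\ref{item:pconvOpenPos1} followed by Lemma~\ref{lem:descAltChar}. The only difference is cosmetic: the paper phrases existence as the set inclusion $\dom t \subseteq \desc{\dom s}{S}$, which lets it dismiss the successor case as ``immediate from the induction hypothesis'' without spelling out the single-step $\betared$ bookkeeping you (correctly) sketch.
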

\begin{proof}
  By Proposition~\ref{prop:descSingle} and
  Proposition~\ref{prop:descUnique}, it suffices to show that $\dom t
  \subseteq \desc{\dom s}{S}$. If we have that, then we find,
  according to Proposition~\ref{prop:descSingle}, for each $p \in \dom
  t$ some $q \in \dom s$ with $p \in \desc q S$. By
  Proposition~\ref{prop:descUnique}, this $q$ is unique.

  Let $S = (t_\iota \to[p_\iota] t_{\iota+1})_{\iota<\alpha}$. We
  prove that $\dom t \subseteq \desc{\dom s}{S}$, by induction on
  $\alpha$.

  The case $\alpha = 0$ is trivial. If $\alpha$ is a successor
  ordinal, the inclusion follows immediately from the induction
  hypothesis.

  Let $\alpha$ be a limit ordinal and let $p \in \dom t$. By
  Lemma~\ref{lem:pconvOpenPos}~\ref{item:pconvOpenPos1}, this implies
  that there is some $\beta < \alpha$ such that $p \in \dom{t_\beta}$
  and $\acut{p_\iota} \not\le p$ for all $\beta \le \iota <
  \alpha$. Hence, by the induction hypothesis, $p \in \desc{\dom
    s}{\prefix S \beta}$. Since $\acut{p_\iota} \not\le p$ for all
  $\beta \le \iota < \alpha$, we know by Lemma~\ref{lem:descAltChar}
  that $p \in \desc{\dom s}{\prefix S \beta}$ implies that $p \in
  \desc{\dom s}{S}$.
\end{proof}

\begin{lemma}
  \label{lem:descPreserveLabel}
  Let $S \fcolon s \pato[\betasred] t$ and $p \in \dom s$. Then we have
  that $s(p) = t(q)$ for all $q \in \desc p S$.
\end{lemma}
\begin{proof}
  Let $S = (t_\iota \to[p_\iota] t_{\iota+1})_{\iota<\alpha}$. We
  prove this lemma by induction on $\alpha$.

  The case $\alpha = 0$ is trivial. If $\alpha$ is a successor
  ordinal, the inclusion follows immediately from the induction
  hypothesis.

  Let $\alpha$ be a limit ordinal and let $q \in \desc p S$. By
  Lemma~\ref{lem:descAltChar}, this implies that there is some $\beta
  < \alpha$ such that $q \in \desc{p}{\prefix S \beta}$ and
  $\acut{p_\iota} \not\le q$ for all $\beta \le \iota < \alpha$. By
  the induction hypothesis, we thus have that $s(p) =
  t_\beta(q)$. Lemma~\ref{lem:pconvOpenPos}~\ref{item:pconvOpenPos2}
  then yields that $t_\beta(q) = t(q)$, which means that we have the
  desired equality $s(p) = t(q)$.
\end{proof}

\begin{lemma}[Finitary Approximation Lemma]
  \label{lem:finApprox}%
  Let $s \pato[\betasred] t$, and $P$ a finite subset of $\dom t$. Then
  there is a reduction $s \fto*[\betared] t'$ with $t(p) = t'(p)$ for
  all $p \in P$.  
\end{lemma}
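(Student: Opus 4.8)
The plan is to argue by transfinite induction on the length $\alpha$ of the reduction $S\fcolon s \pato[\betasred] t$, keeping the finite set $P$ universally quantified so that the induction hypothesis can be applied to different finite sets of positions. The base case $\alpha = 0$ is immediate, since then $s = t$ and the empty reduction works. For a limit ordinal $\alpha$, I would invoke Lemma~\ref{lem:pconvOpenPos}~\ref{item:pconvOpenPos1}: for each $p \in P \subseteq \dom t$ there is some $\beta_p < \alpha$ with $t_\iota(p) = t(p)$ for all $\beta_p \le \iota < \alpha$. As $P$ is finite, $\beta = \max_{p\in P}\beta_p < \alpha$, and the proper prefix $\prefix S \beta \fcolon s \pato[\betasred] t_\beta$ already satisfies $t_\beta(p) = t(p)$ for all $p \in P$; in particular $P \subseteq \dom{t_\beta}$. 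Applying the induction hypothesis to $\prefix S \beta$ (of length $\beta < \alpha$) with the same set $P$ then yields a finite reduction $s \fto*[\betared] t'$ with $t'(p) = t_\beta(p) = t(p)$ for all $p \in P$.

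For a successor ordinal $\alpha = \gamma + 1$, write $S = S' \concat \seq\phi$ with $S'\fcolon s \pato[\betasred] t_\gamma$ and a single step $\phi\fcolon t_\gamma \to[p_\gamma] t$. If $\phi$ is an $\strictred$-step, then $t$ arises from $t_\gamma$ by replacing the subtree at $p_\gamma$ with $\bot$, so every position at or below $p_\gamma$ lies in $\domBot t$; since $P \subseteq \dom t$, we must have $p_\gamma \not\le p$ and hence $t_\gamma(p) = t(p)$ for all $p \in P$. Thus $P \subseteq \dom{t_\gamma}$, and the induction hypothesis applied to $S'$ with the set $P$ produces the required finite $\betared$-reduction directly; the $\strictred$-step is simply discarded.

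The main obstacle is the remaining subcase, where $\phi$ is a $\betared$-step contracting $(\abstree x B)\,A$ at $p_\gamma$, with $B = \subtree{t_\gamma}{p_\gamma\concat\seq{1,0}}$ and $A = \subtree{t_\gamma}{p_\gamma\concat\seq 2}$, so that the contractum is $\subst B x A$. Here I would trace each $p \in P$ back through the single step to its ancestor in $t_\gamma$ (justified by Lemma~\ref{lem:ancestor} and Lemma~\ref{lem:descPreserveLabel} specialised to one step), collecting a finite set $P' \subseteq \dom{t_\gamma}$ as follows: for $p$ with $p_\gamma \not\le p$ add $p$ itself; for $p = p_\gamma \concat w$ inside the contractum add the redex witnesses $p_\gamma$ and $p_\gamma\concat\seq 1$, all positions of $t_\gamma$ along the routing of $w$ through the body $B$ (including the binder pointers that mark occurrences of $x$), and the corresponding position in the argument $A$. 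Each trace is finite and $P$ is finite, so $P'$ is finite. Applying the induction hypothesis to $S'$ with $P'$ gives a finite $\betared$-reduction $s \fto*[\betared] t''$ agreeing with $t_\gamma$ on $P'$; since $P'$ witnesses the redex at $p_\gamma$ and records exactly the body-and-argument structure that the contraction routes into the positions of $P$, performing one further $\betared$-step on $t''$ at $p_\gamma$ yields a $t'$ with $t'(p) = t(p)$ for all $p \in P$.

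The delicate point is the verification in this last subcase: that agreement of $t''$ with $t_\gamma$ on the finitely many routing positions $P'$ already forces the $\beta$-contraction to send $t''$ to a tree matching $t$ on $P$, even though $t''$ and $t_\gamma$ may differ elsewhere in the (infinite) body and argument. This reduces to checking that descendants under a single $\beta$-step depend only on the finite portion of the redex recorded in $P'$, which is exactly the single-step instance of the descendant relation (Definition~\ref{def:desc}~\ref{item:descA}--\ref{item:descC}); the substitution $\subst B x A$ interacts with $P$ only through those recorded positions, so the bookkeeping, though intricate, is purely finitary.
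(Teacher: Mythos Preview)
Your proposal is correct and follows essentially the same approach as the paper: transfinite induction on the length of $S$, with the limit case handled via Lemma~\ref{lem:pconvOpenPos}~\ref{item:pconvOpenPos1}, the $\strictred$-step simply discarded, and the $\betared$-step handled by pulling $P$ back to a finite set $P'$ in the predecessor, applying the induction hypothesis, and then replaying the single $\betared$-contraction. The paper packages the pull-back more compactly as $P' = \setcom{p'\in\dom{s'}}{\exists p\in P.\ p\in\desc{p'}{\phi}} \cup \set{q\concat\seq 1}$ and appeals to Lemma~\ref{lem:ancestor} and Lemma~\ref{lem:descPreserveLabel}, whereas you spell out the routing positions through the body and argument by hand; your more explicit $P'$ is, if anything, a slightly more careful version of the same idea.
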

\begin{proof}
  We prove this by induction on the length of $S$. The case $S =
  \emptyseq$ is trivial.

  Let $S = T \concat \seq \phi$, where $\phi\fcolon s' \to[q]
  t$ is a $\betared$-step. Define
  \[
  P' = \setcom{p' \in \dom{s'}}{\exists p \in P\fcolon p \in
    \desc{p'}{\phi}} \cup \set{ q \concat\seq 1}.
  \]
  By Lemma~\ref{lem:ancestor}, $P'$ is finite, too. Thus, by induction
  hypothesis, there is a finite reduction $S'\fcolon s \fto*[\betared] s''$ such
  that $s'(p) = s''(p)$ for all $p \in P'$. In particular, that means
  that $q$ is still $\betared$-redex occurrence in $s''$. Thus there
  is a $\betared$-reduction step $\phi'\fcolon s'' \to[q] t'$. Let $p
  \in P$. According to Lemma~\ref{lem:ancestor}, there is a unique $p'
  \in \dom{s'}$ with $p \in \desc{p'}{\phi}$. By the construction of
  $P'$, we know that $p' \in P'$. Hence,
  \[
  t'(p) \stackrel{\text{Lemma~\ref{lem:descPreserveLabel}}}= s''(p') \stackrel{\text{IH}}= s'(p')
  \stackrel{\text{Lemma~\ref{lem:descPreserveLabel}}}= t(p)
  \]

  Let $S = T \concat \seq \phi$, where $\phi\fcolon s' \to[q] t$ is a
  $\strictred$-step. Then $s'(p) = t'(p)$ for all $p \in \dom
  t$. Moreover, we also have that $\dom t \subseteq \dom{s'}$, which
  implies that $P \subseteq \dom{s'}$. Hence, we may apply the
  induction hypothesis to obtain a finite reduction
  $S' \fcolon s \fto*[\betared] s''$ such that $s'(p) = s''(p)$ for all
  $p \in P$. Consequently, $s''(p) = t(p)$ for all $p \in P$.
  
  Let $S$ be open. By
  Lemma~\ref{lem:pconvOpenPos}~\ref{item:pconvOpenPos1}, there is a
  prefix $T < S$ with $T\fcolon s \pato t'$ and $t(p) = t'(p)$ for all
  $p\in P$. By applying the induction hypothesis, we then obtain a
  finite reduction $s \fto* t''$ with $t'(p) = t''(p)$ for all $p\in
  P$. Consequently, we have that $t(p) = t''(p)$ for all $p\in P$.
\end{proof}

\begin{proposition}
  \label{prop:descSeq}
  Let $S\fcolon t_0 \pato[\betasred] t_1$, $T \fcolon t_1
  \pato[\betasred] t_2$, and $U \subseteq \dom{t_0}$. Then
  $\desc{U}{S\concat T} = \desc{\left(\desc U S\right)}{T}$.
\end{proposition}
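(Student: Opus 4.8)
The plan is to prove the identity by transfinite induction on the length of the second reduction $T$, exploiting that $\desc{\cdot}{\cdot}$ is defined by recursion on its reduction argument (Definition~\ref{def:desc}). Throughout, note that $\desc U S \subseteq \dom{t_1}$, so $\desc{(\desc U S)}{T}$ is well-formed, and that writing $\desc U {S\concat T}$ presupposes that $S\concat T$ $\prs$-converges, necessarily to $t_2$, which holds since a concatenation of $\prs$-converging reductions $\prs$-converges.

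For the base case $T = \emptyseq$ both sides collapse by clause~\ref{item:descA}: $\desc U {S \concat \emptyseq} = \desc U S = \desc{(\desc U S)}{\emptyseq}$. For a successor length, write $T = T' \concat \seq\phi$; then $S \concat T = (S \concat T') \concat \seq\phi$, and applying clause~\ref{item:descC} twice together with the induction hypothesis for $T'$ yields
\[
\desc U {S \concat T} = \desc{(\desc U {S\concat T'})}{\seq\phi} = \desc{(\desc{(\desc U S)}{T'})}{\seq\phi} = \desc{(\desc U S)}{T}.
\]
Both of these cases are purely definitional.

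The real work is the limit case, where $T$ has a limit length $\beta$ and is hence open. Here I would first observe that $S\concat T$ has length $\len S + \beta$, which is again a limit ordinal, so $S\concat T$ is open as well and both descendant sets are governed by the liminf clause~\ref{item:descD}. By that clause, $p \in \desc{(\desc U S)}{T}$ iff $p \in \dom{t_2}$ and $p$ lies eventually in $\desc{(\desc U S)}{\prefix T \iota}$ as $\iota \to \beta$. Since $\len{\prefix T \iota} = \iota < \beta$, the induction hypothesis identifies $\desc{(\desc U S)}{\prefix T \iota}$ with $\desc U {S \concat \prefix T \iota}$. It then remains to match this eventual-membership condition against clause~\ref{item:descD} for $S\concat T$ itself. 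For this I would use that the prefixes of $S\concat T$ of length $\ge \len S$ are exactly the reductions $S \concat \prefix T \iota$ (via $\delta = \len S + \iota$), and that the existential threshold in a liminf condition is upward closed, so it may be taken to lie beyond $\len S$; the tail of prefixes of $S\concat T$ then coincides with $(S\concat \prefix T \iota)_{\iota \to \beta}$, making the two eventual-membership conditions literally the same statement and giving $p \in \desc U {S \concat T} \iff p \in \desc{(\desc U S)}{T}$.

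I expect the only obstacle to be the bookkeeping in this limit step: keeping the ordinal arithmetic $\delta = \len S + \iota$ straight, checking that $\len S + \beta$ is a limit ordinal, and verifying that restricting to thresholds $\ge \len S$ loses no generality. All of this is routine once one notes that membership in a liminf depends only on the eventual behaviour of the prefixes, so the early prefixes $\prefix{(S\concat T)}{\delta}$ with $\delta < \len S$ are irrelevant.
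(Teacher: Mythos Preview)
Your proposal is correct and follows essentially the same approach as the paper: transfinite induction on the length of $T$, with the empty and successor cases handled definitionally via clauses~\ref{item:descA} and~\ref{item:descC}, and the limit case handled by unfolding clause~\ref{item:descD} on both sides and using that the tail of prefixes of $S\concat T$ beyond $\len S$ coincides with $(S\concat \prefix T \iota)_{\iota<\beta}$. The paper's proof is slightly more compressed in the limit step but makes the same moves.
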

\begin{proof}
  We prove this by induction on the length of $T$. The case $T =
  \empty$ is trivial.
  
  If $T = T'\concat \seq\phi$, then we can reason as follows:
  \[
  \desc{U}{S\concat T'\concat \seq\phi} = \desc{\left(\desc{U}{S\concat
        T'}\right)}{\seq\phi} \stackrel{\text{IH}}=
  \desc{\left(\desc{\left(\desc{U}{S}\right)}{T'}\right)}{\seq\phi} =
  \desc{\left(\desc{U}{S}\right)}{T'\concat\seq\phi}
  \]

  Let $T$ be open. That means, also $S\concat T$ is open. Hence, we
  can reason as follows:
  \begin{align*}
    p \in \desc{U}{S \concat T} &\iff p\in\dom{t_2}, \exists \beta <
    \len{S\concat T}\forall \beta \le \iota \le \len{S\concat T}: p
    \in
    \desc{U}{\prefix{\left(S\concat T\right)}\iota}\\
    &\iff p\in\dom{t_2}, \exists \beta < \len{S\concat T}\forall \beta
    \le \iota \le
    \len{S\concat T}: p \in \desc{U}{S\concat \left(\prefix{T}\iota\right)}\\
    &\stackrel{\text{IH}}\iff p\in\dom{t_2}, \exists \beta <
    \len{S\concat T}\forall \beta \le \iota \le \len{S\concat T}: p
    \in \desc{\left(\desc{U}{S}\right)}{\prefix{T}\iota}\\
    &\iff p \in \desc{\left(\desc U S\right)}{T}
  \end{align*}
\end{proof}

For the next proposition we have to exclude certain strictness
signatures, in particular strictness signatures of the form $01*$. The
problem is that for strictness signatures of this form a descendant of
a redex occurrence may not be a redex occurrence as the following
example demonstrates:
\begin{example}
  Let $\Omega = (\abstree x \vartree x\,\vartree x) (\abstree x
  \vartree x\,\vartree x)$ and $t = (\abstree x \Omega)\, \vartree
  y$. We consider the $\betared$-reduction $S \fcolon t \to[\seq{1,0}]
  t \to[\seq{1,0}] \dots$ that repeatedly contracts the redex $\Omega$
  at $\seq{1,0}$. This reduction $\prs$-converges to $\bot\,\vartree
  y$ if $a_0=0$ and $a_1 = 1$. The descendent of the redex occurrence
  $\emptyseq$ in $t$ by $S$ is not a redex occurrence in
  $\bot\,\vartree y$, which is a $\betared$-normal form.

  If, in addition, $a_2=1$, this phenomenon may also occur for
  developments. Let $I^\omega$ be a lambda tree with $I^\omega =
  (\abstree x \vartree x)I^\omega$, and let $t = (\abstree x
  I^\omega)\, \vartree y$. Both $I^\omega$ and $t$ are lambda trees in
  $\aptree[011]$. Let $U$ be the set of all occurrences of $I^\omega$
  in $t$. A complete development of $U$ in $t$, e.g.\ $S \fcolon t
  \to[\seq{1,0}] t \to[\seq{1,0}] \dots$, $\prs$-converges to
  $\bot\,\vartree y$. Again, $\emptyseq$ is a redex occurrence in $t$,
  but its descendant by $S$ is not a redex occurrence in
  $\bot\,\vartree y$.
\end{example}

\begin{proposition}
  \label{prop:residuals}
  Let $S \fcolon s \pato[\betasred] t$, and $a_0 = 1$ or $a_1 = 0$. If
  $U$ is a set of redex occurrences in $s$, then $\desc{U}{S}$ is a
  set of redex occurrences in $t$.
\end{proposition}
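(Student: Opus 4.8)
The plan is to fix a strictness signature with $a_0 = 1$ or $a_1 = 0$ and argue by induction on the length of $S$, after reducing to a single redex via Proposition~\ref{prop:descSingle}: since $\desc U S = \bigcup_{u \in U} \desc u S$, it suffices to show that for every redex occurrence $u$ of $s$ every $q \in \desc u S$ carries a redex in $t$. The guiding observation is that ``$q$ carries a redex'' is pinned down by the label at $q$ together with one \emph{witness} child: for a $\betared$-redex the witness is $q\concat\seq 1$ (which must be labelled $\lambda$), while for an $\strictred$-redex the witness is a strict child $q\concat\seq i$ with $a_i = 0$ that must be a hole, i.e.\ $q\concat\seq i \in \domBot{t}$. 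The root label is transported for free: by Lemma~\ref{lem:descPreserveLabel} we have $t(q) = s(u)$, so $q \in \dom t$ and its root already matches that of the redex at $u$. Everything therefore reduces to controlling the single witness position in the limit.

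For the successor case $S = T \concat \seq\phi$ I would feed the induction hypothesis (the descendants of $u$ under $T$ are redex occurrences of the intermediate tree) into the explicit residual formula of Definition~\ref{def:desc}, by case analysis on whether $\phi$ is a $\betared$- or an $\strictred$-step and on the position of a residual $u'$ relative to the contracted position $p$. The interesting sub-cases for a $\betared$-step are: $u'$ disjoint from or strictly above $p$ (subtree, hence witness, untouched); $u'$ inside the argument $p\concat\seq 2$ (the residual subtrees are verbatim copies of $\subtree{t'}{u'}$ produced by the substitution, so the redex shape survives); and $u'$ inside the body $p\concat\seq{1,0}$ (the body is lifted to $p$, and substitution leaves the $@/\lambda$ roots of a redex unchanged, since those labels are not the bound variable). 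An $\strictred$-step only deletes subtrees, so a residual is either erased or untouched; here one notes that contracting an $\strictred$-redex $\lambda x.\bot$ sitting at a witness $u'\concat\seq 1$ can only happen when $a_0 = 0$, which under the hypothesis forces $a_1 = 0$, so the degraded redex $\bot\,M$ is again an $\strictred$-redex. This case works for every $\ola$.

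The open case is the crux and the only place the hypothesis is used. Given $q \in \desc u S$, Lemma~\ref{lem:descAltChar} supplies a $\beta < \alpha$ with $q \in \desc u {\prefix S \beta}$ and $\acut{p_\iota} \not\le q$ for all $\beta \le \iota < \alpha$; by the induction hypothesis $q$ carries a redex in $t_\beta$. If the witness child $q\concat\seq i$ is strict ($a_i = 0$), then $\acut{p_\iota}$ — being either $\emptyseq$ or a position ending in a non-strict index — can never equal $q\concat\seq i$, so $\acut{p_\iota}\not\le q\concat\seq i$ follows from $\acut{p_\iota}\not\le q$; I then transport the witness hole to the limit with Corollary~\ref{cor:pconvOpenPosBot}~\ref{item:pconvOpenPosBot2} (for $\strictred$-redexes) or the witness label with Lemma~\ref{lem:pconvOpenPos}~\ref{item:pconvOpenPos2} (for a $\betared$-redex with $a_1 = 0$). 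The sole remaining situation is a $\betared$-redex with $a_1 = 1$, where the hypothesis forces $a_0 = 1$.

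The hard sub-case, and the main obstacle, is exactly this $\betared$-redex with $a_0 = a_1 = 1$: one must rule out that the abstraction at $q\concat\seq 1$ is annihilated to $\bot$ in the limit, which is precisely the phenomenon the preceding example exhibits for the excluded signatures $01*$. I would handle it by a transfinite sub-induction along $[\beta,\alpha)$ maintaining the invariant $t_\iota(q\concat\seq 1) = \lambda$. The key point is that $\acut{p_\iota} = q\concat\seq 1$ is impossible: if $p_\iota = q\concat\seq 1$ then $t_\iota$ would carry a redex under a $\lambda$, which for $a_0 = 1$ cannot occur (a $\betared$-redex needs root $@$, and the $\strictred$-redex $\lambda x.\bot$ needs $a_0 = 0$); and if $p_\iota > q\concat\seq 1$ then, as $\lambda$ admits only the child index $0$, the position $p_\iota$ enters the body through the non-strict index $0$, whence $\acut{p_\iota} \ge q\concat\seq{1,0} > q\concat\seq 1$. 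Hence $p_\iota \not\le q\concat\seq 1$, the contraction leaves the witness label intact, and the invariant passes through successors; at limit stages it is preserved by Lemma~\ref{lem:pconvOpenPos}~\ref{item:pconvOpenPos2}. That same lemma finally yields $t(q\concat\seq 1) = \lambda$, so $q$ carries a $\betared$-redex in $t$, completing the argument.
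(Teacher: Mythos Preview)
Your proposal is correct and follows essentially the same strategy as the paper: transfinite induction on $\len S$, with the limit case handled via Lemma~\ref{lem:descAltChar} to obtain a safe tail index $\beta$, followed by a sub-induction showing the witness position ($q\concat\seq 1$ labelled $\lambda$, resp.\ $q\concat\seq i\in\domBot{\cdot}$) persists to the limit; the key observation that $\acut{p_\iota}=q\concat\seq 1$ is impossible when $a_0=a_1=1$ is argued identically. The only cosmetic differences are that the paper tracks the reduction contexts $c_\iota$ explicitly where you invoke Lemma~\ref{lem:pconvOpenPos} and Corollary~\ref{cor:pconvOpenPosBot} directly, and the paper dismisses the successor case in one line whereas you spell out the case analysis (your remark ``this case works for every $\ola$'' is misleading, since you just used the hypothesis to handle the $\lambda x.\bot\to\bot$ step at the witness).
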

\begin{proof}
  Let $S = (t_\iota \to[p_\iota] t_{\iota+1})_{\iota<\alpha}$. We
  proceed by induction on $\alpha$. The case $\alpha = 0$ is trivial,
  and for $\alpha$ a successor ordinal the statement follows
  immediately from the induction hypothesis.

  Let $\alpha$ be a limit ordinal. To prove the statement, we assume
  some $p \in \desc U S$ and show that $\subtree t p$ is a
  $\betasred$-redex. According to Lemma~\ref{lem:descAltChar}, $p \in
  \desc U S$ implies that there is some $\beta < \alpha$ such that
  \begin{equation}
    p \in \desc U {\prefix S \beta}\text{ and } \forall \beta \le\iota
    < \alpha:\acut{p_\iota}\not\le p%
    \tag{1}\label{eq:residuals1}
  \end{equation}
  By the induction hypothesis, we know that $\subtree{t_\beta}{p}$ is
  a $\betasred$-redex.

  \begin{itemize}
  \item We first consider the case that $\subtree{t_\beta}{p}$ is a $\betared$-redex,
    i.e.\ $t_\beta(p\concat\seq 1) = \lambda$.

    We proceed by showing the following two statements for all
    $\beta \le \gamma \le \alpha$, where $t_\alpha = t$:
    \begin{align}
      t_\gamma(p\concat\seq 1) &= \lambda%
                                 \tag{2}\label{eq:residuals2}\\%
      c_\iota(p\concat\seq 1) &= \lambda\qquad \text{for all } \beta
                                \le \iota < \gamma%
                                \tag{3}\label{eq:residuals3}
    \end{align}
    For the case $\gamma = \alpha$, we then obtain that
    $t(p\concat\seq 1) = \lambda$, i.e. $\subtree t p$ is a
    $\betared$-redex.

    For the case $\gamma = \beta$, we have already shown
    \eqref{eq:residuals2} above, and \eqref{eq:residuals3} is
    vacuously true.

    Let $\gamma = \gamma'+1 > \beta$. According to the induction
    hypothesis, \eqref{eq:residuals3} holds for $\gamma'$, which means
    it remains to be shown that
    $t_\gamma(p\concat\seq 1)= c_{\gamma'}(p\concat\seq 1) =
    \lambda$. We consider $c_{\gamma'}$ first. If
    $\acut{p_{\gamma'}} \not\le p\concat\seq 1$, then
    \[
      c_{\gamma'}(p\concat\seq 1) = t_{\gamma'}(p\concat\seq 1)
      \stackrel{\text{IH}}= \lambda
    \]
    Otherwise, by \eqref{eq:residuals1}, $\acut{p_{\gamma'}}$ must be
    equal to $p\concat\seq 1$. This can only happen if $a_1 =
    1$. Hence, according to the assumption about $\ola$, we know that
    $a_0 = 1$. Moreover, $\subtree{t_{\gamma'}}{p\concat\seq 1}$
    cannot be a $\betared$-redex because, by the induction hypothesis,
    $t_{\gamma'}(p\concat\seq 1) = \lambda$. Consequently,
    $p_{\gamma'}\ge p \concat\seq{1,0}$, which is not possible since
    $a_0 = 1$ and $\acut{p_{\gamma'}} = p\concat\seq 1$.

    Next we consider $t_\gamma$: since
    $c_{\gamma'} \talebot t_\gamma$, we have
    $t_\gamma(p\concat\seq 1) = c_{\gamma'}(p\concat\seq 1) =
    \lambda$.

    Finally, let $\gamma$ be a limit ordinal. Then
    \eqref{eq:residuals3} follows immediately from the induction
    hypothesis. We will show that $p\concat\seq 1 \in \dom{s}$ for
    $s = \Glb_{\beta\le\iota<\gamma} c_\iota$. Since
    $s \talebot c_\beta, t_\gamma$, we then have:
    \[
      t_\gamma(p\concat\seq 1) \stackrel{s \talebot t_\gamma}=
      s(p\concat\seq 1) \stackrel{s \talebot c_\beta}=
      c_\beta(p\concat\seq 1) \stackrel{\text{IH}}= \lambda
    \]

    We know that $\acut{p_\iota} \not\le p\concat\seq 1 \concat w$ for
    all $\beta\le\iota<\gamma$ and $w$ with $\adepth w =
    0$. Otherwise, we would have $\acut{p_\iota} \le p\concat\seq 1$
    for some $\beta\le\iota<\gamma$, which would contradict the
    induction hypothesis for \eqref{eq:residuals2} if
    $\acut{p_\iota} = p\concat\seq 1$, and \eqref{eq:residuals1} if
    $\acut{p_\iota} \le p$. Since we know from induction hypothesis
    for \eqref{eq:residuals3}, that, for each
    $\beta\le \iota < \gamma$, we have that
    $p\concat \seq 1 \in \dom{c_\iota}$, we can apply
    Lemma~\ref{lem:reductionContextCut}, conclude that also
    $p\concat \seq 1\concat w \in \dom{c_\iota}$ for all $w$ with
    $\adepth w = 0$. Consequently, according to
    Proposition~\ref{prop:alebotGlb}, we have that
    $p \concat \seq 1 \in \dom s$.
  \item Finally we consider the case that $\subtree{t_\beta}{p}$ is an
    $\strictred$-redex, i.e.\
    $p \concat \seq{i} \in \domBot {t_\beta}$ for some $i$ with
    $a_i =0$.

    We can then show by a simple induction proof that the following
    holds for all $\beta \le \gamma \le \alpha$, where $t_\alpha = t$:
    \begin{align}
      p\concat\seq i \in \domBot{t_\gamma}%
                                 \tag{4}\label{eq:residuals4}
    \end{align}
    For the case $\gamma = \alpha$, we then obtain that
    $p\concat\seq i \in \domBot{t}$, i.e. $\subtree t p$ is a
    $\strictred$-redex.

    The case $\gamma = \beta$ is trivial.

    If $\gamma = \gamma' + 1$, then
    $p\concat\seq i \in \domBot{t_\gamma}$ follows from the induction
    hypothesis ($p\concat\seq i \in \domBot{t_{\gamma'}}$) and the fact
    that by \eqref{eq:residuals1}, $p_\gamma \not\le p$.

    Let $\gamma$ be a limit ordinal. Then
    $p\concat\seq i \in \domBot{t_\gamma}$ follows from
    Corollary~\ref{cor:pconvOpenPosBot}~\ref{item:pconvOpenPosBot2} using \eqref{eq:residuals1}.
  \end{itemize}
\end{proof}

For the remainder of this section we (tacitly) restrict ourselves to
strictness signatures $\ola$ with $a_0 = 1$ or $a_1 = 0$. This
restriction is necessary in order for the following definition of
developments to make sense, since it depends on
Proposition~\ref{prop:residuals}.

For technical reasons, we have to generalise the notion of
developments. A development of a set of redex occurrences $U$ in $s$
is typically only allowed to contract redexes occurrences that are
descendants of redex occurrences in $U$. In addition to that we also
allow developments to contract \emph{any} $\strictred$-redex.

\begin{definition}[developments]
  Let $s \in \aptree$ and $U$ a set of redex occurrences in $s$.
  \begin{enumerate}[(i)]
  \item A reduction $S \fcolon s \pato[\betasred] t$ is a
    \emph{development of $U$} if each $\iota$-th step
    $\phi_\iota\fcolon t_\iota \to[p_\iota] t_{\iota+1}$ of $S$
    contracts a redex at a position
    $p_\iota \in \desc{U}{\prefix S \iota}$ or an $\strictred$-redex.
  \item A development $S \fcolon s \pato[\betasred] t$ of $U$ is
    called \emph{almost complete} if $\desc S U = \emptyset$. If, in
    addition, $t$ is an $\strictred$-normal form, then $S$ is called
    \emph{complete}, denoted $S \fcolon s \pato[U] t$.a
  \end{enumerate}
\end{definition}

\begin{definition}
  \label{def:doms}
  Given $t \in \aptree$, the set $\doms{t}$ is the smallest set
  satisfying the following:
  \begin{enumerate}[(a)]
  \item $\domBot t \subseteq \doms t$;
    \label{item:doms1}
  \item If $p\concat \seq i \in \doms t$ and $a_i = 0$, then
    $p \in \doms t$; and
    \label{item:doms2}
  \item If $p \in \doms t$ and $p \concat \seq i \in \doms t$, then
    $p \concat \seq i \in \doms t$.
    \label{item:doms3}
  \end{enumerate}
\end{definition}

\begin{proposition}
  \label{prop:strictRedNormalConfl}
  $\strictred$ is infinitarily normalising and confluent. The unique
  $\strictred$-normal form $\strictnf t$ of $t$ can be characterised
  as follows:
  \[
    \dom{\strictnf t} = \dom t \setminus \doms t\qquad \strictnf t(p)
    = t(p) \text{ for all } p \in \dom{\strictnf t}
  \]
\end{proposition}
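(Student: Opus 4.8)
The plan is to take the explicitly described tree $\strictnf t$ as the candidate unique normal form and prove three things: it is a well-defined $\strictred$-normal form, $t \pato[\strictred] \strictnf t$ (normalisation), and it is the only normal form reachable from $t$ (which yields confluence). The elementary facts come first. Since $\dom{\strictnf t}\subseteq\dom t$ and $t$ is $\ola$-guarded, $\strictnf t$ is a well-defined $\ola$-guarded lambda tree once $\dom{\strictnf t}$ is seen to be prefix-closed, which is precisely the third (downward-closure) clause of Definition~\ref{def:doms}. It is a normal form by the second (upward) clause: a $\strictred$-redex at $q$ would give some $q\concat\seq i\in\domBot{\strictnf t}$ with $a_i=0$ and $q\in\dom{\strictnf t}$, hence $q\concat\seq i\in\doms t$ and so $q\in\doms t$, contradicting $q\in\dom{\strictnf t}$. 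Running the same computation backwards shows that $\domBot u$ already satisfies all three clauses, so $\doms u=\domBot u$ and therefore $\strictnf u=u$ for any normal form $u$.

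The engine for the rest is a reformulation of $\doms$. Call $r$ a \emph{collapse root} of $s$ if $\subtree s r \pato[\strictred]\bot$. I would first show that this holds iff $\subtree s r=\bot$ or $\domBot s$ contains a position $w\ge r$ whose path down from $r$ is entirely $\ola$-strict: propagating one such hole up its strict spine is a \emph{finite} reduction, since positions are finite sequences and $\ola$-guardedness rules out infinite $\ola$-strict branches. Verifying the three clauses and minimality then gives $\doms s=\domBot s\cup\setcom{p\in\dom s}{\exists\,r\le p\ \text{a collapse root}}$. For normalisation I collapse, stratified by $\ola$-depth, the outermost positions of $\doms t$ lying in $\dom t$; these are pairwise incomparable collapse roots, so each $\subtree t{p'}$ reduces to $\bot$ by a finite reduction all of whose contractions lie at $\ola$-depth $\adepth{p'}$ (the strict spine preserves depth), and every remaining position of $\doms t\cap\dom t$ lies below one of these and is absorbed. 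Since $t$ is $\ola$-guarded it has finitely many positions at each $\ola$-depth (Proposition~\ref{prop:adepth}), so processing depths $0,1,2,\dots$ yields a reduction that is $\mrs$-continuous with contraction depths tending to infinity; by Lemma~\ref{lem:mContConv} it $\mrs$-converges to the tree agreeing with $\strictnf t$ at every depth, namely $\strictnf t$, and hence $t\pato[\strictred]\strictnf t$ by Proposition~\ref{prop:mconvPconv}.

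For uniqueness I would prove that $\strictnf{\cdot}$ is invariant along any $\prs$-convergent $\strictred$-reduction $(s_\iota)_{\iota\le\alpha}$, by transfinite induction establishing $\strictnf{s_\iota}=\strictnf{s_0}$. Successor steps are a single-step check: for $s\to[\strictred,q]s'$ the collapse roots $r$ with $q\not\le r$ agree between $s$ and $s'$ (a collapse path of $r$ through $q$ is reinstated in $s'$ by the fresh hole $q$, and conversely), so $\dom{\strictnf s}=\dom{\strictnf{s'}}$. The limit case is the crux. Write $N=\strictnf{s_0}$. Because each contracted position $p_\iota$, together with its whole $\ola$-strict spine up to $\acut{p_\iota}$, lies in $\doms{s_\iota}$, one gets $\dom N$ disjoint from $\setcom{q}{\acut{p_\iota}\le q}$ and hence $N\talebot c_\iota$ for every reduction context $c_\iota$; therefore $N\talebot\liminf_{\iota\limto\gamma}c_\iota=s_\gamma$. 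The inclusion $\dom N\subseteq\dom{\strictnf{s_\gamma}}$ follows since $N$ being a normal form $\talebot s_\gamma$ forbids a $\dom N$-position from having a strict child in $\domBot{s_\gamma}$, and this disjointness from $\doms{s_\gamma}$ propagates through the three clauses. For the reverse inclusion take $p\in\dom{s_\gamma}$ with no collapse root below it and, by Lemma~\ref{lem:pconvOpenPos}, fix $\beta$ with $\acut{p_\iota}\not\le p$ and $s_\iota(p)=s_\gamma(p)$ for $\beta\le\iota<\gamma$. If some $r\le p$ were a collapse root of $s_\beta$, every contraction on its strict spine would satisfy $\acut{p_\iota}\le r\le p$, so by the choice of $\beta$ the spine and its hole survive untouched into $s_\gamma$, keeping $r$ a collapse root there -- a contradiction. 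Hence $p\in\dom{s_\beta}\setminus\doms{s_\beta}=\dom N$, and $\strictnf{s_\gamma}=N$. Confluence is then immediate: from $t\pato[\strictred]v_1$ and $t\pato[\strictred]v_2$, normalise both as above; invariance gives $\strictnf{v_1}=\strictnf t=\strictnf{v_2}$, so both rejoin at $\strictnf t$.

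The step I expect to be the main obstacle is the limit case of invariance, and within it the reverse inclusion: ensuring, from $\prs$-convergence alone, that a collapse opportunity available at some $s_\beta$ either persists to the limit or is itself absorbed. The leverage is the identity $\acut{p_\iota}\le r$ for any contraction lying on an $\ola$-strict spine rooted at $r$, which converts the information $\acut{p_\iota}\not\le p$ supplied by Lemma~\ref{lem:pconvOpenPos} into the assertion that the spine is never disturbed.
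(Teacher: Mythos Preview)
Your argument is correct. The paper's proof of this proposition is a three-line sketch: it observes that every $\strictred$-step in $S\fcolon t\pato[\strictred] u$ contracts at a position in $\doms t$, concludes $\doms u\subseteq\doms t$, and then says an innermost-parallel strategy takes $u$ to $\strictnf t$. In other words, the paper establishes confluence by the single claim ``from any $u$ reachable from $t$ one can still reach $\strictnf t$'', leaving both the transfinite (limit) part of $\doms u\subseteq\doms t$ and the convergence of the innermost strategy to the reader.

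Your route is a different organisation of the same content. Recasting $\doms s$ via collapse roots gives a concrete handle on $\strictred$-reducibility and lets you build the normalising reduction by an \emph{outermost} (rather than innermost) strategy, stratified by $\ola$-depth, with $\mrs$-convergence coming from Lemma~\ref{lem:mContConv}. For confluence you prove the invariant $\strictnf{s_\iota}=\strictnf{s_0}$ directly by transfinite induction, using $N\talebot c_\iota$ for the forward inclusion at limits and Lemma~\ref{lem:pconvOpenPos} together with $\acut{p_\iota}\le r$ for the reverse inclusion. This is heavier machinery than the paper invokes, but it makes explicit exactly the limit-ordinal bookkeeping the paper's sketch suppresses. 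Two minor remarks: the finiteness of the strict spine follows already from positions being finite sequences (guardedness is not needed there), and in the limit case you implicitly use Corollary~\ref{cor:pconvOpenPosBot}\ref{item:pconvOpenPosBot2} to carry the hole $w$ into $\domBot{s_\gamma}$, which is worth citing explicitly.
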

\begin{proof}
  Let $t \in \aptree$ and define $\strictnf t$ as the restriction of
  $t$ to the domain $\dom t \setminus \doms t$. By \ref{item:doms3} of
  Definition~\ref{def:doms} this definition yields a well-defined
  lambda tree. Moreover, $\strictnf t$ is clearly an
  $\strictred$-normal form.

  To prove the proposition, we assume a reduction
  $S \fcolon t \pato[\strictred] u$ and show that then
  $u \pato[\strictred] \strictnf t$. It is easy to see that any
  reduction step in $S$ contracts a $\strictred$-redex at a position
  in $\doms t$, and thus $\doms u \subseteq \doms t$. Then a reduction
  $u \pato[\strictred] \strictnf t$ can be obtained by contracting all
  $\strictred$-redexes by an innermost parallel reduction strategy.
\end{proof}

According to Proposition~\ref{prop:strictRedNormalConfl}, each lambda
tree $t \in \aptree$ has a unique $\strictred$-normal form. We write
$\strictnf t$ to denote this unique normal form. Moreover, this
$\strictred$-normal form 

\begin{proposition}
  \label{prop:compDev}
  Every set of redex occurrences has a complete development.
\end{proposition}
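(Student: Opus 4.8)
The plan is to build the complete development in two stages: first an \emph{almost complete} development $S_1 \fcolon s \pato[\betasred] t'$ that contracts every residual of $U$, and then an $\strictred$-normalising reduction $S_2 \fcolon t' \pato[\strictred] \strictnf{t'}$, whose existence and $\prs$-convergence are guaranteed by Proposition~\ref{prop:strictRedNormalConfl}. Since developments may contract \emph{any} $\strictred$-redex, $S_2$ is a legitimate continuation, and since an $\strictred$-step can only delete residuals (its clause in Definition~\ref{def:desc} sends $R_u$ to $\emptyset$ or $\set u$), appending $S_2$ preserves $\desc{U}{S_1\concat S_2} = \desc{U}{S_1} = \emptyset$ while making the final tree $\strictred$-normal. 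Hence $S_1\concat S_2 \fcolon s \pato[U] t$ is the desired complete development, and the whole problem reduces to constructing $S_1$.

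To construct $S_1$ I would use an \emph{outside-in} strategy: transfinitely contract, at each successor step, a residual of $U$ of minimal $\ola$-depth (breaking ties leftmost), and at each limit ordinal pass to the liminf of the reduction contexts. Because $(\aptree,\talebot)$ is a complete semilattice (Theorem~\ref{thr:alebotCompSemi}), this liminf always exists, so the resulting reduction is automatically $\prs$-continuous and therefore $\prs$-converges; and by construction every contracted redex is a residual of $U$, so $S_1$ is a development. Thus $\prs$-convergence and the development property come for free, and only \emph{almost completeness} needs an argument.

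The heart of the matter is to show $\desc{U}{S_1} = \emptyset$, for which I would track the minimal $\ola$-depth of the residuals remaining in the current tree. Inspecting the $\betared$-clause of Definition~\ref{def:desc} at a redex position $p$ of depth $d$ shows that residuals descending from the body reappear at depth $d + \adepth{w} \ge d$, while residuals copied out of the argument reappear at positions $p\concat q\concat w$ of depth $d + \adepth{q} + \adepth{w} \ge d$; hence contracting a depth-$d$ residual never creates a residual of depth $<d$. Consequently, under the minimal-depth strategy this minimal depth is non-decreasing. If it tends to infinity, then any fixed position $p$ (of finite $\ola$-depth) eventually ceases to be a residual occurrence, so $p \notin \liminf_{\iota} \desc{U}{\prefix{S_1}{\iota}}$, and the open-reduction clause \ref{item:descD} of Definition~\ref{def:desc} yields $\desc{U}{S_1} = \emptyset$.

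The main obstacle is exactly proving that this minimal depth escapes every finite bound, i.e.\ that the residuals sitting at each fixed $\ola$-depth are eventually exhausted. This is delicate because contracting a depth-$d$ residual may \emph{duplicate} residuals (the argument copies) and, when $a_2 = 1$, may \emph{lift} a residual from depth $d+1$ to depth $d$, so the supply at depth $d$ can temporarily grow. This is precisely the infinitary analogue of the Finite Developments Theorem, and I would settle it by an FD-style argument: mark the redexes of $U$, observe that only marked redexes are contracted and that, by Proposition~\ref{prop:residuals} (valid under the standing restriction $a_0=1$ or $a_1=0$), descendants of marked redexes are again redexes so marks are preserved and never created; then use that $\atrunc{t}{d+1}$ is finite for $\ola$-guarded $t$ (Lemma~\ref{lem:atruncFin}, Proposition~\ref{prop:adepth}) together with the unique-ancestor property (Proposition~\ref{prop:descUnique}, Lemma~\ref{lem:ancestor}) to bound the marked contractions that can ever affect depth $\le d$. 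This per-depth termination forces the minimal depth to infinity, completing the construction of $S_1$ and hence of the complete development.
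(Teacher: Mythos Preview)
Your two-stage plan (an almost complete development followed by $\strictred$-normalisation via Proposition~\ref{prop:strictRedNormalConfl}) matches the paper exactly, and the observation that appended $\strictred$-steps cannot resurrect residuals is correct. The gap is in your argument for almost completeness of $S_1$.

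You aim to show that the minimal $\ola$-depth of the remaining residuals tends to infinity, via a ``per-depth termination'' argument that bounds the marked contractions affecting depth $\le d$ using the finiteness of $\atrunc{t}{d+1}$ and unique ancestry. But per-depth termination is simply false. Take $\ola = 111$, $t = I^\omega$ (the tree with $I^\omega = (\abstree{x}\vartree x)\,I^\omega$), and $U = \{\emptyseq,\seq 2,\seq{2,2},\ldots\}$. Your strategy contracts at $\emptyseq$ at every step, since after each contraction the residual of $\seq{2}^{\,n+1}$ lands at $\seq{2}^{\,n}$ and in particular one reappears at the root. There are infinitely many contractions at depth $0$; the finiteness of $\atrunc{t}{1}$ does not help, and unique ancestry only tells you that the $n$-th root contraction traces back to the distinct ancestor $\seq{2}^{\,n}$, of which there are infinitely many. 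So the FD-style bound you sketch cannot exist. Note also that $\emptyseq \in \liminf_\iota \desc{U}{\prefix{S_1}{\iota}}$ here, so positions do \emph{not} eventually drop out of the liminf as you claim; emptiness of $\desc{U}{S_1}$ in this example comes from the intersection with $\dom{\bot}=\emptyset$ in clause~\ref{item:descD}, a mechanism your argument never invokes.

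The paper's argument avoids any depth measure. It uses a \emph{parallel}-outermost strategy: each round $S_i$ is a complete development of the antichain $V_i$ of currently outermost residuals (contracted left-to-right), and $S = \Concat_{i<\omega} S_i$. If some $u \in \desc{U}{S}$ survived, Lemma~\ref{lem:descAltChar} gives an index after which $u$ is a residual and no later step satisfies $\acut{p_\iota} \le u$, hence no later step is at or above $u$. But in every subsequent round either $u$ itself or some residual strictly above it lies in $V_i$ and gets contracted---contradiction. The key is that Lemma~\ref{lem:descAltChar} already encodes that a surviving residual is eventually left untouched from above, and the outermost discipline makes that impossible; no termination measure is needed.
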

\begin{proof}
  Below, we construct an almost complete development
  $t_0 \pato[\betasred] s$. This almost complete development can be
  extended to a complete development by a reduction
  $s \pato[\strictred] \strictnf s$ to the $\strictred$-normal form
  $\strictnf s$, which exists according to
  Proposition~\ref{prop:strictRedNormalConfl}.

  Let $t_0 \in \aptree$, $U_0$ a set of redex occurrences in $t_0$ and
  $V_0$ the set of outermost redex occurrences in $U_0$. Furthermore,
  let $S_0\fcolon t_0 \pato[V_0] t_1$ be some complete development of
  $V_0$ in $t_0$. $S_0$ can be constructed by contracting the redex
  occurrences in $V_0$ in a left-to-right order. This step can be
  continued for each $i < \omega$ by taking $U_{i+1} =
  \desc{U_i}{S_i}$, where $S_{i}\fcolon t_{i} \pato[V_{i}] t_{i+1}$ is
  some complete development of $V_{i}$ in $t_{i}$ with $V_{i}$ the set
  of outermost redex occurrences in $U_{i}$.

  Note that then, by iterating Proposition~\ref{prop:descSeq}, we have
  that
  \begin{gather}
    \desc{U}{S_0\concat \dots \concat S_{n-1}} = U_n \quad
    \text{ for all } n < \omega \tag{1}
    \label{eq:compDev1}
  \end{gather}
  If there is some $n < \omega$ for which $U_n = \emptyset$, then $S_0
  \concat \dots \concat S_{n-1}$ is a complete development of $U$
  according to \eqref{eq:compDev1}.

  If this is not the case, consider the reduction $S = \Concat_{i <
    \omega} S_i$, i.e.\ the concatenation of all '$S_i$'s. We claim
  that $S$ is a complete development of $U$. Suppose that this is not
  the case, i.e.\ $\desc{U}{S} \neq \emptyset$. Hence, there is some
  $u \in \desc{U}{S}$. Since all '$U_i$'s are non-empty, so are the
  '$V_i$'s. Consequently, all '$S_i$'s are non-empty reductions which
  implies that $S$ is an open reduction. Therefore, we can apply
  Lemma~\ref{lem:descAltChar} to infer from $u \in \desc{U}{S}$ that
  there is some $\alpha < \len S$ such that $u \in
  \desc{U}{\prefix{S}{\alpha}}$ and all reduction steps beyond
  $\alpha$ do not take place at $u$ or above. This is not possible due
  to the parallel-outermost reduction strategy that $S$ follows.
\end{proof}

Next we want to show that the final lambda tree of complete
developments is uniquely determined by the start lambda tree and the
set of redexes. To this end we restrict ourselves to strictness
signatures $111$, $101$ and $001$, since this uniqueness of complete
developments fails for all other strictness signatures (except for the
trivial $000$).
\begin{example}
  At first let $a_2 = 0$, and let $a_1 = 1$ or $a_0 = 1$. Hence, the
  lambda tree $s$ with $s = (\abstree x s)\,\vartree x$ is in
  $\aptree$. Given $t = (\abstree x \vartree y)\, s$, we find two
  complete developments of the set of all redex occurrences in $t$: $t
  \pato[\betared] \bot$ (by contracting $s$ to itself repeatedly) and
  $t \to[\betared] \vartree y$.

  If we had chosen a more strict notion of complete developments, that
  does not contract arbitrary $\strictred$-redexes, then the
  uniqueness of complete developments would also fail for $101$ and
  $001$:  Let $a_2 = 1$. Then the lambda tree $I^\omega$ with
  $I^\omega = (\abstree x \vartree x)\,I^\omega$ is in $\aptree$. At
  first consider $a_1 = 0$ and
  $t = (\abstree x \vartree x \,\vartree y)\,I^\omega$. Then we have
  two complete developments of the set of all redex occurrences in
  $t$:
  $t \pato[\betared] (\abstree x \vartree x \,\vartree y) \,
  \bot\to[\betared] \bot\,\vartree y$ and
  $t \to[\betared] I^\omega\, \vartree y \pato[\betared]
  \bot$. Similarly, given $a_0 = 0$, and
  $t = (\abstree x \abstree y \vartree x)\,I^\omega$, we have
  $t \pato[\betared] (\abstree x \abstree y \vartree x) \,
  \bot\to[\betared] \abstree y \bot$ and
  $t \to[\betared] \abstree y I^\omega \pato[\betared] \bot$.
\end{example}

\begin{definition}[paths]
  \label{def:path}
  Given a lambda tree $t \in \aptree$ and $U$ a set of redex
  occurrences in $t$, a \emph{$U$-path in $t$} (or simply \emph{path})
  is a finite sequence of length over the set $\dom t\cup\domBot t
  \uplus\set{0,1,2,\lambda,\calV}$ of the form
  $\seq{n_0,e_0,n_1,e_1,\dots,e_{l-1},n_l}$, subject to a number of
  restrictions.  We write paths using the notation
  $n_0\stackrel{e_0}\to n_1 \stackrel{e_1}\to \dots
  \stackrel{e_{l-1}}\to n_l$, and call $n_i$
  nodes and $e_i$ edges. Nodes range over $\dom t \cup \domBot t$ and
  edges over $\set{0,1,2,\lambda,\calV}$. If a path contains
  $n_i\stackrel{e_i}\to n_{i+1}$, we say that $n_i$ has an outgoing
  $e_i$-edge to $n_{i+1}$.

  The set of well-formed $U$-paths in $t$, denoted $\prepaths{t}{U}$,
  is defined as follows. Each path starts with the node $\emptyseq$
  and must end in a node. For each node $n$ with an outgoing $e$-edge
  to $n'$, we require that $n \in \pos t$ and that one of the
  following conditions holds:
  \begin{enumerate}[(a)]
  \item If $n \nin U$, then $n' = n \concat \seq i$ and $e = i$.
    \label{item:path1}
  \item If $n \in U$, then $n' = n \concat \seq{1,0}$ and $e =
    \lambda$.
    \label{item:path2}
  \item If $t(n) = p\concat\seq 1 \in \dom t$ and $p \in U$, then $n'
    = p \concat \seq{2}$ and $e = \calV$.
    \label{item:path3}
  \end{enumerate}
  Note that for (a), we implicitly require that $n \concat \seq i \in
  \dom t\cup\domBot t$.
\end{definition}
Note that the path consisting only of a single node $\emptyseq$ is a
path in any lambda tree.

\begin{definition}[diverging paths]
  \label{def:pathDiv}
  Let $t \in \aptree$ and $U$ a set of redex occurrences in $t$.  The
  set of \emph{diverging} $U$-paths in $t$, denoted $\dpaths{t}{U}$,
  is the subset of $\prepaths t U$ inductively defined as follows:
    \begin{enumerate}[(a)]
    \item Let $n_k \in \dom t\cup\domBot t$ and
      $e_k \in \set{\lambda,\calV}\cup \setcom{i}{a_i = 0}$ for all
      $k \ge 0$. If
      $P \stackrel{e_0}\to n_0 \stackrel{e_1}\to n_1 \stackrel{e_2}\to
      \dots \stackrel{e_m}\to n_m$ is a path in $\prepaths t U$ for
      each $m \ge 0$, then $P \in \dpaths{t}{U}$.
      \label{item:pathDiv1}
    \item If $P \in \prepaths t U$ ends in a node $n \in \domBot t$, then
      $P \in \dpaths{t}{U}$.
      \label{item:pathDiv2}      
    \item If $P \stackrel{i}{\to} n \in \dpaths{t}{U}$ with $a_i = 0$,
      then $P \in \dpaths{t}{U}$.
      \label{item:pathDiv3}
    \item If $P \in \dpaths{t}{U}$ and
      $P \stackrel{i}{\to} n \in \prepaths t U$, then
      $P \stackrel{i}{\to} n \in \dpaths{t}{U}$.
      \label{item:pathDiv4}
    \end{enumerate}
\end{definition}

\begin{definition}[terminated paths]
  Let $t \in \aptree$, $U$ a set of redex occurrences in $t$, and $P$
  a $U$-path in $t$.
  \begin{enumerate}[(i)]
  \item The \emph{position of $P$}, denoted $\pathpos P$, is the
    subsequence of $P$ containing only (and all) $i$-edges (with $i
    \in \set{0,1,2}$), i.e.\
    \[
      \pathpos{n} = \emptyseq \qquad
      \pathpos{P\stackrel{e}\to n} =
      \begin{cases}
        \pathpos P \concat \seq e&\text{ if } e \in \set{0,1,2}\\
        \pathpos P &\text{ if } e \in \set{\lambda,\calV}
      \end{cases}
      \]
 
  \item $P$ is called \emph{terminated} if it is not diverging, does
    not end in a node $n \in U$, and cannot be extended with a
    $\calV$-edge, i.e.\ there is no $U$-path in $t$ of the form
    $P\stackrel\calV\to n'$.  The set of all terminated $U$-paths in
    $t$ is denoted $\paths{t}{U}$.
  \item If $P$ is terminated, we define the labelling of $P$, denoted
    $\pathlab{P}$, as follows:
    \[
    \pathlab{P} =
    \begin{cases}
      t(n)&\text{ if $P$ terminates in a node $n$ with $t(n) \in
        \lamsig\setminus \allpos$}\\
      \pathpos{Q}&\text{ if $P$ terminates in a node $n$ with $t(n)
        \in \allpos$ and}\\
      &\text{\quad$Q$ is the longest prefix of $P$ that ends in $t(n)$ }
    \end{cases}
    \]
  \end{enumerate}
\end{definition}

\begin{lemma}
  \label{lem:pathPreserve}
  Let $S \fcolon s \pato t$ be a development of a set $U$ of redex
  occurrences in $s$. Then there is a surjective mapping
  $\theta_S\fcolon \paths s U \funto \paths{t}{\desc U S}$ that
  preserves $\pathpos\cdot$ and $\pathlab\cdot$, i.e.\
  \[
  \pathpos{\theta_S(P)} = \pathpos P \quad \text{and} \quad
  \pathlab{\theta_S(P)} = \pathlab P\qquad\text{for all } P \in \paths s U.
  \]
\end{lemma}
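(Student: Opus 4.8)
The plan is to prove the statement by transfinite induction on the length $\alpha$ of $S$, constructing $\theta_S$ stage by stage and verifying surjectivity together with the two preservation properties at every stage. For the base case $\alpha = 0$ we have $t = s$ and $\desc U S = U$, so $\theta_S = \mathrm{id}$ works. The real content is the single-step case, which I would isolate as a self-contained lemma stated for an \emph{arbitrary} set $U$ of redex occurrences: given $\phi\fcolon s \to[p] t$ contracting a residual $\betared$-redex at $p$ (with $p \in U$) or an arbitrary $\strictred$-redex, define $\theta_{\seq\phi}$ by tracing a terminated path $P \in \paths s U$ through the contraction.

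For the single-step $\betared$ case I would distinguish how $P$ meets $p$, mirroring the residual formula of Definition~\ref{def:desc}: nodes and edges with $p \not\le n$ are copied unchanged; the $\lambda$-edge out of the redex node $p$ is deleted (the body at $p\concat\seq{1,0}$ is lifted to $p$); and a $\calV$-edge leaving a bound-variable occurrence of the redex is rerouted into the residual of the argument subtree $p\concat\seq 2$. One then checks that the resulting sequence is again a well-formed terminated path in $t$ with respect to $\desc U{\seq\phi}$, that its position and label agree with those of $P$ (the remapping never touches the final label, and it merely realises the substitution that the $\calV$-edge was already encoding), and that the map is surjective by reversing the remapping, reinserting the contracted redex. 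For a $\strictred$-step the argument is easier: contraction replaces a strict subtree by $\bot$, and a terminated path can never descend into that subtree — it would end in a $\domBot{t}$-node and hence be diverging, not terminated — so $\theta_{\seq\phi}$ is essentially the identity on the unaffected part and again preserves everything and is onto. Here Proposition~\ref{prop:residuals} guarantees that descendants of redex occurrences remain redex occurrences, so edge rules \ref{item:path2} and \ref{item:path3} of Definition~\ref{def:path} stay applicable downstream. The successor case $S = T\concat\seq\phi$ is then immediate: set $\theta_S = \theta_{\seq\phi}\circ\theta_T$, noting that by Proposition~\ref{prop:descSeq} the codomain $\paths{t}{\desc U S}$ equals $\paths{t}{\desc{(\desc U T)}{\seq\phi}}$, and that composition preserves $\pathpos\cdot$, $\pathlab\cdot$, and surjectivity.

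The limit case, where $S$ is open of length $\alpha$, is the main obstacle. The key observation is that a terminated path $P$ visits only finitely many nodes, all positions of bounded $\ola$-depth; hence so do its images $\theta_{\prefix S\beta}(P)$, since $\pathpos\cdot$ is preserved and the image nodes are descendants of those of $P$. Using Lemma~\ref{lem:descAltChar} to stabilise the residual positions and Lemma~\ref{lem:pconvOpenPos} (equivalently the finitary approximation Lemma~\ref{lem:finApprox}) to stabilise the labels $t_\beta(n)$ on this finite set, I would show that $(\theta_{\prefix S\beta}(P))_{\beta<\alpha}$ is eventually constant, and define $\theta_S(P)$ to be this eventual value. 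Well-definedness requires that the stable path is still terminated in $t$, which follows because termination of $P$ forbids it from sitting at a residual redex or below a volatile position, so no late step can re-activate it; preservation of $\pathpos\cdot$ and $\pathlab\cdot$ is then inherited from the prefixes. For surjectivity I would take a terminated path $P'\in\paths t{\desc U S}$, observe via Lemma~\ref{lem:pconvOpenPos} that its finitely many nodes already occur, with the correct labels, in some $t_\beta$, so that $P'$ lies in the image of $\theta_{\prefix S\beta}$ by the already-established composition case, and pull it back to a terminated path in $s$. The delicate point throughout the limit argument is ruling out that a path's target label or position is disturbed arbitrarily late in the reduction; this is exactly what the volatility/descendant machinery of Lemma~\ref{lem:descAltChar} and Lemma~\ref{lem:pconvOpenPos} controls.
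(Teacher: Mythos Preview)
Your proposal is correct and follows essentially the same approach as the paper: transfinite induction on the length of $S$, with the successor case handled by a single-step construction of $\theta_{\seq\phi}$ (which the paper defers to a ``careful case analysis'' in the style of \cite{kennaway03book}, whereas you spell out the $\betared$/$\strictred$ distinction explicitly), and the limit case handled by showing eventual stabilisation of $(\theta_{\prefix S\beta}(P))_{\beta<\alpha}$ via Lemma~\ref{lem:pconvOpenPos} and Lemma~\ref{lem:descAltChar}, with surjectivity obtained by pulling a terminated path in $t$ back to some $t_\beta$ and invoking the induction hypothesis. The paper's limit argument phrases the stabilisation slightly differently---it observes directly that no node of a terminated $P_\iota$ can be volatile in $S$, hence eventually $\acut{p_\iota}$ avoids all nodes of $P_\iota$---but this is the same mechanism you invoke.
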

\begin{proof}
  Let $S = (t_\iota \to[p_\iota] t_{\iota+1})_{\iota<\alpha}$. We
  proceed by induction on $\alpha$. The case $\alpha = 0$ is
  trivial. If $\alpha$ is a successor ordinal, the statement follows
  from the induction hypothesis by careful case analysis (similar to
  \cite{kennaway03book}).

  Let $\alpha$ be a limit ordinal. Furthermore, let
  $P \in \paths{t_0}{U}$ and let
  $P_\iota = \theta_{\prefix S \iota}(P)$ for all $\iota <
  \alpha$. The latter is well-defined by the induction
  hypothesis. Since each $P_\iota$ is terminated, no node in $P_\iota$
  is a volatile position in $S$. Hence, there is some $\beta < \alpha$
  such that $\acut{p_\iota}$ is not a node in $P_\iota$ for all
  $\beta \le \iota < \alpha$. Consequently, $P_\iota = P_\beta$ for
  all $\beta \le \iota < \alpha$. From the characterisation of lubs
  and glbs from Theorem~\ref{thr:alebotCpo} and
  Proposition~\ref{prop:alebotGlb} we can then derive that $P_\beta$
  is also a $\desc U S$-path in $t_\alpha$. Additionally, $P_\beta$
  must also be terminated in $t_\alpha$, and we thus have that
  $P_\beta \in \paths{t_\alpha}{\desc U S}$. Define
  $\theta_S(P) = P_\beta$. Preservation of $\pathpos\cdot$ and
  $\pathlab\cdot$ follows from the induction hypothesis.

  To show that the thus defined function $\theta_S$ is indeed
  surjective, we assume some $P \in \paths{t_\alpha}{\desc U S}$ and
  show that there is some $Q \in \paths{t_0}{U}$ with $\theta_S(Q) = P$.

  Let $V$ be the set of nodes in $P$ (which are positions in
  $t_\alpha$). Since $V$ is finite, we may apply
  Lemma~\ref{lem:pconvOpenPos}~\ref{item:pconvOpenPos1}, to obtain
  some $\beta < \alpha$ such that $t_\iota(p) = t_\alpha(p)$ and
  $\acut{p_\iota} \not\le p$ for all $\beta \le \iota <\alpha$ and $p \in
  V$. Consequently, $P$ is a terminated $\desc{U}{\prefix S
    \beta}$-path in $t_\beta$, i.e.\ $P \in
  \paths{t_\beta}{\desc{U}{\prefix S \beta}}$. Since, by induction
  hypothesis $\theta_{\prefix S \beta}$ is surjective, there is some
  $Q \in \paths{t_0}{U}$ with $\theta_{\prefix S \beta}(Q) =
  P$. Hence, according to the definition of $\theta_S$, we have that
  $\theta_S(Q) = P$.
\end{proof}

We can use the above lemma to directly define the uniquely determined
final lambda tree of an arbitrary complete development of a given set
of redex occurrences:
\begin{definition}
  Let $t \in \aptree$ and $U$ a set of redex occurrences in $t$. Then
  define the set of path labellings of $t$ w.r.t.\ $U$, denoted
  $\pathLabs{t}{U}$, as follows:
  \[
  \pathLabs{t}{U} = \setcom{(\pathpos{P},\pathlab{P})}{P \in \paths{t}{U}}
  \]
\end{definition}


\begin{lemma}
  \label{lem:nodePosSame}
  Let $P$ be a $U$-path in $t$ with $U = \emptyset$. Then

  \begin{enumerate}[(i)]
  \item $P$ ends in the node $\pathpos{P}$, and
    \label{item:nodePosSame1}
  \item $P \in \paths{t}{\emptyset}$ iff $P \nin
    \dpaths{t}{\emptyset}$.
    \label{item:nodePosSame2}
  \end{enumerate}

\end{lemma}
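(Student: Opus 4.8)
The plan is to exploit the observation that setting $U = \emptyset$ collapses the path-formation rules of Definition~\ref{def:path} to a single case. Indeed, clause~\ref{item:path2} presupposes $n \in U$ and clause~\ref{item:path3} presupposes $p \in U$, both impossible when $U = \emptyset$. Hence in any $\emptyset$-path every edge is produced by clause~\ref{item:path1}, so it is an $i$-edge with $i \in \set{0,1,2}$, and the node after an $e$-edge is exactly $n \concat \seq e$. This single fact drives both parts of the lemma, so I would establish it first and then read off the two claims.

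For part~\ref{item:nodePosSame1} I would proceed by induction on the length of $P$. In the base case $P = \seq{\emptyseq}$ both the final node and $\pathpos P$ equal $\emptyseq$. For the inductive step write $P = P' \stackrel{e}\to n$; since only clause~\ref{item:path1} applies, $e = i$ for some $i \in \set{0,1,2}$ and $n = n' \concat \seq i$, where $n'$ is the last node of $P'$. By the induction hypothesis $n' = \pathpos{P'}$, and by the definition of $\pathpos\cdot$ we get $\pathpos P = \pathpos{P'} \concat \seq i = n' \concat \seq i = n$, so $P$ indeed ends in $\pathpos P$.

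For part~\ref{item:nodePosSame2} I would simply check that the two extra clauses in the definition of a \emph{terminated} path are vacuously true when $U = \emptyset$: a path can never end in a node $n \in U$ since $U$ is empty, and it can never be extended by a $\calV$-edge, because such an edge is introduced only by clause~\ref{item:path3}, which again requires $p \in U$. Thus, for a $\emptyset$-path, being terminated is equivalent to being non-diverging, which gives $P \in \paths t \emptyset$ iff $P \nin \dpaths t \emptyset$ (both understood as subsets of $\prepaths t \emptyset$, into which the ambient hypothesis already places $P$).

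There is no real obstacle here: the whole argument is an unfolding of the definitions once the collapse to clause~\ref{item:path1} is noted. The only point that deserves explicit attention is confirming that no mechanism of Definition~\ref{def:path} can manufacture a $\lambda$- or $\calV$-edge after $U = \emptyset$; with that verified, part~\ref{item:nodePosSame1} is a routine induction and part~\ref{item:nodePosSame2} is immediate from the vacuity of the two side-conditions on terminated paths.
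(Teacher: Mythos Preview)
Your proposal is correct and matches the paper's own proof essentially line for line: both argue part~\ref{item:nodePosSame1} by induction on $P$ after observing that $U=\emptyset$ forces every edge to arise from clause~\ref{item:path1}, and both dispatch part~\ref{item:nodePosSame2} by noting that the two non-divergence side-conditions on terminated paths are vacuous when $U=\emptyset$.
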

\begin{proof}
  \begin{enumerate}[(i)]
  \item We proceed by induction on $P$. If $P$ consists of a single
    node, which thus has to be $\emptyseq$, then
    $\pathpos P = \emptyseq$. Otherwise, $P = Q \stackrel i \to n$ and
    by induction hypothesis we have that $Q$ ends in $\pathpos
    {Q}$. Since the set $U$ of redex occurrences is empty, only
    \ref{item:path1} of Definition~\ref{def:path} is
    applicable. Hence, we then have that
    $n = \pathpos{Q} \concat \seq i$, which is also the position of
    $P$.
  \item Since $U = \emptyset$, $P$ cannot end in a node in
    $U$. Moreover, $P$ cannot be extended by a $\calV$ edge, since
    $U = \emptyset$ implies that all edges are labelled with numbers
    from $\set{0,1,2}$. Hence, by definition,
    $P \nin \dpaths{t}{\emptyset}$ necessary and sufficient for
    $P \in \paths{t}{\emptyset}$.
  \end{enumerate}
\end{proof}
\begin{lemma}
  \label{lem:pathDiv}
  If $t \in \aptree$ and $U = \emptyset$, then $\dpaths{t}{U}$ is the least subset of
  $\prepaths t U$ satisfying conditions \ref{item:pathDiv2} -
  \ref{item:pathDiv4} of Definition~\ref{def:pathDiv}.
\end{lemma}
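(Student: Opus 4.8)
The plan is to show that condition~\ref{item:pathDiv1} of Definition~\ref{def:pathDiv} becomes redundant once $U = \emptyset$ and $t \in \aptree$, so that the inductive definition of $\dpaths t U$ collapses to the one generated by \ref{item:pathDiv2}--\ref{item:pathDiv4} alone. One inclusion is immediate: $\dpaths t U$ satisfies \ref{item:pathDiv2}, \ref{item:pathDiv3} and \ref{item:pathDiv4}, so the least subset of $\prepaths t U$ satisfying these three conditions is contained in $\dpaths t U$. For the converse I would argue that \emph{every} subset of $\prepaths t U$ vacuously satisfies \ref{item:pathDiv1}, simply because its hypothesis can never hold; then the least set satisfying \ref{item:pathDiv2}--\ref{item:pathDiv4} also satisfies all of \ref{item:pathDiv1}--\ref{item:pathDiv4}, and minimality of $\dpaths t U$ yields the reverse inclusion.

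The substantive step is thus to rule out the hypothesis of \ref{item:pathDiv1}, which I would do by contradiction. Suppose there are nodes $n_k$ and edges $e_k$ for all $k \ge 0$ as in \ref{item:pathDiv1}, with $P \stackrel{e_0}\to n_0 \stackrel{e_1}\to \dots \stackrel{e_m}\to n_m \in \prepaths t U$ for every $m$. Since $U = \emptyset$, the only applicable clause of Definition~\ref{def:path} is \ref{item:path1}, because \ref{item:path2} and \ref{item:path3} both require a redex occurrence. Hence each $e_k$ is an $i$-edge, so every node of the extension is obtained from its predecessor by appending $\seq{e_k}$, and the side condition $e_k \in \set{\lambda,\calV}\cup\setcom{i}{a_i = 0}$ forces $a_{e_k} = 0$; that is, every edge of the infinite extension is strict.

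From this I would extract a genuine infinite branch. Writing $q = \pathpos P$, which by Lemma~\ref{lem:nodePosSame}~\ref{item:nodePosSame1} is the final node of $P$, the positions $q, n_0, n_1, \dots$ form a chain obtained by appending one symbol at a time and so determine a single infinite position $S$. Each of $q, n_0, n_1, \dots$ has a child in the extension, and a node of $\domBot t$ has no children, so each of these positions lies in $\dom t$; together with prefix-closedness of $\dom t$ this makes every proper prefix of $S$ a member of $\dom t$, i.e.\ $S$ is an infinite branch of $t$. Since every edge is strict, appending it leaves the $\ola$-depth unchanged ($\adepth{p\concat\seq i} = \adepth p + a_i = \adepth p$ when $a_i = 0$), so the $\ola$-depth is constant along $S$ beyond $q$; as the prefixes of $q$ have $\ola$-depth at most $\adepth q$, the branch $S$ is $\ola$-bounded. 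This contradicts $\ola$-guardedness of $t$ (Definition~\ref{def:afinite}), so no such $P$ exists.

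The step I expect to demand the most care is this last one: I must verify that the infinite strict extension yields an infinite branch in the precise sense of the branch definition — every finite prefix lying in $\dom t$ — rather than merely an infinite sequence of edges. The crucial observation is that each node of the extension has a successor and hence sits in $\dom t$ and not in the childless set $\domBot t$; once that is in place, constancy of the $\ola$-depth and the clash with guardedness are immediate, and the equality of the two least sets follows.
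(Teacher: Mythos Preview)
Your proposal is correct and follows essentially the same approach as the paper's proof: both show that the hypothesis of clause~\ref{item:pathDiv1} is vacuous when $U=\emptyset$ by observing that all edges must be strict $i$-edges, assembling the resulting infinite strict extension into an $\ola$-bounded infinite branch of $t$, and deriving a contradiction with $t\in\aptree$. Your treatment is slightly more explicit about why the nodes lie in $\dom t$ rather than $\domBot t$, but the argument is the same.
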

\begin{proof}
  Let $\calP$ be the least subset satisfying conditions
  \ref{item:pathDiv2} - \ref{item:pathDiv4} of
  Definition~\ref{def:pathDiv}. Hence,
  $\calP \subseteq \dpaths{t}{U}$. To show that
  $\calP \supseteq \dpaths{t}{U}$, we need to show that $\calP$
  satisfies \ref{item:pathDiv1} as well. To this end, we show that the
  precondition of \ref{item:pathDiv1} can never be satisfied.

  Let $n_k \in \dom t\cup\domBot t$ and
  $e_k \in \set{\lambda,\calV}\cup \setcom{i}{a_i = 0}$ for all
  $k \ge 0$. Moreover, let
  $P \stackrel{e_0}\to n_0 \stackrel{e_1}\to n_1 \stackrel{e_2}\to
  \dots \stackrel{e_m}\to n_m$ is a path in $\prepaths t U$ for each
  $m \ge 0$. We show that this assumption leads to a contradiction.

  Since $U = \emptyset$, we know that $e_k \nin \set{\lambda,\calV}$,
  and thus $e_k \in \setcom{i}{a_i = 0}$. Define the infinite sequence
  $S = \pathpos{P}\concat\seq{e_0,e_1,e_2,\dots}$. By
  Lemma~\ref{lem:nodePosSame}, $S$ is an infinite branch in
  $t$. Moreover, since $e_k \in \setcom{i}{a_i = 0}$ for all
  $k \ge 0$, we know that $S$ $\ola$-bounded. This contradicts the
  assumption that $t \in \aptree$.
\end{proof}

\begin{lemma}
  \label{lem:pathposBijection}
  The mapping
  $\theta\fcolon \paths t \emptyset \to \dom{\strictnf t}$ with
  $\theta(P) = \pathpos P$ is a bijection.
\end{lemma}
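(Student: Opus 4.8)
The plan is to exploit the collapse of paths into bare positions when $U=\emptyset$, and then to match the inductive generation of diverging paths against the inductive generation of $\doms t$. First I would observe that with $U=\emptyset$ only clause~\ref{item:path1} of Definition~\ref{def:path} can ever apply, so every edge of a $\emptyset$-path is an $i$-edge and each node is forced to be the previous node concatenated with that edge. Hence a $\emptyset$-path is completely determined by its sequence of edges, which by Lemma~\ref{lem:nodePosSame}~\ref{item:nodePosSame1} equals $\pathpos P$. This shows that $\theta'\colon \prepaths{t}{\emptyset} \to \dom t \cup \domBot t$, $P \mapsto \pathpos P$, is already a bijection: injectivity is the determinacy just noted, and surjectivity follows because $\dom t$ is prefix-closed and every $\domBot t$-position has all proper prefixes in $\dom t$, so each position is reached by a unique well-formed path (a hole in $\domBot t$, having no outgoing edge, can occur only as the final node).

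Next I would rewrite $\paths{t}{\emptyset}$ and $\dpaths{t}{\emptyset}$ under this bijection. By Lemma~\ref{lem:nodePosSame}~\ref{item:nodePosSame2} a $\emptyset$-path is terminated iff it is not diverging, so $\paths{t}{\emptyset} = \prepaths{t}{\emptyset}\setminus\dpaths{t}{\emptyset}$; thus it suffices to identify the image $\theta'(\dpaths{t}{\emptyset})$ inside $\dom t \cup \domBot t$. Here I would invoke Lemma~\ref{lem:pathDiv}, which tells us that for $U=\emptyset$ the set $\dpaths{t}{\emptyset}$ is the \emph{least} subset of $\prepaths{t}{\emptyset}$ closed under clauses~\ref{item:pathDiv2}--\ref{item:pathDiv4} of Definition~\ref{def:pathDiv}. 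Transporting these three clauses along $\theta'$ turns them into closure conditions on a subset $D \subseteq \dom t \cup \domBot t$: clause~\ref{item:pathDiv2} becomes $\domBot t \subseteq D$, clause~\ref{item:pathDiv3} becomes ``$p\concat\seq i \in D$ and $a_i = 0$ imply $p \in D$'', and clause~\ref{item:pathDiv4} becomes ``$p \in D$ and $p\concat\seq i \in \dom t \cup \domBot t$ imply $p\concat\seq i \in D$''.

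The heart of the argument is then to check that the least such $D$ coincides with $\doms t$ of Definition~\ref{def:doms}. The first two transported clauses are literally clauses~\ref{item:doms1} and~\ref{item:doms2} of Definition~\ref{def:doms}. The third transported clause differs from clause~\ref{item:doms3} only in that it permits the child $p\concat\seq i$ to lie in $\domBot t$ rather than merely in $\dom t$; but since $\domBot t \subseteq D$ and $\domBot t \subseteq \doms t$ are forced by the first clause of either system, this extra case adds nothing new, so the two closure operators have the same least fixed point, i.e. $\theta'(\dpaths{t}{\emptyset}) = \doms t$. This is the step I expect to require the most care: one must verify the least-fixed-point equality despite the superficial mismatch in that last clause (and read the stated clause~\ref{item:doms3}, which as written is a tautology, as the intended downward-propagation condition).

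Finally I would assemble the pieces: $\theta'$ restricts to a bijection from $\paths{t}{\emptyset} = \prepaths{t}{\emptyset}\setminus\dpaths{t}{\emptyset}$ onto $(\dom t \cup \domBot t)\setminus\doms t$. Because $\domBot t \subseteq \doms t$, the target simplifies to $\dom t \setminus \doms t$, which by Proposition~\ref{prop:strictRedNormalConfl} is exactly $\dom{\strictnf t}$. Since this restriction of $\theta'$ is precisely the map $\theta(P) = \pathpos P$ of the statement, $\theta$ is a bijection.
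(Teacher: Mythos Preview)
Your proposal is correct and follows essentially the same route as the paper: establish that $\pathpos\cdot$ is a bijection $\prepaths{t}{\emptyset}\to\dom t\cup\domBot t$, use Lemma~\ref{lem:nodePosSame}~\ref{item:nodePosSame2} and Proposition~\ref{prop:strictRedNormalConfl} to reduce the claim to $\theta'(\dpaths{t}{\emptyset})=\doms t$, and then verify this via Lemma~\ref{lem:pathDiv}. The only cosmetic difference is that the paper proves the last equivalence by two explicit inductions (one on the generation of $\dpaths{t}{\emptyset}$, one in the converse direction) whereas you phrase it as a comparison of least fixed points; your observation that clause~\ref{item:doms3} as printed is a tautology and must be read as the intended downward-propagation rule is also correct.
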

\begin{proof}
  It is easy to show by induction that
  $\pathpos{\cdot}\fcolon \prepaths t \emptyset \to \dom t\cup\domBot
  t$ is a bijection. Moreover,
  $\dom{\strictnf t} = \dom t \setminus \doms t$ by
  Proposition~\ref{prop:strictRedNormalConfl}, and
  $\paths t \emptyset = \prepaths t \emptyset \setminus \dpaths t
  \emptyset$, by
  Lemma~\ref{lem:nodePosSame}~\ref{item:nodePosSame2}. Hence, it
  suffices to show that $P \in \dpaths t \emptyset$ iff
  $\pathpos{P} \in \doms t$ for all $P \in \prepaths t \emptyset$.

  We first prove the ``only if'' direction by induction on
  $P \in \dpaths t \emptyset$. By Lemma~\ref{lem:pathDiv}, we only
  have to consider the cases \ref{item:pathDiv2}-\ref{item:pathDiv4}
  of Definition~\ref{def:pathDiv}.
  \begin{enumerate}
  \item[\ref{item:pathDiv2}] Let $P \in \prepaths t \emptyset$ such
    that $P$ ends in a node $n \in \domBot t$. Hence, $n \in \doms
    t$. Since $\pathpos P = n$ by Lemma~\ref{lem:nodePosSame}, we have
    that $\pathpos P \in \doms t$.
  \item[\ref{item:pathDiv3}] Let $P \stackrel i \to n \in \dpaths t \emptyset$ and $a_i = 0$. By induction
    hypothesis, $\pathpos{P \stackrel i} \in \doms t$. Since $\pathpos {P \stackrel i} = \pathpos
    P\concat\seq i$, we then have $\pathpos P \in \doms t$.
  \item[\ref{item:pathDiv4}] Let $P = Q \stackrel i \to n$ and
    $Q \in \dpaths t \emptyset$. By induction hypothesis,
    $\pathpos Q \in \doms t$. Since
    $\pathpos P = \pathpos Q\concat\seq i$, we then have
    $\pathpos P \in \doms t$.
  \end{enumerate}
  The converse direction follows by a similar proof by induction on
  $P \in \prepaths t \emptyset$.
\end{proof}

From the definition of paths, we can derive that the set of path
labellings of a lambda tree w.r.t.\ the empty set is the graph of the
lambda tree itself:
\begin{lemma}
  \label{lem:pathLabsEmpty}
  For all $t \in \aptree$, we have that $\pathLabs{t}{\emptyset} =
  \strictnf t$.
\end{lemma}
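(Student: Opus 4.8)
The plan is to build directly on the bijection established in Lemma~\ref{lem:pathposBijection}. That lemma tells us that $\pathpos{\cdot}$ restricts to a bijection from $\paths{t}{\emptyset}$ onto $\dom{\strictnf t}$, so the first components of the pairs in $\pathLabs{t}{\emptyset}$ already range bijectively over $\dom{\strictnf t}$. In particular, injectivity of $\pathpos{\cdot}$ on $\paths{t}{\emptyset}$ guarantees that no position can receive two distinct labels, so $\pathLabs{t}{\emptyset}$ is automatically the graph of a partial function. It therefore suffices to prove the single pointwise identity $\pathlab{P} = \strictnf{t}(\pathpos{P})$ for every $P \in \paths{t}{\emptyset}$; the desired set equality then drops out.

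To prove that identity, I would first observe that with $U = \emptyset$ only clause~\ref{item:path1} of Definition~\ref{def:path} can ever fire (clauses \ref{item:path2} and \ref{item:path3} both require membership in $U$), so every edge of $P$ is an $i$-edge and the nodes of $P$ are precisely the prefixes of $n := \pathpos{P}$, with $P$ terminating in the node $n$ by Lemma~\ref{lem:nodePosSame}~\ref{item:nodePosSame1}. Since $P$ is terminated and $U = \emptyset$, $n \nin \domBot t$, so $t(n)$ is defined. Now I would split on $t(n)$ exactly as in the definition of $\pathlab{\cdot}$. If $t(n) \in \lamsig\setminus\allpos$, then $\pathlab{P} = t(n)$ immediately. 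If $t(n) = q \in \allpos$, then by condition~\ref{item:lambda3} of Definition~\ref{def:lambda} we have $q \le n$, so $q$ occurs among the strictly increasing (hence distinct) nodes of $P$; the longest prefix $Q$ of $P$ ending in node $q$ is the initial segment up to that node, and Lemma~\ref{lem:nodePosSame}~\ref{item:nodePosSame1} gives $\pathpos{Q} = q = t(n)$, so that $\pathlab{P} = t(n)$ in this case as well. In short, because $U = \emptyset$ triggers no contraction, the variable pointer is never recomputed and $\pathlab{P} = t(n)$ unconditionally.

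Finally, since $n = \pathpos{P} \in \dom{\strictnf t}$ and Proposition~\ref{prop:strictRedNormalConfl} gives $\strictnf{t}(p) = t(p)$ for all $p \in \dom{\strictnf t}$, I obtain $\pathlab{P} = t(n) = \strictnf{t}(\pathpos{P})$, which is the required identity. Combining it with the bijection yields
\[
  \pathLabs{t}{\emptyset} = \setcom{(\pathpos{P},\pathlab{P})}{P \in \paths{t}{\emptyset}} = \setcom{(p,\strictnf{t}(p))}{p \in \dom{\strictnf t}} = \strictnf t.
\]
I expect the only delicate point to be the bound-variable case: one must argue that the indirection through $\pathpos{Q}$ collapses to $t(n)$ precisely because no substitution has occurred, which is exactly where the hypothesis $U = \emptyset$ is essential; everything else is bookkeeping supplied by the preceding lemmas.
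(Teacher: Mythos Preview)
Your proposal is correct and follows essentially the same approach as the paper: reduce to the pointwise identity $\pathlab{P} = t(\pathpos{P})$ via Lemma~\ref{lem:pathposBijection} and Proposition~\ref{prop:strictRedNormalConfl}, then split on whether $t(n)$ is a position or not, using Lemma~\ref{lem:nodePosSame} to identify nodes with positions. Your treatment of the bound-variable case is in fact more explicit than the paper's (you spell out why $q$ appears as a node of $P$ via $q \le n$ and why $\pathpos{Q} = q$), but the overall structure is the same.
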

\begin{proof}
  By Lemma~\ref{lem:pathposBijection} and
  Proposition~\ref{prop:strictRedNormalConfl}, it suffices to show
  that $\pathlab P = t(\pathpos{P})$ for all
  $P \in \paths t \emptyset$. Let $P \in \paths t \emptyset$:
    \begin{itemize}
    \item If $t(\pathpos{P}) \in \lamsig\setminus \allpos$, then
      $\pathlab{P} = t(n)$, where $n$ is the node $P$
      ends in. By Lemma~\ref{lem:nodePosSame}, $n = \pathpos P$. Thus,
      $\pathlab{P} = t(\pathpos{P})$.
    \item If $t(\pathpos{P}) \in \allpos$, then $\pathlab{P} =
      \pathpos{Q}$, where $Q$ is the unique path that ends in $t(\pathpos{P})$.
      Thus, $\pathlab{P} = t(\pathpos{P})$
    \end{itemize}
\end{proof}

Moreover, from Lemma~\ref{lem:pathPreserve} and
Lemma~\ref{lem:pathLabsEmpty}, we can immediately derive the following
corollary:
\begin{corollary}
  \label{cor:compDevTerm}
  For each complete development $s \pato[U] t$, we have
  $\pathLabs{s}{U} = t$.
\end{corollary}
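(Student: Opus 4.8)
The plan is to chain the two immediately preceding results together, using the fact that a complete development is in particular a development whose residuals have vanished and whose target is already an $\strictred$-normal form. First I would unfold the definition of a complete development $S\fcolon s \pato[U] t$ of $U$: it is a development of $U$ that is almost complete, i.e.\ $\desc{U}{S} = \emptyset$, and whose final lambda tree $t$ is an $\strictred$-normal form. Since $S$ is in particular a development of $U$, Lemma~\ref{lem:pathPreserve} applies and supplies a surjective map $\theta_S\fcolon \paths{s}{U} \funto \paths{t}{\desc{U}{S}}$ that preserves both $\pathpos\cdot$ and $\pathlab\cdot$. Substituting the completeness condition $\desc{U}{S} = \emptyset$, this is a surjection $\theta_S\fcolon \paths{s}{U} \funto \paths{t}{\emptyset}$.

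Next I would observe that surjectivity together with preservation of $\pathpos\cdot$ and $\pathlab\cdot$ forces the two sets of path labellings to coincide. For the inclusion $\pathLabs{s}{U} \subseteq \pathLabs{t}{\emptyset}$, each $P \in \paths{s}{U}$ satisfies $(\pathpos{P},\pathlab{P}) = (\pathpos{\theta_S(P)},\pathlab{\theta_S(P)})$ with $\theta_S(P) \in \paths{t}{\emptyset}$; for the reverse inclusion, any $Q \in \paths{t}{\emptyset}$ is $\theta_S(P)$ for some $P$ by surjectivity, and again the pairs agree. Hence $\pathLabs{s}{U} = \pathLabs{t}{\emptyset}$. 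Finally, Lemma~\ref{lem:pathLabsEmpty} identifies $\pathLabs{t}{\emptyset}$ with $\strictnf{t}$, and since $t$ is an $\strictred$-normal form we have $\strictnf{t} = t$ (by Proposition~\ref{prop:strictRedNormalConfl}). Composing these equalities yields $\pathLabs{s}{U} = t$, as required.

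Essentially all of the substantive work has been discharged by Lemmas~\ref{lem:pathPreserve} and \ref{lem:pathLabsEmpty}, so I expect no real obstacle at this level; the corollary is genuinely a two-line assembly. The only points demanding care are bookkeeping ones: reading $\pathLabs\cdot\cdot$ (and $\strictnf{t}$, and $t$) uniformly as a set of (position, label) pairs, so that the surjection argument is an equality of sets rather than of functions; and using the two halves of completeness separately, namely almost-completeness to justify $\desc{U}{S}=\emptyset$ and the $\strictred$-normality of $t$ to replace $\strictnf{t}$ by $t$.
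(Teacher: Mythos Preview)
Your proposal is correct and follows exactly the paper's approach: chain Lemma~\ref{lem:pathPreserve} (using almost-completeness to get $\desc{U}{S}=\emptyset$) with Lemma~\ref{lem:pathLabsEmpty}, then invoke $\strictred$-normality of $t$ to collapse $\strictnf t$ to $t$. The paper compresses this to a single line of equalities, leaving implicit the surjectivity-plus-preservation argument that you spell out for $\pathLabs{s}{U} = \pathLabs{t}{\emptyset}$.
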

\begin{proof}
  \qquad
  $\pathLabs{s}{U} \stackrel{\text{Lemma~\ref{lem:pathPreserve}}}=
  \pathLabs{t}{\emptyset}
  \stackrel{\text{Lemma~\ref{lem:pathLabsEmpty}}}= \strictnf t = t$.\\
  The equality $\strictnf t = t$ follows from the fact that $t$ is a
  $\strictred$-normal form by the definition of complete developments.
\end{proof}

From this corollary we may derive the following two corollaries.

\begin{corollary}
  \label{cor:compDevUniqueTerm}
  Given two complete developments $s \pato[U] t_1$ and $s \pato[U]
  t_2$, we have that $t_1 = t_2$.
\end{corollary}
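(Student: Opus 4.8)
The plan is to observe that Corollary~\ref{cor:compDevTerm} already carries all the content: it asserts that the final lambda tree of \emph{any} complete development $s \pato[U] t$ coincides with the path-labelling set $\pathLabs{s}{U}$. The decisive point is that $\pathLabs{s}{U}$ is defined purely in terms of $s$ and the set $U$ of redex occurrences — via the terminated paths $\paths{s}{U}$ and their associated $\pathpos\cdot$ and $\pathlab\cdot$ data — and makes no reference whatsoever to a particular reduction. Hence, once $s$ and $U$ are fixed, $\pathLabs{s}{U}$ is a single, determinate object.

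Given the two complete developments $s \pato[U] t_1$ and $s \pato[U] t_2$, I would simply apply Corollary~\ref{cor:compDevTerm} to each: the first application yields $\pathLabs{s}{U} = t_1$ and the second yields $\pathLabs{s}{U} = t_2$. Transitivity of equality then gives $t_1 = t_2$, which is exactly the claim. So the proof is a two-line deduction with no genuine calculation.

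There is effectively no obstacle at this stage, since the real work was discharged earlier. Corollary~\ref{cor:compDevTerm} itself rests on Lemma~\ref{lem:pathPreserve} — the surjective mapping $\theta_S$ between terminated $U$-paths in $s$ and terminated $\desc{U}{S}$-paths in $t$, preserving both $\pathpos\cdot$ and $\pathlab\cdot$ — together with Lemma~\ref{lem:pathLabsEmpty}, which identifies $\pathLabs{t}{\emptyset}$ with the $\strictred$-normal form $\strictnf t$ (and the fact that the endpoint of a complete development is already $\strictred$-normal). The only hygiene point worth checking is that both developments in the hypothesis share literally the same source $s$ and the same redex set $U$, so that the object $\pathLabs{s}{U}$ referenced in the two applications of Corollary~\ref{cor:compDevTerm} is one and the same; this is immediate from the statement, as the notation $s \pato[U] (\cdot)$ fixes both $s$ and $U$ in each case.
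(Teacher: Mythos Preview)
Your proposal is correct and matches the paper's own proof, which simply states that the result is immediate from Corollary~\ref{cor:compDevTerm}. Your elaboration that $\pathLabs{s}{U}$ depends only on $s$ and $U$, so two applications of that corollary give $t_1 = \pathLabs{s}{U} = t_2$, is exactly the intended argument.
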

\begin{proof}
  Immediate from Corollary~\ref{cor:compDevTerm}
\end{proof}

\begin{corollary}
  \label{cor:compDevDiamond}
  For every pair of complete developments $S\fcolon s \pato[U] t_1$
  and $T \fcolon s \pato[V] t_2$, we find two complete developments
  $S'\fcolon t_1 \pato[\desc V S] t$ and
  $T'\fcolon t_2 \pato[\desc U T] t$.
\end{corollary}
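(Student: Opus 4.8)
The plan is to reduce the statement to the uniqueness of complete developments (Corollary~\ref{cor:compDevUniqueTerm}) by exhibiting $S \concat S'$ and $T \concat T'$ as two complete developments of the \emph{same} set $U \cup V$ of redex occurrences in $s$. First, by Proposition~\ref{prop:compDev}, I would choose complete developments $S'\fcolon t_1 \pato[\desc V S] t$ and $T'\fcolon t_2 \pato[\desc U T] t'$; these exist because $\desc V S$ and $\desc U T$ are again sets of redex occurrences (Proposition~\ref{prop:residuals}). The entire task then reduces to proving $t = t'$, and the concatenations $S \concat S'$ and $T \concat T'$ are well-defined $\prs$-converging reductions from $s$ (concatenation of a $\prs$-converging reduction with a $\prs$-converging reduction starting at its limit is again $\prs$-converging).

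The central claim is that $S \concat S'$ is a complete development of $U \cup V$ in $s$. To see that it is a \emph{development} of $U \cup V$, I would argue step by step: each step of $S$ contracts either an $\strictred$-redex or a redex at a position in $\desc U {\prefix S \iota} \subseteq \desc{U\cup V}{\prefix S \iota}$, using monotonicity (Lemma~\ref{lem:descMonotone}); and each step of $S'$ contracts either an $\strictred$-redex or a redex at a position in $\desc{\desc V S}{\prefix{S'}{\iota}}$, which by Proposition~\ref{prop:descSeq} equals $\desc{V}{S \concat \prefix{S'}{\iota}}$ and hence lies in $\desc{U\cup V}{S\concat \prefix{S'}{\iota}}$, again by Lemma~\ref{lem:descMonotone}. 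To see that it is \emph{complete}, I would compute the residuals: by Proposition~\ref{prop:descSingle} we have $\desc{U\cup V}{S} = \desc U S \cup \desc V S$, and since $S$ is a complete development of $U$ we have $\desc U S = \emptyset$, so $\desc{U\cup V}{S} = \desc V S$. Then Proposition~\ref{prop:descSeq} gives $\desc{U\cup V}{S\concat S'} = \desc{(\desc V S)}{S'} = \emptyset$, the last equality holding because $S'$ is a complete development of $\desc V S$. Moreover, the final tree $t$ of $S \concat S'$ is the final tree of $S'$, which is an $\strictred$-normal form. Hence $S \concat S'$ is a complete development of $U \cup V$ ending in $t$, and by the symmetric argument $T \concat T'$ is a complete development of $U \cup V$ ending in $t'$.

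Finally, applying Corollary~\ref{cor:compDevUniqueTerm} to the two complete developments $S \concat S'$ and $T \concat T'$ of the single set $U \cup V$ yields $t = t'$, which is exactly the required diamond. The main obstacle I anticipate is the bookkeeping in the \emph{development} condition for the concatenation — in particular confirming that the positions contracted along $S'$ are genuinely descendants of $U \cup V$ rather than merely of $V$, which is precisely where Proposition~\ref{prop:descSeq} together with monotonicity does the real work; the completeness part is then the short residual computation sketched above, and everything else is an appeal to uniqueness.
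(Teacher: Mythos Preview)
Your proposal is correct and follows essentially the same approach as the paper: the paper's proof also argues that $S\concat S'$ and $T\concat T'$ are complete developments of $U\cup V$ (citing Propositions~\ref{prop:descSingle} and \ref{prop:descSeq}) and then invokes Corollary~\ref{cor:compDevUniqueTerm}. Your version is simply a more carefully spelled-out variant, making explicit the use of Proposition~\ref{prop:compDev} and Proposition~\ref{prop:residuals} for existence, and of monotonicity (Lemma~\ref{lem:descMonotone}) alongside Proposition~\ref{prop:descSeq} for the development condition.
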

\begin{proof}
  By Propositions~\ref{prop:descSingle} and \ref{prop:descSeq},
  $S\concat S'$ and $T\concat T'$ are complete developments of $U \cup
  V$. Hence, according to Corollary~\ref{cor:compDevUniqueTerm}, they
  $\prs$-converge to the same lambda tree.
\end{proof}

Corollary~\ref{cor:compDevUniqueTerm} shows that the final term of a
complete development is uniquely determined, no matter in which order
redexes are contracted. One can also show that descendants are
uniquely determined as well, i.e.\ given two complete developments
$S\fcolon s \pato[U] t_1$ and $T\fcolon s \pato[U] t_2$ and a set of
positions $V \subseteq \dom s$, we have that $\desc{V}{S} = \desc V
T$. This suggest the notation $\desc{V}{U}$ for the descendants of $V$
in $s$ by a complete development of $U$ in $s$.

However, there is no need to prove that this is the case since none of
our proofs depend on it. We are only interested in the final term of a
complete development: we know that a complete development of $U$
followed by a complete development of $\desc{V}{U}$ is a complete
development of $U\cup V$, and according to
Corollary~\ref{cor:compDevUniqueTerm}, the final term of such a
complete development is uniquely determined (independent of whether
$\desc{V}{U}$ is uniquely determined).

\begin{figure}
  \centering
  \begin{tikzpicture}[on grid,dots/.style={shorten
      >=.5cm,shorten <=.5cm,loosely dotted,thick}]
    \begin{scope}[node distance=2.5cm]
      \node (t0) {$t_0$}%
      node[right=of t0] (t1) {$t_1$}%
      node[right=of t1] (tb) {$t_\beta$}%
      node[right=of tb] (tb1) {$t_{\beta+1}$}%
      node[right=of tb1] (ta) {$t_\alpha$};
    \end{scope}

    \begin{scope}[node distance=1.5cm]
      \node[below=of t0] (s0) {$s_0$}%
      node[below=of t1] (s1) {$s_1$}%
      node[below=of tb] (sb) {$s_\beta$}%
      node[below=of tb1] (sb1) {$s_{\beta+1}$}%
      node[below=of ta] (sa) {$s_\alpha$};
    \end{scope}

    \draw%
    (t0)%
    edge[single step] node[midway,above] {$p_0$} (t1)%
    edge[strongly] node[midway,left] {$U_0$} (s0)%
    (t1)%
    edge[dots] (tb)%
    edge[strongly] node[midway,left] {$U_1$} (s1)%
    (tb)%
    edge[single step] node[midway,above] {$p_\beta$} (tb1)%
    edge[strongly] node[midway,left] {$U_\beta$} (sb)%
    (tb1)%
    edge[dots] (ta)%
    edge[strongly] node[midway,left] {$U_{\beta + 1}$} (sb1)%
    (ta) edge[strongly] node[midway,left] {$U_\alpha$} (sa)%

    (s0) edge[strongly] node[midway,below] {$\desc{p_0}{U_0}$} (s1)%
    (s1) edge[dots] (sb)%
    (sb) edge[strongly] node[midway,below] {$\desc{p_\beta}{U_\beta}$}
    (sb1)%
    (sb1) edge[dots] (sa)%
    ;

    \draw ($(t0)+(0,.6)$) edge[decorate,decoration=brace]
    node[midway,above=5pt] {$S$}
      ($(ta)+(0,.6)$);%
  \end{tikzpicture}
\caption{The Infinitary Strip Lemma.}
\label{fig:stripLem}
\end{figure}
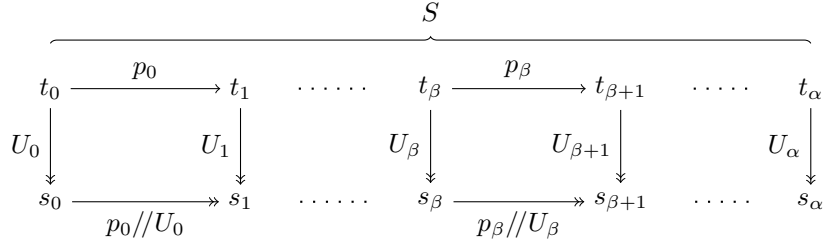

We conclude with the proof of the strip lemma:
\begin{lemma}
  \label{lem:stripLem}
  Let $S\fcolon t_0 \pato t_\alpha$ a $\prs$-convergent reduction, and
  $T\fcolon t_0 \pato[U] s_0$ a complete development of a set $U$ of
  redex occurrences in $t_0$. Then $t_\alpha$ and $s_0$ are joinable
  by a reduction $\proj{S}{T}\fcolon s_0 \pato s_\alpha$ and a
  complete development $\proj{T}{S}\fcolon t_\alpha \pato[\desc{U}{S}]
  s_\alpha$.
\end{lemma}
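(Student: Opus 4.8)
The plan is to prove the lemma by transfinite induction on the length $\alpha$ of $S$, building the entire tiling diagram of Figure~\ref{fig:stripLem}. Write $S = (\phi_\iota\fcolon t_\iota \to_{p_\iota} t_{\iota+1})_{\iota<\alpha}$ and set $U_\iota = \desc{U}{\prefix S \iota}$, so that $U_0 = U$ and, by Proposition~\ref{prop:descSeq}, $U_{\iota+1} = \desc{U_\iota}{\seq{\phi_\iota}}$. The induction invariant is that for each $\iota \le \alpha$ there is a complete development $\proj{T}{\prefix S \iota}\fcolon t_\iota \pato[U_\iota] s_\iota$ (the vertical arrows) and a $\prs$-convergent reduction $\proj{\prefix S \iota}{T}\fcolon s_0 \pato s_\iota$ (the lower row), where the target $s_\iota$ is the unique final tree of any complete development of $U_\iota$ in $t_\iota$. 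Existence of such a development is Proposition~\ref{prop:compDev}, and its uniqueness is Corollary~\ref{cor:compDevUniqueTerm}; this is where the restriction $\ola\in\set{001,101,111}$ is used, alongside the standing assumption ($a_0=1$ or $a_1=0$) of Proposition~\ref{prop:residuals} that residuals of redexes are again redexes. For $\iota = 0$ we take $s_0$ and $T$ as given, and the base case is trivial.

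At a successor stage $\iota+1$ the new tile is produced by the diamond property of complete developments. I regard the single step $\phi_\iota$---together with the $\strictred$-normalisation demanded by completeness, which by Corollary~\ref{cor:compDevUniqueTerm} does not affect the resulting tree---as a complete development $t_\iota \pato[\set{p_\iota}] t'_{\iota+1}$ of the singleton redex occurrence $\set{p_\iota}$. Applying Corollary~\ref{cor:compDevDiamond} to this and to the vertical development $t_\iota \pato[U_\iota] s_\iota$ yields a lambda tree $s_{\iota+1}$ with a complete development $s_\iota \pato[\desc{p_\iota}{U_\iota}] s_{\iota+1}$ (the lower segment of the tile) and a complete development $t'_{\iota+1} \pato s_{\iota+1}$ of the residuals of $U_\iota$, which by Propositions~\ref{prop:descSingle} and \ref{prop:descSeq} composes with the $\strictred$-normalisation $t_{\iota+1} \pato[\strictred] t'_{\iota+1}$ into a complete development $t_{\iota+1}\pato[U_{\iota+1}] s_{\iota+1}$. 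This restores the invariant, and appending the lower segment keeps the lower row $\prs$-convergent.

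The main obstacle is the limit stage $\lambda \le \alpha$. The lower row is then an open reduction that is $\prs$-continuous by the induction hypothesis; since $(\aptree,\talebot)$ is a complete semilattice, continuity forces $\prs$-convergence, say to $s_\lambda$. What remains---and this is the crux---is to show that $s_\lambda$ is the target of a complete development of $U_\lambda = \desc{U}{\prefix S \lambda}$ in $t_\lambda$, so that the vertical arrow at $\lambda$ exists. I would identify $s_\lambda$ through the path-labelling characterisation: by Corollary~\ref{cor:compDevTerm} the target of any complete development of $U_\lambda$ in $t_\lambda$ equals $\pathLabs{t_\lambda}{U_\lambda}$ (unique by Corollary~\ref{cor:compDevUniqueTerm}), so the goal reduces to proving $s_\lambda = \pathLabs{t_\lambda}{U_\lambda}$.

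To prove this equality I would argue depth by depth. Since for $\iota<\lambda$ each $s_\iota$ is $\strictred$-normal and equals $\pathLabs{t_\iota}{U_\iota}$, the path-preservation maps of Lemma~\ref{lem:pathPreserve} along the vertical developments, combined with the liminf characterisations of the upper and lower rows (Theorem~\ref{thr:alebotCpo} and Proposition~\ref{prop:alebotGlb}), show that every terminated $U_\iota$-path of $t_\iota$ lying below a fixed $\ola$-depth $d$ stabilises, for $\iota$ large enough below $\lambda$, to a terminated $U_\lambda$-path of $t_\lambda$ with the same position and label, and conversely every such path of $t_\lambda$ arises in this way; diverging paths account exactly for the positions that become $\bot$ in the limit. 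Finitary approximation (Lemma~\ref{lem:finApprox}) reduces this stabilisation to finitely many positions at a time, so the truncations $\atrunc{s_\iota}{d}$ converge to $\atrunc{\pathLabs{t_\lambda}{U_\lambda}}{d}$ for every $d$. This yields $s_\lambda = \pathLabs{t_\lambda}{U_\lambda}$; taking $\lambda = \alpha$ and reading $\proj S T$ off the lower row and $\proj T S$ off the final vertical development completes the construction. The hard part throughout is the limit case: unlike in the metric setting, $\prs$-convergence tolerates volatile positions, so convergence of the lower row cannot be obtained from contraction depths tending to infinity and must instead be extracted from the stabilisation of terminated paths.
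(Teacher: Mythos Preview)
Your overall architecture matches the paper's: transfinite induction on $\len S$, tiles at successor stages from Corollary~\ref{cor:compDevDiamond}, and at a limit ordinal the lower row $\prs$-converges by the complete-semilattice structure, after which one must identify its limit $s_\lambda$ with the result of a complete development of $U_\lambda$ in $t_\lambda$. Your extra care at the successor stage (padding the single step with $\strictred$-normalisation so that Corollary~\ref{cor:compDevDiamond} literally applies) is a detail the paper glosses over, and your treatment is the cleaner one there.

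Where you diverge is in the limit-case identification. You propose to show $s_\lambda = \pathLabs{t_\lambda}{U_\lambda}$ by arguing that terminated $U_\iota$-paths stabilise. The paper instead fixes a single complete development $T^0\fcolon t_\alpha \pato[U_\alpha] s$ and proves $s = s_\alpha$ position by position: for each $p \in \dom s$ it takes the ancestor $u_0\in\dom{t_\alpha}$ of $p$ along $T^0$ (Lemma~\ref{lem:ancestor}), uses Lemma~\ref{lem:descPreserveLabel} to get $s(p)=t_\alpha(u_0)$, and then invokes Lemma~\ref{lem:pconvOpenPos} on the upper row to find $\beta$ from which $u_0$ and its label are stable in $(t_\iota)$; from that point the vertical developments no longer touch $u_0$, so the same label appears at $p$ in the $s_\iota$'s and hence in $s_\alpha$ (with a symmetric argument for $\domBot s$). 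This ancestor-tracking is strictly more elementary than your path-stabilisation: it handles one position at a time and appeals only to lemmas already in hand, whereas your plan needs a ``paths commute with $\liminf$'' statement that is not available off the shelf. In particular, terminated paths can visit nodes of arbitrary depth via $\calV$-edges, so your phrase ``lying below a fixed $\ola$-depth $d$'' does not bound the set of nodes that must stabilise, and Lemma~\ref{lem:finApprox} is about approximating reductions by finite ones, not about path stabilisation. Your route is not wrong, but it would have to be fleshed out into something close to the ancestor argument anyway; the paper's version is the more direct realisation of the same idea.
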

\begin{proof}
  Let $S = (t_\iota \to[p_\iota] t_{\iota+1})_{\iota<\alpha}$.  To
  prove this proposition, we construct the diagram shown in
  Figure~\ref{fig:stripLem}. The '$U_\iota$'s in the diagram are sets
  of redex occurrences: $U_\iota = \desc{U}{\prefix{S}{\iota}}$ for
  all $0 \le \iota \le \alpha$. In particular, $U_0 = U$. That each
  $U_\iota$ is indeed a set of redex occurrences in $t_\iota$ follows
  from Proposition~\ref{prop:residuals}. All arrows in the diagram
  represent complete developments of the indicated sets of redex
  occurrences. Particularly, in each $\iota$-th step of $S$ the redex
  at $p_\iota$ is contracted. We will construct the diagram by an
  induction on $\alpha$.

  If $\alpha=0$, the diagram is trivial. If $\alpha$ is a successor
  ordinal $\beta + 1$, then we can take the diagram for the prefix
  $\prefix{S}{\beta}$, which exists by the induction hypothesis, and
  extend it to a diagram for $S$. The necessary square to extend the
  diagram follows from Corollary~\ref{cor:compDevDiamond}.

  Let $\alpha$ be a limit ordinal. By induction hypothesis, the
  diagram exists for each proper prefix of $S$. Let $T_\iota\fcolon
  s_0 \pato s_\iota$ denote the reduction at the bottom of the diagram
  for the reduction $\prefix{S}{\iota}$ for each $\iota <
  \alpha$. Since $(T_\iota)_{\iota < \alpha}$ is a monotone sequence,
  $\proj{S}{T} = \Lub_{\iota < \alpha} T_\iota$ is a well-defined
  $\betared$-reduction. Moreover since each $T_\iota$ is
  $\prs$-convergent, $\proj{S}{T}$ is $\prs$-continuous. Hence, it is
  also $\prs$-convergent, i.e.\ there is some $s_\alpha$ such that
  $\proj{S}{T} \fcolon s_0 \pato s_\alpha$.

  Let $T^0\fcolon t_\alpha \pato[U_\alpha] s$ be a complete
  development of $U_\alpha$ in $t_\alpha$. It remains to be shown that
  $s = s_\alpha$. To this end, assume that $T^0 = (\phi_\iota\fcolon
  r_\iota \to[q_\iota] r_{\iota+1})_{\iota<\oh\alpha}$. Moreover, we
  assume the following factorisations of $T^0$: for each
  $\iota<\oh\alpha$, let $T^1_\iota, T^2_\iota$ be such that $T^0 =
  T^1_\iota \concat \seq{\phi_\iota} \concat T^2_\iota$.

  Let $p \in \dom s$. According to Lemma~\ref{lem:ancestor}, we find
  for each $\iota < \oh\alpha$ some $u_\iota \in \dom{r_\iota}$ such
  that $p \in \desc{u_\iota}{\seq{\phi_\iota}\concat T^2_\iota}$. By
  Lemma~\ref{lem:descPreserveLabel}, we have that $s(p) =
  t_\alpha(u_0)$. Hence, by
  Lemma~\ref{lem:pconvOpenPos}~\ref{item:pconvOpenPos1}, we find some
  $\beta < \alpha$ such that $t_\iota(u_0) = t_\alpha(u_0)$ and
  $p_\iota \not\le u_0$ for all $\beta \le \iota<\alpha$.

  By Lemma~\ref{lem:descAltChar}, we know that there must be some
  $\oh\beta <\oh\alpha$ such that $q_\iota \not\le u_\iota$ for all
  $\oh\beta \le \iota < \oh\alpha$. Consequently, we may assume
  w.l.o.g.\ that $\beta$ is chosen large enough such that
  $s_\iota(u_0) = t_\iota(u_0)$ for all $\beta \le \iota <
  \alpha$. Moreover, there must be an upper bound $\beta \le \gamma <
  \alpha$ such that, $s_\iota(p) = t_\iota(u_0)$ and if $v_\iota \in
  \desc{p_\iota}{U_\iota}$, then $v_\iota \not\le p$ for all
  $\gamma \le \iota < \alpha$. Then, if $\proj{S}{T}$ is closed, we
  trivially have that $s_\alpha(p) = t_\gamma(u_0)$. Otherwise, if
  $\proj{S}{T}$ is open, we can apply
  Lemma~\ref{lem:pconvOpenPos}~\ref{item:pconvOpenPos2} to obtain that
  $s_\alpha(p) = t_\gamma(u_0)$. Combining the equalities we have
  found, we obtain that $s_\alpha(p) = t_\gamma(u_0) = t_\alpha(u_0) =
  s(p)$.

  By a similar argument we can show that $\domBot{s} \subseteq
  \domBot{s_\alpha}$. Consequently, we have that $s = s_\alpha$.
\end{proof}

\begin{thmcpy}{Lemma~\ref{lem:stripLem'} (Infinitary Strip Lemma)}
  Given reductions $S\fcolon s \pato[\betasred] t_1$ and $T \fcolon s
  \fto*[\betasred] t_2$, there are reductions $S'\fcolon t_1
  \pato[\betasred] t$ and $T'\fcolon t_2 \pato[\betasred] t$, provided
  $\ola \in \set{001,101,111}$.
\end{thmcpy}
\begin{proof}
  This follows by iterating Lemma~\ref{lem:stripLem} for the special
  case that $T$ a complete development of a single redex occurrence.
\end{proof}

\section{Weak Convergence}
\label{sec:weak-convergence}

We briefly give the definition and some of the properties of weak
convergence. To distinguish this variant of convergence from the one
in the main text of this paper, we refer to the latter explicitly as
strong ($\mrs$-/$\prs$-) convergence.

\begin{definition}[weak convergence]
  An $R$-reduction $S = (t_\iota \to_{R} t_{\iota+1})_{\iota<\alpha}$
  is called \emph{weakly $\mrs$-continuous} resp.\
  \emph{$\prs$-continuous} if, for all limit ordinals $\gamma <
  \alpha$, we have that $\lim_{\iota\limto\gamma} t_\iota = t_\gamma$
  resp.\ $\liminf_{\iota\limto\gamma} t_\iota = t_\gamma$; $S$ is said
  to \emph{weakly $\mrs$-converge} resp.\ \emph{$\prs$-converge} to
  $t$, denoted $S\colon t_0 \mwato_R t$ resp.\ $S\colon t_0 \pwato_R
  t$, if it is weakly $\mrs$-continuous and $\lim_{\iota\limto\alpha}
  t_\iota = t$ resp.\ weakly $\prs$-continuous and
  $\liminf_{\iota\limto\alpha} t_\iota = t$.
\end{definition}

Intuitively, a reduction is continuous if it is well-defined at limit
ordinal indices, and a reduction is convergent if it additionally has
a final result. Since the partially ordered set $(\aptree, \talebot)$
forms a complete semilattice, every weakly $\prs$-continuous reduction
also weakly $\prs$-converges. In contrast, however, a weakly
$\mrs$-continuous reduction is not necessarily weakly
$\mrs$-convergent:

\begin{example}
  Given $I = \abstree x  \vartree x$ and $t = \abstree x I\, \vartree
  x\, \vartree x$, consider the $\betared$-reduction $t\, t
  \to[\betared] I\,t\,t \to[\betared] t\,t \to[\betared] \dots$, which
  is trivially $\mrs$- and $\prs$-continuous. Since the subtree at
  position $\seq 1$ alternates between $t$ and $I\, t$, the reduction
  does not weakly $\mrs$-converge (for any $\ola$); but it does weakly
  $\prs$-converge to $\bot\, t$ if $a_1 = 1$ (i.e.\ position $\seq 1$
  is non-strict) and to $\bot$ if $a_1 = 0$ (i.e.\ $\seq 1$ is
  strict).
\end{example}

Transferring the results from Section~\ref{sec:ideal-completion} to
weak convergence is trivial as these notions of convergence are
directly based on the modes of convergence of the underlying
structures:
\begin{theorem}
  \label{thr:weakConvEq}
  For each $R$-reduction $S$, we have the following:
  \begin{enumerate}[(i)]
  \item $S\colon s \mwato_R t  \implies  S\colon s \pwato_R
    t$.
    \label{item:weakConvEq1}
  \item $S\colon s \pwato_R t  \implies  S\colon s
    \mwato_R t$, provided $S$ and $t$ are total.
    \label{item:weakConvEq2}
  \end{enumerate}
\end{theorem}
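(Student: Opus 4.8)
The plan is to derive both implications directly from Theorem~\ref{thr:limLiminf}, since weak $\mrs$- and weak $\prs$-convergence are, by definition, built solely on the metric limit resp.\ the limit inferior of the underlying sequence of lambda trees. First I would fix a reduction $S = (t_\iota \to_R t_{\iota+1})_{\iota<\alpha}$ and spell out what each mode of convergence demands: weak $\mrs$-convergence to $t$ requires $\lim_{\iota\limto\gamma} t_\iota = t_\gamma$ for every limit ordinal $\gamma < \alpha$ (continuity) together with $\lim_{\iota\limto\alpha} t_\iota = t$ (convergence), and weak $\prs$-convergence is the verbatim analogue with $\lim$ replaced by $\liminf$. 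In this way both conditions become conjunctions of limit/limit-inferior equalities indexed by the limit ordinals $\gamma \le \alpha$, so it suffices to convert each such equality separately.

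For \ref{item:weakConvEq1} I would assume $S\colon s \mwato_R t$ and apply Theorem~\ref{thr:limLiminf}~(i) pointwise: at every limit ordinal $\gamma < \alpha$ the metric equality $\lim_{\iota\limto\gamma} t_\iota = t_\gamma$ yields $\liminf_{\iota\limto\gamma} t_\iota = t_\gamma$, establishing weak $\prs$-continuity, and the same implication at $\gamma = \alpha$ turns the final metric limit into the required limit inferior. Conjoining these gives $S\colon s \pwato_R t$.

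For \ref{item:weakConvEq2} I would argue symmetrically using Theorem~\ref{thr:limLiminf}~(ii), which is the direction carrying the totality side condition. Here I would first observe that the hypothesis ``$S$ total'' means every $t_\iota$ is total, so in particular each intermediate term $t_\gamma$ at a limit ordinal is total, and $t$ is total by assumption; this is exactly what lets me invoke part (ii) at each $\gamma \le \alpha$ to convert every limit-inferior equality back into a metric limit, yielding weak $\mrs$-continuity and, at $\gamma = \alpha$, weak $\mrs$-convergence to $t$.

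The only point that needs care---and the nearest thing to an obstacle, though it is genuinely minor---is that the second implication relies on totality at the \emph{intermediate} limit ordinals, not merely at the endpoints $s$ and $t$: Theorem~\ref{thr:limLiminf}~(ii) converts a limit inferior into a metric limit only when the limit is total, so the argument would break if some $t_\gamma$ were partial. This is precisely why the theorem hypothesises that $S$ (rather than just $s$ and $t$) is total, and in writing up the proof I would flag that the failure of part (ii) for partial limits is the same phenomenon already observed for strong convergence in the remark following Proposition~\ref{prop:pconvMconv}.
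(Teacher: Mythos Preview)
Your proposal is correct and is exactly the paper's approach: the paper's proof is a one-line appeal to Theorem~\ref{thr:limLiminf}, applying part~\ref{item:limLiminf1} for \ref{item:weakConvEq1} and part~\ref{item:limLiminf2} for \ref{item:weakConvEq2}. Your elaboration, including the observation that totality of $S$ is needed precisely so that each intermediate $t_\gamma$ at a limit ordinal is total (enabling Theorem~\ref{thr:limLiminf}~\ref{item:limLiminf2} to be invoked there), is spot on.
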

\begin{proof}
  \ref{item:weakConvEq1} follows from Theorem~\ref{thr:limLiminf}~\ref{item:limLiminf1},
  \ref{item:weakConvEq2} follows from Theorem~\ref{thr:limLiminf}~\ref{item:limLiminf1}.
\end{proof}
Note that for \ref{item:weakConvEq2} it is not enough to require that
the reduction $S$ is total, since $t$ is not necessarily a part of $S$
but may only arise as a limit inferior of the lambda trees in $S$.

\begin{corollary}
  $S\colon s \mwato t$ iff $S\colon s \pwato t$ whenever $S$ and $t$ are
  total.
\end{corollary}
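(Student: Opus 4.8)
The plan is to derive both directions of the equivalence directly from the two parts of Theorem~\ref{thr:weakConvEq}, which I may assume. Under the standing hypothesis that both $S$ and $t$ are total, each implication is already available as one half of that theorem, so the corollary is obtained simply by conjoining them.

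First I would establish the ``only if'' direction: assuming $S\colon s \mwato t$, I invoke Theorem~\ref{thr:weakConvEq}~\ref{item:weakConvEq1}, which yields $S\colon s \pwato t$ for \emph{any} $R$-reduction, and hence in particular under the totality assumption. For the ``if'' direction, assuming $S\colon s \pwato t$ together with the totality of $S$ and $t$, I invoke Theorem~\ref{thr:weakConvEq}~\ref{item:weakConvEq2}, whose hypotheses are then exactly met, to conclude $S\colon s \mwato t$. Combining the two implications gives the stated equivalence.

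I expect no genuine obstacle here: the substance is entirely contained in Theorem~\ref{thr:weakConvEq}, which in turn rests on the limit/limit-inferior correspondence of Theorem~\ref{thr:limLiminf}. The only subtlety worth flagging is that totality of $t$ (and not merely of $S$) is genuinely required for the backward direction, as the remark following Theorem~\ref{thr:weakConvEq} explains: the limit $t$ need not occur among the lambda trees of $S$ but may arise only as their limit inferior, so totality of the reduction alone would not license the appeal to part~\ref{item:weakConvEq2}.
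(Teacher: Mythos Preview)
Your proposal is correct and matches the paper's approach: the corollary is stated immediately after Theorem~\ref{thr:weakConvEq} and is meant to follow directly by combining its two parts, exactly as you do. The paper gives no separate proof, so your write-up is in fact more explicit than the original.
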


Another observation is that the strong notions of convergence indeed
imply their weak counterpart -- however, with a small caveat for
$\prs$-convergence. To prove this, we need to prove the following
observation:
\begin{lemma}
  \label{lem:lamMax}
  Each $t\in \itree$ is maximal in $(\iptree,\talebot)$.
\end{lemma}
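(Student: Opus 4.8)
The plan is to unfold maximality directly: I would show that $t \talebot u$ forces $u = t$ for every $u \in \iptree$, which is precisely the statement that $t$ is maximal. So I fix such a $u$. Conditions \ref{item:talebot1} and \ref{item:talebot2} of Definition~\ref{def:talebot} already supply $\dom t \subseteq \dom u$ and $u(p) = t(p)$ for all $p \in \dom t$; hence it suffices to prove the reverse inclusion $\dom u \subseteq \dom t$, after which $u$ and $t$ coincide as partial functions. It is worth noting in advance that condition \ref{item:talebot3} will not even be needed here: maximality of a \emph{total} tree is forced purely by the absence of holes.

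The key step is to exploit totality through $\domBot t = \emptyset$. I would argue by a minimal counterexample: suppose $\dom u \setminus \dom t \neq \emptyset$ and pick a position $q$ of minimal length in this set. Since $t$ is total, $\emptyseq \in \dom t$ (otherwise $\emptyseq \in \domBot t$), so $q \neq \emptyseq$; write $q = p \concat \seq i$. Because $\dom u$ is prefix-closed and $q$ has minimal length among elements outside $\dom t$, the proper prefix $p$ must lie in $\dom t$.

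Next I would read off the label $t(p)$ from the lambda-tree axioms and invoke totality. Since $q = p \concat \seq i \in \dom u$ and $u \in \iptree$, Definition~\ref{def:lambda} gives $u(p) = \lambda$ when $i = 0$ and $u(p) = @$ when $i \in \set{1,2}$; and as $p \in \dom t$, condition \ref{item:talebot2} yields $t(p) = u(p)$. Totality of $t$ then closes the gap: if $t(p) = \lambda$ then $p\concat\seq 0 \in \dom t$, and if $t(p) = @$ then $p\concat\seq 1, p\concat\seq 2 \in \dom t$, for in each case a missing mandated child would belong to $\domBot t$. Either way $q = p \concat \seq i \in \dom t$, contradicting $q \nin \dom t$. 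Thus $\dom u \subseteq \dom t$, hence $\dom u = \dom t$ and $u = t$.

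The main obstacle is essentially cosmetic: there is no genuinely hard step, and the only care required is setting up the minimal-length counterexample correctly and making clean use of the characterisation of $\domBot t = \emptyset$ as ``$t$ possesses all of its mandated children.'' I expect the proof to be short, with the totality of $t$ doing all the real work.
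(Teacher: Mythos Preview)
Your proposal is correct and takes essentially the same approach as the paper: both show $\dom u \subseteq \dom t$ by exploiting totality to force that every child mandated by a label in $\dom t$ actually lies in $\dom t$. The only cosmetic difference is that the paper phrases this as an induction on the length of positions while you phrase it as a minimal-length counterexample; these are interchangeable framings of the same argument.
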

\begin{proof}
  Let $t\in \itree$ and $s\in \iptree$ with $t \talebot s$. We prove
  that $s = t$ by showing that $\dom{s} \subseteq \dom{t}$ by
  induction on the length of positions.  If $\emptyseq \in \dom{s}$,
  then $\emptyseq \in \dom{t}$ as $t\in \itree$. If $p\concat\seq 0
  \in \dom{s}$, then $s(p) = \lambda$. Since $t \talebot s$ and, by
  induction hypothesis, $p \in \dom{t}$, we know that $t(p) =
  \lambda$. As $t\in \itree$, we can conclude that $p \concat\seq 0
  \in \dom{t}$. The cases $p\concat\seq 1$ and $p \concat\seq 2$
  follow analogously.
\end{proof}

\begin{lemma}
  \label{lem:strongWeakConv}
  For each $R$-reduction $S$, we have the following:
  \begin{enumerate}[(i)]
  \item $S\colon s \mato t$ implies $S\colon s \mwato t$.
    \label{item:strongWeakConv1}
  \item $S\colon s \pato t$ implies $S\colon s \pwato t$, provided $S$
    and $t$ are total.
    \label{item:strongWeakConv2}
  \end{enumerate}
\end{lemma}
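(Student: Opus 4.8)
The plan is to handle the two parts separately, as only part~\ref{item:strongWeakConv2} carries real content.

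For part~\ref{item:strongWeakConv1} I would just unfold the definitions. Writing $\alpha = \len S$, strong $\mrs$-convergence of $S\colon s \mato t$ demands, in particular, that $\lim_{\iota\limto\gamma} t_\iota = t_\gamma$ for \emph{every} limit ordinal $\gamma \le \alpha$. Weak $\mrs$-continuity asks for precisely these metric limits at the limit ordinals $\gamma < \alpha$, and weak $\mrs$-convergence asks in addition that $\lim_{\iota\limto\alpha} t_\iota = t$; when $\alpha$ is a limit ordinal this is the instance $\gamma = \alpha$ of the strong condition, and when $\alpha$ is a successor it holds trivially since $t = t_\alpha$. Hence the metric requirements of weak convergence form a subset of those guaranteed by strong convergence, the extra strong condition that $(\adepth{p_\iota})_{\iota<\gamma}$ tend to infinity being simply dropped. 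No totality assumption is needed.

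For part~\ref{item:strongWeakConv2} the difficulty is that strong $\prs$-convergence is stated in terms of the reduction contexts $c_\iota$, whereas weak $\prs$-convergence refers to the terms $t_\iota$ themselves. With $S = (t_\iota \to_{p_\iota,c_\iota} t_{\iota+1})_{\iota<\alpha}$, I would fix a limit ordinal $\gamma \le \alpha$ and begin from the strong hypothesis $\liminf_{\iota\limto\gamma} c_\iota = t_\gamma$. Each reduction context satisfies $c_\iota \talebot t_\iota$ by definition, and the limit inferior is monotone with respect to $\talebot$, being assembled from the monotone operations $\Glb$ and $\Lub$; moreover $\liminf_{\iota\limto\gamma} t_\iota$ exists because $(\aptree,\talebot)$ is a complete semilattice (Theorem~\ref{thr:alebotCompSemi}). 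Combining these yields $t_\gamma = \liminf_{\iota\limto\gamma} c_\iota \talebot \liminf_{\iota\limto\gamma} t_\iota$. This is where totality enters: since $S$ and $t$ are total, $t_\gamma \in \itree$, so by Lemma~\ref{lem:lamMax} it is maximal in $(\iptree,\talebot)$, and the inequality $t_\gamma \talebot \liminf_{\iota\limto\gamma} t_\iota$ collapses to the equality $\liminf_{\iota\limto\gamma} t_\iota = t_\gamma$. Ranging over the limit ordinals $\gamma < \alpha$ gives weak $\prs$-continuity, and the instance $\gamma = \alpha$ (trivial for successor $\alpha$) gives weak $\prs$-convergence to $t$.

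The single genuine obstacle is this context-versus-term gap, and the maximality supplied by Lemma~\ref{lem:lamMax} is exactly what closes it under totality; the assumption cannot be dropped, as the destructive reduction $\Omega \to \Omega \to \dots$ strongly $\prs$-converges to $\bot$ yet has $\liminf_{\iota\limto\omega} t_\iota = \Omega \neq \bot$, so it does not weakly $\prs$-converge to $\bot$. As a cross-check, part~\ref{item:strongWeakConv2} could alternatively be obtained by chaining Proposition~\ref{prop:pconvMconv}, part~\ref{item:strongWeakConv1}, and Theorem~\ref{thr:weakConvEq}~\ref{item:weakConvEq1}, but the direct maximality argument is shorter and is precisely what motivates stating Lemma~\ref{lem:lamMax}.
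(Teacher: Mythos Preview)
Your proposal is correct and follows essentially the same approach as the paper: part~\ref{item:strongWeakConv1} is immediate from the definitions, and part~\ref{item:strongWeakConv2} hinges on the maximality of total lambda trees (Lemma~\ref{lem:lamMax}). The only difference is presentational: the paper invokes the abstract Proposition~6.5 from \cite{bahr10rta} (which packages the monotonicity-plus-maximality argument you spell out), whereas you give the direct argument explicitly; both reduce to the same use of Lemma~\ref{lem:lamMax}.
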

\begin{proof}
  \ref{item:strongWeakConv1} follows immediately from the definition
  $\mrs$-convergence. For \ref{item:strongWeakConv2}, we use the fact
  that weak/strong $\prs$-convergence on lambda trees is an
  instantiation of the abstract notion of weak/strong
  $\prs$-convergence from \cite{bahr10rta}. Proposition 6.5 from
  \cite{bahr10rta} states that (strong) $\prs$-convergence implies
  weak $\prs$-convergence if the lambda trees in $S$ and the lambda
  tree $t$ is maximal w.r.t.\ $\talebot$, which follows from
  Lemma~\ref{lem:lamMax}.
\end{proof}

\section{Direct Proof of Correspondence}
\label{sec:direct-approach}

In this section we prove directly that the there is a one-to-one
correspondence between the ideal completion $(\iplam,\subseteq)$ of
$(\plam,\alebot)$ and the metric completion $(\mplam,\ddac)$ of
$(\plam,\dda)$. To this end we use the meta theory of
Majster-Cederbaum and Baier~\cite{majster-cederbaum96tcs}.

The first step is to show that the metric $\dda$ may be canonically
derived from the partial order $\alebot$ by what Majster-Cederbaum and
Baier call a weight, which in our case will be the height of lambda
terms:
\begin{definition}[$\ola$-height]
  The \emph{$\ola$-height} $\ahgt M$ of a term $M\in\plam$, is
  \[
  \ahgt{M} = \min\setcom{d<\omega}{\forall p \in \pos{M}. \adepth{p} <
    d}
  \]
  Instead of $111$-height and $\ahgt[111]\cdot$, we also use height
  and $\hgt\cdot$, respectively. For each $d < \omega$, define
  \[
  \andown{d}{M} = \setcom{N \alebot M}{\ahgt{N} \le d}
  \]
\end{definition}
Alternatively, we may characterise the $\ola$-height of a term as
follows.
\begin{lemma}
  \label{lem:heightRec}
  For each $M_1,M_2\in\plam$, we have the following:
  \begin{align*}
    \ahgt{\bot} &= 0 \qquad \ahgt{x} = 1\\
    \ahgt{M_1 M_2} &= \max\set{1,\ahgt{M_1} + a_1,\ahgt{M_2} + a_2}\\
    \ahgt{\lambda x.M_1} &= \max\set{1,\ahgt{M_1} + a_0}
  \end{align*}
\end{lemma}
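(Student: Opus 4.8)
The plan is to reduce everything to a direct computation from the definition of $\ahgt\cdot$ and the inductive description of $\pos\cdot$, exploiting the fact that every $M\in\plam$ is finite and hence $\pos M$ is a finite set. Since $\adepth p < d$ is equivalent to $\adepth p + 1 \le d$, the defining minimum can be rewritten as
\[
\ahgt M = \max\setcom{\adepth p + 1}{p \in \pos M},
\]
with the convention that the maximum of the empty set is $0$. I would record this reformulation first, since all four equations then become statements about maxima of finite sets of natural numbers, and the subsequent manipulations are purely arithmetic.

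The one auxiliary fact I need is the shift identity $\adepth{\seq i \concat p} = a_i + \adepth p$ for all $i \in \set{0,1,2}$ and $p \in \allpos$. This follows by a short induction on $p$: the base case $p = \emptyseq$ gives $\adepth{\seq i} = a_i$, and for $p = p'\concat\seq j$ one combines $\adepth{\seq i \concat p' \concat \seq j} = \adepth{\seq i \concat p'} + a_j$ with the induction hypothesis. With the reformulation and this identity in hand, the base cases are immediate: $\pos\bot = \emptyset$ forces $\ahgt\bot = 0$, and $\pos x = \set\emptyseq$ with $\adepth\emptyseq = 0$ gives $\ahgt x = 1$.

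For the application case I would substitute the decomposition $\pos{M_1 M_2} = \set\emptyseq \cup \setcom{\seq i \concat p}{i\in\set{1,2},\, p\in\pos{M_i}}$ into the reformulated height and apply the shift identity, obtaining
\[
\ahgt{M_1 M_2} = \max\Big(1,\; \max_{p\in\pos{M_1}}(a_1 + \adepth p + 1),\; \max_{p\in\pos{M_2}}(a_2 + \adepth p + 1)\Big).
\]
Pulling the constant $a_i$ out of each inner maximum turns the $i$-th term into $a_i + \ahgt{M_i}$, which yields the claimed formula. The abstraction case is identical, with the single branch $\seq 0 \concat p$ and constant $a_0$, giving $\ahgt{\lambda x.M_1} = \max(1, a_0 + \ahgt{M_1})$.

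The one point needing genuine care — and the main obstacle, such as it is — is the treatment of a subterm equal to $\bot$, i.e.\ the case $\pos{M_i} = \emptyset$. There the inner maximum is over the empty set, so rewriting $a_i + \ahgt{M_i} = a_i + 0$ by pulling out $a_i$ must be justified separately. The resolution is that $a_i \le 1$, so the empty branch contributes at most $1$ and is absorbed by the outer $\max(1,\dots)$: when $M_i = \bot$ the value $a_i + \ahgt{M_i} = a_i$ never exceeds $1$, so including or omitting that branch leaves the overall maximum unchanged and the stated equation still holds. Once this boundary case is verified, all four equations follow from the computation above.
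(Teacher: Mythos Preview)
Your proposal is correct and is exactly the kind of unfolding the paper has in mind: the paper's own proof reads in its entirety ``Follows straightforwardly from the definition.'' Your reformulation of $\ahgt M$ as $\max\setcom{\adepth p + 1}{p\in\pos M}$, the shift identity for $\adepth\cdot$, and the careful handling of the $M_i = \bot$ boundary case are precisely the details one fills in when expanding that one-line justification.
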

\begin{proof}
  Follows straightforwardly from the definition.
\end{proof}

\begin{lemma}
  \label{lem:alebotPos}
  For all $M,N \in \plam$ with $M\alebot N$, we have that $\pos M
  \subseteq \pos N$.
\end{lemma}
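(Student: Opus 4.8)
The plan is to exploit that, by Definition~\ref{def:alebot}, $\alebot$ is the \emph{least} reflexive, transitive relation on $\plam$ closed under the rules \ref{item:alebot1}--\ref{item:alebot4}. Rather than inducting on a derivation, I would introduce the auxiliary relation $R$ given by $(M,N) \in R \iff \pos{M} \subseteq \pos{N}$ and prove $\alebot \subseteq R$; the claim then follows at once. By minimality of $\alebot$, to establish $\alebot \subseteq R$ it suffices to check that $R$ is reflexive, transitive, and satisfies each of the four closure conditions of Definition~\ref{def:alebot}. This is the same ``least relation'' device already used in the proof of Proposition~\ref{prop:lamtreeIso}.

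Reflexivity and transitivity of $R$ are immediate, since $R$ is the preimage of set inclusion under $\pos{\cdot}$. For the closure conditions I would unfold the recursive definition of $\pos{\cdot}$. Rule \ref{item:alebot1} is trivial, as $\pos{\bot} = \emptyset \subseteq \pos{M}$ for every $M$. For rule \ref{item:alebot2}, assuming $(M,M') \in R$, i.e.\ $\pos{M} \subseteq \pos{M'}$, I would compute $\pos{\lambda x. M} = \set{\emptyseq} \cup \setcom{\seq 0 \concat p}{p \in \pos{M}} \subseteq \set{\emptyseq} \cup \setcom{\seq 0 \concat p}{p \in \pos{M'}} = \pos{\lambda x. M'}$, so $(\lambda x. M, \lambda x. M') \in R$. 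Rules \ref{item:alebot3} and \ref{item:alebot4} are handled identically, prepending $\seq 1$ resp.\ $\seq 2$ to the positions of the varied argument and leaving the positions of the fixed argument untouched. Notably, the side conditions ``$M \neq \bot$ or $a_i = 1$'' are never used: monotonicity of $\pos{\cdot}$ under all three term constructors holds unconditionally, which is exactly why position sets can only grow along $\alebot$.

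There is essentially no hard step here; the argument is a routine verification that $\pos{\cdot}$ is monotone with respect to the generating rules of $\alebot$. The only point deserving a little care is the meta-level justification — namely, that verifying $R$ to be a reflexive, transitive relation closed under \ref{item:alebot1}--\ref{item:alebot4} genuinely yields $\alebot \subseteq R$. This is precisely the minimality clause in Definition~\ref{def:alebot}, so I would state it explicitly to make the closure argument airtight.
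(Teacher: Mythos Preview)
Your proposal is correct and follows essentially the same approach as the paper: define the auxiliary relation $R$ (the paper calls it $\preceq$) via $\pos{M}\subseteq\pos{N}$, check that it is a preorder satisfying the closure rules \ref{item:alebot1}--\ref{item:alebot4}, and conclude $\alebot \subseteq R$ by minimality. Your observation that the side conditions on $a_i$ and $M\neq\bot$ are unnecessary here is accurate and matches the paper's silent treatment of them.
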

\begin{proof}
  We define the relation $\preceq$ by $M \preceq N$ iff $\pos M
  \subseteq \pos N$, and show that $\preceq$ satisfies the condition
  in Definition~\ref{def:alebot}. Since $\alebot$ is the least such
  relation, the lemma follows.

  Since $\subseteq$ is a preorder, so is $\preceq$. $\bot \preceq M$
  holds for all $M$ since $\pos{\bot} = \emptyset$. Let $M \preceq
  N$. If $p \in \pos{\lambda x . M}$ then $p = \seq 0 \concat q$ with
  $q \in \pos{M}$. By $M \preceq N$, we have that $q \in \pos N$ and
  thus $p = \seq 0 \concat q \in \pos {\lambda x . N}$. Hence,
  $\lambda x.M \preceq \lambda . N$. The remaining two closure
  properties follow similarly.
\end{proof}

Moreover, we have that the $\ola$-height satisfies the condition of a
weight according to Majster-Cederbaum and
Baier~\cite{majster-cederbaum96tcs}:
\begin{lemma}
  For each $M \in \plam$, we have that
  \begin{enumerate}[(i)]
  \item\label{item:heightWeight1} $\ahgt{M} = 0$ \quad iff \quad $M = \bot$.
  \item\label{item:heightWeight2} $M \alebot N$ \quad implies \quad $\ahgt{M} \le \ahgt{N}$.
  \item\label{item:heightWeight3} For each $d < \omega$,
    $\andown{d}{M}$ has a greatest element.
  \end{enumerate}
\end{lemma}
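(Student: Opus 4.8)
The plan is to treat \ref{item:heightWeight1} and \ref{item:heightWeight2} directly from the definitions, and to settle \ref{item:heightWeight3} by exhibiting the greatest element explicitly as a depth-$d$ truncation of $M$.

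For \ref{item:heightWeight1}, note that $\adepth p \ge 0$ for every position $p$, so no $d$ can satisfy $\forall p \in \pos M.\,\adepth p < d$ unless $\pos M = \emptyset$; hence $\ahgt M = 0$ iff $\pos M = \emptyset$ iff $M = \bot$ (the same fact is visible in Lemma~\ref{lem:heightRec}, where every non-$\bot$ constructor contributes a summand $1$). For \ref{item:heightWeight2}, Lemma~\ref{lem:alebotPos} gives $\pos M \subseteq \pos N$ from $M \alebot N$; since $\ahgt{\cdot}$ is defined as the least $d$ making $\forall p \in \pos{\cdot}.\,\adepth p < d$ true, enlarging the set of quantified positions can only raise this least bound, so $\ahgt M \le \ahgt N$.

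For \ref{item:heightWeight3} I would take as candidate the term $M^{(d)}$ obtained from $M$ by replacing every subterm rooted at a position of $\ola$-depth $\ge d$ with $\bot$; equivalently, $\pos{M^{(d)}} = \setcom{p\in\pos M}{\adepth p < d}$ with the labelling inherited from $M$. This position set is prefix-closed because $\adepth{q\concat\seq i} = \adepth q + a_i \ge \adepth q$, and replacing subterms by $\bot$ always yields a term, so $M^{(d)} \in \plam$ is well defined; concretely it is the $\lamtree\cdot$-preimage (Proposition~\ref{prop:lamtreeIso}) of the tree truncation $\atrunc{\lamtree M}{d}$ of Definition~\ref{def:atrunc}, which is finite by Lemma~\ref{lem:atruncFin}, so that the existing tree lemmas become available. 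I then verify: (a) $\ahgt{M^{(d)}} \le d$, which is immediate since every position of $M^{(d)}$ has $\ola$-depth $< d$; (b) $M^{(d)} \alebot M$, by transporting Lemma~\ref{lem:atruncAlebot} along the isomorphism; and (c) that $M^{(d)}$ is the greatest element of $\andown d M$, i.e.\ every $N$ with $N \alebot M$ and $\ahgt N \le d$ satisfies $N \alebot M^{(d)}$.

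For (c), $\ahgt N \le d$ together with Lemma~\ref{lem:alebotPos} gives $\pos N \subseteq \pos M \cap \setcom{p}{\adepth p < d} = \pos{M^{(d)}}$ with matching labels, so conditions \ref{item:talebot1} and \ref{item:talebot2} of Definition~\ref{def:talebot} hold for $\lamtree N \talebot \lamtree{M^{(d)}}$, and condition \ref{item:talebot3} is inherited verbatim from $\lamtree N \talebot \lamtree M$ because $\pos{M^{(d)}}\subseteq \pos M$. The genuinely delicate point, which I expect to be the main obstacle, is instead condition \ref{item:talebot3} for step (b): a priori, truncating could drop a \emph{strict} child below a surviving node and thereby break the strictness-restricted monotonicity built into $\alebot$ (Definition~\ref{def:alebot}). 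This never happens, precisely because a strict child $p\concat\seq i$ (with $a_i=0$) satisfies $\adepth{p\concat\seq i} = \adepth p$, so whenever its parent survives the truncation ($\adepth p < d$) the child survives as well; this single observation is what places $M^{(d)}$ below $M$ and hence makes it simultaneously a member of $\andown d M$ and an upper bound for all of it. Everything else is routine unwinding of the definitions and of Lemma~\ref{lem:heightRec}.
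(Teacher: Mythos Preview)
Your argument is correct. Parts \ref{item:heightWeight1} and \ref{item:heightWeight2} match the paper's one-line proofs exactly. For \ref{item:heightWeight3} you take a genuinely different route: you construct the greatest element as the $\lamtree\cdot$-preimage of the tree truncation $\atrunc{\lamtree M}{d}$ and verify the required inequalities by transporting Lemma~\ref{lem:atruncAlebot} along the order isomorphism of Proposition~\ref{prop:lamtreeIso}; the paper instead argues by structural induction on $M$ directly inside $\plam$, defining $M^d$ recursively from $M_i^{d-a_i}$ (with a fallback $M^d=\bot$ that turns out never to be needed). Both constructions produce the same term, and your verification of the three conditions (height bound, $\alebot M$, maximality) is sound, including the key observation that strict children share their parent's $\ola$-depth and hence survive truncation. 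The trade-off is organisational: your approach is shorter because it reuses the tree-level machinery already established, while the paper's inductive argument is self-contained in $\plam$, which is the stated purpose of the appendix this lemma lives in (a ``direct'' correspondence proof that does not route through lambda trees). There is no logical circularity in your approach---the tree lemmas you cite are proved independently---but it does somewhat blur the appendix's intended separation.
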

\begin{proof}
  \ref{item:heightWeight1} follows immediately from the definition;
  \ref{item:heightWeight2} follows from Lemma~\ref{lem:alebotPos}.

  For \ref{item:heightWeight3}, we construct by induction for each
  $M\in\plam$ a term $M^{d}$ that is the greatest element of
  $\andown{d}{M}$.  If $d = 0$, then $M^d$ is obviously $\bot$. In the
  following we assume that $d > 0$. The cases $M =x$ and $M=\bot$ are
  trivial.

  If $M = M_1 M_2$, then we may assume, by induction hypothesis, terms
  $M_i^{d_i}$ for $d_i = d - a_i$. If $M_1^{d_1}M_2^{d_2} \alebot M_1
  M_2$ then define $M^d = M_1^{d_1}M_2^{d_2}$; otherwise $M^d =
  \bot$. In either case, $M^d \alebot M$ and $\ahgt{M^d} \le d$, i.e.\
  $M^d \in \andown{d}{M}$.

  In order to show that $M^d$ is the greatest element in
  $\andown{d}{M}$, we assume some $N \in \andown{d}{M}$ and show that
  then $N \alebot M^d$. We have that $N \alebot M$ and thus either $N
  = \bot$, in which case $N \alebot M^d$ follows immediately, or $N =
  N_1 N_2$ with $N_i \alebot M_i$. In the latter case we then have,
  according to \ref{item:heightWeight2}, that $\ahgt{N_i} \le
  \ahgt{M_i} \le d_i$, i.e.\ $N_i \in \andown{d_i}{M_i}$. By induction
  hypothesis, we thus have that $N_i \alebot M_i^{d_i}$. Note that
  this means that if $M_i^{d_i} = \bot$, then $N_i = \bot$. Since
  $N\alebot M$, this then implies that $M_i = \bot$ or $a_i = 1$. In
  sum, we thus have that $M_1^{d_1}M_2^{d_2} \alebot M$ and therefore
  $M^d = M_1^{d_1}M_2^{d_2}$. It thus remains to be shown that $N_1
  N_2 \alebot M^{d_1}_1 M^{d_1}_1$. To this end, we show that $N_1 N_2
  \alebot M^{d_1}_1 N_2$; $M^{d_1}_1 N_2 \alebot M^{d_1}_1M^{d_2}_2$
  follows analogously.

  If $a_1 = 1$ or $N \neq \bot$, then $N_1 N_2 \alebot M^{d_1}_1 N_2$
  follows immediately from $N_1 \alebot M^{d_1}_1$. If $a_1=0$ and
  $N_1 = \bot$, then $M_1 = \bot$ follows from $N \alebot
  M$. Consequently, $M_1 = \bot$ and $N_1 N_2 \alebot M^{d_1}_1 N_2$
  follows by reflexivity.

  The case $M = \lambda x. M'$ follows analogously.
\end{proof}
In fact, $\ola$-height is a finite weight since, by definition,
$\ahgt{M} < \omega$ for all lambda terms $M$.

According to Majster-Cederbaum and
Baier~\cite{majster-cederbaum96tcs}, the measure provided by
$\ahgt\cdot$ can thus be used to define an ultrametric $\dd$ on
$\plam$ as follows:
\[
\dd(M,N) = \Glb \setcom{2^{-d}}{\andown{d}{M} =
  \andown{d}{N}}.
\]
The following two lemmas, will help us show that $\dd$ and $\dda$
coincide.
\begin{lemma}
  \label{lem:conflictAlebot}
  If all conflicts of $M,N \in \plam$ have an $\ola$-depth of at least
  $d$, then $M' \alebot N$ for all $M' \in \andown d M$.
\end{lemma}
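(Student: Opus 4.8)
The plan is to prove the statement by structural induction on the (finite) term $M'$, keeping $M$, $N$ and $d$ universally quantified so that the induction hypothesis can be applied to the immediate subterms of $M'$ with a decremented depth bound. The guiding intuition is that $M'$ is a finite approximation of $M$ reaching only $\ola$-depths $< d$, while the conflict hypothesis says $M$ and $N$ agree down to $\ola$-depth $d$, so $M'$ approximates $N$ as well. For the base cases: if $M' = \bot$ then $M' \alebot N$ by clause~\ref{item:alebot1} of Definition~\ref{def:alebot}, with no hypotheses needed; if $M' = x$, then $M' \alebot M$ forces $M = x$ (no closure clause of Definition~\ref{def:alebot} has a variable on the left except reflexivity), and since $\ahgt{x} = 1 \le d$ we have $d \ge 1$, so the root $\emptyseq$, of $\ola$-depth $0 < d$, is not a conflict; by clause~(a) of the definition of conflict this forces $N = x$, whence $M' \alebot N$ by reflexivity.

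For the inductive step assume $M' \ne \bot$, so $\ahgt{M'} \ge 1$ and hence $d \ge 1$; as above $\emptyseq$ is not a conflict, so $M$ and $N$ carry the same top constructor, which (as $M' \alebot M$ and $M' \ne \bot$) coincides with that of $M'$. Take the abstraction case $M' = \lambda x. M'_1$ (the application case is entirely analogous, splitting the two arguments). Inverting $M' \alebot M$ gives $M = \lambda x. M_1$ with $M'_1 \alebot M_1$, and the argument just given yields $N = \lambda x. N_1$. By Lemma~\ref{lem:heightRec}, $\ahgt{M'_1} + a_0 \le \ahgt{M'} \le d$, so $\ahgt{M'_1} \le d - a_0$. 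Moreover each conflict $q$ of $M_1$ and $N_1$ induces a conflict $\seq 0 \concat q$ of $M$ and $N$ of $\ola$-depth $a_0 + \adepth{q}$; since this is $\ge d$, every conflict of $M_1$ and $N_1$ has $\ola$-depth $\ge d - a_0$. The induction hypothesis now gives $M'_1 \alebot N_1$, and clause~\ref{item:alebot2} re-assembles $M' = \lambda x. M'_1 \alebot \lambda x. N_1 = N$.

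The step I expect to be the main obstacle is the side condition on clauses~\ref{item:alebot2}--\ref{item:alebot4}, which permits filling a $\bot$-component only at a non-strict position: re-assembling requires $M'_i \ne \bot$ or $a_i = 1$. When this fails, i.e.\ $M'_i = \bot$ with $a_i = 0$, the relevant closure clause cannot alter that component, so inverting $M' \alebot M$ forces the matching component of $M$ to be $\bot$ as well; the conflict hypothesis then forces the same for $N$, since a non-$\bot$ component of $N$ at that strict position would be a conflict of $\ola$-depth $a_i = 0 < d$. Hence that component of $N$ is $\bot$ too and the case closes by reflexivity instead of by the closure clause. The only remaining subtlety is the $\alpha$-renaming to a fresh variable built into clause~(c) of the conflict definition used in the abstraction case; since renaming changes neither positions nor $\ola$-depths, it leaves all of the depth bookkeeping above intact.
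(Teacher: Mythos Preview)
Your proof is correct and follows essentially the same approach as the paper's: structural induction on $M'$, deducing $d \ge 1$ from $M' \ne \bot$, reading the top constructor of $N$ from the absence of a root conflict, pushing the depth bound to $d - a_i$ on the components via Lemma~\ref{lem:heightRec}, and handling the strict side condition of Definition~\ref{def:alebot} by arguing that $M'_i = \bot$ with $a_i = 0$ forces $M_i = \bot$ and hence $N_i = \bot$ (else a conflict of $\ola$-depth $0$ would appear). The only cosmetic differences are that the paper details the application case and leaves the abstraction case as analogous, whereas you do the reverse, and that the paper treats $d = 0$ as a separate preliminary case rather than folding it into the $M' = \bot$ base case.
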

\begin{proof}
  We proceed by induction on $M'$. If $d = 0$, then $\ahgt{M'} \le d$
  implies that $M' = \bot$ and thus $M' \alebot N$ follows. In the
  following we thus assume $d > 0$. The case $M= \bot$ is trivial. If
  $M' = x$, then also $M = x$. Since $d > 0$, $\emptyseq$ is not a
  conflict of $M,N$, which means that $N = x$. Hence, $M' \alebot N$
  by reflexivity.

  If $M' = M'_1 M'_2$, then $M =M_1 M_2$ with $M'_i \alebot
  M_i$. Moreover, since $d > 0$, $\emptyseq$ is not a conflict of
  $M,N$ and thus $N$ is of the form $N= N_1N_2$ and all conflicts of
  $M_i,N_i$ have $\ola$-depth $\ge d - a_i$. Moreover, because, by
  Lemma~\ref{lem:heightRec}, $\ahgt{M'_i} \le \ahgt{M'} - a_i \le d -
  a_i$, we may apply the induction hypothesis to obtain that $M'_i
  \alebot N_i$. In order to show that $M' \alebot N$, we show that
  $M'_1M'_2 \alebot N_1 M'_2$; $N_1 M'_2 \alebot N_1 N_2$ follows
  likewise. If $a_1 = 1$ or $M'_1 \neq \bot$, then $M'_1M'_2 \alebot
  N_1 M'_2$ follows immediately from $M'_1\alebot N_1$. Otherwise, if
  $a_1=0$ and $M'_1 = \bot$, then $M_1 = \bot$ since $M' \alebot
  M$. Hence, also $N_1 = \bot$, since otherwise, $\seq{1}$ is a
  conflict of $M,N$ of $\ola$-depth $0$. Consequently, $M'_1M'_2
  \alebot N_1 M'_2$ follows by reflexivity.

  The case $M' = \lambda x. M_1'$ follows analogously.
\end{proof}

\begin{lemma}
  \label{lem:alebotHeightOne}
  For all $M \in \plam \setminus\set\bot$ there is some $N \alebot M$
  with $\ahgt N = 1$.
\end{lemma}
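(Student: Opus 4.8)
The plan is to prove the statement by structural induction on $M \in \plam \setminus \set\bot$, exhibiting in each case an explicit witness $N$ and verifying the two requirements $N \alebot M$ and $\ahgt N = 1$ using Definition~\ref{def:alebot} together with the recursive characterisation of $\ola$-height from Lemma~\ref{lem:heightRec}. The guiding intuition is that $\ahgt N = 1$ says exactly that $N \neq \bot$ and every position of $N$ has $\ola$-depth $0$; so the witness should retain the root symbol of $M$ and prune away everything lying strictly below a non-strict edge, while being forced to keep the genuine subterms that sit below strict edges.

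The base case $M = x$ is immediate: take $N = x$, which gives $N \alebot M$ by reflexivity and $\ahgt x = 1$. For the abstraction case $M = \lambda x. M'$ I would split on $a_0$. If $a_0 = 1$, set $N = \lambda x. \bot$; then $\ahgt N = \max\set{1, \ahgt\bot + a_0} = 1$, and $N \alebot M$ follows from rule \ref{item:alebot2}, whose side condition holds because $a_0 = 1$. If $a_0 = 0$ and $M' = \bot$, take $N = M$ (reflexivity, with $\ahgt{\lambda x.\bot} = \max\set{1,0} = 1$). If $a_0 = 0$ and $M' \neq \bot$, apply the induction hypothesis to get $N' \alebot M'$ with $\ahgt{N'} = 1$ and set $N = \lambda x. N'$; here rule \ref{item:alebot2} applies since $\ahgt{N'} = 1$ forces $N' \neq \bot$, and $\ahgt N = \max\set{1, 1 + 0} = 1$.

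The application case $M = M_1 M_2$ is handled componentwise. For each $i \in \set{1,2}$ I set $N_i = \bot$ whenever $a_i = 1$, or whenever $a_i = 0$ and $M_i = \bot$, and otherwise (i.e.\ $a_i = 0$ and $M_i \neq \bot$) I take $N_i \alebot M_i$ with $\ahgt{N_i} = 1$ from the induction hypothesis; the witness is $N = N_1 N_2$. The inequality $N_1 N_2 \alebot M_1 M_2$ is then obtained by transitivity through $M_1 N_2$, applying rules \ref{item:alebot3} and \ref{item:alebot4} one slot at a time. In each slot either the side condition holds (because $a_i = 1$, or because $a_i = 0$ with $N_i \neq \bot$ supplied by the hypothesis) and the required $N_i \alebot M_i$ comes from \ref{item:alebot1} or the hypothesis, or else we are in the degenerate subcase $a_i = 0$ with $M_i = N_i = \bot$, where the relevant step is an identity and holds by reflexivity. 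Finally, Lemma~\ref{lem:heightRec} gives $\ahgt N = \max\set{1, \ahgt{N_1} + a_1, \ahgt{N_2} + a_2}$, and inspecting the three possibilities for each slot shows $\ahgt{N_i} + a_i \in \set{0,1}$, so the maximum is exactly $1$.

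The main difficulty, and the reason the argument cannot be compressed to a one-line \emph{prune at non-strict positions} statement, is precisely the strict-slot-with-hole subcase ($a_i = 0$ and $M_i = \bot$): there the monotonicity rules \ref{item:alebot3}/\ref{item:alebot4} are not available, since their side condition ($N_i \neq \bot$ or $a_i = 1$) fails, and one must instead notice that the two terms coincide at that slot and invoke reflexivity. Keeping track of this subcase, and of the analogous $a_0 = 0$, $M' = \bot$ case for abstraction, is the only subtle bookkeeping in an otherwise routine induction.
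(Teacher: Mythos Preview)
Your proof is correct and follows essentially the same approach as the paper's: a structural induction on $M$ where at each slot you prune to $\bot$ whenever the corresponding $a_i = 1$ or the subterm is already $\bot$, and otherwise invoke the induction hypothesis to obtain a non-$\bot$ replacement of $\ola$-height $1$. Your write-up is in fact more explicit than the paper's, which leaves the abstraction case to an ``analogously'' and does not single out the degenerate $a_i = 0$, $M_i = \bot$ subcase; your handling of that subcase via reflexivity is exactly the right observation.
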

\begin{proof}
  We proceed by induction on $M$. If $M = x$, set $N = x$.

  Let $M = M_1 M_2$. If $M_i = \bot$ or $a_i = 1$, set $N_i =
  \bot$. Otherwise, there are, by induction hypothesis, $N_i \alebot
  M_i$ with $\ahgt{N_i} = 1$ and $N_i \neq \bot$. In either case set
  $N = N_1N_2$. Moreover, we have $N = N_1N_2 \alebot M_1N_2 \alebot
  M_1M_2 = M$ and 
  \[
  \ahgt{N} = \max\set{1,\ahgt{N_1} + a_1,\ahgt{N_2} +
    a_2} = 1.
  \]

  The case $M = \lambda x .M'$ follows analogously.
\end{proof}

\begin{lemma}
  \label{lem:conflictAlebot2}
  If $p$ is a conflict of $M,N$ with $\ola$-depth $d$, then there is
  some $M' \alebot M$ with $\ahgt{M'} = d+1$ and $M' \nalebot N$,
  or vice versa there is some $N' \alebot N$ with $\ahgt{N'} = d+1$
  and $N' \nalebot M$
\end{lemma}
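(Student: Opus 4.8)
The plan is to induct on the conflict position $p$, following the three clauses of the definition of a conflict; the construction is the exact dual of Lemma~\ref{lem:conflictAlebot}. I will lean on three facts: the recursive formula for $\ola$-height (Lemma~\ref{lem:heightRec}); the existence, for every non-$\bot$ term, of a non-$\bot$ approximation of height $1$ (Lemma~\ref{lem:alebotHeightOne}); and the \emph{decomposition} of $\alebot$ over the top constructor, namely that $M_1 M_2 \alebot N_1 N_2$ forces $M_i \alebot N_i$ and $\lambda x. M_0 \alebot \lambda x. N_0$ forces $M_0 \alebot N_0$ (immediate from Definition~\ref{def:alebot}, and exactly the projection step already used in the proof of Proposition~\ref{prop:lamtreeIso}). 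Since the conclusion is symmetric in $M$ and $N$, at each stage I am free to return the witness on whichever of the two sides the induction hands me.

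For the base case $p = \emptyseq$, so $d = 0$, the fact that $\emptyseq$ is a conflict says exactly that $M$ and $N$ disagree at the root: they are not both $\bot$, not the same variable, and do not share the top constructor. As they are not both $\bot$, I assume without loss of generality $M \neq \bot$. Lemma~\ref{lem:alebotHeightOne} then yields $M' \alebot M$ with $\ahgt{M'} = 1 = d + 1$, and $M' \neq \bot$ because its height is nonzero. Inspecting Definition~\ref{def:alebot}, a non-$\bot$ term below $M$ must carry the same top constructor (and the same variable, if a variable) as $M$; so $M'$ agrees with $M$ at the root. If $M' \alebot N$ held, the same observation would force $N$ to agree with $M'$, hence with $M$, at the root, contradicting the conflict. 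Therefore $M' \nalebot N$, the required witness.

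For the application step $p = \seq i \concat q$ with $i \in \set{1,2}$, I have $M = M_1 M_2$, $N = N_1 N_2$, and $q$ is a conflict of $M_i, N_i$ of $\ola$-depth $\adepth q = d - a_i$. The induction hypothesis (up to symmetry) supplies $M_i' \alebot M_i$ with $\ahgt{M_i'} = d - a_i + 1$ and $M_i' \nalebot N_i$. I then build $M'$ by placing $M_i'$ in argument $i$ and filling the other argument $M_j$ by a \emph{minimal legal} approximation $M_j''$: take $M_j'' = \bot$ if $a_j = 1$ or if $M_j = \bot$, and otherwise ($a_j = 0$, $M_j \neq \bot$) take a non-$\bot$ approximation $M_j'' \alebot M_j$ of height $1$ via Lemma~\ref{lem:alebotHeightOne}. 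In each case the fill is licensed by clause~\ref{item:alebot3} or~\ref{item:alebot4} of Definition~\ref{def:alebot}, so $M' \alebot M$. By Lemma~\ref{lem:heightRec}, $\ahgt{M_i'} + a_i = d + 1$ whereas $\ahgt{M_j''} + a_j \le 1 \le d + 1$, so $\ahgt{M'} = d + 1$ exactly. Finally, $M' \alebot N$ would decompose to $M_i' \alebot N_i$, contradicting the hypothesis; hence $M' \nalebot N$. The abstraction step $p = \seq 0 \concat q$ runs in parallel: after renaming both bound variables to a common fresh $z$ (as the conflict definition does), the hypothesis gives a body approximation $M_0'$ of $\ola$-height $d - a_0 + 1$, and $\ahgt{\lambda z. M_0'} = \max\set{1, \ahgt{M_0'} + a_0} = d + 1$; here $\lambda z. M_0' \alebot M$ by clause~\ref{item:alebot2}, and $M' = \lambda z. M_0' \nalebot N$ again follows by decomposing the order over the abstraction.

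I expect the only real friction to be the height accounting in the application step, where the filler $M_j''$ must meet three competing demands at once: lie below $M_j$, respect strictness so that the insertion is legal (one may substitute an arbitrary non-$\bot$ term, or $\bot$ only at a non-strict argument), and stay low enough that $\ahgt{M_j''} + a_j$ never exceeds $d + 1$. The case analysis on $a_j$ and on whether $M_j = \bot$ is precisely what reconciles these, and it is exactly here that Lemma~\ref{lem:alebotHeightOne} earns its keep. Everything else is bookkeeping with Lemma~\ref{lem:heightRec} and the constructor-preservation properties of $\alebot$.
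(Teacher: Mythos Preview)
Your proof is correct and follows essentially the same approach as the paper's: induction on the conflict position $p$, using Lemma~\ref{lem:alebotHeightOne} for the base case and for the ``filler'' component in the application step, and Lemma~\ref{lem:heightRec} for the height bookkeeping. Your write-up is in fact slightly more complete than the paper's, since you spell out the decomposition argument for $M' \nalebot N$ explicitly, whereas the paper leaves that final step implicit.
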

\begin{proof}
  We proceed by induction on $p$.

  Let $p = \emptyseq$. Since $\emptyseq$ a conflict, $M,N$ cannot be
  both $\bot$. W.l.o.g.\ assume that $M\neq\bot$. By
  Lemma~\ref{lem:alebotHeightOne}, there is some $M' \alebot M$ of
  $\ola$-height $1$. Hence, $M'$ and $M$ are either both the same
  variable, both applications or both abstractions. Hence, $p$ is also
  a conflict of $M', N$, which implies that $M' \nalebot N$.

  Let $p = \seq i \concat q$. We assume that $i = 1$; the cases for $i
  \in \set{0,2}$ follow analogously. Since $p$ is a conflict of $M,N$,
  we know that $M = M_1M_2$, $N = N_1N_2$ and $q$ is a conflict of
  $M_1,N_1$. By induction hypothesis, we can assume w.l.o.g.\ that
  there is some $M_1' \alebot M_1$ of $\ola$-height $d + 1 - a_1$ and
  with $M'_1 \nalebot N_1$. From the latter we deduce that $M'_1 \neq
  \bot$ and thus $M_1' M_2 \alebot M$. If $a_2 = 1$ or $M_2 = \bot$,
  then set $M'_2 = \bot$. We get that $M_1'M_2' \alebot
  M_1'M_2$. Otherwise, if $a_2 = 0$ and $M_2 \neq\bot$, then there is,
  according to Lemma~\ref{lem:alebotHeightOne}, some $M_2' \neq \bot$
  of $\ola$-height $1$ with $M_2' \alebot M_2$. In either case, we
  have that $M'_1M'_2 \alebot M'_1M_2 \alebot M$ and that $\ahgt{M_2'}
  + a_2 \le 1$. The latter implies, by Lemma~\ref{lem:heightRec}, that
  $\ahgt{M_1'M_2'} = \max\set{\ahgt{M_1'} + a_1, 1} = \max\set{d + 1,
    1} = d + 1$

\end{proof}

Finally, we can prove that two metrics $\dd$ and $\dda$ coincide:
\begin{proposition}
  For all $M,N \in \plam$, we have that
  \[
  \dda(M,N) = \Glb\setcom{2^{-d}}{\andown{d}{M} = \andown{d}{M}}
  \]
\end{proposition}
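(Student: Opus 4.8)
The plan is to identify exactly for which depths $d$ the two truncation sets $\andown d M$ and $\andown d N$ coincide, and to read the greatest lower bound off that. Write $d_0$ for the least $\ola$-depth of a conflict of $M$ and $N$, adopting the convention $d_0 = \omega$ when $M$ and $N$ are $\alpha$-equivalent (no conflict); by the definition of the metric, $\dda(M,N) = 2^{-d_0}$, reading $2^{-\omega}$ as $0$. The central claim I would establish is the equivalence $\andown d M = \andown d N \iff d \le d_0$, after which the proposition reduces to a one-line computation of $\Glb\setcom{2^{-d}}{\andown d M = \andown d N}$.

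For the forward implication I would assume $d \le d_0$, so that every conflict of $M$ and $N$ has $\ola$-depth at least $d$ (vacuously so in the $\alpha$-equivalent case). Lemma~\ref{lem:conflictAlebot} then gives $M' \alebot N$ for every $M' \in \andown d M$, and since any such $M'$ also satisfies $\ahgt{M'} \le d$, we obtain $M' \in \andown d N$; the symmetric argument gives the reverse inclusion, whence $\andown d M = \andown d N$. For the converse I would assume $d > d_0$, which forces $d_0 < \omega$, and fix a conflict of minimal $\ola$-depth $d_0$. Lemma~\ref{lem:conflictAlebot2} then produces a witness $M' \alebot M$ with $\ahgt{M'} = d_0 + 1$ and $M' \nalebot N$ (or a symmetric $N'$, handled identically). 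Because $\ahgt{M'} = d_0 + 1 \le d$, this $M'$ lies in $\andown d M$, but by $M' \nalebot N$ it does not lie in $\andown d N$, so the two sets differ.

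With the claim in hand, $\setcom{2^{-d}}{\andown d M = \andown d N}$ is precisely $\setcom{2^{-d}}{d \le d_0}$: when $d_0 = \omega$ its greatest lower bound is $0$, and when $d_0 < \omega$ it is a finite set with least element $2^{-d_0}$, so in either case the glb equals $2^{-d_0} = \dda(M,N)$. The one point requiring care — and the only real obstacle once the two lemmas are invoked — is that the converse direction must exclude agreement at \emph{every} $d > d_0$, not merely at $d_0 + 1$. This is exactly where the precise height $\ahgt{M'} = d_0 + 1$ matters: it guarantees that the distinguishing witness $M'$ stays in $\andown d M$ for all larger $d$, so that $\setcom{2^{-d}}{\andown d M = \andown d N}$ is a genuine initial segment $\set{d \le d_0}$ and its infimum is governed solely by $d_0$.
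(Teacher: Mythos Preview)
Your proposal is correct and follows essentially the same approach as the paper: both use Lemma~\ref{lem:conflictAlebot} to show that $\andown d M = \andown d N$ whenever $d$ is at most the minimal conflict depth, and Lemma~\ref{lem:conflictAlebot2} to produce a distinguishing witness of height exactly $d_0+1$ for all larger $d$. Your formulation via the explicit equivalence $\andown d M = \andown d N \iff d \le d_0$ is slightly more streamlined than the paper's case split on $\dda(M,N)=0$ versus $\dda(M,N)=2^{-d}$, but the underlying argument is identical.
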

\begin{proof}
  We write $\dd(M,N)$ to denote the right-hand side of the above
  equation. If $\dda(M,N) = 0$, then $M,N$ have no conflicts. By
  Lemma~\ref{lem:conflictAlebot}, we then have that $\andown{d}{M} =
  \andown{d}{N}$ for all $d < \omega$. Hence, $\dd(M,N) =0$, too.

  Otherwise, $\dda(M,N) = 2^{-d}$ such that there is a conflict $p$ of
  $M,N$ with $\ola$-depth $d$ and each conflict of $M,N$ has
  $\ola$-depth at least $d$. The former implies, by
  Lemma~\ref{lem:conflictAlebot2}, that $\andown{e}{M} \neq
  \andown{e}{N}$ for all $e > d$ and the latter implies, by
  Lemma~\ref{lem:conflictAlebot}, that $\andown{d}{M} =
  \andown{d}{N}$. Hence, $\dd(M,N) = 2^{-d}$ as well.
\end{proof}

The ultrametric induced by $\ahgt\cdot$ can be canonically extended to
the ideal completion $\iplam$ of $(\plam,\alebot)$:
\begin{definition}
  For each set $S \subseteq \plam$ and $d < \omega$, define
  \[\anidown{d}{S} = \setcom{M \in S}{\ahgt{M}\le d}.\]
  Define the distance measure $\ddi$ on $\iplam$ as follows:
  \[
  \ddi(I,J) = \Glb\setcom{2^{-d}}{\anidown{d}{I} = \anidown{d}{J}}
  \]
\end{definition}

According to Majster-Cederbaum and
Baier~\cite{majster-cederbaum96tcs}, $\ddi$ is an
ultrametric. Moreover, they also show that the metric completion
$(\ddac,\mplam)$ is isometric to ideal completion $\iplam$ endowed
with the metric $\ddi$ whenever, each $\anidown{d}{I}$ of an ideal $I$
is finite.
\begin{lemma}
  \label{lem:posIncl}
  For each $M,N \in \plam$ with $M \alebot N$ and each $p \in \pos{M}$
  and $p\concat\seq i\in \pos{N}$ with $a_i = 0$, we have that $p
  \concat \seq i \in \pos M$.
\end{lemma}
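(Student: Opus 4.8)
The plan is to mirror the technique used for Lemma~\ref{lem:alebotPos}: introduce an auxiliary relation $\preceq$ on $\plam$ that already encodes the property to be proved, and then show that $\preceq$ is reflexive, transitive, and closed under the four generating conditions of Definition~\ref{def:alebot}. Since $\alebot$ is by definition the \emph{least} such relation, this gives $\alebot \subseteq {\preceq}$, which is exactly the claim. Concretely, I would set $M \preceq N$ iff (1)~$\pos M \subseteq \pos N$, and (2)~for every $p \in \pos M$ and every $i$ with $a_i = 0$ and $p\concat\seq i \in \pos N$ we have $p\concat\seq i \in \pos M$. Part~(1) is precisely the content of Lemma~\ref{lem:alebotPos}, so it is really part~(2) that carries the new information; bundling both into $\preceq$ keeps the transitivity argument self-contained.

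Reflexivity is immediate, and transitivity is a short chase: if $M \preceq N \preceq P$, then $\pos M \subseteq \pos N \subseteq \pos P$ gives (1), and for (2) a position $p \in \pos M$ lies in $\pos N$ by (1), so $p\concat\seq i \in \pos P$ forces $p\concat\seq i \in \pos N$ by (2) for $N \preceq P$, and then $p\concat\seq i \in \pos M$ by (2) for $M \preceq N$. Condition \ref{item:alebot1} ($\bot \preceq M$) holds since $\pos\bot = \emptyset$ makes (2) vacuous. For conditions \ref{item:alebot2}--\ref{item:alebot4} I would verify (2) by splitting on whether the witnessing $p$ is the root $\emptyseq$ or a position $\seq j \concat q$ strictly inside an immediate subterm; in the latter case the recursive description of $\pos\cdot$ reduces the obligation to (2) applied to $q$ and the changed subterm, which is the inductive hypothesis $M \preceq M'$ (resp.\ $N \preceq N'$).

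The main obstacle — and the only point where strictness genuinely enters — is the root case $p = \emptyseq$ of conditions \ref{item:alebot2}--\ref{item:alebot4}. Here $p\concat\seq i = \seq i$ must be a length-one position of the larger term, so $i$ can only point into one of its immediate subterms. If $i$ points into a subterm left unchanged by the rule, the conclusion $\seq i \in \pos{MN}$ (etc.) is trivial, since that subterm is literally the same on both sides. If instead $i$ points into the subterm the rule alters — the body in \ref{item:alebot2}, the left factor in \ref{item:alebot3}, the right factor in \ref{item:alebot4} — then $\seq i$ being a position of the larger term says the larger version of that subterm is non-$\bot$, while the conclusion says the smaller version is non-$\bot$, i.e.\ $M \neq \bot$ (resp.\ $N \neq \bot$). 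Since $a_i = 0$, the side condition ``$M \neq \bot$ or $a_i = 1$'' of the relevant rule of Definition~\ref{def:alebot} collapses to exactly this. Thus the strictness annotations in the order are precisely what make property (2) closed under the generating rules, and the induction goes through.

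As an alternative I would note that the statement is the term-level shadow of condition \ref{item:talebot3} of Definition~\ref{def:talebot}: since $\dom{\lamtree M} = \pos M$ (by a routine induction) and $\lamtree\cdot$ is an order isomorphism by Proposition~\ref{prop:lamtreeIso}, the claim for $\alebot$ would transfer directly from the definition of $\talebot$. I prefer the direct argument above, however, in order to keep this appendix independent of the lambda-tree development.
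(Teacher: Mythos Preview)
Your proof is correct and takes a genuinely different route from the paper. The paper argues by induction on the position $p$: in the base case $p=\emptyseq$ it reads off the top-level shape of $M$ and $N$ and invokes the side condition of the relevant rule; in the step $p=\seq j\concat q$ it peels off the outermost constructor, uses the inversion property ``$\lambda x.M_1\alebot\lambda x.N_1$ implies $M_1\alebot N_1$'' (and its analogues for applications), and applies the induction hypothesis to $q$. Your argument instead packages the desired property into a relation $\preceq$ and shows that $\preceq$ is closed under the generating clauses of Definition~\ref{def:alebot}, which is effectively an induction on the derivation of $M\alebot N$.

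What each buys: your approach is self-contained and avoids the inversion lemma that the paper uses tacitly; it also reuses verbatim the template of Lemma~\ref{lem:alebotPos}, which makes it a natural companion proof. The paper's induction on $p$ is shorter once the inversion property is taken for granted, and it tracks the structure of the position rather than the order derivation. Your remark that the statement is the term-level shadow of condition~\ref{item:talebot3} via Proposition~\ref{prop:lamtreeIso} is also a valid third route; the paper does not take it, presumably for the independence reason you mention.
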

\begin{proof}
  We proceed by induction on $p$. Note that $M \neq\bot$ since
  $\pos{M} \neq \emptyset$. Let $p = \emptyseq$. If $i = 0$, then $N =
  \lambda x . N_1$ and thus $M = \lambda x . M_1$. Since $\seq 0 \in
  \pos N$, we know that $N_1 \neq \bot$. Because $a_0 = 0$, this
  implies that $M_1 \neq \bot$. Hence, $\seq 0 \in \pos M$. The cases
  for $i \in \set{1,2}$ follow analogously.

  Let $p = \seq j\concat q$ and assume $j = 0$. Then $M = \lambda x
  . M_1$, $N = \lambda x . N_1$, and $M_1 \alebot N_1$. Since $q \in
  \pos{M_1}$ and $q\concat\seq i \in \pos{N_1}$, we may apply the
  induction hypothesis to obtain that $q \concat\seq i \in
  \pos{M_1}$. Consequently, $\seq j\concat q \concat\seq i \in
  \pos{M_1}$. The cases for $j\in\set{1,2}$ follow likewise.
\end{proof}

\begin{definition}
  For each $I \subseteq \plam$ define the set of positions $\pos{I}$
  of $I$ as $\bigcup_{M\in I} \pos{M}$.
\end{definition}

A set $I$ in $(\plam,\alebot)$ is said to be \emph{finitely bounded}
if for each $M,N \in I$ there is some $\oh M \in \plam$ with $M,N
\alebot \oh M$.

\begin{proposition}
\label{prop:idealFinPos}
Given a finitely bounded set $I$ in $(\plam,\alebot)$, $I$ is finite
iff $\pos{I}$ is finite.
\end{proposition}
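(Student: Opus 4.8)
The plan is to prove the two implications separately, throughout using the order isomorphism $\lamtree\cdot\colon(\plam,\alebot)\to(\ptree,\talebot)$ of Proposition~\ref{prop:lamtreeIso} to transfer the problem to finite lambda \emph{trees}. This is the crucial simplification, because a lambda tree carries its labels in the fixed set $\lamsig$ and satisfies $\dom{\lamtree M}=\pos M$, so the otherwise awkward choice of free-variable names (and the free/bound distinction, modulo $\alpha$-equivalence) is already encoded in the label at each position.

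The ``only if'' direction needs no boundedness hypothesis. A routine induction on $M\in\plam$ shows that $\pos M$ is finite for every term (using $\pos\bot=\emptyset$, $\pos x=\set\emptyseq$, and the recursive clauses for abstraction and application). Hence, if $I$ is finite, then $\pos I=\bigcup_{M\in I}\pos M$ is a finite union of finite sets and therefore finite.

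For the ``if'' direction, set $P=\pos I$ and assume $P$ is finite. First I would establish the key \emph{consistency} lemma: for all $M,N\in I$ and all $p\in\pos M\cap\pos N$ one has $\lamtree M(p)=\lamtree N(p)$. Since $I$ is finitely bounded, there is some $\hat M\in\plam$ with $M,N\alebot\hat M$, and applying Proposition~\ref{prop:lamtreeIso} gives $\lamtree M,\lamtree N\talebot\lamtree{\hat M}$. By clause~\ref{item:talebot2} of Definition~\ref{def:talebot}, each of $\lamtree M$ and $\lamtree N$ agrees with $\lamtree{\hat M}$ on its own domain, so they agree with each other at every $p\in\pos M\cap\pos N$. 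This consistency lets me define a single partial function $L\colon\allpos\pfunto\lamsig$ with domain $P$ by $L(p)=\lamtree M(p)$ for any (equivalently, every) $M\in I$ having $p\in\pos M$. Then every $M\in I$ satisfies $\dom{\lamtree M}=\pos M\subseteq P$ together with $\lamtree M(p)=L(p)$ for $p\in\pos M$; in other words $\lamtree M$ is precisely the restriction of $L$ to $\pos M$. Consequently $\lamtree M$—and hence $M$ itself, since $\lamtree\cdot$ is a bijection (Proposition~\ref{prop:lamtreeBij})—is completely determined by the subset $\pos M$ of the finite set $P$. Distinct elements of $I$ thus have distinct position sets, so $\card I\le 2^{\card P}<\omega$ and $I$ is finite.

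The hard part will be the consistency lemma and its reliance on finite boundedness. Without that hypothesis the claim fails: the set $\setcom{x}{x\in\calV}$ has the single-element position set $\set\emptyseq$ yet is infinite, because free-variable choices multiply terms over a fixed position shape. Finite boundedness is exactly what pins down the label at each shared position through a common upper bound, and passing to the tree representation is what makes this bookkeeping clean, since there a free variable at $p$, a bound reference at $p$, and the symbols $\lambda,@$ are simply distinct elements of $\lamsig$, none of which can simultaneously equal $\lamtree{\hat M}(p)$.
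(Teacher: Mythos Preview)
Your proof is correct and takes a genuinely different route from the paper's.

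The paper argues the ``if'' direction by contraposition: assuming $I$ is infinite, it shows by induction on $d$ that the slice $I_d=\setcom{M\in I}{\hgt M\le d}$ is finite (treating variables, applications, and abstractions separately, and pushing finite boundedness down to the component sets), so $I\setminus I_d$ is non-empty for every $d$; hence $\pos I$ contains positions of arbitrary length and is infinite.

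Your argument is direct and more structural: transferring along the order isomorphism $\lamtree\cdot$ to $(\ptree,\talebot)$, you use finite boundedness and clause~\ref{item:talebot2} of Definition~\ref{def:talebot} to obtain a single labelling function $L$ on $P=\pos I$ such that every $\lamtree M$ is just $L$ restricted to $\pos M$. This immediately injects $I$ into the power set of $P$, yielding the explicit bound $\card I\le 2^{\card P}$. The approach buys you two things: it avoids the inductive case analysis on term shape (in particular the mild $\alpha$-equivalence bookkeeping in the abstraction case, which the paper elides), and it makes transparent exactly how finite boundedness is used---once, to pin the label at each shared position. The paper's proof, by contrast, stays entirely inside $\plam$ without invoking the tree representation, which fits its Appendix~\ref{sec:direct-approach} goal of a self-contained argument; but your route is shorter and gives a quantitative conclusion the paper's does not.
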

\begin{proof}
  The ``only if'' direction is trivial. For the converse direction,
  assume that $I$ is infinite. We show that $I_d = \setcom{M \in
    I}{\hgt M \le d}$ is finite for each $d < \omega$. From this we
  can then deduce that for each $d < \omega$, the set $I\setminus I_d$
  is non-empty, i.e.\ there is an $M \in I$ with a height greater than
  $d$. Consequently, for each $d$, there is a position $p$ with height
  $d$ in $\pos I$, i.e.\ $\pos I$ is infinite.

  We show the above claim that $I_d$ is finite for any finitely
  bounded set $I$ by induction on $d$. The case $d = 0$ is trivial,
  since only $\bot$ has depth $0$. Let $d > 0$. We show that $I_d$
  contains only finitely many different variables, applications and
  abstractions. For each two variables $x, y\in I_d$, we find some $M$
  with $x,y \alebot M$ since $I$ finitely bounded. Hence $x = y$. Let
  $J = \setcom{M}{\exists N\in\plam\fcolon M\,N \in I}$ and $K =
  \setcom{N}{\exists M\in\plam\fcolon M\,N \in I}$. Then also $J$ and
  $K$ are finitely bounded. For each abstraction $M N \in I_d$, we
  know that $M \in J_{d-1}$ and $N \in K_{d-1}$. Since, by induction
  hypothesis, both $J_{d-1}$ and $K_{d-1}$ are finite, there are also
  only finitely many applications in $I_d$. The same argument applies
  to abstractions.
\end{proof}

\begin{lemma}
  \label{lem:anidownFin}
  For each $I \in \iplam$ and $d < \omega$, $\anidown{d}{I}$ is a
  finite set.
\end{lemma}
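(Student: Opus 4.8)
The plan is to apply Proposition~\ref{prop:idealFinPos} to the set $\anidown{d}{I} = \setcom{M \in I}{\ahgt{M} \le d}$, which is a subset of $\plam$. It then suffices to verify two things: that $\anidown{d}{I}$ is finitely bounded, and that its position set $\pos{\anidown{d}{I}}$ is finite. Proposition~\ref{prop:idealFinPos} will then immediately yield that $\anidown{d}{I}$ is finite.

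First I would check finite boundedness, which is easy. Because $I$ is an ideal it is directed, so any two $M, N \in \anidown{d}{I} \subseteq I$ have an upper bound $\oh M \in I \subseteq \plam$ with $M, N \alebot \oh M$; hence $\anidown{d}{I}$ is finitely bounded. Next I would bound the position set. Every $M \in \anidown{d}{I}$ satisfies $\ahgt{M} \le d$, so by the definition of $\ola$-height each $p \in \pos{M}$ has $\adepth{p} < d$; moreover $\pos{M} \subseteq \pos{I}$. Consequently $\pos{\anidown{d}{I}} \subseteq \setcom{p \in \pos{I}}{\adepth{p} < d}$.

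It therefore remains to show that $\pos{I}$ contains only finitely many positions of $\ola$-depth $< d$, and this is the crux of the argument. Here I would pass to lambda trees: via the order isomorphism $\lamtree\cdot$ of Proposition~\ref{prop:lamtreeIso}, the ideal $I$ maps to the ideal $\setcom{\lamtree M}{M \in I}$ of $(\ptree,\talebot)$, which is directed, so its least upper bound $t = \Lub\setcom{\lamtree M}{M \in I}$ is well-defined and, by Theorem~\ref{thr:alebotCompl}, lies in $\aptree$, i.e.\ is an $\ola$-guarded lambda tree. By Theorem~\ref{thr:alebotCpo}, and recalling that $\dom{\lamtree M} = \pos{M}$ for every $M$, its domain is $\dom{t} = \bigcup_{M \in I}\dom{\lamtree M} = \bigcup_{M\in I}\pos{M} = \pos{I}$. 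Since $t$ is $\ola$-guarded, Proposition~\ref{prop:adepth} tells us that $t$ has only finitely many positions of any fixed $\ola$-depth, so $\setcom{p \in \pos{I}}{\adepth{p} < d} = \bigcup_{e < d}\setcom{p \in \dom{t}}{\adepth{p} = e}$ is a finite union of finite sets and hence finite. Thus $\pos{\anidown{d}{I}}$ is finite, and the lemma follows by Proposition~\ref{prop:idealFinPos}. The main obstacle is precisely this last finiteness claim: the set of \emph{all} positions of bounded $\ola$-depth is infinite as soon as $\ola$ has a strict direction (for instance, when $a_1 = 0$ the positions $\seq 1, \seq{1,1}, \dots$ all have $\ola$-depth $0$), so one cannot argue by a crude bound on positions and must genuinely exploit $\ola$-guardedness of the completed tree via Proposition~\ref{prop:adepth}.
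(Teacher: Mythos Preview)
Your proof is correct, but it takes a genuinely different route from the paper's own argument.

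The paper stays entirely at the level of lambda \emph{terms}: it argues by contradiction that if $\anidown{d}{I}$ were infinite, then by Proposition~\ref{prop:idealFinPos} the position set $\pos{\anidown{d}{I}}$ would be infinite, and then (by a K\"onig-style argument on the finitely branching tree of positions) one extracts an infinite branch $q$ along which all positions have $\ola$-depth at most $d$. From some index $k$ onwards the branch traverses only strict directions, and then directedness of $I$ together with Lemma~\ref{lem:posIncl} forces a single finite term $M\in\anidown{d}{I}$ to contain all the positions $p_k,p_{k+1},\dots$, which is impossible.

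You instead pass to lambda \emph{trees}: you realise the ideal $I$ as an $\ola$-guarded tree $t\in\aptree$ via Theorem~\ref{thr:alebotCompl}, identify $\pos{I}$ with $\dom t$ using Theorem~\ref{thr:alebotCpo}, and then invoke Proposition~\ref{prop:adepth} to bound the number of positions of each $\ola$-depth. This is clean and conceptually transparent, and there is no circularity (none of the cited results depend on the present lemma). The trade-off is contextual: the lemma lives in Appendix~D, whose express purpose is to give a \emph{direct} proof of the metric/ideal correspondence without routing through the lambda-tree representation. Your argument imports precisely that machinery, so while it is logically sound, it partly defeats the point of the appendix. The paper's argument, by contrast, is self-contained at the term level (K\"onig plus Lemma~\ref{lem:posIncl}) and therefore fits the ``direct'' spirit of the section.
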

\begin{proof}
  Assume that the lemma is not true, i.e.\ there is some $I \in
  \iplam$ and $d < \omega$ such that $\anidown{d}{I}$ is an infinite
  set. According to Proposition~\ref{prop:idealFinPos}, the set
  $\pos{\anidown{d}{I}}$ is infinite, too.
  
  Since there are only finitely many different sequences over
  $\set{0,1,2}$ of a given finite length, there must be an infinite
  sequence $q\colon \omega \to \set{0,1,2}$ such that $P_{i} =
  \setcom{p \in \pos{\anidown{d}{I}}}{p_i \le p}$ is finite for all $i
  < \omega$, where $p_i$ is the prefix of $q$ of length $i$.

  Note that since each $p_i \in \pos{\anidown{d}{I}}$, we know that
  $\adepth{p_i} \le d$. Hence, there must be a $k < \omega$ such that
  $a_{q(i)} = 0$ for all $i \ge k$.

  Let $M \in \anidown{d}{I}$ with $p_k \in \pos{M}$. We show by
  induction on $i$ that $p_i \in \pos{M}$ for all $i \ge k$, which is
  impossible as $\pos{M}$ is finite. Thus our assumption that the
  lemma is not true is false.

  The case $i = k$ is trivial. Let $i \ge k$ and $p_i \in
  \pos{M}$. Moreover, let $N$ be a term with $p_{i+1} \in \pos
  N$. Since $I$ is directed, we find a term $M' \in I$ with $M,N
  \alebot M'$. By Lemma~\ref{lem:alebotPos}, we thus also have that
  $p_{i+1} \in \pos{M'}$. Since $p_{i+1} = p_i\concat\seq{a_{q(i+1)}}$
  with $a_{q(i+1)} = 0$, we can derive from $p_i \in \pos M$ that
  $p_{i + 1} \in \pos M$ according to Lemma~\ref{lem:posIncl}.
\end{proof}

\begin{proposition}
  The pair $(\ddac,\mplam)$ is isometric $(\ddi,\iplam)$.
\end{proposition}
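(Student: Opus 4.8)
The plan is to reduce the statement directly to the meta-theorem of Majster-Cederbaum and Baier~\cite{majster-cederbaum96tcs}, which is exactly the result invoked at the end of this section. Their theorem asserts that, given a partially ordered set equipped with a finite weight, the metric completion of the induced metric space is isometric to the ideal completion endowed with the canonically extended metric, provided that the weight-bounded truncations $\anidown{d}{I}$ of every ideal $I$ are finite. The entire section has been assembling exactly the ingredients this theorem consumes, so the proof itself should be short: it amounts to checking that each hypothesis has been verified and then applying the meta-theory.

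First I would recall that $\ahgt{\cdot}$ has been shown to be a finite weight on $(\plam,\alebot)$ in the sense of Majster-Cederbaum and Baier, and that the ultrametric $\dd$ it induces coincides with $\dda$ by the preceding proposition. Consequently the metric space $(\plam,\dda)$ is precisely the metric space induced by the weight $\ahgt{\cdot}$, and hence its completion $(\mplam,\ddac)$ is the completion to which the meta-theorem applies. Likewise, the extended metric $\ddi$ on the ideal completion $\iplam$ is by definition the canonical extension of $\dd$ used in their construction, so both sides of the claimed isometry are the objects their theorem relates.

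The only remaining hypothesis is the finiteness of $\anidown{d}{I}$ for every ideal $I \in \iplam$ and every $d < \omega$, which is exactly the content of Lemma~\ref{lem:anidownFin}. With all hypotheses in place, I would conclude by a direct appeal to the meta-theorem: the metric completion $(\mplam,\ddac)$ is isometric to the ideal completion $(\iplam,\ddi)$.

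The main obstacle is not in this final step, which is essentially bookkeeping, but in the finiteness lemma on which it rests. That lemma is the genuinely non-trivial ingredient: it requires a K\"onig-style combinatorial argument showing that an ideal with infinitely many height-bounded elements would have to contain a single term with infinitely many positions, contradicting the finiteness of lambda terms. Since that argument is carried out separately in Lemma~\ref{lem:anidownFin} (resting in turn on Proposition~\ref{prop:idealFinPos} together with the position-inclusion Lemmas~\ref{lem:alebotPos} and~\ref{lem:posIncl}), the proof of the present proposition can simply cite it and remain a one-line invocation of the meta-theory.
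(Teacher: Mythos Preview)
Your proposal is correct and matches the paper's own proof essentially line for line: the paper simply invokes Theorem~3.16 of Majster-Cederbaum and Baier, citing Lemma~\ref{lem:anidownFin} for the finiteness hypothesis and noting that $\ahgt{\cdot}$ is a finite weight. Your only addition is making explicit that $\dd = \dda$ (from the preceding proposition), which is indeed needed to identify $(\mplam,\ddac)$ with the completion the meta-theorem produces; the paper leaves this implicit but it is the right observation.
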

\begin{proof}
  This proposition follows from Theorem~3.16 of Majster-Cederbaum and
  Baier~\cite{majster-cederbaum96tcs} with Lemma~\ref{lem:anidownFin}
  as $\ahgt\cdot$ is a finite weight.
\end{proof}


\fi
\end{document}
